	\DeclareMathAlphabet{\pazocal}{OMS}{zplm}{m}{n}
\DeclarePairedDelimiter\abs{\lvert}{\rvert}
\DeclarePairedDelimiter\ave{\langle}{\rangle}
\newcommand{\bA}{\boldsymbol{A}}
\newcommand{\ddiv}{\operatorname{div}}
\newcommand{\cE}{\pazocal{E}}
\newcommand{\bbf}{\boldsymbol{f}}
\newcommand{\bbfhat}{\boldsymbol{\hat{f}}}
\newcommand{\bF}{\boldsymbol{F}}
\newcommand{\bg}{\boldsymbol{g}}
\newcommand{\bG}{\boldsymbol{G}}
\newcommand{\bh}{\boldsymbol{h}}
\newcommand{\bhhat}{\boldsymbol{\hat{h}}}
\newcommand{\bI}{\boldsymbol{I}}
\newcommand{\cI}{\pazocal{I}}
\newcommand{\cM}{\mathfrak{M}}
\newcommand{\cpM}{\pazocal{M}}
\newcommand{\N}{\mathbb{N}}
\newcommand{\norm}[2]{\Vert#1\Vert_{#2}}
\newcommand{\norminf}[1]{\Vert#1\Vert_\infty}
\newcommand{\pr}[1]{{}^\prime\!#1}
\newcommand{\Prob}[1]{\operatorname{Prob}{(#1)}}
\newcommand{\bP}{\boldsymbol{P}}
\newcommand{\cP}{\mathcal{P}}
\newcommand{\pP}{\pazocal{P}}
\newcommand{\cQ}{\pazocal{Q}}
\newcommand{\R}{\mathbb{R}}
\newcommand{\brho}{\boldsymbol{\rho}}
\newcommand{\sS}{\mathbb{S}}
\newcommand{\supp}[1]{\operatorname{supp}#1}
\newcommand{\trinorm}[2]{\vert\kern-0.25ex\vert\kern-0.25ex\vert#1\vert\kern-0.25ex\vert\kern-0.25ex\vert_{#2}}
\newcommand{\ubar}[1]{\underaccent{\bar}{#1}}
\newcommand{\cV}{\pazocal{V}}
\newcommand{\Var}[1]{\operatorname{Var}(#1)}
\newcommand \commentout[1] {}
\newtheorem{assumption}{Assumption}[section]
\newtheorem{proposition}[assumption]{Proposition}
\newtheorem{theorem}[assumption]{Theorem}
\newtheorem{lemma}[assumption]{Lemma}
\theoremstyle{remark}\newtheorem{remark}[assumption]{Remark}
\theoremstyle{remark}\newtheorem{example}[assumption]{Example}
\title{Essentials of the kinetic theory of multi-agent systems}
\author{Nadia Loy}
\author{Andrea Tosin}
\affil{{\small Department of Mathematical Sciences ``G. L. Lagrange'' \\ Politecnico di Torino, Italy}}
\date{}
\begin{document}
\maketitle
	
\begin{abstract}
In this paper, we present a critical collection of essential mathematical tools and techniques for the analysis of Boltzmann-type kinetic equations, which in recent years have established themselves as a flexible and powerful paradigm to model interacting multi-agent systems. We consider, in particular, scalar equations implementing linear symmetric interaction rules, for which we develop the theory of well-posedness, trend to equilibrium, and Fokker--Planck asymptotics by relying extensively on Fourier methods. We also outline the basics of Monte Carlo algorithms for the numerical solution of such equations. Finally, we elaborate the theory further for Boltzmann-type equations on graphs, a recent generalisation of the standard setting motivated by the modelling of networked multi-agent systems.

\medskip

\noindent{\bf Keywords:} stochastic particle systems, Boltzmann-type equations, Fourier metric, well-posedness, trend to equilibrium, quasi-invariant limit, Monte Carlo method, graphs

\medskip

\noindent{\bf Mathematics Subject Classification:} 35Q20, 35Q70, 82C40
\end{abstract}

\tableofcontents

\section{Introduction}
\label{sect:intro}
Towards the end of the 19th century, the Austrian physicist \textbf{Ludwig Boltzmann} (Vienna, 1844 -- Duino, 1906) formulated the celebrated integro-differential equation that nowadays bears his name~\cite{boltzmann1970CHAPTER} as a part of a scientific activity which imparted a significant momentum to the development of \textit{statistical mechanics}. Boltzmann's goal was to explain the complex macroscopic concepts of thermodynamics starting from the elementary physics of the microscopic collisions among gas molecules, thereby elucidating how macroscopic measurable quantities, such as e.g., the bulk velocity, the temperature, and the energy of a gas, emerge from certain microscopic trends fluctuating around an average behaviour. It was the dawn of the \textit{kinetic\footnote{The adjective \textit{kinetic} comes from the ancient Greek noun \textgreek{k\'{i}nhsis} (\textit{kin\={e}sis}) meaning ``movement''. It indicates something which is concerned with movement.} theory} of gases as the historically first explicit implementation of the ideas of statistical mechanics. Indeed, by relying heavily on statistical arguments and probability theory, Boltzmann envisaged a mathematical-physical description in which microscopic gas molecules could be replaced by the \textit{statistical distribution} of their positions and velocities. This resulted in a mathematical model consisting of only one integro-differential equation for that distribution instead of a system of $O(10^{24})$ coupled ordinary differential equations, one for each molecule of the gas (cf. the Avogadro's number).

The Boltzmann equation writes
\begin{equation}
    \partial_tf+v\cdot\nabla_xf
        =\frac{1}{4\pi}\int_{\R^3}\int_{\sS^2}B((v_\ast-v)\cdot n)
            \bigl(f(x,v',t)f(x,v_\ast',t)-f(x,v,t)f(x,v_\ast,t)\bigr)\,dn\,dv_\ast,
    \label{eq:Boltz}
\end{equation}
where $x,\,v\in\R^3$ are the position and the velocity, respectively, of a \textit{generic representative} molecule of the gas and $f=f(x,v,t)$ is their joint statistical distribution function at time $t\in\R_+$.

The left-hand side of~\eqref{eq:Boltz}, where $\nabla_x$ stands for the gradient with respect to the variable $x$, is a linear advection operator describing the free motion with constant velocity of the gas molecules in the absence of mutual collisions. Notice indeed that if we set momentarily the right-hand side to zero the equation reduces to
$$ \partial_tf+v\cdot\nabla_xf=0, $$
whose solution is
$$ f(x,v,t)=f_0(x-vt,v), $$
i.e. a rigid translation in space of the initial distribution function $f_0$. This is a consequence of the fact that a gas molecule travelling freely from an initial point $x_0$ with constant velocity $v$ follows the path $x(t)=x_0+vt$.

The right-hand side of~\eqref{eq:Boltz} is instead a bilinear integro-differential operator, termed the \textit{collision operator}, which describes the average statistical variation of the molecule velocities due to the molecular collisions. There, $v',\,v_\ast'\in\R^3$ are the \textit{post-collisional} velocities of any two molecules colliding with \textit{pre-collisional} velocities $v,\,v_\ast\in\R^3$. Assuming elastic collisions, from elementary physics it is well known that momentum and kinetic energy of the colliding molecules are conserved. If all molecules have the same mass, this leads to the conditions
\begin{subequations}
    \begin{align}
        v'+v_\ast' &= v+v_\ast \label{eq:Boltz.momentum} \\
        \abs{v'}^2+\abs{v_\ast'}^2 &= \abs{v}^2+\abs{v_\ast}^2, \label{eq:Boltz.energy}
    \end{align}
    \label{eq:Boltz.momentum_energy}
\end{subequations}
whence
\begin{equation}
    v'=v+[(v_\ast-v)\cdot n]n, \qquad
        v_\ast'=v_\ast+[(v-v_\ast)\cdot n]n,
    \label{eq:Boltz.coll_rules}
\end{equation}
where $n\in\sS^2\subset\R^3$ is a unit vector pointing in the direction of the collision, i.e. the direction connecting the centres of the colliding molecules, and $\cdot$ denotes the inner product in $\R^3$. The relationships~\eqref{eq:Boltz.coll_rules} allow one to express the post-collisional velocities in~\eqref{eq:Boltz} as functions of the pre-collisional ones. For this reason, they are called \textit{collision rules}. Moreover, the term $B((v_\ast-v)\cdot n)$ in~\eqref{eq:Boltz} is the \textit{collision kernel}, a quantity which accounts for further mechanical features of the molecular collisions which can affect the rate at which molecules collide. A typical choice for the function $B:\R\to\R_+$ is $B(\nu)=\abs{\nu}$, which entails the following expression of the collision kernel:
\begin{equation}
    B((v_\ast-v)\cdot n)=\abs{(v_\ast-v)\cdot n}.
    \label{eq:Boltz.B}
\end{equation}
The physical meaning is that the more the relative pre-collisional velocity $v_\ast-v$ is oriented in the direction $n$ of the collision the more frequent, or in a sense ``probable'', the collision.

In Section~\ref{sect:Boltz.deriv} we shall consider in detail the construction of the collision operator of the Boltzmann equation, starting from the first principles~\eqref{eq:Boltz.momentum_energy} of molecular collisions. For the moment, we observe that~\eqref{eq:Boltz} expresses a clear separation of the effects leading to the variation of velocity and position of the molecules: on one hand, the velocity changes in consequence of the collisions (right-hand side), which do not entail a change in the position; on the other hand, the position changes in consequence of the free transport (left-hand side), which does not entail a change in the velocity.

The distribution function $f$, which in principle can be obtained from~\eqref{eq:Boltz} complemented with an initial condition $f_0$, allows one to compute average quantities, relevant for a macroscopic description of the gas, as \textit{statistical moments} with respect to the velocity. This ideally completes Boltzmann's programme to recover the complex (thermo)dynamical phenomena of gases from the fundamental collisions among the molecules. The main macroscopic quantities usually considered are the \textit{density} $\rho$ of the gas, the \textit{bulk velocity} $u$, the \textit{total energy} $E$, and the \textit{internal energy} $e$ in a point $x\in\R^3$ at time $t\geq 0$:
\begin{align}
    \begin{aligned}[c]
        & \rho(x,t):=\int_{\R^3}f(x,v,t)\,dv, & & u(x,t):=\frac{1}{\rho(x,t)}\int_{\R^3}vf(x,v,t)\,dv, \\
        & E(x,t):=\frac{1}{\rho(x,t)}\int_{\R^3}\abs{v}^2f(x,v,t)\,dv, & & e(x,t):=\frac{1}{\rho(x,t)}\int_{\R^3}\abs{v-u(x,t)}^2f(x,v,t)\,dv.
    \end{aligned}
    \label{eq:Boltz.moments}
\end{align}
Notice that the bulk velocity and the total and internal energies are linked by the relationship $E=\abs{u}^2+e$. Another macroscopic quantity of interest is the \textit{temperature} of the gas:
$$ \theta(x,t):=\frac{1}{3}e(x,t), $$
which is linked to the bulk velocity and the total energy by the relationship $E=\abs{u}^2+3\theta$. As a matter of fact, computing these macroscopic quantities out of $f$ is typically unfeasible, because it would require to solve the Boltzmann equation~\eqref{eq:Boltz}, which is not that friendly as far as explicit solutions are concerned. Therefore, many research efforts have been devoted to obtaining evolution equations directly for the quantities~\eqref{eq:Boltz.moments} by considering proper averages of~\eqref{eq:Boltz}, possibly in suitable limit regimes. This is the problem of the \textit{hydrodynamic limits} of the Boltzmann equation. Classical examples of macroscopic models that can be obtained as hydrodynamic limits of the Boltzmann equation are the Euler equations for an ideal gas and the Navier--Stokes equations for Newtonian fluids.

The mathematical theory of the Boltzmann equation is a lively field of study, as documented by the continuously increasing number of contributions in the pertinent literature. Without even attempting a review, here we confine ourselves to mentioning some classical references, such as~\cite{cercignani1988BOOK,cercignani1994BOOK,perthame2004BAMS,villani2002CHAPTER}, together with a couple of others about the parallel theory of the so-called \textit{discrete} Boltzmann equation~\cite{gatignol1975BOOK,toscani1989CMP}, in which the velocity ranges in a discrete set of selected values rather than continuously in $\R^3$. A reference specifically devoted to the hydrodynamic limits is instead~\cite{saint-raymond2009BOOK}.

Formally, the leading ideas of the Boltzmann's kinetic theory of gases are applicable to model any system, possibly different from a gas, composed by \textit{interacting} elements that can be assimilated to \textit{particles}, i.e. the generalisation of gas molecules. The key point is that these particles be \textit{indistinguishable} and follow universal interaction rules.

One of the very first attempts in this direction was due to \textbf{Ilya Prigogine} (Moscow, 1907 -- Brussels, 2003), mostly known for his work on complex systems and Nobel Prize in Chemistry in 1977, who, starting from the early Sixties, proposed a \textit{Boltzmann-type approach} to car traffic~\cite{prigogine1960OR,prigogine1971BOOK}. Prigogine's idea was to identify cars along a road as particles, whose microscopic state is given by the position $x\in\R$ and speed $v\in\R_+$. Notice that, unlike gases, here the microscopic variables are one-dimensional, because only car movements in the longitudinal direction of the road are taken into account, and that the speed is non-negative, because only a one-directional lane of the road is considered. The physical collision rules~\eqref{eq:Boltz.coll_rules} are replaced by empirical principles of acceleration and deceleration of a car when it interacts with a faster or a slower car ahead. On the whole, if $f=f(x,v,t)$ denotes the joint distribution of the pair $(x,\,v)\in\R\times\R_+$ at time $t\geq 0$ Prigogine kinetic equation in the unknown $f$ is
$$ \partial_tf+v\partial_xf=(1-P)\int_0^{+\infty}(v_\ast-v)f(x,v,t)f(x,v_\ast,t)\,dv_\ast+\frac{f_0-f}{T}. $$
The left-hand side is the one-dimensional counterpart of the advection operator appearing in the Boltzmann equation~\eqref{eq:Boltz}. The first term on the right-hand side is instead a ``collision'' operator accounting for the mean variation of the speed of the cars produced by the afore-mentioned acceleration and deceleration dynamics. There, $(1-P)(v_\ast-v)$ plays the role of the ``collision'' kernel and, in particular, the coefficient $P\in [0,\,1]$ is the probability of overtaking. The second term on the right-hand side expresses a \textit{relaxation}, with characteristic relaxation time $T>0$, of the distribution function $f$ towards a prescribed \textit{desired speed distribution} $f_0=f_0(x,v)$. This term models the natural inclination of the drivers to adapt locally the speed of their cars to a desired one in the absence of the disturbance caused by the interactions with other cars. It does not have a direct equivalent in the Boltzmann equation~\eqref{eq:Boltz} but reminds closely of the so-called \textit{BGK approximation} of the Boltzmann collision operator, whose seminal idea was introduced in~\cite{bhatnagar1954PR}\footnote{The acronym BGK is formed by the initials of the surnames of the authors of~\cite{bhatnagar1954PR}.} and further developed from then on, see~\cite{puppo2019RMUP}. Such an approximation consists in replacing the right-hand side of~\eqref{eq:Boltz} with a term proportional to $M-f$, where $M=M(x,v)$ is a \textit{local equilibrium distribution}, termed the \textit{Maxwellian distribution}, which makes the Boltzmann collision operator vanish. Because of this property, the Maxwellian distribution depicts the local \textit{statistical equilibrium} of a gas, i.e. a situation in which in a certain point $x\in\R^3$ the collisions among the gas molecules do not produce anymore changes in the statistical distribution of the velocity. As asserted by the celebrated Boltzmann's \textit{H-theorem}, cf. e.g.,~\cite{cercignani1988BOOK}, the collision dynamics described by the Boltzmann collision operator lead the distribution function $f$ to relax locally on the Maxwellian $M$. Therefore, the BGK approximation of the Boltzmann collision operator can be seen as a way to reproduce this very same trend by means of a much simpler mathematical term. In the case of the Prigogine equation, the relaxation towards the desired speed distribution $f_0$ is instead postulated as an additional trend besides the one dictated by car interactions.

Prigogine's work on car traffic paved the way to the application of the ideas and methods of the kinetic theory of gases to contexts very distant from the original one. Initially, some other contributions were given still in the realm of car traffic, among which we recall in particular~\cite{klar1997JSP,paveri1975TR}. Lately, in the early 2000s \textbf{Giuseppe Toscani} and coworkers initiated the systematic development of a mathematical theory based on \textit{Boltzmann-type equations} for \textit{interacting multi-agent systems}~\cite{pareschi2013BOOK}, focussing on applications motivated by econophysics and sociophysics such as the redistribution of wealth~\cite{cordier2005JSP} and the formation of opinions~\cite{toscani2006CMS} in human societies.

Toscani's theory concerns mainly one-dimensional \textit{homogeneous} Boltzmann-type models, in which the distribution function $f$ depends on only one scalar variable $v$, representing the microscopic state -- possibly not the speed -- of the agents of the system, and on time: $f=f(v,t)$. In general, $v$ belongs to a set $I\subseteq\R$, which might not coincide with the whole real line. These models are said to be of Boltzmann type because they are formulated by means of integro-differential equations mimicking the structure of the collision operator of the Boltzmann equation~\eqref{eq:Boltz}. Moreover, they are homogeneous because they do not feature a dependence on the space variable and consequently the equations do not contain transport terms in space. Instead, they describe pure binary interaction dynamics responsible for the variation in time of the distribution of the microscopic state $v$.

A prototypical form of such equations is
\begin{equation}
    \partial_tf=\int_I\ave*{\frac{B(\pr{v},\pr{v}_\ast)}{\abs{J}}f(\pr{v},t)f(\pr{v}_\ast,t)
        -B(v,v_\ast)f(v,t)f(v_\ast,t)}\,dv_\ast,
    \label{eq:Boltz.Toscani}
\end{equation}
where the right-hand side is the Boltzmann-type ``collision'' operator. In Section~\ref{sect:formal_derivation} we shall examine closely the derivation of kinetic equations like~\eqref{eq:Boltz.Toscani}. Here, we mention instead analogies and differences of~\eqref{eq:Boltz.Toscani} with respect to the homogeneous version of~\eqref{eq:Boltz}. First, in~\eqref{eq:Boltz.Toscani} $\pr{v}$, $\pr{v}_\ast$ denote the \textit{pre-interaction} states of the interacting agents, which generate the \textit{post-interaction} states $v$, $v_\ast$. Comparing with~\eqref{eq:Boltz}, we notice that there we find instead the post-collisional velocities $v'$, $v_\ast'$ generated by the pre-collisional velocties $v$, $v_\ast$. The reason is that the collision rules~\eqref{eq:Boltz.coll_rules} are \textit{reversible}, meaning that if  pre-collisional and post-collisional velocities are exchanged the collision rules remain the same; or, in other words, that $\pr{v}=v'$ and $\pr{v}_\ast=v_\ast'$. Hence, the collision operator in~\eqref{eq:Boltz} could be rewritten using in fact $\pr{v}$, $\pr{v}_\ast$ in place of $v'$, $v_\ast'$. This shall be formally clearer in Section~\ref{sect:Boltz.deriv} but for the moment we accept it intuitively and observe consequently that the distribution functions appearing in the collision operators in~\eqref{eq:Boltz} and~\eqref{eq:Boltz.Toscani} are not that different as they could seem at first glance. Next, in~\eqref{eq:Boltz.Toscani} the coefficient $J$ is the Jacobian determinant of the transformation from the pre-interaction to the post-interaction states. In~\eqref{eq:Boltz} this term is apparently missing because, owing to the afore-mentioned reversibility, the collision rules~\eqref{eq:Boltz.coll_rules} have unitary Jacobian determinant. Also, in~\eqref{eq:Boltz.Toscani} the ``collision'' kernel $B$ is not factored out like in~\eqref{eq:Boltz}. The reason is that the specific form~\eqref{eq:Boltz.B} together with the particular collision rules~\eqref{eq:Boltz.coll_rules} entails $B((\pr{v}_\ast-\pr{v})\cdot n)=B((v_\ast'-v')\cdot n)=B((v_\ast-v)\cdot n)$ (technical details again deferred to Section~\ref{sect:Boltz.deriv}), which allows one to factor $B$ out in~\eqref{eq:Boltz}, whereas this is not true in general for interaction rules and a collision kernel different from~\eqref{eq:Boltz.coll_rules} and~\eqref{eq:Boltz.B}, respectively. The corresponding generalisation is precisely that indicated in~\eqref{eq:Boltz.Toscani}. Finally, in~\eqref{eq:Boltz.Toscani} the notation $\ave{\cdot}$ stands for the expectation with respect to possibly \textit{random parameters} contained in the ``collision'' rules. Indeed, the latter might not be fully deterministic, especially when they are concerned with the human behaviour. If $\eta\in\pazocal{B}\subseteq\R$ is one such random parameter with law $h=h(\eta):\pazocal{B}\to\R_+$ then
$$ \ave{\cdot}:=\int_\pazocal{B}(\cdot)h(\eta)\,d\eta. $$
Although not immediately apparent, also the Boltzmann collision operator contains something similar. Indeed, the integral $\frac{1}{4\pi}\int_{\sS^2}(\cdot)\,dn$ can be understood as the average of the molecular collisions with respect to all possible directions of collision, which are uniformly distributed on $\sS^2$ if no preferential direction of collision exists. It is worth mentioning that a first stochastic interpretation of the collision process underlying the Boltzmann equation is due to Kac~\cite{kac1956CHAPTER}, who considered both the direction and the time of collision of pairs of molecules as random variables. The ``master equation'' of such a process is the celebrated \textit{Kac model}, which can be regarded as a simplified version of~\eqref{eq:Boltz}.

Equation~\eqref{eq:Boltz.Toscani}, complemented with the specification of the set $I$ and of the rules describing the interactions among the agents, constitutes a flexible and powerful paradigm for a rigorous mathematical formalisation of models of particle-like phenomena, such as those recalled above, which are possibly not (yet) based on consolidated physical theories. Typically, their interaction rules are indeed postulated heuristically, then~\eqref{eq:Boltz.Toscani} provides a sound mathematical framework where to set non-heuristic theoretical investigations. For instance, Toscani and coworkers developed a refined theory on the formation of wealth distribution curves, which formalises qualitatively and explains quantitatively the empirical observations made at the beginning of the $20$th century by the economist Vilfredo Pareto (Paris, 1848 -- C\'{e}ligny, 1923) about the inequalities in the wealth distribution of western societies. See~\cite{duering2009RMUP,matthes2008JSP}.

Just like the Boltzmann equation is not the only model of the statistical mechanics of particle systems so Boltzmann-type equations are not the only option to describe the interaction dynamics of multi-agent systems aggregately. The Boltzmann equation assumes implicitly \textit{short-range} interactions among gas molecules, which need to be in contact to collide. Therefore, it is in general not suited to model e.g., particle systems characterised by \textit{collisionless long-range} interactions such as those taking place in a gas of charged particles, viz. a plasma. In the mid-20th century, the Russian physicist \textbf{Anatoly Vlasov} (Balashov, 1908 -- Moscow, 1975) proposed a kinetic equation, which nowadays bears his name, in which the Boltzmann collision operator is replaced by a term accounting for the self-consistent \textit{collective} force field generated by the charged particles~\cite{vlasov1945URMSU}. If $f=f(x,v,t)$ is, like in~\eqref{eq:Boltz}, the distribution function of the position $x\in\R^3$ and the velocity $v\in\R^3$ of a generic plasma particle, the Vlasov equation reads
\begin{equation}
    \partial_tf+v\cdot\nabla_xf+\frac{1}{m}\ddiv_v{(Ff)}=0,
    \label{eq:Vlasov}
\end{equation}
where $m>0$ is the particle mass (assuming that all plasma particles have the same mass) and $F=F(x,v,t)\in\R^3$ is the Coulomb--Lorentz force due to the electric and magnetic fields, say $E=E(x,t)\in\R^3$ and $B=B(x,t)\in\R^3$ respectively, created \textit{collectively} in the point $x\in\R^3$ at time $t\geq 0$ by all charged plasma particles as predicted by Maxwell's equations:
$$ F:=q(E+v\times B), $$
where $q$ is the charge of the particles and $\times$ denotes the cross product in $\R^3$. The term $-\frac{1}{m}\ddiv_v(Ff)$ in~\eqref{eq:Vlasov}, where $\ddiv_v$ is the divergence with respect to the variable $v$, is the aforesaid replacement for the Boltzmann collision operator (the minus sign is due to the fact that one should write this term on the right-hand side of the equation to compare it directly with the Boltzmann collision operator). Since $E$, $B$ are independent of $v$, it is not difficult to see that the force field $F$ above is $v$-divergence-free. Using this in $\ddiv_v{(Ff)}=f\ddiv_v{F}+F\cdot\nabla_vf$ one obtains that the Vlasov equation~\eqref{eq:Vlasov} can be given the form
$$ \partial_tf+v\cdot\nabla_xf+\frac{q}{m}(E+v\times B)\cdot\nabla_vf=0 $$
in which it is indeed typically found.

To better appreciate the physical rationale for the Vlasov equation it is instructive to sketch the derivation of this equation from a particle point of view. Let $x_i=x_i(t)$ and $v_i=v_i(t)$ be the position and the velocity of the $i$th plasma particle at time $t$. The Newton equations of motion of this particle in the force field $F$ are
$$ \dot{x}_i=v_i, \qquad \dot{v}_i=\frac{F(x_i,v_i,t)}{m}, $$
where $\dot{}$ stands for the time derivative. The force field $F$ has to be understood as the result of the superposition of all force fields generated by every charged plasma particle. As such, it is the underlying means by which plasma particles interact collectively and collisionlessly. Let us introduce now the \textit{empirical distribution} of the particles:
$$ f_N(x,v,t):=\frac{1}{N}\sum_{i=1}^{N}\delta_{(x_i(t),\,v_i(t))}(x,v), $$
where $N\in\N$ is the total number of plasma particles and $\delta_{(x_i(t),\,v_i(t))}$ is the Dirac delta distribution centred in the point $(x_i(t),\,v_i(t))\in\R^6$. If $\varphi=\varphi(x,v):\R^3\times\R^3\to\R$ is a sufficiently smooth and compactly supported test function, we observe that
\begin{align*}
    \frac{d}{dt}\int_{\R^3}\int_{\R^3}\varphi(x,v)f_N(x,v,t)\,dx\,dv &= \frac{1}{N}\sum_{i=1}^{N}\frac{d}{dt}\varphi(x_i(t),v_i(t)) \\
    &= \frac{1}{N}\sum_{i=1}^{N}\bigl(\nabla_x\varphi(x_i(t),v_i(t))\cdot\dot{x}_i(t)+\nabla_v\varphi(x_i(t),v_i(t))\cdot\dot{v}_i(t)\bigr) \\
    &= \frac{1}{N}\sum_{i=1}^{N}\biggl(\nabla_x\varphi(x_i(t),v_i(t))\cdot v_i(t) \\
    &\phantom{=\frac{1}{N}\sum_{i=1}^{N}\biggl(} +\nabla_v\varphi(x_i(t),v_i(t))\cdot\frac{F(x_i(t),v_i(t),t)}{m}\biggr) \\
    &= \int_{\R^3}\int_{\R^3}\biggl(\nabla_x\varphi(x,v)\cdot v \\
    &\phantom{=\int_{\R^3}\int_{\R^3}\biggl(} +\nabla_v\varphi(x,v)\cdot\frac{F(x,v,t)}{m}\biggr)f_N(x,v,t)\,dx\,dv,
\end{align*}
hence, owing to the arbitrariness of $\varphi$, the empirical distribution satisfies~\eqref{eq:Vlasov} weakly for every number of plasma particles. If, in the limit $N\to\infty$, the sequence $\{f_N\}_{N\in\N}$ converges weakly-$\ast$ in the sense of measures to a distribution function $f$ then we deduce formally that $f$ satisfies in turn the Vlasov equation~\eqref{eq:Vlasov}. In other words, if for an increasing number of particles the empirical description of the plasma (i.e. the one provided by $f_N$) approximates an aggregate, viz. particle-less, statistical description (i.e. the one provided by $f$) then Vlasov's statistical model~\eqref{eq:Vlasov} applies. This is typically the case when the microscopic states $(x_i(t),\,v_i(t))$ of the particles belong to a given compact subset of $\R^6$ for all $t>0$ and all $N\in\N$, for then Prokhorov theorem~\cite{ambrosio2008BOOK} implies that $\{f_N\}_{N\in\N}$ converges up to subsequences.

In the abstract, the main difference of the Vlasov paradigm with respect to the Boltzmann one is that particles need not be in contact to interact. More precisely, particles do not collide at all but modify their velocity in consequence of long-range interactions caused by a force field that they contribute collectively to. This idea has been borrowed and generalised in the context of multi-agent systems, cf. e.g.,~\cite{dobrushin1979FAA}, to provide a statistical mechanics description of particle models based on Newton-type differential equations rather than on collision-like algebraic relationships. A prominent example is the celebrated Cucker--Smale model~\cite{cucker2007TAC,cucker2007JJM}, which was proposed to describe a system of autonomous agents, such as e.g., a flock of birds, that can possibly reach a consensus based on mutual interactions without central coordination. The differential version of the Cucker--Smale model is usually written as
$$ \dot{x}_i=v_i, \qquad \dot{v}_i=\frac{1}{N}\sum_{j=1}^{N}\frac{K}{{\bigl(\sigma^2+\abs{x_j-x_i}^2\bigr)}^\beta}(v_j-v_i), $$
where $x_i,\,v_i\in\R^3$ are the position and velocity of the $i$th agent of the flock, $i=1,\,\dots,\,N$, and $K,\,\sigma,\,\beta>0$ are model parameters. Each term of the sum in the acceleration equation can be understood as the force that the $j$th agent applies to the $i$th agent, so that the total force acting on the $i$th agent is the \textit{average} (cf. the coefficient $1/N$ in front of the sum) of these pairwise contributions. The Vlasov-type version of this model can be obtained formally by a procedure analogous to that presented above, see e.g.,~\cite{canizo2011M3AS,carrillo2010MSSET}. The result is~\eqref{eq:Vlasov} with $m=1$ and
$$ F(x,v,t):=\int_{\R^3}\int_{\R^3}\frac{K}{{\bigl(\sigma^2+\abs{x_\ast-x}^2\bigr)}^\beta}(v_\ast-v)f(x_\ast,v_\ast,t)\,dx_\ast\,dv_\ast. $$
Since, consistently with the observation above, this field is clearly the expectation of the generic pairwise force with respect to the statistical distribution of the particles generating it, the Vlasov-type equation with such an $F$ is also called a \textit{mean field equation} and the limit $N\to\infty$ under which it is obtained from the Newtonian dynamics a \textit{mean field limit}. As an aside, we mention that there are interesting relationships between Vlasov-type mean field equations and Boltzmann-type equations passing through the Fokker--Planck equations that we shall present in Section~\ref{sect:Fokker--Planck}. Here, we refrain from discussing this topic and refer to~\cite{carrillo2010SIMA,carrillo2010MSSET} for details.

In this paper, we are interested in Boltzmann-type equations like~\eqref{eq:Boltz.Toscani} for interacting multi-agent systems. The main goal is to provide a critical collection of fundamental mathematical tools and techniques, that can complement the modelling of multi-agent systems, often partly heuristic, with a rigorous and organic analysis of their basic theoretical properties. In more detail, the paper is organised as follows. In Section~\ref{sect:formal_derivation}, we propose a formal derivation of homogeneous Boltzmann-type equations from stochastic particle models of interacting agents, including the classical homogeneous Boltzmann equation as a particular case. Moreover, sticking to the case of linear symmetric interaction rules on the whole real line, which will then be the leitmotif in the whole paper, we provide a first taste of how such equations allow one to link aggregate trends of the system to distinctive features of the individual interactions. In Section~\ref{sect:Fourier}, we introduce the Fourier transform and the Fourier metric as essential tools that we shall use extensively to develop the subsequent mathematical theory of Boltzmann-type equations. In Section~\ref{sect:basic_theory}, we present the basic well-posedness theory of Boltzmann-type equations, namely existence, uniqueness, and continuous dependence of the solution. We discuss also how the theory of interactions on the whole real line can be borrowed to address interactions in a subset of the real line, which is often the case when e.g., the microscopic state of the agents is a non-negative variable due to physical limitations. In Section~\ref{sect:trend_equil}, we study the trend to equilibrium of Boltzmann-type equations, i.e. the possible convergence of the solutions to steady distributions which depict the statistical configurations of the system emerging spontaneously in the long run out of the interactions among the agents. We also provide some general characterisations of the steady distributions in terms of their moments and tails. In Section~\ref{sect:Fokker--Planck}, we push the study of equilibrium distributions forward by introducing the quasi-invariant limit. This is an asymptotic procedure which, in special regimes of the interaction parameters, transforms Boltzmann-type integro-differential equations in Fokker--Planck differential equations potentially more tractable as far as the explicit computation of equilibrium solutions is concerned. We consider, in particular, a few of such regimes meaningful for applications and exhibit in each of them the analytical equilibrium distribution obtained from the corresponding Fokker--Planck equation. In addition to this, considering that the limit Fokker--Planck equations can be possibly regarded as kinetic models \textit{per se} replacing Boltzmann-type equations in appropriate regimes of the parameters, we address the uniqueness and continuous dependence of their time-evolving solutions. In Section~\ref{sect:MonteCarlo}, we sketch the basics of the Monte Carlo method for the numerical solution of Boltzmann-type equations. In mathematical physics and applied mathematics numerical simulations are often an essential complement to the development of analytical theories, as they allow one to visualise the predicted solutions or to catch a glimpse beyond the boundaries of the established theoretical results. In the case of Boltzmann-type equations, the Monte Carlo numerical method is not only a discretisation technique but is intimately correlated to the particle physics underlying the derivation of the equations in a virtuous circle among modelling, analysis, and numerics. The paper is concluded by Section~\ref{sect:graph}, where we show how the tools and methods set out previously can be employed to address Boltzmann-type equations on graphs, a recent extension of the standard kinetic approach 
conceived to model networked multi-agent systems.

\section{Derivation of a homogeneous Boltzmann-type equation}
\label{sect:formal_derivation}
\subsection{Agent-based model}
\label{sect:agent-based_model}
We consider a large system of \textit{indistinguishable} agents that interact in pairs, whereby they update over time their microscopic state. The latter is described, at time $t\geq 0$, by a scalar random variable $V_t\in\R$. Notice that we do not include, in the symbol $V_t$, any label referring to the agent (such as e.g., $V_{t,i}$, $V_t^i$ or similar) because, as said, agents are indistinguishable. This means that any of them is representative of all the agents of the system, a fundamental assumption at the basis of the \textit{statistical} Boltzmann-type approach. We observe that this is different from e.g., the Vlasov-type mean field approach mentioned in Section~\ref{sect:intro}, where the distinction among the agents is initially preserved and is possibly lost only in the limit of an infinite number of agents.

To model binary-interaction-based dynamics we fix a time step $\Delta{t}>0$ and sample \textit{independently} two agents with states, say, $V_t$, $V^\ast_t$. We assume that within the time step $\Delta{t}$ they interact with a certain probability and, if the interaction occurs, they get the new states $V'_t,\,{V^\ast_t}'\in\R$, respectively. In formulas, we write:
\begin{equation}
	V_{t+\Delta{t}}=(1-\Theta)V_t+\Theta V'_t, \qquad V^\ast_{t+\Delta{t}}=(1-\Theta)V^\ast_t+\Theta{V^\ast_t}',
	\label{eq:Vt+Dt}
\end{equation}
where $\Theta\in\{0,\,1\}$ is a Bernoulli random variable, independent of the sampling of $V_t$, $V^\ast_t$, such that:
\begin{enumerate*}[label=(\roman*)]
\item if $\Theta=0$ then the interaction does not occur within the time step $\Delta{t}$ and the post-interaction states $V_{t+\Delta{t}}$, $V^\ast_{t+\Delta{t}}$ coincide therefore with the pre-interaction ones $V_t$, $V^\ast_t$;
\item conversely, if $\Theta=1$ then the interaction occurs within the time step $\Delta{t}$, leading the post-interaction states to become $V'_t$, ${V^\ast_t}'$.
\end{enumerate*}
Specifically, we let
\begin{equation}
	\Theta\sim\operatorname{Bernoulli}{(\Delta{t})},
	\label{eq:Theta}
\end{equation}
therefore
$$ \Prob{\Theta=1}=\Delta{t}, \qquad \Prob{\Theta=0}=1-\Delta{t} $$
under the constraint $\Delta{t}\leq 1$.

\begin{remark}
As we shall see in a moment, the constraint $\Delta{t}\leq 1$ is not a limitation from the analytical point of view but might pose some numerical restrictions, cf. Section~\ref{sect:MonteCarlo}. A formally different definition of $\Theta$, however equivalent to~\eqref{eq:Theta} to all our purposes, is
$$ \Theta\sim\operatorname{Bernoulli}{\left(\frac{\Delta{t}}{1+\Delta{t}}\right)}, $$
which is free from constraints on $\Delta{t}$ because $\frac{\Delta{t}}{1+\Delta{t}}<1$ for all $\Delta{t}>0$.

We also remark that it is possible to include an \textit{interaction rate} $\mu>0$ in the particle description by letting 
\begin{equation}
 \Theta\sim\operatorname{Bernoulli}{\left(\mu\Delta{t}\right)} \label{eq:Bern.mu}
 \end{equation}
under the constraint $\mu\Delta{t}\leq 1$. This way, a high rate $\mu$ implies a small interval of time $\Delta{t}$ needed to observe an interaction. Assumption~\eqref{eq:Theta} corresponds to $\mu=1$, a simplification that we shall invariably make in the subsequent developments.
\end{remark}

In~\eqref{eq:Vt+Dt}, $V'_t$ and ${V^\ast_t}'$ are placeholders for the physical models of the post-interaction states of the agents when an interaction occurs. In the collisional kinetic theory, they are expressed as functions of the pre-interaction states $V_t$, $V^\ast_t$. In this paper, we focus on the case of \textit{linear} and \textit{symmetric} interactions, in which one has:
\begin{equation}
	V'_t=pV_t+qV^\ast_t, \qquad {V^\ast_t}'=pV^\ast_t+qV_t,
	\label{eq:linsymint}
\end{equation}
$p,\,q\in\R_+$ being possibly random parameters independent of $V_t$, $V^\ast_t$, $\Theta$ and with prescribed laws. The interaction rules~\eqref{eq:linsymint} express the new post-interaction states as a mixing (linear combination) of the pre-interaction states. They are said to be symmetric because the two rules correspond to each other up to a switch of the roles of $V_t$ and $V^\ast_t$.

\begin{remark}
In the abstract, we could consider $p,\,q\in\R$ rather than restricting these parameters to $\R_+$. Nevertheless, physical model parameters are most of the times non-negative. Moreover, with $p,\,q\geq 0$ we get rid of some minor technicalities of the theory that we shall develop, still without loss of generality.
\end{remark}

\subsection{Statistical description}
\label{sect:stat_descr}
Building on the original idea of Boltzmann, we aim to provide a statistical description of the system of interacting agents in terms of the evolution of the probability distribution of the microscopic state $V_t$. For this, we introduce the \textit{kinetic distribution function}
$$ f=f(v,t):\R\times [0,\,+\infty)\to\R_+, $$
which expresses the law of $V_t$. This means that
$$ \Prob{V_t\in A}=\int_Af(v,t)\,dv $$
for every measurable set $A\subseteq\R$, together with the normalisation condition $\int_\R f(v,t)\,dv=1$ for all $t\geq 0$.

\begin{remark}
We use for $f$ the classical notation of functions, as if the law of $V_t$ were a measure absolutely continuous with respect to the Lebesgue measure in $\R$ with density $f$. Nevertheless, $f(v,t)\,dv$ has to be understood just as a practical customary writing, which stands more in general for $f(dv,t)$. The theory that we shall develop covers indeed the case in which $f$ is an abstract probability measure in $\R$ with respect to $v$ parametrised by $t$. Clearly, when $f(\cdot,t)$ is a non-negative integrable function we can refer to it as the \textit{probability density function} of the random variable $V_t$.
\end{remark}

To derive a model of the evolution of $f$ motivated by the particle dynamics described above the idea is to average the equations~\eqref{eq:Vt+Dt}. However, averaging only the information delivered by~\eqref{eq:Vt+Dt} would be insufficient to obtain the entire evolution of the distribution function $f$, because clearly the law of a random variable is not univocally identified by its mean value only. For this, we introduce the concept of \textit{observable quantity}, namely an arbitrary function $\varphi=\varphi(v):\R\to\R$ which can be computed out of the knowledge of the values taken by the random variable $V_t$. Evaluating $\varphi$ on both sides of~\eqref{eq:Vt+Dt} yields
$$ \varphi(V_{t+\Delta{t}})=\varphi\bigl((1-\Theta)V_t+\Theta V'_t\bigr), \qquad \varphi(V^\ast_{t+\Delta{t}})=\varphi\bigl((1-\Theta)V^\ast_t+\Theta{V^\ast_t}'\bigr). $$
Averaging now these relationships and using the fact that $\Theta$ is, by construction, independent of the other random variables we obtain
\begin{align}
	\begin{aligned}[c]
		\ave{\varphi(V_{t+\Delta{t}})} &= \ave{\varphi(V_t)}(1-\Delta{t})+\ave{\varphi(V_t')}\Delta{t} \\
		\ave{\varphi(V^\ast_{t+\Delta{t}})} &= \ave{\varphi(V^\ast_t)}(1-\Delta{t})+\ave{\varphi({V^\ast_t}')}\Delta{t},
	\end{aligned}
	\label{eq:expectation_w.r.t._Theta}
\end{align}
where, here and henceforth, $\ave{\cdot}$ denotes expectation. Rearranging the terms and dividing by $\Delta{t}$ we deduce
$$ \frac{\ave{\varphi(V_{t+\Delta{t}})}-\ave{\varphi(V_t)}}{\Delta{t}}=\ave{\varphi(V'_t)}-\ave{\varphi(V_t)}, \qquad
	\frac{\ave{\varphi(V^\ast_{t+\Delta{t}})}-\ave{\varphi(V^\ast_t)}}{\Delta{t}}=\ave{\varphi({V^\ast_t}')}-\ave{\varphi(V^\ast_t)}, $$
whence, passing formally to the continuous-time limit $\Delta{t}\to 0^+$,
\begin{equation}
	\frac{d}{dt}\ave{\varphi(V_t)}=\ave{\varphi(V'_t)}-\ave{\varphi(V_t)}, \qquad
		\frac{d}{dt}\ave{\varphi(V^\ast_t)}=\ave{\varphi({V^\ast_t}')}-\ave{\varphi(V^\ast_t)}.
	\label{eq:cont_time_lim}
\end{equation}
It is in this passage that the constraint $\Delta{t}\leq 1$ imposed by~\eqref{eq:Theta} becomes uninfluential.

The remaining expectations can be computed using $f$, for instance:
$$ \ave{\varphi(V_t)}=\int_\R\varphi(v)f(v,t)\,dv $$
and likewise for $\varphi(V^\ast_t)$. We notice, in particular, that owing to~\eqref{eq:linsymint} the expectations $\ave{\varphi(V'_t)}$ and $\ave{\varphi({V^\ast_t}')}$ would require the joint law of $V_t$, $V^\ast_t$. Nevertheless, since by assumption the interacting agents are sampled independently their microscopic states are independent at the moment of the interaction. Therefore, their joint law is simply $f(v,t)f(v_\ast,t)$ and we have e.g.,
$$ \ave{\varphi(V'_t)}=\int_\R\int_\R\ave{\varphi(v')}f(v,t)f(v_\ast,t)\,dv\,dv_\ast, $$
where further expectation $\ave{\varphi(v')}$ is meant with respect to the laws of the possibly random parameters $p$, $q$. On the whole, summing the two equations in~\eqref{eq:cont_time_lim} and recalling that $V_t$, $V^\ast_t$ share the same distribution $f$ we get
\begin{equation}
	\frac{d}{dt}\int_\R\varphi(v)f(v,t)\,dv=\int_\R\int_\R\left(\frac{\ave{\varphi(v')}+\ave{\varphi(v_\ast')}}{2}-\varphi(v)\right)f(v,t)f(v_\ast,t)\,dv\,dv_\ast,
	\label{eq:Boltztype.weak.general}
\end{equation}
which holds for every observable quantity $\varphi$. Notice in particular that, using $\int_\R f(v_\ast,t)\,dv_\ast=1$ for all $t\geq 0$, we have written
$$ \int_\R\varphi(v)f(v,t)\,dv=\int_\R\int_\R\varphi(v)f(v,t)f(v_\ast,t)\,dv\,dv_\ast $$
to make the right-hand side more compact.

If the interaction rules are symmetric, like in~\eqref{eq:linsymint}, then it is easy to see that
$$ \int_\R\int_\R\ave{\varphi(v')}f(v,t)f(v_\ast,t)\,dv\,dv_\ast=\int_\R\int_\R\ave{\varphi(v'_\ast)}f(v,t)f(v_\ast,t)\,dv\,dv_\ast. $$
Therefore, the equation for $f$ we shall deal with henceforth is finally
\begin{equation}
	\frac{d}{dt}\int_\R\varphi(v)f(v,t)\,dv=\int_\R\int_\R\ave{\varphi(v')-\varphi(v)}f(v,t)f(v_\ast,t)\,dv\,dv_\ast,
	\label{eq:Boltztype.weak}
\end{equation}
which holds for every observable $\varphi$, with
\begin{equation}
	v'=pv+qv_\ast.
	\label{eq:v'}
\end{equation}
In~\eqref{eq:Boltztype.weak} we have used the linearity of the expectation together with $\ave{\varphi(v)}=\varphi(v)$ (because $\varphi(v)$ is constant with respect to $p$, $q$) to further compact the notation at the right-hand side.

\begin{remark}
The form~\eqref{eq:Boltztype.weak} of the equation for $f$ holds, in general, for every symmetric interaction rule possibly different from~\eqref{eq:v'} (in particular, possibly also non-linear). Conversely, if the interaction rule is not symmetric then the general form~\eqref{eq:Boltztype.weak.general} has to be used.
\end{remark}

\subsection{Weak and strong forms of a Boltzmann-type equation}
Since~\eqref{eq:Boltztype.weak} holds for every $\varphi$, the latter can be regarded as a test function, whereby~\eqref{eq:Boltztype.weak} is actually the \textit{weak form} of the evolution equation of $f$. Its meaning can be expressed in words by saying that the time variation of the mean value of an observable quantity (left-hand side) is the mean value of the variation of that quantity in a generic representative interaction (right-hand side).

To pass formally to the strong form of~\eqref{eq:Boltztype.weak}, i.e. the one which does not involve $\varphi$, the strategy is to collect $\varphi$ in each term of the equation. The problem then arises of how to reduce the term containing $\varphi(v')$ to one involving $\varphi(v)$. This can be done by a proper change of variable in the integral, which requires to introduce the \textit{inverse interaction rules}
\begin{equation}
	v=\frac{p}{p^2-q^2}v'-\frac{q}{p^2-q^2}v_\ast', \qquad v_\ast=\frac{p}{p^2-q^2}v_\ast'-\frac{q}{p^2-q^2}v'.
	\label{eq:invint}
\end{equation}
To have them well-defined we are going to assume $q\neq \pm p$. This change of variables implies also $dv\,dv_\ast=\frac{1}{\abs{p^2-q^2}}\,dv'\,dv'_\ast$, being $p^2-q^2$ the Jacobian determinant of the change of variables, whence
\begin{align*}
	\int_\R\int_\R\ave{\varphi(v')}f(v,t)f(v_\ast,t)\,dv\,dv_\ast &= \ave*{\int_\R\int_\R\varphi(v')f(v,t)f(v_\ast,t)\,dv\,dv_\ast} \\
	&= \int_\R\int_\R\varphi(v')\ave*{\frac{1}{\abs{p^2-q^2}}f(v,t)f(v_\ast,t)}\,dv'\,dv'_\ast,
\end{align*}
where on the left-hand side $v'$ is thought of as a function of $v$, $v_\ast$ through the interaction rule~\eqref{eq:v'} while on the last right-hand side $v$, $v_\ast$ are thought of as functions of $v'$, $v_\ast'$ through the inverse interaction rules~\eqref{eq:invint}.

Plugging this into~\eqref{eq:Boltztype.weak} while noticing that formally $\frac{d}{dt}\int_\R\varphi(v)f(v,t)\,dv=\int_\R\varphi(v)\partial_tf(v,t)\,dv$ yields
\begin{align*}
    \int_\R\varphi(v)\partial_tf(v,t)\,dv &= \int_\R\varphi(v')\int_\R\ave*{\frac{1}{\abs{p^2-q^2}}f(v,t)f(v_\ast,t)}\,dv'_\ast\,dv' \\
    &\phantom{=} -\int_\R\varphi(v)\int_\R f(v,t)f(v_\ast,t)\,dv_\ast\,dv.
\end{align*}
Now, to make the notation uniform between the first and second term at the right-hand side it is customary to rename, in the first term, the pre-interaction states as $\pr{v}$, $\pr{v}_\ast$ and the post-interaction states, which are dummy variables, as $v$, $v_\ast$. Clearly, $\pr{v}$, $\pr{v}_\ast$ are then thought of as functions of $v$, $v_\ast$ according to~\eqref{eq:invint} upon renaming the variables there consistently. With this trick we get
$$ \int_\R\varphi(v)\partial_tf(v,t)\,dv=\int_\R\varphi(v)\int_\R\ave*{\frac{1}{\abs{p^2-q^2}}f(\pr{v},t)f(\pr{v}_\ast,t)-f(v,t)f(v_\ast,t)}\,dv_\ast\,dv, $$
whence, invoking the arbitrariness of $\varphi$,
\begin{equation}
	\partial_tf=\int_\R\ave*{\frac{1}{\abs{p^2-q^2}}f(\pr{v},t)f(\pr{v}_\ast,t)-f(v,t)f(v_\ast,t)}\,dv_\ast,
	\label{eq:Boltztype.strong}
\end{equation}
which is the strong form of~\eqref{eq:Boltztype.weak}.

Owing to its clear structural analogy to the classical homogeneous Boltzmann equation, that we shall present in Section~\ref{sect:Boltz.deriv},~\eqref{eq:Boltztype.strong} is called a \textit{Boltzmann-type equation} and, consequently,~\eqref{eq:Boltztype.weak} is called a Boltzmann-type equation in weak form. Notice that~\eqref{eq:Boltztype.strong} is an integro-differential equation because the right-hand side can be read as the action of a bilinear integral operator on the kinetic distribution function $f$:
\begin{equation}
	Q(f,f)(v,t):=\int_\R\ave*{\frac{1}{\abs{p^2-q^2}}f(\pr{v},t)f(\pr{v}_\ast,t)-f(v,t)f(v_\ast,t)}\,dv_\ast.
	\label{eq:Q.strong}
\end{equation}
Borrowing the jargon of the classical kinetic theory, this operator $Q$ is called the \textit{collisional operator}. Comparing~\eqref{eq:Boltztype.strong} with~\eqref{eq:Boltztype.weak} we see that $Q$ has the following property:
\begin{equation}
	\int_\R\varphi(v)Q(f,f)(v,t)\,dv=\int_\R\int_\R\ave{\varphi(v')-\varphi(v)}f(v,t)f(v_\ast,t)\,dv\,dv_\ast
	\label{eq:Q.weak}
\end{equation}
for all $\varphi$. The fact that $\int_\R\varphi(v)Q(f,f)(v,t)\,dv$ looks mathematically more friendly than $Q$ itself is at the basis of our preference, in this work, for the weak form~\eqref{eq:Boltztype.weak} of the Boltzmann type equation over the strong one~\eqref{eq:Boltztype.strong}.

\begin{remark}
For generic interaction rules, and generic interaction frequency $\mu$, the strong form~\eqref{eq:Boltztype.strong} of the Boltzmann type equation generalises simply as
\begin{equation} \partial_tf=\mu Q(f,f), \qquad Q(f,f)(v,t):=\int_\R\ave*{\frac{1}{\abs{J}}f(\pr{v},t)f(\pr{v}_\ast,t)-f(v,t)f(v_\ast,t)}\,dv_\ast,\label{eq:Boltztype.strong.gen} 
\end{equation}
where $J$ is the Jacobian determinant of the inverse interactions. Clearly, in order for this form to be well-defined it is necessary that the mapping $(v,\,v_\ast)\mapsto (v',\,v_\ast')$ be a diffeomorphism. Notice that the weak form~\eqref{eq:Boltztype.weak} does not require instead such a smoothness of the interaction rule.
\end{remark}

\subsection{Evolution of the moments}
\label{sect:moments_evol}
The weak form~\eqref{eq:Boltztype.weak} of the Boltzmann-type equation is particularly useful to study the evolution of the statistical moments of $f$, i.e. the quantities
$$ M_n(t):=\int_\R v^nf(v,t)\,dv, \qquad n\in\N. $$
Notice that $M_0(t)=1$ for all $t\geq 0$. The first moment $M_1$ is the mean state of the system while the second moment $M_2$ is typically understood with the physical meaning of \textit{energy} of the system. By means of the first and second moment one can also define
$$ \Var{f}(t):=M_2(t)-M_1^2(t)=\int_\R v^2f(v,t)\,dv-\left(\int_\R vf(v,t)\,dv\right)^2, $$
namely the variance of the distribution $f$ which conveys the physical meaning of \textit{internal energy} of the system. The Cauchy-Schwarz inequality implies straightforwardly that $\Var{f}(t)\geq 0$ for all $t$, as it is well-known from the probability theory.

In many applications the trend of the statistical moments, and especially that of these first moments, provides useful hints on the behaviour of the system, in particular for large times, also in the absence of an explicit characterisation of the distribution $f$, which is often difficult to obtain. It is therefore interesting to get a direct picture of the moment evolution from the Boltzmann-type equation. This can be done by letting $\varphi(v)=v^n$ in~\eqref{eq:Boltztype.weak}-\eqref{eq:v'}, which invoking the binomial theorem:
\begin{align*}
	\varphi(v')={(pv+qv_\ast)}^n &= \sum_{k=0}^{n}\binom{n}{k}p^kq^{n-k}v^kv_\ast^{n-k} \\
	&= p^nv^n+q^nv_\ast^n+\sum_{k=1}^{n-1}\binom{n}{k}p^kq^{n-k}v^kv_\ast^{n-k}
\end{align*}
and with some further little algebra yields
$$ \frac{dM_n}{dt}=\left(\ave{p^n+q^n}-1\right)M_n+\sum_{k=1}^{n-1}\binom{n}{k}\ave{p^kq^{n-k}}M_kM_{n-k}. $$
In particular, the mean value and the energy satisfy
$$ \frac{dM_1}{dt}=(\ave{p+q}-1)M_1, \qquad \frac{dM_2}{dt}=\left(\ave{p^2+q^2}-1\right)M_2+2\ave{pq}M_1^2, $$
whence we argue for instance that:
\begin{itemize}
\item If $\ave{p+q}=1$ then $M_1$ is constant in time; if simultaneously $\ave{p^2+q^2}<1$ then $M_2$ tends asymptotically in time to a finite non-zero value, therefore so does in general the variance $\Var{f}$. In such a scenario, it is reasonable to expect that the system evolves towards an emerging statistical profile described by a non-trivial kinetic distribution function $f^\infty(v)=\lim_{t\to +\infty}f(v,t)$, i.e. one which neither collapses in a single point (Dirac mass) nor spreads on the whole real axis. Notice indeed that a Dirac mass would have null variance whereas a distribution smearing over the whole $\R$ would tend to have infinite variance;
\item If both $\ave{p+q}<1$ and $\ave{p^2+q^2}<1$ then $M_1$, $M_2$ decay exponentially fast to zero in time, therefore so does $\Var{f}$. In this case, we can expect that the system converges in time to a statistical profile described by $f^\infty(v)=\delta_0(v)$, namely the Dirac delta centred in $v=0$. This corresponds to an emerging behaviour in which all agents acquire, in the long run, the state $v=0$ independently of their initial states. Depending on the application, this situation can be referred to as an \textit{aggregation} or \textit{consensus}.
\end{itemize}

\subsection{The homogeneous Boltzmann equation}
\label{sect:Boltz.deriv}
For conceptual reference, it is instructive to see that the procedure set forth in Section~\ref{sect:stat_descr} can be used to derive formally also the classical \textit{homogeneous} Boltzmann equation, i.e.~\eqref{eq:Boltz} with $f$ independent of $x$, so that the advection term on the left-hand side vanishes. Such an equation is used to investigate closely the dynamics of pure collisions among gas molecules. Physically, it corresponds to assuming that the gas is uniformly distributed in space, in such a way that in every point the statistical distribution of the velocities of the molecules is the same.

First, we show how the collision rules~\eqref{eq:Boltz.coll_rules} are derived from the physical principles~\eqref{eq:Boltz.momentum_energy}. Let $\sS^2$ be the unit sphere in $\R^3$. Given a pair of colliding molecules with pre-collisional velocities $v,\,v_\ast\in\R^3$, let $n\in\sS^2$ be \textit{any} unit vector of $\R^3$ and let us consider the \textit{ansatz}
$$ v'=v-\gamma n, \qquad v_\ast'=v_\ast+\gamma n, $$
where $\gamma\in\R$ is a free parameter. Plugging it into~\eqref{eq:Boltz.momentum_energy} we obtain that~\eqref{eq:Boltz.momentum} is satisfied for every $\gamma\in\R$. Conversely, using $\abs{v'}^2=v'\cdot v'$ (and similarly for $\abs{v_\ast'}^2$) -- where $\cdot$ denotes the inner product in $\R^3$ -- and $\abs{n}=1$, from~\eqref{eq:Boltz.energy} we get
$$ \gamma^2+\gamma(v_\ast-v)\cdot n=0. $$
Ruling out the trivial solution $\gamma=0$, which would imply no collision, this yields
$$ \gamma=(v-v_\ast)\cdot n, $$
whence the collision rules~\eqref{eq:Boltz.coll_rules} follow.

Notice that, as stated above, $n$ can be in principle any unit vector of $\R^3$. Nevertheless, from the physical point of view it makes sense to take $n$ as a unit vector parallel to the relative position of the colliding molecules or, in other words, aligned with the direction of their centres. Indeed, this direction can be understood as the one along which the collision possibly occurs, whereas the components of the velocities of the molecules in the plane orthogonal to it can in no way lead the molecules to collide. Moreover, this choice of $n$ is suitable to devise physically meaningful expressions of the collision kernel, such as~\eqref{eq:Boltz.B}, which builds on the idea that the more the relative velocity of the colliding molecules is oriented in the direction of the collision the more frequent, or ``probable'', the collision is.

This setting can be recast in the statistical description of Section~\ref{sect:stat_descr} by taking advantage of the stochastic particle model~\eqref{eq:Vt+Dt}, where now we understand $V_t,\,V^\ast_t\in\R^3$ as the random variables representing the pre-collisional velocities, whose realisations are $v,\,v_\ast$, and $V_t',\,{V^\ast_t}'\in\R^3$ as the post-collisional velocities when a collision occurs, whose realisations are $v',\,v_\ast'$. From~\eqref{eq:Boltz.coll_rules} we deduce therefore
$$ V_t'=V_t+[(V^\ast_t-V_t)\cdot n]n, \qquad {V^\ast_t}'=V^\ast_t+[(V_t-V^\ast_t)\cdot n]n, $$
where $n$ has to be regarded as a random quantity, because the direction of collision of two random molecules is in turn random. These rules are the equivalent of~\eqref{eq:linsymint}; in particular, $n$ plays the role of a random coefficient like $p,\,q$. It is customary to assume that it is uniformly distributed on the sphere, i.e.
$$ n\sim\pazocal{U}(\sS^2), $$
to mean that there are in principle no preferential directions of collision.

Concerning the random variable $\Theta$ appearing in~\eqref{eq:Vt+Dt}, here we include in its law the collision kernel $B:\R\to\R_+$, namely the counterpart of the interaction rate $\mu$:
$$ \Theta\sim\operatorname{Bernoulli}(B((V^\ast_t-V_t)\cdot n)\Delta{t}). $$
Depending on $B$, the condition $B((V^\ast_t-V_t)\cdot n)\Delta{t}\leq 1$ might or might not be satisfied with a constant time step $\Delta{t}>0$. If it is not, like in~\eqref{eq:Boltz.B} where $B$ is unbounded, then one can conceptually use an adaptive $\Delta{t}$, which changes with every pair of colliding molecules.

Repeating the procedure described in Section~\ref{sect:stat_descr} and taking advantage of the symmetry of the collision rules~\eqref{eq:Boltz.coll_rules}, we arrive at the equivalent of~\eqref{eq:Boltztype.weak}, namely the weak form of the Boltzmann equation:
$$ \frac{d}{dt}\int_{\R^3}\varphi(v)f(v,t)\,dv=\int_{\R^3}\int_{\R^3}
    \ave{B((v_\ast-v)\cdot n)(\varphi(v')-\varphi(v))}f(v,t)f(v_\ast,t)\,dv\,dv_\ast, $$
where $\ave{\cdot}$ is the expectation with respect to the random parameter $n$, i.e.
$$ \ave{\cdot}=\frac{1}{4\pi}\int_{\sS^2}(\cdot)\,dn, $$
being $4\pi$ the Hausdorff measure of $\sS^2$. We can therefore rewrite the equation more explicitly as
$$ \frac{d}{dt}\int_{\R^3}\varphi(v)f(v,t)\,dv=\frac{1}{4\pi}\int_{\R^3}\int_{\R^3}\int_{\sS^2}
    B((v_\ast-v)\cdot n)(\varphi(v')-\varphi(v))f(v,t)f(v_\ast,t)\,dn\,dv\,dv_\ast. $$

\begin{remark}
It is worth pointing out that, unlike~\eqref{eq:Theta}, here the random variable $\Theta$ is not stochastically independent of $V_t$, $V^\ast_t$, and also of $n$, as these quantities enter the definition of its law. Therefore, the passage~\eqref{eq:expectation_w.r.t._Theta} has to be developed by appealing formally to the conditional expectation:
\begin{align*}
    \ave{\varphi(V_{t+\Delta{t}})} &= \ave{\ave{\varphi\bigl((1-\Theta)V_t+\Theta V_t')\bigr)\vert V_t,\,V^\ast_t,\,n}} \\
    &= \ave{\varphi(V_t)\bigl(1-B((V^\ast_t-V_t)\cdot n))\Delta{t}\bigr)}+\ave{\varphi(V_t')B((V^\ast_t-V_t)\cdot n)}\Delta{t}
\end{align*}
and similarly for $\ave{\varphi(V^\ast_{t+\Delta{t}})}$.
\end{remark}

To recover the strong formulation, we take advantage of the invertibility of the collision rules~\eqref{eq:Boltz.coll_rules} with unit Jacobian determinant to get:
\begin{align*}
    \frac{d}{dt}\int_{\R^3}\varphi(v)f(v,t)\,dv &= \frac{1}{4\pi}\int_{\R^3}\varphi(v')\left(
        \int_{\R^3}\int_{\sS^2}B((v_\ast-v)\cdot n)f(v,t)f(v_\ast,t)\,dn\,dv_\ast'\right)dv' \\
    &\phantom{=} -\frac{1}{4\pi}\int_{\R^3}\varphi(v)\left(\int_{\R^3}\int_{\sS^2}B((v_\ast-v)\cdot n)f(v,t)f(v_\ast,t)\,dn\,dv_\ast\right)dv,
\end{align*}
where, in the first term on the right-hand side, we understand the pre-collisional velocities $v,\,v_\ast$ as functions of the post-collisional ones $v',\,v_\ast'$ through the \textit{inverse collision}
$$ v=v'+[(v_\ast'-v')\cdot n]n, \qquad v_\ast=v_\ast'+[(v'-v_\ast')\cdot n]n. $$
Now, in the same integral we switch to the notation $\pr{v},\,\pr{v_\ast}$ for the pre-collisional velocities and simultaneously to the notation $v,\,v_\ast$ for the post-collisional velocities, i.e.:
\begin{equation}
    \pr{v}=v+[(v_\ast-v)\cdot n]n, \qquad \pr{v}_\ast=v_\ast+[(v-v_\ast)\cdot n]n,
    \label{eq:Boltz.invcoll}
\end{equation}
for homogeneity with the second integral. Thus, we reformulate the equation as
\begin{align*}
    \frac{d}{dt}\int_{\R^3}\varphi(v)f(v,t)\,dv &= \frac{1}{4\pi}\int_{\R^3}\varphi(v)\biggl[\int_{\R^3}\int_{\sS^2}
        \bigl(B((\pr{v}_\ast-\pr{v})\cdot n)f(\pr{v},t)f(\pr{v}_\ast,t) \\
    &\phantom{=} -B((v_\ast-v)\cdot n)f(v,t)f(v_\ast,t)\bigr)dn\,dv_\ast\biggr]dv.
\intertext{Then, comparing~\eqref{eq:Boltz.coll_rules} and~\eqref{eq:Boltz.invcoll} we see that formally $\pr{v}=v'$ and $\pr{v}_\ast=v_\ast'$, whence}
    &= \frac{1}{4\pi}\int_{\R^3}\varphi(v)\biggl[\int_{\R^3}\int_{\sS^2}\bigl(B((v_\ast'-v')\cdot n)f(v',t)f(v_\ast',t) \\
    &\phantom{=} -B((v_\ast-v)\cdot n)f(v,t)f(v_\ast,t)\bigr)dn\,dv_\ast\biggr]dv.
\end{align*}
From~\eqref{eq:Boltz.coll_rules} we compute $v'_\ast-v'=v_\ast-v+2[(v-v_\ast)\cdot n]n$ and consequently $(v_\ast'-v')\cdot n=-(v_\ast-v)\cdot n$, which, if $B$ is an even function like~\eqref{eq:Boltz.B}, implies $B((v_\ast'-v')\cdot n)=B((v_\ast-v)\cdot n)$. Therefore, collecting the collision kernel at the right-hand side and invoking the arbitrariness of $\varphi$ we are finally led to the following strong form of the homogeneous Boltzmann equation:
$$ \partial_tf=\frac{1}{4\pi}\int_{\R^3}\int_{\sS^2}B((v_\ast-v)\cdot n)\bigl(f(v',t)f(v_\ast',t)-f(v,t)f(v_\ast,t)\bigr)dn\,dv_\ast, $$
consistently with~\eqref{eq:Boltz}.

The Boltzmann collisional operator reads explicitly:
$$ Q(f,f)(v,t):=\frac{1}{4\pi}\int_{\R^3}\int_{\sS^2}B((v_\ast-v)\cdot n)\bigl(f(v',t)f(v_\ast',t)-f(v,t)f(v_\ast,t)\bigr)dn\,dv_\ast, $$
where $v',\,v_\ast'$ are given in terms of $v,\,v_\ast$ by~\eqref{eq:Boltz.coll_rules}. We observe that the possibility to express the joint velocity distribution of the colliding molecules as the product of the respective marginal distributions requires the hypothesis of stochastic independence of the molecules at the moment of the collision -- the so-called \textit{Boltzmann ansatz}. In Section~\ref{sect:agent-based_model}, dealing with an abstract multi-agent system, we enforced this hypothesis by postulating that the agents participating in an interaction are sampled independently. In the case of gases, instead, an analogous principle should be justified with a closer reference to the underlying physics. This issue has a long story in the mathematical-physical theory of the Boltzmann equation, especially as far as the rigorous derivation of the latter is concerned, along with the companion question of whether it is reasonable to consider only \textit{pairwise}, viz. \textit{binary}, collisions among gas molecules and neglect all collisions among more than two molecules at once. See e.g.,~\cite{cercignani1988BOOK,cercignani1994BOOK,cercignani1993CHAPTER}. A customary assumption, which grants an empirical validity of these two facts, is that the gas is \textit{rarefied}. The rationale behind it is that if the molecules are not densely packed then:
\begin{enumerate*}[label=(\roman*)]
\item on one hand, it is highly improbable that more than two of them are simultaneously so close to collide;
\item on the other hand, after a reciprocal collision any two molecules can wander a long way, meanwhile undergoing collisions with many other molecules, before possibly colliding again between them, so that should this occur they would have lost any mutual dependence.
\end{enumerate*}
    
\section{Fourier transform in kinetic theory}
\label{sect:Fourier}
The analytical study of the Boltzmann-type equation is expected to make formal arguments, such as those outlined in Section~\ref{sect:moments_evol} about the emergence of a stationary profile plus a number of others concerning e.g., the well-posedness -- hence the mathematical soundness -- of the equation itself, rigorous. To this purpose, it turns out that a particularly convenient mathematical tool is the \textit{Fourier transform}, which for a generic probability measure $\mu$ is defined as
$$ \hat{\mu}(\xi):=\int_\R e^{-i\xi v}\,d\mu(v). $$
Here and henceforth $i$ stands for the imaginary unit.

Let us denote by $\cP(\R)$ the set of probability measures defined on a convenient $\sigma$-algebra of $\R$, for instance the Borel $\sigma$-algebra, and by $C^0_b(\R)$ the space of bounded continuous functions on $\R$. Then we record preliminarily that:
\begin{lemma} \label{lemma:Fourier_basic}
If $\mu\in\cP(\R)$ then $\hat{\mu}\in C^0_b(\R)$ with $\norminf{\hat{\mu}}=1$.
\end{lemma}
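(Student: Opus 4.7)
The plan is to verify the three separate assertions packaged into the lemma—boundedness, continuity, and $\norminf{\hat\mu}=1$—each of which follows from a single elementary ingredient applied to the defining integral $\hat\mu(\xi)=\int_\R e^{-i\xi v}\,d\mu(v)$.

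First, for boundedness, I would use the triangle inequality for integrals with respect to a (positive) measure together with the obvious pointwise identity $\abs{e^{-i\xi v}}=1$ for every $\xi,v\in\R$. This gives
$$ \abs{\hat\mu(\xi)}\leq\int_\R\abs{e^{-i\xi v}}\,d\mu(v)=\int_\R d\mu(v)=1, $$
the last equality being the normalisation of the probability measure $\mu$. Hence $\hat\mu$ is defined on all of $\R$ and $\norminf{\hat\mu}\leq 1$.

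Next, for continuity, I would fix $\xi\in\R$ and take an arbitrary sequence $\xi_n\to\xi$. Since $v\mapsto e^{-i\xi_n v}$ converges pointwise to $v\mapsto e^{-i\xi v}$ and is dominated by the constant function $1$, which is $\mu$-integrable because $\mu\in\cP(\R)$, Lebesgue's dominated convergence theorem yields $\hat\mu(\xi_n)\to\hat\mu(\xi)$. Thus $\hat\mu\in C^0_b(\R)$.

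Finally, to upgrade the inequality $\norminf{\hat\mu}\leq 1$ to an equality, I would simply evaluate at $\xi=0$, obtaining $\hat\mu(0)=\int_\R d\mu(v)=1$, so that the supremum is actually attained. There is no real obstacle in any of these steps; the only thing to keep in mind is that the theory should cover the case in which $\mu$ need not be absolutely continuous with respect to the Lebesgue measure, which is why the proof is phrased directly in terms of integration against $\mu$ rather than against a density, consistently with the remark preceding the lemma.
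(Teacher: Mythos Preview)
Your proof is correct and follows essentially the same approach as the paper: both use the trivial bound $\abs{e^{-i\xi v}}=1$ against the probability measure for boundedness, dominated convergence for continuity, and evaluation at $\xi=0$ to pin down the norm. The only cosmetic difference is that the paper bounds the difference $\abs{e^{-i\xi v}-e^{-i\bar\xi v}}\leq 2$ before passing to the limit, whereas you apply dominated convergence directly to the integrand along a sequence; the substance is identical.
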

\begin{proof}
To show that $\hat{\mu}$ is continuous at an arbitrary point $\bar{\xi}\in\R$ we examine
$$ \abs{\hat{\mu}(\xi)-\hat{\mu}(\bar{\xi})}=\abs*{\int_\R\left(e^{-i\xi v}-e^{-i\bar{\xi}v}\right)d\mu(v)}
	\leq\int_\R\abs{e^{-i\xi v}-e^{-i\bar{\xi}v}}d\mu(v). $$
Since $\abs{e^{-i\xi v}-e^{-i\bar{\xi}v}}\leq 2$, by dominated convergence and invoking the continuity of the exponential function we get
$$ \lim_{\xi\to\bar{\xi}}\abs{\hat{\mu}(\xi)-\hat{\mu}(\bar{\xi})}\leq\int_\R\lim_{\xi\to\bar{\xi}}\abs{e^{-i\xi v}-e^{-i\bar{\xi}v}}d\mu(v)=0, $$
whence the continuity of $\hat{\mu}$ follows.

As for the boundedness, it is sufficient to observe that
$$ \abs{\hat{\mu}(\xi)}\leq\int_\R\abs{e^{-i\xi v}}\,d\mu(v)=\mu(\R)=1 $$
for all $\xi\in\R$. Moreover, the claim on the $\infty$-norm follows straightforwardly from this and the fact that $\hat{\mu}(0)=\mu(\R)=1$.
\end{proof}

\subsection{Fourier-transformed Boltzmann-type equation}
The first to realise that the Fourier transform could constitute an effective mathematical tool to attack the Boltzmann equation was Bobylev, who in~\cite{bobylev1975DANSSSR} noticed how the equation was considerably simplified under such a transformation.

To write our Boltzmann-type equation in Fourier transform we take advantage of the weak form~\eqref{eq:Boltztype.weak} in which we let $\varphi(v)=e^{-i\xi v}$, thereby getting
$$ \partial_t\hat{f}(\xi,t)=\widehat{Q}(\hat{f},\hat{f})(\xi,t), $$
where $\widehat{Q}(\hat{f},\hat{f})$ stands for the Fourier-transformed collisional operator. From~\eqref{eq:Q.weak} we see in particular that
\begin{align*}
	\widehat{Q}(\hat{f},\hat{f})(\xi,t) &= \int_\R e^{-i\xi v}Q(f,f)(v,t)\,dv \\ 
	&= \int_\R\int_\R\ave{e^{-i\xi v'}-e^{-i\xi v}}f(v,t)f(v_\ast,t)\,dv\,dv_\ast \\
	&= \int_\R\int_\R\ave{e^{-i\xi(pv+qv_\ast)}}f(v,t)f(v_\ast,t)\,dv\,dv_\ast-\hat{f}(\xi,t) \\
	&= \ave{\hat{f}(p\xi,t)\hat{f}(q\xi,t)}-\hat{f}(\xi,t)
\end{align*}
so that the Fourier-transformed Boltzmann-type equation reads finally
\begin{equation}
	\partial_t\hat{f}(\xi,t)=\ave{\hat{f}(p\xi,t)\hat{f}(q\xi,t)}-\hat{f}(\xi,t).
	\label{eq:FourierBoltz}
\end{equation}
Clearly, the Fourier-transformed collisional operator $\widehat{Q}(\hat{f},\hat{f})$ has a much simpler form than that of the collisional operator $Q(f,f)$. For this reason,~\eqref{eq:FourierBoltz} will be at the basis of most of the theory that we shall develop in the forthcoming sections.

A first interesting result concerns the regularity of the Fourier transform of a solution to the Boltzmann-type equation. Let $f_0=f_0(\cdot,0)\in\cP(\R)$ be the initial datum prescribed to~\eqref{eq:Boltztype.strong}.
\begin{proposition} \label{prop:Lr_regularity}
Assume that either coefficient $p$, $q$ of the interaction rule~\eqref{eq:v'} is non-zero. If $\hat{f}_0\in L^r(\R)$ for some $r\in\N\setminus\{0\}$ then $\hat{f}(\cdot,t)\in L^r(\R)$ for $t>0$.
\end{proposition}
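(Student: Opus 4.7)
The plan is to recast the Fourier-transformed equation \eqref{eq:FourierBoltz} in Duhamel (mild) form and then close an integral inequality for $F(t):=\norm{\hat{f}(\cdot,t)}{r}$ by a Gronwall-type argument. Writing \eqref{eq:FourierBoltz} as $\partial_t\hat{f}+\hat{f}=\ave{\hat{f}(p\xi,t)\hat{f}(q\xi,t)}$, multiplying by the integrating factor $e^t$, and integrating in time for each fixed $\xi$ yields
\[
\hat{f}(\xi,t)=e^{-t}\hat{f}_0(\xi)+\int_0^t e^{-(t-s)}\ave{\hat{f}(p\xi,s)\hat{f}(q\xi,s)}\,ds.
\]

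Next I would take the $L^r(\R)$-norm in $\xi$ of both sides and push it inside the time integral by Minkowski's integral inequality, and likewise inside the expectation $\ave{\cdot}$ by a further application of Minkowski (equivalently, Jensen). This reduces matters to controlling $\norm{\hat{f}(p\cdot,s)\hat{f}(q\cdot,s)}{r}$ for each realisation of $p,q$. Here Lemma~\ref{lemma:Fourier_basic} plays the key role: since $\norminf{\hat{f}(\cdot,s)}\le 1$, we have $\abs{\hat{f}(p\xi,s)\hat{f}(q\xi,s)}\le\abs{\hat{f}(p\xi,s)}$ pointwise, and assuming without loss of generality that $p\neq 0$ the change of variable $\eta=p\xi$ gives
\[
\norm{\hat{f}(p\cdot,s)\hat{f}(q\cdot,s)}{r}\le\abs{p}^{-1/r}\norm{\hat{f}(\cdot,s)}{r}=\abs{p}^{-1/r}F(s).
\]

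Combining these estimates and setting $C:=\ave{\abs{p}^{-1/r}}$, I arrive at the linear integral inequality
\[
F(t)\le e^{-t}F(0)+C\int_0^t e^{-(t-s)}F(s)\,ds,
\]
which, upon introducing $G(t):=e^tF(t)$, becomes $G(t)\le F(0)+C\int_0^t G(s)\,ds$ and is solved by Gronwall to give $F(t)\le F(0)\,e^{(C-1)t}<+\infty$ for every $t>0$. Since $F(0)=\norm{\hat{f}_0}{r}<+\infty$ by hypothesis, this yields the claim.

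The main obstacle I anticipate is a clean interpretation of the assumption ``either $p$ or $q$ is non-zero'': for the constant $C$ in the bound above to be finite one really needs the selected coefficient to be bounded away from zero almost surely (or at the very least that $\abs{p}^{-1/r}$ be integrable with respect to the law of $p$). The two successive applications of Minkowski's inequality (in time and in the expectation) together with the change of variable across a possibly random Jacobian are the delicate step where the hypothesis on $p,q$ must be spent, and a short clarifying remark on the admissible laws of $p,q$ would accompany the proof.
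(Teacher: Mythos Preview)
Your proof is correct and reaches exactly the same final bound $\norm{\hat{f}(t)}{L^r}\le\norm{\hat{f}_0}{L^r}e^{(\ave{p^{-1/r}}-1)t}$ as the paper, but the route is genuinely different. The paper works at the differential level: it multiplies the Fourier-transformed equation by $r\hat{f}^{r-1}$ to obtain an equation for $\hat{f}^r$, uses the pointwise inequality $\partial_t\abs{g}\le\abs{\partial_t g}$, integrates in $\xi$, and then applies H\"older's inequality with exponents $\tfrac{r}{r-1}$ and $r$ to the product $\abs{\hat{f}(\xi,t)}^{r-1}\abs{\hat{f}(p\xi,t)}$ before the change of variable and Gr\"onwall. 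You instead pass first to the Duhamel form, take the $L^r$-norm directly, and push it through the time integral and the expectation by two applications of Minkowski's inequality; the product is then handled with the trivial bound $\abs{\hat f(q\xi,s)}\le 1$ and the same change of variable. Your approach is arguably cleaner in that it avoids the somewhat delicate pointwise differential inequality for the modulus of a complex function, while the paper's approach stays closer to an energy-type argument that extends more directly (as the paper does just afterwards) to weighted $L^2$ norms such as $H^m$.

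Your closing remark about the meaning of ``either $p$ or $q$ is non-zero'' is well taken: both your argument and the paper's tacitly require not merely $p>0$ a.s.\ but $\ave{p^{-1/r}}<+\infty$ for the final estimate to be informative, and the paper does not comment on this either.
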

\begin{proof}
To fix the ideas, and without loss of generality, throughout the proof we shall assume that $p$ is non-zero.

Noticing that $r\hat{f}^{r-1}\partial_t\hat{f}=\partial_t\hat{f}^r$, we multiply~\eqref{eq:FourierBoltz} by $r\hat{f}^{r-1}$ to find
$$ \partial_t\hat{f}^r+r\hat{f}^r=r\hat{f}^{r-1}\ave{\hat{f}(p\xi,t)\hat{f}(q\xi,t)}. $$
Next, we multiply further both sides by $e^{rt}$ to obtain
$$ \partial_t\left(e^{rt}\hat{f}^r\right)=re^{rt}\hat{f}^{r-1}\ave{\hat{f}(p\xi,t)\hat{f}(q\xi,t)}. $$
Since for any complex-valued function $g$ it holds that $\partial_t\abs{g}\leq\abs{\partial_tg}$, from here we deduce
$$ \partial_t\left(e^{rt}\abs{\hat{f}}^r\right)\leq re^{rt}\abs{\hat{f}}^{r-1}\ave{\abs{\hat{f}(p\xi,t)}\cdot\abs{\hat{f}(q\xi,t)}} $$
and moreover, integrating with respect to $\xi\in\R$,
$$ \partial_t\left(e^{rt}\norm{\hat{f}(t)}{L^r}^r\right)\leq re^{rt}\ave*{\int_\R\abs{\hat{f}(\xi,t)}^{r-1}\abs{\hat{f}(p\xi,t)}\,d\xi}, $$
where we have used the fact that $\abs{\hat{f}(q\xi,t)}\leq 1$ for all $\xi\in\R$ from Lemma~\ref{lemma:Fourier_basic}.

Invoking now H\"{o}lder's inequality with exponents $\mathfrak{p}=\frac{r}{r-1}$, $\mathfrak{q}=r$, which are such that $\mathfrak{p},\,\mathfrak{q}\geq 1$ with $\frac{1}{\mathfrak{p}}+\frac{1}{\mathfrak{q}}=1$ (formally also when $r=1$), we discover
\begin{align*}
	\ave*{\int_\R\abs{\hat{f}(\xi,t)}^{r-1}\abs{\hat{f}(p\xi,t)}\,d\xi} &\leq \left(\int_\R\abs{\hat{f}(\xi,t)}^r\,d\xi\right)^\frac{r-1}{r}
		\ave*{\int_\R\abs{\hat{f}(p\xi,t)}^r\,d\xi}^\frac{1}{r} & (\text{let } \eta:=p\xi) \\
        &= \left(\int_\R\abs{\hat{f}(\xi,t)}^r\,d\xi\right)^\frac{r-1}{r}
		\ave*{\frac{1}{p^{1/r}}}\left(\int_\R\abs{\hat{f}(\eta,t)}^r\,d\eta\right)^\frac{1}{r} \\
	&= \ave*{\frac{1}{p^{1/r}}}\norm{\hat{f}(t)}{L^r}^r,
\end{align*}
whence
$$ \partial_t\left(e^{rt}\norm{\hat{f}(t)}{L^r}^r\right)\leq r\ave*{\frac{1}{p^{1/r}}}e^{rt}\norm{\hat{f}(t)}{L^r}^r. $$
Gr\"onwall's inequality applied to the function $e^{rt}\norm{\hat{f}(t)}{L^r}^r$ implies then
$$ \norm{\hat{f}(t)}{L^r}^r\leq\norm{\hat{f}_0}{L^r}^re^{r\left(\ave*{\frac{1}{p^{1/r}}}-1\right)t} $$
and finally
$$ \norm{\hat{f}(t)}{L^r}\leq\norm{\hat{f}_0}{L^r}e^{\left(\ave*{\frac{1}{p^{1/r}}}-1\right)t}. $$
This shows that if $\norm{\hat{f}_0}{L^r}<+\infty$ then $\norm{\hat{f}(t)}{L^r}<+\infty$ also for $t>0$, as desired.
\end{proof}

\begin{remark}
As a by-product of Proposition~\ref{prop:Lr_regularity}, we obtain an explicit estimate of the growth of the $L^r$-norm of $\hat{f}$ in time. In particular, by inspecting the proof carefully we see that the same argument can be repeated with $q$ in place of $p$, provided $q$ is non-zero, whereby we conclude
\begin{equation}
	\norm{\hat{f}(t)}{L^r}\leq\norm{\hat{f}_0}{L^r}e^{\left(\min\left\{\ave*{\frac{1}{p^{1/r}}},\,\ave*{\frac{1}{q^{1/r}}}\right\}-1\right)t},
		\qquad t>0.
	\label{eq:Fourier_Lm}
\end{equation}
\end{remark}

\subsubsection{\textit{A priori} regularity of the solution}
\label{sect:a_priori_reg}
For $r=2$,~\eqref{eq:Fourier_Lm} becomes straightforwardly an \textit{a priori} estimate on the $L^2$-norm of the distribution function $f$ itself, thanks to Parseval's identity. Specifically:
\begin{proposition} \label{prop:L2_regularity}
If either $p$ or $q$ in~\eqref{eq:v'} is non-zero and $f_0\in L^2(\R)$ then $f(\cdot,t)\in L^2(\R)$ for $t>0$ with
$$ \norm{f(t)}{L^2}\leq\norm{f_0}{L^2}e^{\left(\min\left\{\ave*{\frac{1}{\sqrt{p}}},\,\ave*{\frac{1}{\sqrt{q}}}\right\}-1\right)t}. $$
\end{proposition}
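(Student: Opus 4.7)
The strategy is essentially to read off the $r=2$ case of Proposition~\ref{prop:Lr_regularity} and then translate the estimate from the Fourier side back to the physical side via Parseval's (Plancherel's) identity. Since the hypothesis that either $p$ or $q$ is non-zero is exactly the one required by Proposition~\ref{prop:Lr_regularity}, and since $r=2$ makes $1/p^{1/r}=1/\sqrt{p}$ and $1/q^{1/r}=1/\sqrt{q}$, the bound~\eqref{eq:Fourier_Lm} specialises immediately to
$$ \norm{\hat{f}(t)}{L^2}\leq\norm{\hat{f}_0}{L^2}e^{\left(\min\left\{\ave*{\frac{1}{\sqrt{p}}},\,\ave*{\frac{1}{\sqrt{q}}}\right\}-1\right)t}. $$

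To conclude, I would invoke Parseval's identity for the convention $\hat{\mu}(\xi)=\int_\R e^{-i\xi v}d\mu(v)$ adopted here, namely $\norm{g}{L^2}^2=\frac{1}{2\pi}\norm{\hat{g}}{L^2}^2$ whenever $g\in L^2(\R)$. Applied once to $g=f(\cdot,t)$ and once to $g=f_0$, this turns the previous inequality into
$$ \sqrt{2\pi}\,\norm{f(t)}{L^2}\leq \sqrt{2\pi}\,\norm{f_0}{L^2}\,e^{\left(\min\left\{\ave*{\frac{1}{\sqrt{p}}},\,\ave*{\frac{1}{\sqrt{q}}}\right\}-1\right)t}, $$
and the prefactor $\sqrt{2\pi}$ cancels on both sides, yielding the asserted bound. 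In particular, the finiteness of $\norm{f_0}{L^2}$ (hence of $\norm{\hat{f}_0}{L^2}$) transfers to $\norm{f(t)}{L^2}$ for every $t>0$.

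There is essentially no obstacle here: the content of the statement is entirely packaged in Proposition~\ref{prop:Lr_regularity}, and the passage to $f$ from $\hat{f}$ is a one-line application of Parseval. The only mildly delicate point is to make sure Parseval applies, i.e.\ that $\hat{f}(\cdot,t)$ and $\hat{f}_0$ sit in $L^2(\R)$; but this is exactly the content of the $r=2$ case of Proposition~\ref{prop:Lr_regularity}, which furnishes both the membership and the quantitative estimate at once. Thus the proof reduces to a short two-line argument invoking the preceding proposition and Parseval's identity.
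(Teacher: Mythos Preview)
Your proof is correct and follows exactly the approach the paper takes: the paper simply remarks that for $r=2$ estimate~\eqref{eq:Fourier_Lm} becomes, via Parseval's identity, the asserted $L^2$ bound on $f$ itself, and states the proposition without further argument.
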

Therefore, $f(\cdot,t)$ is more regular than simply a probability measure in $\cP(\R)$.

Sticking to this case, we can show that if for some $m\in\N\setminus\{0\}$ it results $f_0\in H^m(\R)$, the Sobolev space of $L^2$ functions with up to their $m$-th derivative in $L^2$, then $f(\cdot,t)\in H^m(\R)$ for $t>0$ as well. In other words, the Boltzmann-type equation~\eqref{eq:Boltztype.strong} propagates in time the Sobolev regularity of the initial datum. Also in this case we rely on the Fourier-transformed Boltzmann-type equation~\eqref{eq:FourierBoltz}, thanks to the fact that the $H^m$-norm can be given the following representation in terms of the Fourier transform:
$$ \norm{f(t)}{H^m}:={\left(\int_\R{\left(1+\abs{\xi}^2\right)}^m\abs{\hat{f}(\xi,t)}^2\,d\xi\right)}^{1/2}. $$

The precise result goes as follows:
\begin{proposition} \label{prop:Hm_regularity}
Assume that either coefficient $p$, $q$ in~\eqref{eq:v'} is uniformly bounded away from zero, i.e.
$$ p\geq\delta \quad \text{or} \quad q\geq\delta $$
for some $0<\delta\leq 1$. If $f_0\in H^m(\R)$ for some $m\in\N\setminus\{0\}$ then $f(\cdot,t)\in H^m(\R)$ for $t>0$.
\end{proposition}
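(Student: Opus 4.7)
The plan is to extend the $L^r$ energy argument of Proposition~\ref{prop:Lr_regularity} (with $r=2$) to the weighted setting defined by $\|f(t)\|_{H^m}^2=\int_\R(1+|\xi|^2)^m|\hat{f}(\xi,t)|^2\,d\xi$, working directly with the Fourier-transformed equation~\eqref{eq:FourierBoltz}. Without loss of generality I carry out the argument under $p\ge\delta$; the case $q\ge\delta$ follows by the symmetry observed in the remark after Proposition~\ref{prop:Lr_regularity}.

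First, I would multiply~\eqref{eq:FourierBoltz} by $\overline{\hat{f}(\xi,t)}$, take the real part, and weight by $(1+|\xi|^2)^m$, obtaining
$$ \tfrac{1}{2}\partial_t\bigl[(1+|\xi|^2)^m|\hat{f}(\xi,t)|^2\bigr]+(1+|\xi|^2)^m|\hat{f}(\xi,t)|^2=(1+|\xi|^2)^m\operatorname{Re}\bigl(\overline{\hat{f}(\xi,t)}\langle\hat{f}(p\xi,t)\hat{f}(q\xi,t)\rangle\bigr). $$
Bounding the right-hand side in modulus, using $|\hat{f}(q\xi,t)|\le 1$ from Lemma~\ref{lemma:Fourier_basic}, and integrating in $\xi$ lead to
$$ \tfrac{1}{2}\partial_t\|f(t)\|_{H^m}^2+\|f(t)\|_{H^m}^2\le\bigl\langle\textstyle\int_\R(1+|\xi|^2)^m|\hat{f}(\xi,t)|\,|\hat{f}(p\xi,t)|\,d\xi\bigr\rangle. $$

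Second, I would apply the Cauchy--Schwarz inequality in $\xi$ inside the expectation, so that the remaining task is to control $\int_\R(1+|\xi|^2)^m|\hat{f}(p\xi,t)|^2\,d\xi$ by a multiple of $\|f(t)\|_{H^m}^2$. The change of variable $\eta=p\xi$ turns this integral into $\tfrac{1}{p}\int_\R(1+\eta^2/p^2)^m|\hat{f}(\eta,t)|^2\,d\eta$. Here the hypothesis $p\ge\delta$ with $\delta\le 1$ is crucial: since $(1+\eta^2/p^2)=(p^2+\eta^2)/p^2$, elementary casework on whether $p\le 1$ or $p\ge 1$ yields the uniform pointwise estimate $(1+\eta^2/p^2)^m\le(1+\eta^2)^m/\delta^{2m}$. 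Combined with $1/p\le 1/\delta$ this gives
$$ \int_\R(1+|\xi|^2)^m|\hat{f}(p\xi,t)|^2\,d\xi\le\frac{1}{\delta^{2m+1}}\|f(t)\|_{H^m}^2. $$

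Third, substituting back reduces the chain of inequalities to $\partial_t\|f(t)\|_{H^m}^2\le C_{\delta,m}\|f(t)\|_{H^m}^2$ for an explicit constant depending only on $\delta$ and $m$. Grönwall's lemma then propagates the finiteness of $\|f_0\|_{H^m}$ to all positive times, yielding $f(\cdot,t)\in H^m(\R)$ as claimed.

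The main obstacle is the weighted change of variable. In Proposition~\ref{prop:Lr_regularity} the argument only needs the expected reciprocal moment $\langle p^{-1/r}\rangle$, which is finite whenever $p$ is merely non-zero (in the realistic case of a law with no atom at the origin). With the $H^m$-weight, however, the factor $(1+\eta^2/p^2)^m$ blows up on scales $|\eta|\gg p$, so a random $p$ approaching zero cannot be tolerated; this is precisely why the stronger deterministic assumption $p\ge\delta$ (or $q\ge\delta$) is needed and why $\delta$ enters the final constant through the power $\delta^{-(2m+1)}$.
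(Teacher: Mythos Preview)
Your proposal is correct and follows essentially the same approach as the paper: both arguments multiply the Fourier-transformed equation by the conjugate, insert the weight $(1+|\xi|^2)^m$, exploit $|\hat f(q\xi,t)|\le 1$, apply Cauchy--Schwarz, and then control $\int_\R(1+|\xi|^2)^m|\hat f(p\xi,t)|^2\,d\xi$ by a change of variable before closing with Gr\"onwall. The only cosmetic difference is that the paper bounds the weighted integral by expanding $(1+|\xi|^2)^m$ with the binomial theorem and estimating each term by $\delta^{-(2m+1)}$, whereas you reach the same constant via casework on $p\le 1$ versus $p\ge 1$; the resulting inequality $\|f(t)\|_{H^m}\le\|f_0\|_{H^m}e^{(\delta^{-(m+1/2)}-1)t}$ is identical.
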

\begin{remark}
The requirement $\delta\leq 1$ is not restrictive but is convenient to obtain easier estimates in the proof.
\end{remark}
\begin{proof}
Without loss of generality, throughout the proof we shall assume $p\geq\delta$.

The same argument used in the proof of Proposition~\ref{prop:Lr_regularity}, applied with $r=2$, leads to
$$ \partial_t\left(e^{2t}\abs{\hat{f}}^2\right)\leq 2e^{2t}\abs{\hat{f}}\ave{\abs{\hat{f}(p\xi,t)}\cdot\abs{\hat{f}(q\xi,t)}}. $$
Multiplying both sides by ${(1+\abs{\xi}^2)}^m$ and integrating with respect to $\xi\in\R$ produces
\begin{align*}
	\partial_t\left(e^{2t}\norm{f(t)}{H^m}^2\right) &\leq 2e^{2t}\ave*{\int_\R\left(1+\abs{\xi}^2\right)^m\abs{\hat{f}(\xi,t)}\cdot\abs{\hat{f}(p\xi,t)}\,d\xi},
	\intertext{where we have used $\abs{\hat{f}(q\xi,t)}\leq 1$ from Lemma~\ref{lemma:Fourier_basic}. Cauchy-Schwarz inequality implies further}
	&\leq 2e^{2t}\ave*{\left(\int_\R\left(1+\abs{\xi}^2\right)^m\abs{\hat{f}(\xi,t)}^2\,d\xi\right)^{1/2}
		\left(\int_\R\left(1+\abs{\xi}^2\right)^m\abs{\hat{f}(p\xi,t)}^2\,d\xi\right)^{1/2}} \\
	&= 2e^{2t}\norm{f(t)}{H^m}\ave*{\left(\int_\R\left(1+\abs{\xi}^2\right)^m\abs{\hat{f}(p\xi,t)}^2\,d\xi\right)^{1/2}}.
\end{align*}
We develop now the remaining integral as
\begin{align*}
	\int_\R\left(1+\abs{\xi}^2\right)^m\abs{\hat{f}(p\xi,t)}^2\,d\xi &= \int_\R\sum_{k=0}^{m}\binom{m}{k}\abs{\xi}^{2k}\abs{\hat{f}(p\xi,t)}^2\,d\xi \\
	&= \int_\R\sum_{k=0}^{m}\binom{m}{k}\frac{\abs{\eta}^{2k}}{p^{2k+1}}\abs{\hat{f}(\eta,t)}^2\,d\eta & \text{(change of variable $\eta:=p\xi$)} \\
	&\leq \delta^{-(2m+1)}\norm{f(t)}{H^m}^2
\end{align*}
and we continue the previous estimate as
$$ \partial_t\left(e^{2t}\norm{f(t)}{H^m}^2\right)\leq 2\delta^{-(m+1/2)}e^{2t}\norm{f(t)}{H^m}^2. $$
Gr\"onwall's inequality applied to $e^{2t}\norm{f(t)}{H^m}^2$ yields
$$ e^{2t}\norm{f(t)}{H^m}^2\leq\norm{f_0}{H^m}^2e^{2\delta^{-(m+1/2)}t}, $$
i.e.
$$ \norm{f(t)}{H^m}\leq\norm{f_0}{H^m}e^{\left(\delta^{-(m+1/2)}-1\right)t}, $$
which confirms that if $\norm{f_0}{H^m}<+\infty$ then $\norm{f(t)}{H^m}<+\infty$ as well for $t>0$.
\end{proof}

For an alternative proof of Proposition~\ref{prop:Hm_regularity}, applied to a kinetic equation involving a more elaborated collisional operator, see~\cite{bisi2009CMS}.

\begin{remark}
The assumption that either $p$ or $q$ is non-zero is essential in both Propositions~\ref{prop:L2_regularity},~\ref{prop:Hm_regularity}. If $p$, $q$ vanish simultaneously, from~\eqref{eq:FourierBoltz} we get the equation $\partial_t\hat{f}=1-\hat{f}$, which is solved by $\hat{f}(\xi,t)=e^{-t}\hat{f}_0(\xi)+1-e^{-t}$. By inverse Fourier transform we find then
$$ f(v,t)=e^{-t}f_0(v)+(1-e^{-t})\delta_0(v), $$
which is a probability measure in $\cP(\R)$ but not a function in either $H^m(\R)$ or $L^2(\R)$ because of the Dirac delta.
\end{remark}

\subsection{Fourier metrics}
A crucial aspect for the development of a qualitative theory of the Boltzmann-type equation~\eqref{eq:Boltztype.strong} is the possibility to measure distances among probability measures. This is essential in order e.g., to prove existence and uniqueness of the solutions as well as to assess their stability with respect to perturbations of the initial datum or their convergence in certain asymptotic regimes.

The theory of the (optimal) transport of measures provides a prominent example of a family of metrics in the space of probability measures, the so-called \textit{Wasserstein distances}, which is thoroughly presented in~\cite{ambrosio2008BOOK,villani2009BOOK} and has also been used to address some qualitative properties of Boltzmann-type equations, see e.g.,~\cite{bisi2024PHYSD,freguglia2017CMS}. Nevertheless, owing to the Fourier representation~\eqref{eq:FourierBoltz} of~\eqref{eq:Boltztype.strong}, it turns out that another metric, based on the Fourier transform, is even more suited to treat collisional kinetic equations.

To introduce it, we define first the following family of spaces of probability measures:
$$ \cP_s(\R):=\left\{\mu\in\cP(\R)\,:\,\int_\R\abs{v}^s\,d\mu(v)<+\infty\right\}, \quad s>0. $$
Notice that $\cP_s(\R)\subseteq\cP_r(\R)$ if $0<r<s$, indeed H\"{o}lder's inequality with exponents $\mathfrak{p}=\frac{s}{r}$ and $\mathfrak{q}=\frac{s}{s-r}$, which are such that $\mathfrak{p},\,\mathfrak{q}\geq 1$ with $\frac{1}{\mathfrak{p}}+\frac{1}{\mathfrak{q}}=1$, implies
$$ \int_\R\abs{v}^r\,d\mu(v)\leq\left(\int_\R\abs{v}^s\,d\mu(v)\right)^{r/s}. $$

Next, given any two probability measures $\mu,\,\nu\in\cP(\R)$ we define their $s$-\textit{Fourier distance} as
\begin{equation}
	d_s(\mu,\nu):=\sup_{\xi\in\R\setminus\{0\}}\frac{\abs{\hat{\nu}(\xi)-\hat{\mu}(\xi)}}{{\abs{\xi}}^s}.
	\label{eq:ds}
\end{equation}

For an exhaustive review of the mathematical properties of $d_s$ and of its relationship with the Wasserstein distance we refer the interested reader to~\cite{carrillo2007RMUP}, see also~\cite{auricchio2020RLMA,duering2009RMUP,gabetta1995JSP,spiga2004AML}. Here, we collect two of its main features, which will be mostly useful in the sequel.
\begin{proposition} \label{prop:ds}
\begin{enumerate}[label=(\roman*)]
\item \label{prop:ds.finiteness} Given $\mu,\,\nu\in\cP_s(\R)$, $s>0$, let
$$ M^\mu_n:=\int_\R v^n\,d\mu(v), \qquad M^\nu_n:=\int_\R v^n\,d\nu(v) $$
be their respective $n$-th order moments ($n\in\N$, $n\leq s$). Moreover, let us denote by $[s]$ the integer part of $s$. If $M^\mu_n=M^\nu_n$ for every $n\leq [s]$ if $s\not\in\N$, or for every $n\leq s-1$ if $s\in\N$, then $d_s(\mu,\nu)<+\infty$.
\item \label{prop:cP.completeness} For $\alpha>0$, let $\cP_{s,\alpha,\cM_{s+\alpha}}(\R)$ denote the subset of $\cP_{s+\alpha}(\R)$ made of probability measures $\mu$ with prescribed moments $M^\mu_n$, $n\leq [s]$, and such that $\int_\R\abs{v}^{s+\alpha}\,d\mu(v)$ is uniformly bounded with respect to $\mu$ by a constant $\cM_{s+\alpha}>0$. Then $\cP_{s,\alpha,\cM_{s+\alpha}}(\R)$ endowed with the distance $d_s$ is a complete metric space.
\end{enumerate}
\end{proposition}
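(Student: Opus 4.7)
For part (i), the plan is to Taylor-expand the Fourier kernel $e^{-i\xi v}$ to the order made available by the hypothesis and to exploit the cancellation of the shared moments of $\mu$ and $\nu$. Set $n:=[s]$ if $s\notin\N$ and $n:=s-1$ if $s\in\N$, so that $M_k^\mu=M_k^\nu$ for $0\leq k\leq n$, and let $P_n(\theta):=\sum_{k=0}^{n}(i\theta)^k/k!$. The polynomial contribution integrates to the same value against $\mu$ and $\nu$, leaving
$$
\hat{\nu}(\xi)-\hat{\mu}(\xi)=\int_\R\bigl(e^{-i\xi v}-P_n(-\xi v)\bigr)\bigl(d\nu(v)-d\mu(v)\bigr).
$$
The key pointwise estimate is $|e^{i\theta}-P_n(\theta)|\leq C_s|\theta|^s$ for all $\theta\in\R$, to be obtained by a two-regime argument: the integral form of the Taylor remainder gives $|\theta|^{n+1}/(n+1)!\leq|\theta|^s$ on $\{|\theta|\leq 1\}$, while the triangle bound $|e^{i\theta}-P_n(\theta)|\leq 1+\sum_{k=0}^{n}|\theta|^k/k!\leq C_n|\theta|^n\leq C_n|\theta|^s$ handles $\{|\theta|\geq 1\}$ (since $n\leq s$). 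Substituting $\theta=-\xi v$ and integrating yields $|\hat{\nu}(\xi)-\hat{\mu}(\xi)|\leq C_s|\xi|^s\bigl(\int_\R|v|^s\,d\mu+\int_\R|v|^s\,d\nu\bigr)$, and dividing by $|\xi|^s$ and taking the supremum gives the finiteness claim.

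For part (ii), I would start from a $d_s$-Cauchy sequence $\{\mu_k\}\subset\cP_{s,\alpha,\cM_{s+\alpha}}(\R)$ and construct its limit in three steps. First, the inequality $|\hat{\mu}_k(\xi)-\hat{\mu}_j(\xi)|\leq|\xi|^s\,d_s(\mu_k,\mu_j)$ shows that $\{\hat{\mu}_k\}$ is Cauchy uniformly on every compact subset of $\R$, so it converges uniformly on compacta to a continuous function $\phi$ with $\phi(0)=1$; L\'evy's continuity theorem then delivers $\mu\in\cP(\R)$ with $\hat{\mu}=\phi$ such that $\mu_k\to\mu$ weakly. Second, to check that $\mu$ lies in the prescribed subset, I would apply the weak limit to the bounded continuous function $|v|^{s+\alpha}\wedge R$ and send $R\to+\infty$ via monotone convergence to obtain $\int_\R|v|^{s+\alpha}\,d\mu\leq\cM_{s+\alpha}$; the same uniform $(s+\alpha)$-moment bound provides uniform integrability of $|v|^n$ for each $n\leq[s]<s+\alpha$, so $M_n^{\mu_k}\to M_n^{\mu}$ and the prescribed moments transfer to $\mu$. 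Third, to upgrade the weak convergence to $d_s$-convergence, fix $\varepsilon>0$ and $N$ with $d_s(\mu_k,\mu_j)<\varepsilon$ for $k,j\geq N$; for fixed $k\geq N$ and $\xi\neq 0$, letting $j\to+\infty$ in $|\hat{\mu}_k(\xi)-\hat{\mu}_j(\xi)|/|\xi|^s<\varepsilon$ and invoking pointwise convergence of the Fourier transforms yields $|\hat{\mu}_k(\xi)-\hat{\mu}(\xi)|/|\xi|^s\leq\varepsilon$, hence $d_s(\mu_k,\mu)\leq\varepsilon$ after taking the supremum.

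The main technical hurdle I foresee is the uniform pointwise estimate on $e^{i\theta}-P_n(\theta)$ in part (i): neither the Taylor remainder (which degrades for large $|\theta|$ in the non-integer case) nor the coarse triangle bound (which degrades near the origin) suffices alone, so the split across $|\theta|=1$ is essential, and the constant genuinely depends on $s$ through $n$. A secondary but genuine subtlety in part (ii) is that weak convergence does not automatically preserve unbounded moments, forcing the truncation $|v|^{s+\alpha}\wedge R$ together with monotone convergence to transport the $(s+\alpha)$-moment bound from the sequence to its limit.
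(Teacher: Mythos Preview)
Your proposal is correct. For part~(i) in the integer case $s\in\N$, your argument coincides with the paper's: both Taylor-expand $e^{-i\xi v}$ to order $s-1$, cancel the shared moments, and bound the remainder by a constant times $\abs{\xi}^s\int_\R\abs{v}^s\,d(\mu+\nu)$. The paper uses the Lagrange form of the remainder directly (which gives the clean constant $1/s!$), whereas you phrase it via the integral remainder on $\{\abs{\theta}\leq 1\}$ and a triangle bound on $\{\abs{\theta}\geq 1\}$; these are equivalent for integer $s$.

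Where you go further is that the paper proves only the integer case of~(i) and defers both the non-integer case and all of~(ii) to the literature (specifically~\cite{carrillo2007RMUP}). Your two-regime estimate $\abs{e^{i\theta}-P_n(\theta)}\leq C_s\abs{\theta}^s$ is precisely what is needed to handle non-integer $s$, since the Lagrange remainder alone only yields $\abs{\theta}^{n+1}$ with $n+1>s$, which is too strong near the origin but too weak at infinity; your split at $\abs{\theta}=1$ resolves this correctly. Your sketch for~(ii) via L\'evy's continuity theorem, truncation to transport the $(s+\alpha)$-moment bound, and passage to the limit inside the $d_s$-Cauchy inequality is the standard route and is sound.
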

\begin{proof}
For a thorough proof of these results we refer to~\cite[Propositions~2.6,~2.7]{carrillo2007RMUP}. Here, we only show~\ref{prop:ds.finiteness} for $s\in\N$, which is the case we shall mainly deal with in the sequel.

Let then $\mu,\,\nu\in\cP_s(\R)$ with $s\in\N$. By Taylor-expanding the mapping $\xi\mapsto e^{-i\xi v}$ up to the order $s-1$ with centre in $\xi=0$ and Lagrange remainder we find
$$ e^{-i\xi v}=\sum_{n=0}^{s-1}\frac{(-iv)^n}{n!}\xi^n+\frac{(-iv)^se^{-i\bar{\xi}v}}{s!}\xi^s, $$
where $\bar{\xi}=\theta\xi$ for a certain $\theta\in [0,\,1]$. Then:
\begin{align*}
    \hat{\mu}(\xi) &= \sum_{n=0}^{s-1}\frac{(-i)^n\xi^n}{n!}\int_\R v^n\,d\mu(v)+\frac{(-i)^s\xi^s}{s!}\int_\R v^se^{-i\bar{\xi}v}\,d\mu(v) \\
    &= \sum_{n=0}^{s-1}\frac{(-i)^n\xi^n}{n!}M^\mu_n+\frac{(-i)^s\xi^s}{s!}\int_\R v^se^{-i\bar{\xi}v}\,d\mu(v)
\end{align*}
and similarly for $\hat{\nu}(\xi)$, whence, using that $M^\mu_n=M^\nu_n$ for every $n\leq s-1$, we discover
$$ \abs{\hat{\mu}(\xi)-\hat{\nu}(\xi)}=\frac{\abs{\xi}^s}{s!}\abs*{\int_\R v^se^{-i\bar{\xi}v}\,d(\mu-\nu)(v)}\leq
    \frac{\abs{\xi}^s}{s!}\int_\R\abs{v}^s\,d(\mu+\nu)(v). $$
Consequently,
$$ \frac{\abs{\hat{\mu}(\xi)-\hat{\nu}(\xi)}}{\abs{\xi}^s}\leq\frac{1}{s!}\int_\R\abs{v}^s\,d(\mu+\nu)(v) $$
and the finiteness of $\int_\R\abs{v}^s\,d\mu(v),\,\int_\R\abs{v}^s\,d\nu(v)$ yields the thesis.
\end{proof}

We stress that:
\begin{enumerate}[label=(\roman*)]
\item The finiteness of $d_s$ is not guaranteed, in general, for any two measures in $\cP_s(\R)$. The two measures need to have equal moments up to the order $s-1$ or $[s]$, depending on whether $s$ is or is not integer. Nevertheless, $d_s$ with $0<s\leq 1$ is always finite for every $\mu,\,\nu\in\cP(\R)$, because clearly $M^\mu_0=M^\nu_0=1$.
\item The second point of Proposition~\ref{prop:ds} does \textit{not} assert that $\cP_{s+\alpha}(\R)$ is complete with the metric $d_s$. The reason is twofold: on one hand, we need that the probability measures have all equal moments up to the order $[s]$; on the other hand, we need that the quantity $\int_\R\abs{v}^{s+\alpha}\,d\mu(s)$ be bounded from above by a constant $\cM_{s+\alpha}$ \textit{independent} of $\mu$.
\end{enumerate}

\begin{remark} \label{rem:P_s_s-1}
The completeness of the metric space $(\cP_{s,\alpha,\cM_{s+\alpha}}(\R),\,d_s)$ is a useful starting point to prove existence of solutions to the Boltzmann-type equation~\eqref{eq:Boltztype.strong} via fixed point arguments. 

Concerning this, we notice that by fixing $s\in (0,\,1)$ with $\alpha=1-s>0$ we obtain that all the elements of $\cP_{s+\alpha}(\R)=\cP_1(\R)$ have trivially equal moments up to the order $n=[s]=0$. To ascertain if we can find $\cM_1>0$ such that it makes sense to look for the solution to~\eqref{eq:Boltztype.strong} in $\cP_{s,1-s,\cM_1}(\R)$ endowed with the metric $d_s$ we need to check whether $\int_\R\abs{v}^{s+\alpha}f(v,t)\,dv=\int_\R\abs{v}f(v,t)\,dv$ can be bounded independently of $f$. This is clearly not true for a generic $f(\cdot,t)\in\cP_1(\R)$, for which we only know that $\int_\R\abs{v}f(v,t)\,dv$ is finite but an upper bound depends, in general, on $f$. Nevertheless, with $\varphi(v)=\abs{v}$ in~\eqref{eq:Boltztype.weak} we discover that any prospective solution $f$ to~\eqref{eq:Boltztype.strong} satisfies the \textit{a priori} estimate
\begin{align*}
	\frac{d}{dt}\int_\R\abs{v}f(v,t)\,dv &= \int_\R\int_\R\ave{\abs{pv+qv_\ast}-\abs{v}}f(v,t)f(v_\ast,t)\,dv\,dv_\ast \\
	&\leq \int_\R\int_\R\ave{(p-1)\abs{v}+q\abs{v_\ast}}f(v,t)f(v_\ast,t)\,dv\,dv_\ast \\
	&= \left(\ave{p+q}-1\right)\int_\R\abs{v}f(v,t)\,dv,
\end{align*}
whence
$$ \int_\R\abs{v}f(v,t)\,dv\leq e^{\left(\ave{p+q}-1\right)t}\int_\R\abs{v}f_0(v)\,dv. $$
Consequently, if $\ave{p+q}\leq 1$ it makes sense to look for solutions to~\eqref{eq:Boltztype.strong} in $\cP_{s,1-s,\cM_1}(\R)$ with $\cM_1:=\int_\R\abs{v}f_0(v)\,dv$.
\end{remark}

\begin{remark} \label{rem:P_1_1}
Assume that $p$, $q$ are such that $\ave{p+q}=1$. According to Section~\ref{sect:moments_evol}, the Boltzmann-type equation~\eqref{eq:Boltztype.strong} conserves in time the mean state of the system, therefore any prospective solution $f$ does not only satisfy $M_0=1$ but also $M_1=M_{1,0}$, being $M_{1,0}\in\R$ a prescribed constant (the mean value of the initial condition $f_0$). Consequently, the search for solutions to~\eqref{eq:Boltztype.strong} can be set in some $\cP_{1,\alpha,\cM_{1+\alpha}}(\R)$, which is complete with the metric $d_1$, provided proper $\alpha,\,\cM_{1+\alpha}>0$ can be found in such a way that $\int_\R\abs{v}^{1+\alpha}f(v,t)\,dv\leq\cM_{1+\alpha}$.

In particular, for $\alpha=1$ this amounts to controlling the second moment $M_2$ of $f$. From Section~\ref{sect:moments_evol} we know that
$$ \frac{dM_2}{dt}=\left(\ave{p^2+q^2}-1\right)M_2+2\ave{pq}M_{1,0}^2, $$
whence
\begin{equation}
	M_2(t)=e^{\left(\ave{p^2+q^2}-1\right)t}M_{2,0}+2\left(1-e^{\left(\ave{p^2+q^2}-1\right)t}\right)\frac{\ave{pq}}{1-\ave{p^2+q^2}}M_{1,0}^2,
	\label{eq:M2.constant_M1}
\end{equation}
which, if $\ave{p^2+q^2}<1$, is bounded for all $t>0$ by
\begin{equation}
	\cM_2:=M_{2,0}+\dfrac{2\ave{pq}}{1-\ave{p^2+q^2}}M_{1,0}^2.
	\label{eq:cM2}
\end{equation}
Therefore, in such a case we can look for solutions to~\eqref{eq:Boltztype.strong} in $(\cP_{1,1,\cM_2}(\R),\,d_1)$.

Condition $\ave{p^2+q^2}<1$ is related to the \textit{dissipation} of the energy (i.e. the second moment of $f$). If $\ave{(p+q)^2}=1$ holds simultaneously, which implies $1-\ave{p^2+q^2}=2\ave{pq}$, then from~\eqref{eq:M2.constant_M1} we see that $M_2(t)\to M_{1,0}^2$ as $t\to +\infty$, hence the internal energy (i.e. the variance of $f$) decays to zero exponentially fast in time. Still from~\eqref{eq:M2.constant_M1} we also see that, in this case, we can take $\cM_2:=M_{2,0}$, i.e. the energy is bounded for all $t>0$ by the initial energy. In the jargon of classical kinetic theory this is referred to as \textit{cooling}, because in gas dynamics the internal energy of a gas is proportional to its temperature.
\end{remark}

\begin{example}
Let us consider the prototypical case of interaction rule
$$ v'=v+\gamma(v_\ast-v)+v\eta, $$
which characterises many celebrated kinetic models of socio-economic systems, see e.g.,~\cite{cordier2005JSP}. Here, $\gamma\in (0,\,1)$ is a prescribed parameter whereas $\eta\in\R$ is a centred random variable, i.e. one with $\ave{\eta}=0$, which models a stochastic fluctuation. This rule is of the form~\eqref{eq:v'} with
$$ p=1-\gamma+\eta, \qquad q=\gamma. $$
In order for $p\geq 0$ we need that $\eta$ be supported in the interval $[\gamma-1,\,+\infty)$. Notice that such an $\eta$ is allowed to take also negative values, because $\gamma-1<0$, therefore this requirement does not conflict \textit{a priori} with the assumption $\ave{\eta}=0$.

Condition of Remark~\ref{rem:P_s_s-1} is satisfied as an equality: $\ave{p+q}=1$, whereas $\ave{p^2+q^2}=1-2\gamma(1-\gamma)+\sigma^2$, being $\sigma^2:=\ave{\eta^2}$ the variance of $\eta$. Therefore, the dissipative condition of Remark~\ref{rem:P_1_1} is satisfied provided $\gamma(1-\gamma)\geq\frac{\sigma^2}{2}$. In particular, if $\sigma^2\leq\frac{1}{2}$ there are values of $\gamma$ for which the condition holds, hence for which solutions to~\eqref{eq:Boltztype.strong} can be sought in the complete metric space $(\cP_{1,1,\cM_2}(\R),\,d_1)$.

We observe furthermore that condition $\ave{(p+q)^2}=1$ yields $\sigma^2=0$. Therefore, cooling occurs only if $\eta$ is the null random variable. Instead, if $\sigma^2>0$ the system can exhibit non-trivial emerging trends because $\eta$ acts as an external source of energy.
\end{example}

We conclude this section by recording a further simple property of the $s$-Fourier metric, which we shall use frequently in the sequel. Given $\mu,\,\nu\in\cP(\R)$ and $a\in\R\setminus\{0\}$, it results
\begin{align*}
	\sup_{\xi\in\R\setminus\{0\}}\frac{\abs{\hat{\nu}(a\xi)-\hat{\mu}(a\xi)}}{\abs{\xi}^s}
		&= \abs{a}^s\sup_{\xi\in\R\setminus\{0\}}\frac{\abs{\hat{\nu}(a\xi)-\hat{\mu}(a\xi)}}{\abs{a\xi}^s} & \text{(let $\eta:=a\xi$)} \\
	&= \abs{a}^s\sup_{\eta\in\R\setminus\{0\}}\frac{\abs{\hat{\nu}(\eta)-\hat{\mu}(\eta)}}{\abs{\eta}^s}=\abs{a}^sd_s(\mu,\nu). 
\end{align*}

\section{Basic well-posedness theory}
\label{sect:basic_theory}
Remarks~\ref{rem:P_s_s-1},~\ref{rem:P_1_1} have shown that the choice of a convenient functional space where to look for solutions to the Boltzmann-type equation~\eqref{eq:Boltztype.strong} is linked, to some extent, to certain physical properties of the interaction rule~\eqref{eq:v'}. Because of their relevance for applications and their potential to generate physically relevant trends, here we focus specifically on the properties discussed in Remark~\ref{rem:P_1_1}. Therefore, we shall assume henceforth that the coefficients $p,\,q\in\R_+$ satisfy
\begin{equation}
	\ave{p+q}=1, \qquad \ave{p^2+q^2}<1.
	\label{eq:assumptions_p.q}
\end{equation}
Nevertheless, techniques analogous to those that we shall present in the sequel can be used to repeat the theory, with the due modifications, also in the case $\ave{p+q}<1$ discussed in Remark~\ref{rem:P_s_s-1}.
\begin{remark} \label{rem:pq.average}
Assumptions~\eqref{eq:assumptions_p.q} entail precise constraints on the admissible range of the mean values of $p$, $q$. Indeed, the second assumption in~\eqref{eq:assumptions_p.q} implies $\ave{p^2},\,\ave{q^2}<1$, whence by Jensen's inequality
$$ \ave{p}^2\leq\ave{p^2}<1, \qquad \ave{q}^2\leq\ave{q^2}<1. $$
It follows $-1<\ave{p},\,\ave{q}<1$, but invoking the first assumption in~\eqref{eq:assumptions_p.q} we discover more precisely
$$ \ave{p}=1-\ave{q}>0, \qquad \ave{q}=1-\ave{p}>0, $$
thus actually $0<\ave{p},\,\ave{q}<1$. Notice that this holds independently of the sign of $p$, $q$.
\end{remark}

For $T>0$ we consider the space
$$ X:=C^0([0,\,T];\cP_{1,1,\cM_2}(\R)) $$
of continuous mappings from $[0,\,T]\subset\R$ to $\cP_{1,1,\cM_2}(\R)$, where $\cM_2>0$ is chosen as in Remark~\ref{rem:P_1_1}. If $f\in X$, we write $f(t)$ to mean the probability measure $f(\cdot,t)\in\cP_{1,1,\cM_2}(\R)$ for fixed $t\in [0,\,T]$.

Given $f,\,g\in X$, we define the mapping $\varrho:X\times X\to\R_+$
$$ \varrho(f,g):=\sup_{t\in [0,\,T]}d_1(f(t),g(t)), $$
which is a metric in $X$. Furthermore, the metric space $(X,\,\varrho)$ is complete.

\subsection{Existence and uniqueness of the solution}
\label{sect:exist.uniq}
We begin the qualitative theory of the Boltzmann-type equation~\eqref{eq:Boltztype.strong} by addressing existence and uniqueness of the solution to the initial-value problem
\begin{equation}
	\begin{cases}
		\partial_tf=Q(f,f), & t>0 \\
		f(v,0)=f_0(v),
	\end{cases}
	\label{eq:IVP}
\end{equation}
where the collisional operator $Q$ is defined in~\eqref{eq:Q.strong} and $f_0$ is a prescribed initial datum.

\begin{theorem} \label{theo:exists_unique}
Under~\eqref{eq:assumptions_p.q}, for $f_0\in\cP_2(\R)$ the initial-value problem~\eqref{eq:IVP} admits a unique solution $f\in X$.
\end{theorem}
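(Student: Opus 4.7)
The plan is to set up a Banach fixed-point argument in the complete metric space $(X,\varrho)$, using the Fourier representation of the equation and the key identity $\ave{p+q}=1$ to drive the contraction.

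First, I would recast~\eqref{eq:IVP} in mild (Duhamel) form on the Fourier side. Rewriting~\eqref{eq:FourierBoltz} as $\partial_t\hat{f}+\hat{f}=\ave{\hat{f}(p\xi,t)\hat{f}(q\xi,t)}$ and applying the integrating factor $e^t$ leads to
\[
    \hat{f}(\xi,t)=e^{-t}\hat{f}_0(\xi)+\int_0^te^{-(t-s)}\ave{\hat{f}(p\xi,s)\hat{f}(q\xi,s)}\,ds,
\]
which suggests defining an operator $\mathcal{T}:X\to X$ through the analogous weak identity: for every observable $\varphi$,
\[
    \int_\R\varphi\,d\mathcal{T}[f](t)=e^{-t}\int_\R\varphi\,df_0+\int_0^te^{-(t-s)}\!\!\int_\R\!\!\int_\R\ave{\varphi(v')}f(v,s)f(v_\ast,s)\,dv\,dv_\ast\,ds.
\]
Taking $\varphi=1$ preserves unit mass and non-negative $\varphi$ preserves positivity, so $\mathcal{T}[f](t)\in\cP(\R)$.

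Second, I would verify that $\mathcal{T}$ sends $X$ into $X$. Using $\varphi(v)=v$, the independence of $p,q$ from $v,v_\ast$ and $\ave{p+q}=1$ give immediately that the first moment is conserved and equal to $M_{1,0}$, matching the moment constraint of $\cP_{1,1,\cM_2}(\R)$. Using $\varphi(v)=v^2$, a direct computation together with the explicit choice of $\cM_2$ in~\eqref{eq:cM2} and the bound $M_2^{f(s)}\leq\cM_2$ shows that $M_2^{\mathcal{T}[f]}(t)\leq\cM_2$; the key algebraic identity is that $\ave{p^2+q^2}\cM_2+2\ave{pq}M_{1,0}^2\leq\cM_2$, which follows from the very definition~\eqref{eq:cM2}. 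Continuity of $t\mapsto\mathcal{T}[f](t)$ with respect to $d_1$ follows from the Duhamel representation by dominated convergence on the Fourier side, exploiting $|\hat{f}|\leq 1$ from Lemma~\ref{lemma:Fourier_basic}.

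Third, the contraction. Given $f,g\in X$, subtracting the Fourier-mild formulae and using the elementary identity $\hat{f}(p\xi)\hat{f}(q\xi)-\hat{g}(p\xi)\hat{g}(q\xi)=[\hat{f}(p\xi)-\hat{g}(p\xi)]\hat{f}(q\xi)+\hat{g}(p\xi)[\hat{f}(q\xi)-\hat{g}(q\xi)]$ together with the bound $|\hat{f}|,|\hat{g}|\leq 1$ gives
\[
    \frac{\abs{\hat{f}(p\xi,s)\hat{f}(q\xi,s)-\hat{g}(p\xi,s)\hat{g}(q\xi,s)}}{\abs{\xi}}\leq (p+q)\,d_1(f(s),g(s)),
\]
where the rescaling property $\sup_\xi|\hat{\nu}(a\xi)-\hat{\mu}(a\xi)|/|\xi|=|a|\,d_1(\mu,\nu)$ recorded at the end of Section~\ref{sect:Fourier} is applied to both $p$ and $q$. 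Taking the expectation $\ave{\cdot}$, using $\ave{p+q}=1$, and then supremum over $\xi$ yields
\[
    d_1(\mathcal{T}[f](t),\mathcal{T}[g](t))\leq\int_0^te^{-(t-s)}d_1(f(s),g(s))\,ds\leq (1-e^{-T})\varrho(f,g),
\]
so $\varrho(\mathcal{T}[f],\mathcal{T}[g])\leq(1-e^{-T})\varrho(f,g)$ with $1-e^{-T}<1$. Finiteness of $d_1(f(s),g(s))$ is guaranteed by Proposition~\ref{prop:ds}\ref{prop:ds.finiteness} since both measures share $M_0=1$.

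Banach's contraction principle then delivers a unique fixed point of $\mathcal{T}$ in $X$, which by construction is the unique solution of~\eqref{eq:IVP} on $[0,T]$; arbitrariness of $T$ gives a global solution. The main technical obstacle I would expect is checking invariance of $\cP_{1,1,\cM_2}(\R)$ under $\mathcal{T}$, specifically the propagation of the second-moment bound $\cM_2$: this is not automatic and relies crucially on the precise calibration of $\cM_2$ in~\eqref{eq:cM2}. The contraction step itself is then almost immediate once the Fourier metric is in place and $\ave{p+q}=1$ is exploited.
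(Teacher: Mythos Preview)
Your proposal is correct and follows essentially the same route as the paper: Duhamel/mild reformulation, Banach fixed point in $(X,\varrho)$, invariance of $\cP_{1,1,\cM_2}(\R)$ via the calibrated $\cM_2$, and the contraction estimate $d_1(\mathcal{T}[f](t),\mathcal{T}[g](t))\leq(1-e^{-T})\varrho(f,g)$ through the splitting of $\hat{f}(p\xi)\hat{f}(q\xi)-\hat{g}(p\xi)\hat{g}(q\xi)$ and $\ave{p+q}=1$. The only place where your sketch is a bit thin is the time-continuity of $t\mapsto\mathcal{T}[f](t)$ in $d_1$: since $d_1$ is a supremum over $\xi$, dominated convergence on the Fourier side alone is not quite enough, and the paper carries out an explicit $O(|t-t_0|)$ estimate uniform in $\xi$ (using $|e^{-i\xi v}-e^{-i\bar\xi v}|\leq|\xi-\bar\xi||v|$ and the continuity of $f$ in $d_1$) to close this step.
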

\begin{proof}
Throughout the proof we shall use the assumption $\ave{p+q}=1$ without further notice.

We obtain the thesis via Banach's fixed-point theorem. To apply it, we observe preliminarily that the weak form of $\partial_tf=Q(f,f)$, i.e.~\eqref{eq:Boltztype.weak}, can be rewritten as
\begin{align}
	\begin{aligned}[b]
		\int_\R\varphi(v)f(v,t)\,dv &= e^{-t}\int_\R\varphi(v)f_0(v)\,dv \\
		&\phantom{=} +\int_0^te^{-(t-\tau)}\int_\R\int_\R\ave{\varphi(v')}f(v,\tau)f(v_\ast,\tau)\,dv\,dv_\ast\,d\tau
	\end{aligned}
	\label{eq:Boltztype.mild}
\end{align}
for all observable quantities $\varphi$ upon multiplying~\eqref{eq:Boltztype.weak} by $e^t$ and integrating in time over $[0,\,t]$, $0<t\leq T$. We understand~\eqref{eq:Boltztype.mild} as the weak form of the fixed-point equation
$$ f=\cQ(f), $$
where $\cQ$ is the operator defined on $X$ such that
\begin{align}
	\begin{aligned}[b]
		\int_\R\varphi(v)\cQ(f)(v,t)\,dv &= e^{-t}\int_\R\varphi(v)f_0(v)\,dv \\
		&\phantom{=} +\int_0^te^{-(t-\tau)}\int_\R\int_\R\ave{\varphi(v')}f(v,\tau)f(v_\ast,\tau)\,dv\,dv_\ast\,d\tau
	\end{aligned}
	\label{eq:cQ.weak}
\end{align}
for all $\varphi$. To show that $\cQ$ admits a unique fixed point in $X$ we check the assumptions of Banach's fixed-point theorem.
\begin{enumerate}
\item First, we show that $\cQ$ maps $X$ into itself. For this, let $f\in X$, then:
\begin{enumerate}[label=(\roman*)]
\item The non-negativity of  $\cQ(f)(t)$ is clear from~\eqref{eq:cQ.weak}, as the non negativity of $f(t)$ implies
$$ \int_\R\varphi(v)\cQ(f)(v,t)\,dv\geq 0, \qquad \forall\,\varphi\geq 0. $$
Together with the fact that with $\varphi\equiv 1$ it results
$$ \int_\R\cQ(f)(v,t)\,dv=e^{-t}+\int_0^te^{-(t-\tau)}\,d\tau=1, $$
this says that $\cQ(f)(t)$ is a probability measure for $t>0$.

\item Moreover, $\varphi(v)=v$ yields
$$ \int_\R v\cQ(f)(v,t)\,dv=\left(e^{-t}+\int_0^te^{-(t-\tau)}\,d\tau\right)M_{1,0}=M_{1,0}. $$
On the other hand, $\varphi(v)=v^2$ produces
\begin{align*}
	\int_\R v^2\cQ(f)(v,t)\,dv &= e^{-t}M_{2,0}+\ave{p^2+q^2}\int_0^te^{-(t-\tau)}M_2(\tau)\,d\tau \\
	&\phantom{=} +2\ave{pq}M_{1,0}^2\int_0^te^{-(t-\tau)}\,d\tau \\
	&\leq e^{-t}M_{2,0}+(1-e^{-t})\left(\ave{p^2+q^2}\cM_2+2\ave{pq}M_{1,0}^2\right).
\intertext{Using the expression~\eqref{eq:cM2} of $\cM_2$ we continue the estimate as}
	&= e^{-t}M_{2,0}+(1-e^{-t})\biggl[\ave{p^2+q^2}\left(M_{2,0}+\frac{2\ave{pq}}{1-\ave{p^2+q^2}}M_{1,0}^2\right) \\
	&\phantom{\leq} +2\ave{pq}M_{1,0}^2\biggr] \\
	&\leq e^{-t}M_{2,0}+(1-e^{-t})\left[M_{2,0}+2\ave{pq}\left(\frac{\ave{p^2+q^2}}{1-\ave{p^2+q^2}}+1\right)M_{1,0}^2\right] \\
	&= M_{2,0}+(1-e^{-t})\frac{2\ave{pq}}{1-\ave{p^2+q^2}}M_{1,0}^2 \\
	&\leq\cM_2.
\end{align*}
Therefore, we conclude that $\cQ(f)(t)\in\cP_{1,1,\cM_2}(\R)$ whenever $f(t)\in\cP_{1,1,\cM_2}(\R)$.

\item As for the continuity of the mapping $t\mapsto\cQ(f)(t)$, we fix $t_0\in (0,\,+\infty)$ and check that $d_1(\cQ(f)(t),\cQ(f)(t_0))\to 0$ when $t\to t_0$. For the sake of simplicity, and without loss of generality, we assume $t>t_0$. We observe that
$$ d_1(\cQ(f)(t),\cQ(f)(t_0))=\sup_{\xi\in\R\setminus\{0\}}\frac{\abs{\widehat{\cQ(f)}(t)-\widehat{\cQ(f)}(t_0)}}{\abs{\xi}} $$
and that we can obtain $\widehat{\cQ(f)}(t)$ from~\eqref{eq:cQ.weak} with $\varphi(v)=e^{-i\xi v}$:
\begin{equation}
	\widehat{\cQ(f)}(\xi,t)=e^{-t}\hat{f}_0(\xi)+\int_0^te^{-(t-\tau)}\ave{\hat{f}(p\xi,\tau)\hat{f}(q\xi,\tau)}\,d\tau.
	\label{eq:cQhat}
\end{equation}
In particular,
\begin{align*}
	\widehat{\cQ(f)}(\xi,t)-\widehat{\cQ(f)}(\xi,t_0) &=e^{-t}\left(\hat{f}_0(\xi)+\int_0^te^\tau\ave{\hat{f}(p\xi,\tau)\hat{f}(q\xi,\tau)}\,d\tau\right) \\
	&\phantom{=} -e^{-t_0}\left(\hat{f}_0(\xi)+\int_0^{t_0}e^\tau\ave{\hat{f}(p\xi,\tau)\hat{f}(q\xi,\tau)}\,d\tau\right)
\intertext{whence, adding and subtracting $\ave{\hat{f}(p\xi,t_0)\hat{f}(q\xi,t_0)}$ in the integrals,}
		&= e^{-t}\biggl(\hat{f}_0(\xi)+\int_0^te^\tau\ave{\hat{f}(p\xi,\tau)\hat{f}(q\xi,\tau)-\hat{f}(p\xi,t_0)\hat{f}(q\xi,t_0)}\,d\tau \\
	&\phantom{=} +(e^t-1)\ave{\hat{f}(p\xi,t_0)\hat{f}(q\xi,t_0)}\biggr) \\
	&\phantom{=} \resizebox{.62\textwidth}{!}{$
		\displaystyle -e^{-t_0}\biggl(\hat{f}_0(\xi)+\int_0^{t_0}e^\tau\ave{\hat{f}(p\xi,\tau)\hat{f}(q\xi,\tau)-\hat{f}(p\xi,t_0)\hat{f}(q\xi,t_0)}\,d\tau
	$} \\
	&\phantom{=} +(e^{t_0}-1)\ave{\hat{f}(p\xi,t_0)\hat{f}(q\xi,t_0)}\biggr) \\
	&= (e^{-t}-e^{-t_0})\biggl(\ave{\hat{f}_0(\xi)-\hat{f}(p\xi,t_0)\hat{f}(q\xi,t_0)} \\
	&\phantom{=} +\int_0^{t_0}e^\tau\ave{\hat{f}(p\xi,\tau)\hat{f}(q\xi,\tau)-\hat{f}(p\xi,t_0)\hat{f}(q\xi,t_0)}\,d\tau\biggr) \\
	&\phantom{=} +e^{-t}\int_{t_0}^te^\tau\ave{\hat{f}(p\xi,\tau)\hat{f}(q\xi,\tau)-\hat{f}(p\xi,t_0)\hat{f}(q\xi,t_0)}\,d\tau.
\end{align*}
Consequently,
\begin{align*}
	\frac{\abs{\widehat{\cQ(f)}(\xi,t)-\widehat{\cQ(f)}(\xi,t_0)}}{\abs{\xi}} &\leq \abs{e^{-t}-e^{-t_0}}\left(\ave*{\frac{\abs{\hat{f}_0(\xi)-\hat{f}(p\xi,t_0)\hat{f}(q\xi,t_0)}}{\abs{\xi}}}\right. \\
	&\phantom{=} \left.+\int_0^{t_0}e^{\tau}\ave*{\frac{\abs{\hat{f}(p\xi,\tau)\hat{f}(q\xi,\tau)-\hat{f}(p\xi,t_0)\hat{f}(q\xi,t_0)}}{\abs{\xi}}}\,d\tau\right) \\
	&\phantom{=} +\int_{t_0}^te^\tau\ave*{\frac{\abs{\hat{f}(p\xi,\tau)\hat{f}(q\xi,\tau)-\hat{f}(p\xi,t_0)\hat{f}(q\xi,t_0)}}{\abs{\xi}}}\,d\tau.
\end{align*}
Since
\begin{align*}
	\abs{\hat{f}(p\xi,\tau)\hat{f}(q\xi,\tau)-\hat{f}(p\xi,t_0)\hat{f}(q\xi,t_0)} &\leq
		\abs{\hat{f}(q\xi,t_0)}\cdot\abs{\hat{f}(p\xi,\tau)-\hat{f}(p\xi,t_0)} \\
	&\phantom{\leq} +\abs{\hat{f}(p\xi,\tau)}\cdot\abs{\hat{f}(q\xi,\tau)-\hat{f}(q\xi,t_0)} \\
	&\leq \abs{\hat{f}(p\xi,\tau)-\hat{f}(p\xi,t_0)}+\abs{\hat{f}(q\xi,\tau)-\hat{f}(q\xi,t_0)},
\end{align*}
we bound
\begin{align*}
	&\ave*{\frac{\abs{\hat{f}(p\xi,\tau)\hat{f}(q\xi,\tau)-\hat{f}(p\xi,t_0)\hat{f}(q\xi,t_0)}}{\abs{\xi}}} \\
	&\qquad \leq\ave*{\frac{\abs{\hat{f}(p\xi,\tau)-\hat{f}(p\xi,t_0)}}{\abs{\xi}}}
		+\ave*{\frac{\abs{\hat{f}(q\xi,\tau)-\hat{f}(q\xi,t_0)}}{\abs{\xi}}} \\
	&\qquad =\ave*{p\frac{\abs{\hat{f}(p\xi,\tau)-\hat{f}(p\xi,t_0)}}{\abs{p\xi}}}
		+\ave*{q\frac{\abs{\hat{f}(q\xi,\tau)-\hat{f}(q\xi,t_0)}}{\abs{q\xi}}} \\
	&\qquad \leq d_1(f(\tau),f(t_0)).
\end{align*}
On the other hand,
\begin{align*}
	\abs{\hat{f}_0(\xi)-\hat{f}(p\xi,t_0)\hat{f}(q\xi,t_0)} &\leq \abs{\hat{f}_0(\xi)}\cdot\abs{1-\hat{f}(p\xi,t_0)}
		+\abs{\hat{f}(p\xi,t_0)}\cdot\abs{\hat{f}_0(\xi)-\hat{f}(q\xi,t_0)} \\
	&\leq \abs{1-\hat{f}(p\xi,t_0)}+\abs{\hat{f}_0(\xi)-\hat{f}_0(q\xi)}+\abs{\hat{f}_0(q\xi)-\hat{f}(q\xi,t_0)}
\end{align*}
with\footnote{Here we use $\abs{e^{iy}-e^{ix}}\leq\abs{y-x}$ for all $x,\,y\in\R$.}
$$ \abs{1-\hat{f}(p\xi,t_0)}\leq\int_\R\abs*{e^{-i0\cdot v}-e^{-ip\xi v}}f(v,t_0)\,dv\leq\abs{p\xi}\int_\R\abs{v}f(v,t_0)\,dv $$
and likewise
$$ \abs{\hat{f}_0(\xi)-\hat{f}_0(q\xi)}\leq\int_\R\abs{e^{-i\xi v}-e^{-iq\xi v}}f_0(v)\,dv\leq\abs{(1-q)\xi}\int_\R\abs{v}f_0(v)\,dv, $$
where we notice that $\int_\R\abs{v}f_0(v)\,dv,\,\int_\R\abs{v}f(v,t_0)\,dv<+\infty$ because $f_0,\,f(t_0)\in\cP_2(\R)$.

Therefore,
\begin{align*}
	\ave*{\frac{\abs{\hat{f}_0(\xi)-\hat{f}(p\xi,t_0)\hat{f}(q\xi,t_0)}}{\abs{\xi}}} &\leq
		\ave{p}\int_\R\abs{v}f_0(v)\,dv+\ave{\abs{1-q}}\int_\R\abs{v}f(v,t_0)\,dv \\
	&\phantom{\leq} +\ave*{\frac{\abs{\hat{f}_0(q\xi)-\hat{f}(q\xi,t_0)}}{\abs{\xi}}} \\
	&\leq \ave{p}\int_\R\abs{v}f_0(v)\,dv+\ave{\abs{1-q}}\int_\R\abs{v}f(v,t_0)\,dv \\
	&\phantom{\leq} +\ave{q}d_1(f_0,f(t_0)).
\end{align*}
Collecting all these estimates we discover:
\begin{align*}
	d_1(\cQ(f)(t),\cQ(f)(t_0)) &\leq (t-t_0)\biggl(\ave{p}\int_\R\abs{v}f_0(v)\,dv+\ave{\abs{1-q}}\int_\R\abs{v}f(v,t_0)\,dv \\
	&\phantom{\leq} +\ave{q}d_1(f_0,f(t_0))+\int_0^{t_0}e^\tau d_1(f(\tau),f(t_0))\,d\tau \\
	&\phantom{\leq} +e^{\bar{t}}d_1(f(\bar{t}),f(t_0))\biggr),
\end{align*}
where we have used the Lipschitz continuity of the mapping $t\mapsto e^{-t}$ for $t\geq 0$ and the mean value theorem for integrals (in particular, $\bar{t}$ is a point in $[t_0,\,t]$). Then clearly $d_1(\cQ(f)(t),\cQ(f)(t_0))\to 0$ when $t\to t_0$ and the continuity of $t\mapsto \cQ(f)(t)$ follows from the arbitrariness of $t_0$.
\end{enumerate}
\item Second, we show that $\cQ$ is a contraction on $X$. For this, let $f,\,g\in X$ and let us compute:
\begin{align*}
	\frac{\abs{\widehat{\cQ(g)}(\xi,t)-\widehat{\cQ(f)}(\xi,t)}}{\abs{\xi}} &\leq
		\int_0^te^{-(t-\tau)}\ave*{\frac{\abs{\hat{g}(p\xi,\tau)\hat{g}(q\xi,\tau)-\hat{f}(p\xi,\tau)\hat{f}(q\xi\,\tau)}}{\abs{\xi}}}\,d\tau \\
	&\leq\int_0^te^{-(t-\tau)}\left\langle\frac{\abs{\hat{g}(q\xi,\tau)}\cdot\abs{\hat{g}(p\xi,\tau)-\hat{f}(p\xi,\tau)}}{\abs{\xi}}\right. \\
	&\phantom{\leq} \left.+\frac{\abs{\hat{f}(p\xi,\tau)}\cdot\abs{\hat{g}(q\xi,\tau)-\hat{f}(q\xi,\tau)}}{\abs{\xi}}\right\rangle\,d\tau \\
	&\leq \int_0^te^{-(t-\tau)}\ave*{\frac{\abs{\hat{g}(p\xi,\tau)-\hat{f}(p\xi,\tau)}}{\abs{\xi}}+
		\frac{\abs{\hat{g}(q\xi,\tau)-\hat{f}(q\xi,\tau)}}{\abs{\xi}}}\,d\tau \\
	&\leq\int_0^te^{-(t-\tau)}d_1(f(\tau),g(\tau))\,d\tau.
\end{align*}
Consequently,
\begin{align*}
	\varrho(\cQ(f),\cQ(g)) &= \sup_{t\in [0,\,T]}d_1(\cQ(f)(t),\cQ(g)(t)) \\
	&\leq\varrho(f,g)\sup_{t\in [0,\,T]}\int_0^te^{-(t-\tau)}\,d\tau \\
	&= (1-e^{-T})\varrho(f,g),
\end{align*}
whence $\cQ$ is a contraction on $X$ for an arbitrary $T>0$. \qedhere
\end{enumerate}
\end{proof}

\subsection{Continuous dependence on the initial datum}
We complete the basic well-posedness theory of the Boltzmann-type equation~\eqref{eq:Boltztype.strong} by showing that its solution depends continuously on the initial datum.

\begin{theorem} \label{theo:cont_dep}
Under the assumptions of Theorem~\ref{theo:exists_unique}, let $f,\,g\in C^0([0,\,T];\,\cP_2(\R))$ be the solutions to~\eqref{eq:IVP} issuing from two given initial data $f_0,\,g_0\in\cP_2(\R)$, respectively. Then
$$ \varrho(f,g)\leq d_1(f_0,g_0). $$
\end{theorem}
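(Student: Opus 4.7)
The plan is to reuse almost verbatim the structure of the contraction argument from the proof of Theorem~\ref{theo:exists_unique}, adapted to compare two solutions with different initial data. Since $f$ and $g$ both solve~\eqref{eq:IVP}, they satisfy the mild (Duhamel) form~\eqref{eq:cQhat} of the Fourier-transformed equation, namely
$$ \hat{f}(\xi,t)=e^{-t}\hat{f}_0(\xi)+\int_0^te^{-(t-\tau)}\ave{\hat{f}(p\xi,\tau)\hat{f}(q\xi,\tau)}\,d\tau $$
and likewise with $g$ in place of $f$. Subtracting the two identities, dividing by $\abs{\xi}$, and taking the supremum in $\xi$ will produce an integral inequality for $d_1(f(t),g(t))$ to which Gr\"onwall's lemma can be applied.

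More precisely, I would write
\begin{align*}
    \hat{g}(\xi,t)-\hat{f}(\xi,t) &= e^{-t}\bigl(\hat{g}_0(\xi)-\hat{f}_0(\xi)\bigr) \\
    &\phantom{=} +\int_0^te^{-(t-\tau)}\ave{\hat{g}(p\xi,\tau)\hat{g}(q\xi,\tau)-\hat{f}(p\xi,\tau)\hat{f}(q\xi,\tau)}\,d\tau
\end{align*}
and then, for the integrand, add and subtract $\hat{f}(p\xi,\tau)\hat{g}(q\xi,\tau)$ (exactly as in step~2 of the previous proof). Using $\abs{\hat{f}},\,\abs{\hat{g}}\leq 1$ from Lemma~\ref{lemma:Fourier_basic} together with the rescaling identity $\abs{\hat{g}(p\xi,\tau)-\hat{f}(p\xi,\tau)}/\abs{\xi}=p\cdot\abs{\hat{g}(p\xi,\tau)-\hat{f}(p\xi,\tau)}/\abs{p\xi}$ (and similarly for $q$) one bounds
$$ \ave*{\frac{\abs{\hat{g}(p\xi,\tau)\hat{g}(q\xi,\tau)-\hat{f}(p\xi,\tau)\hat{f}(q\xi,\tau)}}{\abs{\xi}}}\leq\bigl(\ave{p}+\ave{q}\bigr)d_1(f(\tau),g(\tau))=d_1(f(\tau),g(\tau)), $$
where the last equality is precisely the first assumption in~\eqref{eq:assumptions_p.q}. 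Taking the supremum over $\xi\in\R\setminus\{0\}$ in the identity above then yields
$$ d_1(f(t),g(t))\leq e^{-t}d_1(f_0,g_0)+\int_0^te^{-(t-\tau)}d_1(f(\tau),g(\tau))\,d\tau. $$

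Finally, multiplying by $e^t$ and setting $u(t):=e^td_1(f(t),g(t))$ the inequality becomes $u(t)\leq d_1(f_0,g_0)+\int_0^tu(\tau)\,d\tau$, to which Gr\"onwall's lemma applies and delivers $u(t)\leq d_1(f_0,g_0)e^t$, i.e.\ $d_1(f(t),g(t))\leq d_1(f_0,g_0)$ for every $t\in[0,T]$. Taking the supremum over $t$ gives the claim $\varrho(f,g)\leq d_1(f_0,g_0)$. The only delicate point, which is really the same delicate point of the contraction estimate in Theorem~\ref{theo:exists_unique}, is the rescaling trick that converts the factor $1/\abs{\xi}$ into $p/\abs{p\xi}$ (respectively $q/\abs{q\xi}$); everything else is book-keeping and an application of Gr\"onwall. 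The assumption $\ave{p+q}=1$ is precisely what makes the contraction constant in front of the integral equal to $1$, which in turn produces the clean estimate with constant $1$ in the statement (rather than a larger time-dependent factor).
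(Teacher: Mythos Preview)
Your proof is correct and follows essentially the same approach as the paper: both start from the mild/Duhamel formulation~\eqref{eq:cQhat}, subtract the two solutions, use the add-and-subtract trick together with $\abs{\hat{f}},\abs{\hat{g}}\leq 1$ and $\ave{p+q}=1$ to bound the integrand by $d_1(f(\tau),g(\tau))$, and then conclude via Gr\"onwall applied to $e^t d_1(f(t),g(t))$. The paper phrases things in terms of the operators $\cQ_{f_0},\cQ_{g_0}$ and suppresses the add-and-subtract step (having already done it in the contraction part of Theorem~\ref{theo:exists_unique}), but the argument is the same.
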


\begin{remark}
Theorem~\ref{theo:cont_dep} does not require that the initial data $f_0$, $g_0$ have the same mean value. Therefore, the corresponding solutions to~\eqref{eq:Boltztype.strong} do not have, in general, the same mean value for $t>0$. This is the reason why, in the statement of this theorem, we regard $f(t)$, $g(t)$ as belonging generically to $\cP_2(\R)$ instead of specifically to $\cP_{1,1,\cM_2}(\R)\subset\cP_2(\R)$, indeed each of them belongs actually to a different $\cP_{1,1,\cM_2}(\R)$-like space characterised by a different value of the mean value $M_{1,0}$ (and possibly also of $M_{2,0}$, hence of $\cM_2$, cf.~\eqref{eq:cM2}). By the way, settling the theory in a $\cP_{s,\alpha,\cM_{s+\alpha}}(\R)$-like space is useful only when completeness comes explicitly into play.
\end{remark}

\begin{proof}[Proof of Theorem~\ref{theo:cont_dep}]
We point out preliminarily that the definition of the operator $\cQ$ introduced in the proof of Theorem~\ref{theo:exists_unique} depends on the initial datum of problem~\eqref{eq:IVP}, cf.~\eqref{eq:cQ.weak}. Since here it is important to distinguish different initial data, we denote by $\cQ_{f_0},\,\cQ_{g_0}$ the operators admitting $f,\,g$ as fixed points, respectively.

Since $d_1(f(t),g(t))=d_1(\cQ_{f_0}(f)(t),\cQ_{g_0}(g)(t))$, recalling~\eqref{eq:cQhat} we estimate:
\begin{align*}
	\frac{\abs{\widehat{\cQ_{g_0}(g)}(\xi,t)-\widehat{\cQ_{f_0}(f)}(\xi,t)}}{\abs{\xi}} &\leq
		e^{-t}\frac{\abs{\hat{g}_0(\xi)-\hat{f}_0(\xi)}}{\abs{\xi}} \\
	&\phantom{\leq} +\int_0^te^{-(t-\tau)}\ave*{\frac{\abs{\hat{g}(p\xi,\tau)\hat{g}(q\xi,\tau)-\hat{f}(p\xi,\tau)\hat{f}(q\xi,\tau)}}{\abs{\xi}}}\,d\tau \\
	&\leq e^{-t}d_1(f_0,g_0)+\int_0^te^{-(t-\tau)}d_1(f(\tau),g(\tau))\,d\tau,
\end{align*}
which, taking the supremum over $\xi\neq 0$ at the left-hand side and multiplying both sides by $e^t$, yields
$$ e^td_1(f(t),g(t))\leq d_1(f_0,g_0)+\int_0^te^\tau d_1(f(\tau),g(\tau))\,d\tau. $$
Gr\"{o}nwall's inequality applied to the function $e^td_1(f(t),g(t))$ implies then
$$ e^td_1(f(t),g(t))\leq d_1(f_0,g_0)e^t, $$
i.e.
$$ d_1(f(t),g(t))\leq d_1(f_0,g_0), $$
and the thesis follows taking the supremum over $t\in (0,\,T]$ of both sides.
\end{proof}

\begin{remark}
The computations of the proof of Theorem~\ref{theo:cont_dep} can be performed also by relying on the Fourier-transformed version~\eqref{eq:FourierBoltz} of the Boltzmann-type equation~\eqref{eq:Boltztype.strong}. Indeed, by time integration~\eqref{eq:FourierBoltz} can be rewritten as
\begin{equation}
	\hat{f}(\xi,t)=e^{-t}\hat{f}_0(\xi)+\int_0^te^{-(t-\tau)}\ave{\hat{f}(p\xi,\tau)\hat{f}(q\xi,\tau)}\,d\tau,
	\label{eq:FourierBoltz.mild}
\end{equation}
which is the Fourier-transformed version of $f(t)=\cQ_{f_0}(f(t))$.
\end{remark}

\subsection{Preservation of the support}
The non-negativity of the parameters $p$, $q$ in the interaction rule~\eqref{eq:v'} entails a physically interesting property of the solution $f(t)$ to~\eqref{eq:Boltztype.strong}, which we can paraphrase as follows: if initially the states of the agents are confined in $\R_+$ (respectively, $\R_-$) they remain confined there at all successive times. Heuristically, this is quite evident from the particle model~\eqref{eq:Vt+Dt}--\eqref{eq:linsymint}. The next result proves it rigorously:
\begin{theorem} \label{theo:supp}
If $\supp{f_0}\subseteq\R_+$ then $\supp{f(t)}\subseteq\R_+$ for all $t>0$.
\end{theorem}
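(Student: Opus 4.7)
The plan is to track the non-negative ``negative-part'' first moment
$$ G(t) := \int_\R (-v)_+ f(v,t)\,dv $$
and to show $G \equiv 0$, which forces $f(\cdot,t)$ to give no mass to $(-\infty,0)$ and hence yields $\supp{f(t)} \subseteq \R_+$. Initially $G(0) = 0$ because $(-v)_+$ vanishes on $\supp{f_0} \subseteq \R_+$; moreover $G(t)$ is finite and non-negative for every $t \geq 0$ since $f(t) \in \cP_2(\R) \subset \cP_1(\R)$ by Theorem~\ref{theo:exists_unique}, so that $\int_\R\abs{v}f(v,t)\,dv$ is uniformly bounded in $t$ by Cauchy--Schwarz.

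First I would invoke the weak form~\eqref{eq:Boltztype.weak} with the Lipschitz observable $\varphi(v) := (-v)_+$. Both sides of the identity are well-defined thanks to the linear growth of $\varphi$ and the uniform first-moment bound. Formally the choice is not immediate because $\varphi$ is neither bounded nor $C^1$ at the origin, but it can be legitimated by an approximation $\varphi_n\in C^\infty_b(\R)$ with $\varphi_n\to\varphi$ uniformly on compacts and with uniformly bounded Lipschitz constants, then passing to the limit by dominated convergence, most conveniently in the time-integrated (mild) counterpart~\eqref{eq:Boltztype.mild} of~\eqref{eq:Boltztype.weak}.

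The core algebraic step exploits $p,\,q\geq 0$ together with the sub-additivity and positive homogeneity of $x\mapsto x_+$:
$$ \varphi(v') = \bigl(p(-v)+q(-v_\ast)\bigr)_+ \leq p(-v)_+ + q(-v_\ast)_+ = p\,\varphi(v)+q\,\varphi(v_\ast). $$
Taking expectation over $p,\,q$, integrating against $f(v,t)f(v_\ast,t)$, and invoking $\ave{p+q}=1$ from~\eqref{eq:assumptions_p.q} gives
$$ \int_\R\int_\R \ave{\varphi(v')}\,f(v,t)f(v_\ast,t)\,dv\,dv_\ast \;\leq\; \ave{p+q}\,G(t) \;=\; G(t). $$
Since $\int_\R\int_\R \varphi(v)\,f(v,t)f(v_\ast,t)\,dv\,dv_\ast = G(t)$, substituting into~\eqref{eq:Boltztype.weak} produces $G'(t)\leq 0$. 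Combined with $G(0)=0$ and $G(t)\geq 0$, this forces $G\equiv 0$; because $(-v)_+ > 0$ on $(-\infty,0)$, the vanishing of $G$ means that $f(\cdot,t)$ assigns zero mass to $(-\infty,0)$, i.e., $\supp{f(t)}\subseteq\R_+$.

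The main obstacle is purely technical: legitimately plugging the non-smooth, unbounded observable $\varphi(v)=(-v)_+$ into~\eqref{eq:Boltztype.weak}. The cleanest route is to run the entire estimate on an approximating sequence $\varphi_n$ with uniformly bounded Lipschitz constants, deduce $G_n(t)\leq G_n(0)$ from the mild form~\eqref{eq:Boltztype.mild}, and then pass to the limit using dominated convergence and the uniform $\cP_2$-bound of $f(t)$; the differentiability issue never appears because the limit is taken in the integrated identity rather than in the differential one.
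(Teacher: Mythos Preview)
Your proof is correct but follows a genuinely different route from the paper's. The paper tests~\eqref{eq:Boltztype.weak} with the bounded (but discontinuous) observable $\varphi(v)=\chi(v\geq 0)$ and tracks $F(t):=\int_{\R_+}f(v,t)\,dv$. Using only that $v'\geq 0$ whenever $v,\,v_\ast\geq 0$ (from $p,\,q\geq 0$), it derives the Riccati-type inequality $\dot{F}\geq F^2-F$, solves it to get $F(t)\geq 1$, and concludes $F\equiv 1$ since $F\leq 1$. You instead test with the Lipschitz (but unbounded) observable $\varphi(v)=(-v)_+$ and exploit the sub-additivity of $(\cdot)_+$ together with $\ave{p+q}=1$ to obtain the linear inequality $G'\leq 0$.

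Your argument is arguably cleaner: the differential inequality is linear rather than Riccati, and it is more robust in that $\ave{p+q}=1$ is not really needed --- Gr\"onwall on $G'\leq(\ave{p+q}-1)G$ with $G(0)=0$ suffices for any finite $\ave{p+q}$. Your Lipschitz $\varphi$ is also easier to approximate smoothly than the paper's indicator. On the other hand, the paper's choice tracks directly the quantity of interest (mass on $\R_+$), makes no use at all of $\ave{p+q}$, and requires no moment assumption on $f(t)$, whereas you need $f(t)\in\cP_1(\R)$ to ensure $G(t)<+\infty$.
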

\begin{proof}
Since $f_0$ is a probability measure, the condition $\supp{f_0}\subseteq\R_+$ can be restated as
$$ \int_{\R_+}f_0(v)\,dv=1, $$
i.e. the whole probability mass carried by $f_0$ is distributed within $\R_+$. Similarly for $f(t)$, $t>0$.

Based on this observation, we take $\varphi(v)=\chi(v\geq 0)$ in~\eqref{eq:Boltztype.weak}, i.e. the characteristic function of the set $\{v\geq 0\}=\R_+$, to discover:
\begin{align*}
	\frac{d}{dt}\int_{\R_+}f(v,t)\,dv &= \int_\R\int_\R\ave{\chi(v'\geq 0)}f(v,t)f(v_\ast,t)\,dv\,dv_\ast-\int_{\R_+}f(v,t)\,dv \\
	&\geq \int_{\R_+}\int_{\R_+}\ave{\chi(v'\geq 0)}f(v,t)f(v_\ast,t)\,dv\,dv_\ast-\int_{\R_+}f(v,t)\,dv \\
	&= \left(\int_{\R_+}f(v,t)\,dv\right)^2-\int_{\R_+}f(v,t)\,dv,
\end{align*}
where we have used that $\chi(v'\geq 0)\geq 0$ for all $v,\,v_\ast\in\R$ and $\chi(v'\geq 0)=1$ for $v,\,v_\ast\in\R_+$ owing to the non-negativity of $p$, $q$ in~\eqref{eq:v'}. Letting
$$ F(t):=\int_{\R_+}f(v,t)\,dv, $$
we are led therefore to the differential inequality
$$ \dot{F}(t)\geq F^2(t)-F(t), $$
which, multiplying both sides by $-\frac{e^{-t}}{F^2(t)}<0$, can be rewritten as
$$ \frac{d}{dt}\left(\frac{e^{-t}}{F(t)}\right)\leq -e^{-t}. $$
An integration in time over the interval $[0,\,t]$, $t>0$, considering that $F(0)=1$ by assumption, reveals
$$ \frac{1}{F(t)}\leq 1 \quad \forall\,t>0, $$
thus $F(t)\geq 1$ for all $t>0$. On the other hand, clearly $F(t)\leq\int_\R f(v,t)\,dv=1$ for all $t>0$. Therefore, we conclude $F(t)=1$ for all $t\geq 0$ and we are done.
\end{proof}

An impressive consequence of Theorem~\ref{theo:supp} is that the theory of the Boltzmann-type equation~\eqref{eq:Boltztype.strong} with linear symmetric interaction rule~\eqref{eq:v'} that we present in this paper, although referred to the case $v\in\R$, holds straightforwardly also when the physical nature of the microscopic state of the agents requires the limitation $v\geq 0$, such as e.g., in economical~\cite{cordier2005JSP,duering2009RMUP} or epidemiological~\cite{loy2021KRM} applications. In these cases, the Boltzmann-type equation is usually written by integrating on $\R_+$ but, owing to Theorem~\ref{theo:supp}, it can be simply understood as~\eqref{eq:Boltztype.strong} supplemented by an initial datum $f_0$ supported in $\R_+$.

\section{Trend towards the equilibrium}
\label{sect:trend_equil}
The existence of global-in-time solutions to~\eqref{eq:Boltztype.strong} makes it meaningful to study the convergence of $f(t)$ to stationary distributions, the so-called \textit{Maxwellians} in the jargon of classical kinetic theory. From the point of view of applications, Maxwellians depict the aggregate behaviour emerging spontaneously from agents' interactions when the latter reach a \textit{statistical equilibrium}. By statistical equilibrium we mean a condition in which the microscopic states of the agents can still change repeatedly in time but in such a way that their statistical distribution does not, so that a stationary aggregate picture of the system is observed.

\subsection{Convergence towards stationary distributions}
To discuss the trend towards stationary distributions we observe preliminarily that if $f,\,g$ are any two solutions to~\eqref{eq:Boltztype.strong} issuing from initial conditions with the same mean value then both Fourier distances $d_1(f(t),g(t))$, $d_2(f(t),g(t))$ are well-defined for every $t\geq 0$. Indeed, $f(t),\,g(t)$ have equal zeroth and first moments, cf.~\eqref{eq:assumptions_p.q} and Proposition~\ref{prop:ds}\ref{prop:ds.finiteness}. Moreover, the following relationship holds true:
\begin{equation}
	d_1(f(t),g(t))\leq 2\sqrt{2}\left[d_2(f(t),g(t))\right]^{1/2}, \qquad \forall\,t\geq 0
	\label{eq:interp.d1_d2}
\end{equation}
as particular case of a more general metric interpolation property proved in~\cite[Proposition~2.9]{carrillo2007RMUP}.

Therefore, we can use the $2$-Fourier metric to establish the following result:
\begin{proposition} \label{prop:trend.d2}
Under~\eqref{eq:assumptions_p.q}, let $f,\,g\in C^0([0,\,+\infty);\,\cP_2(\R))$ be the solutions to~\eqref{eq:Boltztype.strong} issuing from two initial data $f_0,\,g_0\in\cP_2(\R)$, respectively, with equal mean value. Then
$$ d_2(f(t),g(t))\leq d_2(f_0,g_0)e^{\left(\ave{p^2+q^2}-1\right)t}. $$
In particular,
$$ \lim_{t\to +\infty}d_2(f(t),g(t))=0. $$
\end{proposition}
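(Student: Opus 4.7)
The plan is to work directly on the Fourier-transformed Boltzmann-type equation~\eqref{eq:FourierBoltz}, mimicking the strategy used in the proof of Theorem~\ref{theo:cont_dep} but with exponent $s=2$ in the Fourier metric. First I would verify that $d_2(f_0,g_0)$ is finite by Proposition~\ref{prop:ds}\ref{prop:ds.finiteness}: since $f_0,\,g_0\in\cP_2(\R)$ share the same zeroth moment (equal to one) and, by hypothesis, the same first moment, the requirement for $s=2\in\N$ is met. The same applies at all later times because under~\eqref{eq:assumptions_p.q} the first moment is conserved and the second moment remains bounded (cf.\ Section~\ref{sect:moments_evol} and~\eqref{eq:M2.constant_M1}), so $d_2(f(t),g(t))$ is well-defined throughout.

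Setting $h(\xi,t):=\hat{g}(\xi,t)-\hat{f}(\xi,t)$ and subtracting the two Fourier-transformed equations for $\hat f$ and $\hat g$, the elementary telescoping identity
\begin{equation*}
    \hat{g}(p\xi,t)\hat{g}(q\xi,t)-\hat{f}(p\xi,t)\hat{f}(q\xi,t)
        =h(p\xi,t)\hat{g}(q\xi,t)+\hat{f}(p\xi,t)h(q\xi,t),
\end{equation*}
combined with the Duhamel formulation analogous to~\eqref{eq:FourierBoltz.mild}, yields
\begin{equation*}
    \abs{h(\xi,t)}\leq e^{-t}\abs{h(\xi,0)}+\int_0^te^{-(t-\tau)}\ave{\abs{h(p\xi,\tau)}+\abs{h(q\xi,\tau)}}\,d\tau,
\end{equation*}
where I have used Lemma~\ref{lemma:Fourier_basic} to bound $\abs{\hat{f}},\,\abs{\hat{g}}\leq 1$.

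The decisive step is the $\xi^2$-rescaling already recorded at the end of Section~\ref{sect:Fourier}, which gives
\begin{equation*}
    \frac{\abs{h(p\xi,\tau)}}{\xi^2}=p^2\,\frac{\abs{h(p\xi,\tau)}}{\abs{p\xi}^2}\leq p^2\,d_2(f(\tau),g(\tau)),
\end{equation*}
and likewise with $q$ in place of $p$. Dividing the previous inequality by $\xi^2$, taking the supremum over $\xi\neq 0$, multiplying by $e^t$, and rearranging, I obtain
\begin{equation*}
    e^td_2(f(t),g(t))\leq d_2(f_0,g_0)+\ave{p^2+q^2}\int_0^te^\tau d_2(f(\tau),g(\tau))\,d\tau.
\end{equation*}
Gr\"{o}nwall's inequality applied to $t\mapsto e^td_2(f(t),g(t))$ then delivers the announced exponential bound, and the assumption $\ave{p^2+q^2}<1$ makes the exponent strictly negative, whence the asymptotic vanishing. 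The only real subtlety is ensuring that the supremum defining $d_2$ is finite at each $t$: it is the matching of the first moments enforced through $\ave{p+q}=1$ that guarantees $h(\xi,t)/\xi^2$ is controlled near $\xi=0$, without which the whole $d_2$-based machinery would break down.
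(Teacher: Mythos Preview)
Your proposal is correct and follows essentially the same approach as the paper: the mild formulation~\eqref{eq:FourierBoltz.mild}, the telescoping splitting of the product difference, the $\abs{\xi}^2$-rescaling that produces the factor $\ave{p^2+q^2}$, and Gr\"{o}nwall applied to $e^td_2(f(t),g(t))$. The paper presents the same chain of inequalities with only cosmetic differences in presentation.
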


\begin{remark}
Proposition~\ref{prop:trend.d2} differs from Theorem~\ref{theo:cont_dep} in that it requires explicitly that the initial data $f_0$, $g_0$ have the same mean value. This, together with the first condition in~\eqref{eq:assumptions_p.q}, guarantees that $f(t)$, $g(t)$ have the same mean value for every $t>0$, which is essential in order for the $2$-Fourier distance between them to be well-defined for every $t>0$, cf. Proposition~\ref{prop:ds}\ref{prop:ds.finiteness}.

The continuous dependence estimate brought by Theorem~\ref{theo:cont_dep} holds instead for any pair of initial data $f_0,\,g_0\in\cP_2(\R)$, possibly with different mean value. On the other hand, in general the $1$-Fourier distance between the respective solutions does not vanish for $t\to +\infty$ unless $f_0$, $g_0$ have the same mean value, for then it is possible to apply Proposition~\ref{prop:trend.d2} and~\eqref{eq:interp.d1_d2}.
\end{remark}

\begin{proof}[Proof of Proposition~\ref{prop:trend.d2}]
We estimate the $2$-Fourier distance between $f(t)$ and $g(t)$ from~\eqref{eq:FourierBoltz.mild}. We have:
\begin{align*}
	\frac{\abs{\hat{g}(\xi,t)-\hat{f}(\xi,t)}}{\abs{\xi}^2} &\leq e^{-t}\frac{\abs{\hat{g}_0(\xi)-\hat{f}_0(\xi)}}{\abs{\xi}^2} \\
	&\phantom{\leq} +\int_0^te^{-(t-\tau)}\ave*{\frac{\abs{\hat{g}(p\xi,\tau)\hat{g}(q\xi,\tau)-\hat{f}(p\xi,\tau)\hat{f}(q\xi,\tau)}}{\abs{\xi}^2}}\,d\tau \\
	&\leq e^{-t}d_2(f_0,g_0)+\int_0^te^{-(t-\tau)}\left\langle\frac{\abs{\hat{g}(q\xi,\tau)}\cdot\abs{\hat{g}(p\xi,\tau)-\hat{f}(p\xi,\tau)}}{\abs{\xi}^2}\right. \\
	&\phantom{\leq d_2(f_0,g_0)+\int_0^te^{-(t-\tau)}\left\langle\right.} \left.+\frac{\abs{\hat{f}(p\xi,\tau)}\cdot\abs{\hat{g}(q\xi,\tau)
		-\hat{f}(q\xi,\tau)}}{\abs{\xi}^2}\right\rangle\,d\tau \\
	&\leq e^{-t}d_2(f_0,g_0) \\
	&\phantom{\leq} +\int_0^te^{-(t-\tau)}
		\ave*{\frac{\abs{\hat{g}(p\xi,\tau)-\hat{f}(p\xi,\tau)}}{\abs{\xi}^2}+\frac{\abs{\hat{g}(q\xi,\tau)-\hat{f}(q\xi,\tau)}}{\abs{\xi}^2}}\,d\tau \\
	&\leq e^{-t}d_2(f_0,g_0)+\ave{p^2+q^2}\int_0^te^{-(t-\tau)}d_2(f(\tau),g(\tau))\,d\tau.
\end{align*}
Taking the supremum over $\xi\in\R\setminus\{0\}$ and multiplying both sides by $e^t$ we obtain
$$ e^td_2(f(t),g(t))\leq d_2(f_0,g_0)+\ave{p^2+q^2}\int_0^te^\tau d_2(f(\tau),g(\tau))\,d\tau. $$
At this point, Gr\"{o}nwall's inequality applied to the function $e^td_2(f(t),g(t))$ yields
$$ e^td_2(f(t),g(t))\leq d_2(f_0,g_0)e^{\ave{p^2+q^2}t}, $$
i.e.
$$ d_2(f(t),g(t))\leq d_2(f_0,g_0)e^{\left(\ave{p^2+q^2}-1\right)t}, $$
whence the thesis follows recalling also the second assumption in~\eqref{eq:assumptions_p.q}.
\end{proof}

Proposition~\ref{prop:trend.d2} is at the basis of the characterisation of the large time trend of the solutions to~\eqref{eq:Boltztype.strong}. Assume indeed that~\eqref{eq:Boltztype.strong} admits a \textit{constant-in-time} solution, viz. an \textit{equilibrium solution}, say $f^\infty=f^\infty(v)$ such that $\partial_tf^\infty\equiv 0$, with a certain mean value $M_1^\infty\in\R$. Take then any initial condition $f_0$ having mean value $M_{1,0}=M_1^\infty$. Proposition~\ref{prop:trend.d2} implies that the solution $f(t)$ issuing from $f_0$ converges in time to $f^\infty$, because
$$ d_2(f(t),f^\infty)\leq d_2(f_0,f^\infty)e^{\left(\ave{p^2+q^2}-1\right)t}\xrightarrow{t\to +\infty}0, $$
where we have used the fact that the solution to~\eqref{eq:Boltztype.strong} issuing from the initial datum $f^\infty$ is $f^\infty$ itself by definition of constant-in-time solution. Therefore, constant-in-time solutions are the aforesaid Maxwellians depicting the emerging aggregate behaviour of the system of agents. We denote them with the superscript ``$\infty$'' to refer precisely to the fact that they are the distributions that the system converges to when $t\to +\infty$.

The necessity then arises to study constant-in-time solutions to~\eqref{eq:Boltztype.strong}. We do it by means of the following result, which asserts that under~\eqref{eq:assumptions_p.q}, plus a further technical but essentially nonintrusive assumption,~\eqref{eq:Boltztype.strong} admits always a unique constant-in-time solution of prescribed mean value.

\begin{theorem} \label{theo:const_in_time_sols}
Assume~\eqref{eq:assumptions_p.q} with furthermore $\ave{p^3+q^3}<1$ and fix $M_1^\infty\in\R$. There exists $\cM_3>0$ such that~\eqref{eq:Boltztype.strong} admits a unique constant-in-time solution $f^\infty\in\cP_{2,1,\cM_3}(\R)$ with mean value $M_1^\infty$.
\end{theorem}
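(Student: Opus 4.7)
The plan is to obtain $f^\infty$ as the unique fixed point of the nonlinear operator $\cP$ defined on probability measures by
$$ \widehat{\cP(\mu)}(\xi):=\ave{\hat{\mu}(p\xi)\hat{\mu}(q\xi)} $$
(equivalently, $\cP(\mu)$ is the law of $pV+qV_\ast$ with $V,\,V_\ast$ iid $\sim\mu$, independent of $p,\,q$), because a constant-in-time solution to~\eqref{eq:Boltztype.strong} is precisely a probability measure with $Q(f^\infty,f^\infty)=0$, i.e.\ $\hat{f}^\infty(\xi)=\ave{\hat{f}^\infty(p\xi)\hat{f}^\infty(q\xi)}$. I will apply Banach's fixed-point theorem on a complete metric space of the form $(\cP_{2,1,\cM_3}(\R),d_2)$ for a suitable $\cM_3>0$.

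The ambient space is dictated by the compatibility constraints. Since every moment of a stationary $f^\infty$ must itself be stationary, the ODE for $M_2$ from Section~\ref{sect:moments_evol} forces, consistently with~\eqref{eq:M2.constant_M1},
$$ M_2^\infty:=\frac{2\ave{pq}}{1-\ave{p^2+q^2}}(M_1^\infty)^2, $$
which is well-defined and non-negative under~\eqref{eq:assumptions_p.q}. I therefore take $\cP_{2,1,\cM_3}(\R)$ to consist of probability measures with prescribed moments $M_0=1$, $M_1=M_1^\infty$, $M_2=M_2^\infty$ and with $\int_\R\abs{v}^3\,d\mu\leq\cM_3$; by Proposition~\ref{prop:ds}\ref{prop:cP.completeness} this is complete under $d_2$, and it is non-empty for $\cM_3$ large enough (e.g.\ any Gaussian with the prescribed first two moments has finite third absolute moment and hence lies in it).

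Now I verify the two hypotheses of Banach's theorem. For \emph{invariance under $\cP$}: $\cP(\mu)$ is automatically a probability measure, and direct moment computations give $M_1^{\cP(\mu)}=\ave{p+q}M_1^\mu=M_1^\infty$ and $M_2^{\cP(\mu)}=\ave{p^2+q^2}M_2^\infty+2\ave{pq}(M_1^\infty)^2=M_2^\infty$, using~\eqref{eq:assumptions_p.q} and the definition of $M_2^\infty$. For the third absolute moment, the inequality $\abs{pV+qV_\ast}^3\leq(p\abs{V}+q\abs{V_\ast})^3$ (valid since $p,\,q\geq 0$), followed by binomial expansion, independence of $V,\,V_\ast$, and Cauchy-Schwarz $\int\abs{v}\,d\mu\leq\sqrt{M_2^\infty}$, yields
$$ \int_\R\abs{v}^3\,d\cP(\mu)\leq\ave{p^3+q^3}\int_\R\abs{v}^3\,d\mu+3\ave{pq(p+q)}(M_2^\infty)^{3/2}. $$
Because $\ave{p^3+q^3}<1$, picking $\cM_3$ larger than both the third absolute moment of the reference Gaussian and $\frac{3\ave{pq(p+q)}(M_2^\infty)^{3/2}}{1-\ave{p^3+q^3}}$ makes the right-hand side $\leq\cM_3$ whenever $\int\abs{v}^3\,d\mu\leq\cM_3$. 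For \emph{contraction}: adding and subtracting $\hat{\nu}(q\xi)\hat{\mu}(p\xi)$, using $\norminf{\hat{\cdot}}\leq 1$ from Lemma~\ref{lemma:Fourier_basic}, and the rescaling identity at the end of Section~\ref{sect:Fourier}, exactly as in the proof of Theorem~\ref{theo:cont_dep} but with $\abs{\xi}^2$ replacing $\abs{\xi}$, gives
$$ d_2(\cP(\mu),\cP(\nu))\leq\ave{p^2+q^2}\,d_2(\mu,\nu), $$
a strict contraction by~\eqref{eq:assumptions_p.q}. Banach's theorem then produces the unique fixed point $f^\infty\in\cP_{2,1,\cM_3}(\R)$.

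The chief technical difficulty is the third-moment invariance: working in a space where $d_2$ is a metric requires prescribing $M_0$, $M_1$, $M_2$, so uniform control over the next moment is needed to propagate $\cP$-invariance, which is exactly what the strengthened hypothesis $\ave{p^3+q^3}<1$ is tailored to provide---mirroring the role played by $\ave{p^2+q^2}<1$ for the second moment in the evolutionary well-posedness theory of Section~\ref{sect:exist.uniq}.
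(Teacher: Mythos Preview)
Your proof is correct and follows essentially the same route as the paper: your operator $\cP$ is exactly the gain operator $Q^+$ that the paper uses, and both arguments apply Banach's fixed-point theorem in $(\cP_{2,1,\cM_3}(\R),d_2)$ with the same prescribed moments $M_1^\infty$, $M_2^\infty=\frac{2\ave{pq}}{1-\ave{p^2+q^2}}(M_1^\infty)^2$, the same third-absolute-moment invariance estimate, and the same contraction constant $\ave{p^2+q^2}$. The only minor difference is that you explicitly address nonemptiness of the ambient space (and correspondingly allow a slightly larger $\cM_3$), whereas the paper fixes $\cM_3=\frac{3\ave{pq(p+q)}}{1-\ave{p^3+q^3}}(M_2^\infty)^{3/2}$ without commenting on this point.
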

\begin{remark}
Actually, it would be enough to investigate the existence of constant-in-time solutions to~\eqref{eq:Boltztype.strong} with prescribed mean value, for Proposition~\ref{prop:trend.d2} implies automatically their uniqueness. Nevertheless, Theorem~\ref{theo:const_in_time_sols} has the merit of providing a self-contained result, which asserts the existence and uniqueness of solutions to the non-evolutionary problem $Q(f,f)=0$ independently of any prior knowledge on the corresponding evolutionary problem $\partial_tf=Q(f,f)$.
\end{remark}
\begin{proof}[Proof of Theorem~\ref{theo:const_in_time_sols}]
Since constant-in-time solutions to~\eqref{eq:Boltztype.strong} are such that $\partial_tf^\infty\equiv 0$, they solve $Q(f^\infty,f^\infty)=0$ where $Q$ is the collisional operator~\eqref{eq:Q.strong}. A useful form in which to rewrite this equation is
\begin{equation}
	f^\infty=Q^+(f^\infty,f^\infty),
	\label{eq:finfty.fixedpoint}
\end{equation}
which shows that $f^\infty$ can be regarded as a fixed point of the \textit{gain operator} $Q^+$ defined as
$$ Q^+(f,f)(v,t):=\int_\R\ave*{\frac{f(\pr{v},t)f(\pr{v}_\ast,t)}{\abs{p^2-q^2}}}\,dv_\ast. $$
In weak form:
\begin{equation}
	\int_\R\varphi(v)Q^+(f^\infty,f^\infty)(v)\,dv=\int_\R\int_\R\ave{\varphi(v')}f^\infty(v)f^\infty(v_\ast)\,dv\,dv_\ast,
	\label{eq:Q+.weak}
\end{equation}
so that~\eqref{eq:finfty.fixedpoint} becomes
$$ \int_\R\varphi(v)f^\infty(v)\,dv=\int_\R\int_\R\ave{\varphi(v')}f^\infty(v)f^\infty(v_\ast)\,dv\,dv_\ast $$
for every observable quantity $\varphi$.

We notice preliminarily that~\eqref{eq:finfty.fixedpoint} does not force a particular mean value for $f^\infty$, indeed with $\varphi(v)=v$ we discover
$$ \int_\R vf^\infty(v)\,dv=\ave{p+q}\int_\R vf^\infty(v)\,dv, $$
which is satisfied by every value of $\int_\R vf^\infty(v)\,dv$ because of~\eqref{eq:assumptions_p.q}. We can therefore fix such a mean value to some $M_1^\infty\in\R$ common to all prospective solutions to~\eqref{eq:finfty.fixedpoint}. Consequently, prospective solutions turn out to have a common energy proportional to $(M_1^\infty)^2$, indeed with $\varphi(v)=v^2$ we discover
$$ \int_\R v^2f^\infty(v)\,dv=\ave{p^2+q^2}\int_\R v^2f^\infty(v)\,dv+2\ave{pq}(M_1^\infty)^2, $$
whence
$$ M_2^\infty:=\int_\R v^2f^\infty(v)\,dv=\frac{2\ave{pq}}{1-\ave{p^2+q^2}}(M_1^\infty)^2. $$
Finally, we notice that $\varphi(v)=\abs{v}^3$ yields
\begin{align*}
	\int_\R\abs{v}^3f^\infty(v)\,dv &= \int_\R\int_\R\ave{\abs{pv+qv_\ast}^3}f^\infty(v)f^\infty(v_\ast)\,dv\,dv_\ast \\
	&\leq \ave{p^3+q^3}\int_\R\abs{v}^3f^\infty(v)\,dv+3\ave{pq(p+q)}M_2^\infty\int_\R\abs{v}f^\infty(v)\,dv;
\intertext{since $\int_\R\abs{v}f^\infty(v)\,dv\leq (M_2^\infty)^{1/2}$ because of Cauchy-Schwarz inquality, we obtain further}
	&\leq \ave{p^3+q^3}\int_\R\abs{v}^3f^\infty(v)\,dv+3\ave{pq(p+q)}(M_2^\infty)^{3/2}
\end{align*}
whence, in view of the assumption $\ave{p^3+q^3}<1$,
$$ \int_\R\abs{v}^3f^\infty(v)\,dv\leq\frac{3\ave{pq(p+q)}}{1-\ave{p^3+q^3}}(M_2^\infty)^{3/2}. $$
Therefore, fixing
\begin{equation}
	\cM_3:=\frac{3\ave{pq(p+q)}}{1-\ave{p^3+q^3}}(M_2^\infty)^{3/2}
		=\frac{3\ave{pq(p+q)}}{1-\ave{p^3+q^3}}\left(\frac{2\ave{pq}}{1-\ave{p^2+q^2}}\right)^{3/2}\abs{M_1^\infty}^3
	\label{eq:cM3}
\end{equation}
we can look for solutions to~\eqref{eq:finfty.fixedpoint} in $\cP_{2,1,\cM_3}(\R)$, which equipped with the Fourier distance $d_2$ is a complete metric space, cf. Proposition~\ref{prop:ds}\ref{prop:cP.completeness}.

To show existence and uniqueness of a fixed point of the gain operator $Q^+$ in $\cP_{2,1,\cM_3}(\R)$ we rely again on Banach's fixed-point theorem.
\begin{enumerate}
\item First, we show that $Q^+$ maps $\cP_{2,1,\cM_3}(\R)$ into itself. For $f^\infty\in\cP_{2,1,\cM_3}(\R)$, from~\eqref{eq:Q+.weak} with $\varphi(v)=1$ we have clearly
$$ \int_\R Q^+(f^\infty,f^\infty)(v)\,dv=1, $$
whereas for every non-negative observable $\varphi$
$$ \int_\R\varphi(v)Q^+(f^\infty,f^\infty)(v)\,dv\geq 0. $$
Therefore, $Q^+(f^\infty,f^\infty)$ is a probability measure.

Next, we take $\varphi(v)=v$ and discover
$$ \int_\R vQ^+(f^\infty,f^\infty)(v)\,dv=\int_\R vf^\infty(v)\,dv=M_1^\infty; $$
we take $\varphi(v)=v^2$ and obtain
\begin{align*}
	\int_\R v^2Q^+(f^\infty,f^\infty)(v)\,dv &= \ave{p^2+q^2}M_2^\infty+2\ave{pq}(M_1^\infty)^2 \\
	&= 2\ave{pq}\left(\frac{\ave{p^2+q^2}}{1-\ave{p^2+q^2}}+1\right)(M_1^\infty)^2 \\
	&= \frac{2\ave{pq}}{1-\ave{p^2+q^2}}(M_1^\infty)^2=M_2^\infty;
\end{align*}
finally, we take $\varphi(v)=\abs{v}^3$ and find
\begin{align*}
	\int_\R\abs{v}^3Q^+(f^\infty,f^\infty)(v)\,dv &\leq \ave{p^3+q^3}\int_\R\abs{v}^3f^\infty(v)\,dv+3\ave{pq(p+q)}(M_2^\infty)^{3/2} \\
	&\leq \ave{p^3+q^3}\cM_3+3\ave{pq(p+q)}(M_2^\infty)^{3/2}, \\
\intertext{whence, using the expression~\eqref{eq:cM3} of $\cM_3$,}
	&\leq 3\ave{pq(p+q)}\left(\frac{\ave{p^3+q^3}}{1-\ave{p^3+q^3}}+1\right)(M_2^\infty)^{3/2} \\
	&= \frac{3\ave{pq(p+q)}}{1-\ave{p^3+q^3}}(M_2^\infty)^{3/2}=\cM_3.
\end{align*}
We conclude that $Q^+(f^\infty,f^\infty)\in\cP_{2,1,\cM_3}(\R)$.

\item Second, we show that $Q^+$ is a contraction on $\cP_{2,1,\cM_3}(\R)$. For this, we observe that from~\eqref{eq:Q+.weak} with $\varphi(v)=e^{-i\xi v}$ we obtain
$$ \widehat{Q^+}(f^\infty,f^\infty)(\xi)=\ave{\hat{f}^\infty(p\xi)\hat{f}^\infty(q\xi)}, $$
therefore, given $f^\infty,\,g^\infty\in\cP_{2,1,\cM_3}(\R)$, we have
\begin{align*}
	\frac{\abs{\widehat{Q^+}(g^\infty,g^\infty)(\xi)-\widehat{Q^+}(f^\infty,f^\infty)(\xi)}}{\abs{\xi}^2} &=
		\frac{\abs{\ave{\hat{g}^\infty(p\xi)\hat{g}^\infty(q\xi)-\hat{f}^\infty(p\xi)\hat{f}^\infty(q\xi)}}}{\abs{\xi}^2} \\
	&\leq \ave*{\frac{\abs{\hat{g}^\infty(p\xi)-\hat{f}^\infty(p\xi)}}{\abs{\xi}^2}}
		+\ave*{\frac{\abs{\hat{g}^\infty(q\xi)-\hat{f}^\infty(q\xi)}}{\abs{\xi}^2}} \\
	&= \ave{p^2+q^2}d_2(f^\infty,g^\infty).
\end{align*}
Taking the supremum over $\xi\in\R\setminus\{0\}$ yields
$$ d_2(Q^+(f^\infty,f^\infty),Q^+(g^\infty,g^\infty))\leq\ave{p^2+q^2}d_2(f^\infty,g^\infty) $$
and, owing to~\eqref{eq:assumptions_p.q}, we are done. \qedhere
\end{enumerate}
\end{proof}

\begin{remark}
The further hypothesis $\ave{p^3+q^3}<1$ of Theorem~\ref{theo:const_in_time_sols} is often less restrictive than it might seem at first glance. Observe for instance that if $p,\,q\in [0,\,1]$ then $p^3\leq p^2$ and likewise $q^3\leq q^2$, therefore $\ave{p^3+q^3}\leq\ave{p^2+q^2}<1$ by using only~\eqref{eq:assumptions_p.q}. We stress, however, that this requires the random coefficients $p,\,q$ to belong \textit{pointwise} to $[0,\,1]$ and not only \textit{on average} as discussed in Remark~\ref{rem:pq.average}. Therefore, condition $\ave{p^3+q^3}<1$ is in general not implied by~\eqref{eq:assumptions_p.q} alone.
\end{remark}

Collecting the results of Proposition~\ref{prop:trend.d2} and Theorem~\ref{theo:const_in_time_sols}, together with the observations in between, we can state:
\begin{theorem}
Assume~\eqref{eq:assumptions_p.q} with also $\ave{p^3+q^3}<1$. Every solution $f\in C^0([0,\,+\infty);\,\cP_2(\R))$ to~\eqref{eq:Boltztype.strong} issuing from an initial datum $f_0\in\cP_2(\R)$ with mean value $M_{1,0}\in\R$ is such that $f(t)$ converges in time towards a unique Maxwellian $f^\infty\in\cP_3(\R)$ with mean value $M_{1,0}$. In particular:
$$ d_2(f(t),f^\infty)\leq d_2(f_0,f^\infty)e^{\left(\ave{p^2+q^2}-1\right)t}, $$
hence the convergence is exponentially fast in the $2$-Fourier metric.
\end{theorem}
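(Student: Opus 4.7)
The plan is to assemble the theorem as an essentially immediate corollary of Theorem~\ref{theo:const_in_time_sols} and Proposition~\ref{prop:trend.d2}, so no genuinely new computation is required; the work is mainly in checking that the hypotheses of each ingredient are met and that the Maxwellian produced by the first result is precisely the one to which the general solution converges via the second.

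First I would invoke Theorem~\ref{theo:const_in_time_sols} with the value $M_1^\infty:=M_{1,0}$. The current assumptions~\eqref{eq:assumptions_p.q} together with $\ave{p^3+q^3}<1$ are exactly what that theorem requires, so it produces a unique constant-in-time solution $f^\infty\in\cP_{2,1,\cM_3}(\R)$ with mean value $M_{1,0}$, where $\cM_3$ is the explicit constant~\eqref{eq:cM3}. Since $\cP_{2,1,\cM_3}(\R)\subset\cP_3(\R)$, this delivers the claimed regularity $f^\infty\in\cP_3(\R)$. Uniqueness in the larger class of \emph{all} prospective Maxwellians with mean $M_{1,0}$ in $\cP_2(\R)$ is not built into that statement, but it will drop out for free from the exponential contraction established below: two such Maxwellians have a $d_2$-distance that is both constant in $t$ and bounded above by an exponentially decaying quantity, forcing it to vanish.

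Next I would view $f^\infty$ as a (trivially continuous) element of $C^0([0,+\infty);\cP_2(\R))$, namely the constant curve $t\mapsto f^\infty$, which is tautologically a solution to~\eqref{eq:Boltztype.strong}. Both $f_0$ and $f^\infty$ lie in $\cP_2(\R)$ and share the same mean value $M_{1,0}$ by construction, so the hypotheses of Proposition~\ref{prop:trend.d2} are satisfied with the pair $(f,f^\infty)$. Applying that proposition directly yields
\[
d_2(f(t),f^\infty)\leq d_2(f_0,f^\infty)\,e^{\left(\ave{p^2+q^2}-1\right)t},
\]
which is precisely the displayed inequality; note that $d_2(f_0,f^\infty)<+\infty$ thanks to Proposition~\ref{prop:ds}\ref{prop:ds.finiteness}, since $f_0,f^\infty\in\cP_2(\R)$ have equal zeroth and first moments. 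The strict inequality $\ave{p^2+q^2}<1$ from~\eqref{eq:assumptions_p.q} then gives exponential decay and, in particular, $d_2(f(t),f^\infty)\to 0$ as $t\to+\infty$, completing the convergence statement.

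Honestly, I do not expect any serious obstacle: once Theorem~\ref{theo:const_in_time_sols} and Proposition~\ref{prop:trend.d2} are granted, the argument is essentially a bookkeeping exercise. The only delicate point worth flagging explicitly in the write-up is the verification that $d_2(f_0,f^\infty)$ is finite (so that the estimate is non-vacuous), which, as noted, follows from matching first moments via Proposition~\ref{prop:ds}\ref{prop:ds.finiteness}; and the observation that uniqueness of the Maxwellian with prescribed mean value $M_{1,0}$ is automatic from the contraction estimate, so no additional argument beyond Theorem~\ref{theo:const_in_time_sols} is needed.
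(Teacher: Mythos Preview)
Your proposal is correct and follows exactly the paper's approach: the theorem is stated there without a separate proof, introduced only by ``Collecting the results of Proposition~\ref{prop:trend.d2} and Theorem~\ref{theo:const_in_time_sols}, together with the observations in between, we can state:''. Your write-up simply makes those observations explicit (finiteness of $d_2(f_0,f^\infty)$ via Proposition~\ref{prop:ds}\ref{prop:ds.finiteness}, and the redundancy check on uniqueness), which is entirely appropriate.
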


\subsection{Asymptotic trend of the moments and tails}
If $f^\infty$ has mean value $M_{1,0}$ then from the proof of Theorem~\ref{theo:const_in_time_sols} we infer that its energy is
$$ M_2^\infty=\frac{2\ave{pq}}{1-\ave{p^2+q^2}}M_{1,0}^2. $$
On the other hand, we know that the energy $M_2(t)$ of $f(t)$ is~\eqref{eq:M2.constant_M1}, hence we see that $M_2(t)\to M_2^\infty$ as $t\to +\infty$. More in general, one may wonder whether the convergence of $f(t)$ to the Maxwellian $f^\infty$ implies any properties of the time trend of the moments of $f(t)$ itself.

To investigate this issue, it turns out that a fundamental quantity is the function $S:\R_+\to\R$ defined as
\begin{equation}
	S(s):=\ave{p^s+q^s}-1,
	\label{eq:S}
\end{equation}
which is such that $S(0)=1$ and moreover, owing to~\eqref{eq:assumptions_p.q}, $S(1)=0$ and $S(2)<0$. In addition:
\begin{lemma}
The function $S$ defined in~\eqref{eq:S} is convex.
\end{lemma}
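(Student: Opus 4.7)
The plan is to reduce the convexity of $S$ to the pointwise convexity of $s\mapsto p^s$ and $s\mapsto q^s$ for each realisation of the random coefficients $p,\,q$, and then pass to the expectation.

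First, I would observe that for any fixed $a\geq 0$ the function $s\mapsto a^s$ is convex on $\R_+$. Indeed, if $a>0$ we can write $a^s=e^{s\ln a}$, whose second derivative in $s$ equals $(\ln a)^2 e^{s\ln a}\geq 0$; if $a=0$, then $a^s=0$ for every $s>0$, which is trivially convex. Applying this to $p$ and $q$ separately and summing, the realisation $s\mapsto p^s+q^s$ is convex on $\R_+$ for \emph{every} admissible choice of the random pair $(p,q)$.

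Next, I would translate this pointwise convexity into a functional inequality. Fix $s_1,\,s_2>0$ and $\lambda\in[0,\,1]$, and set $s_\lambda:=\lambda s_1+(1-\lambda)s_2$. By the convexity just established,
$$ p^{s_\lambda}+q^{s_\lambda}\leq \lambda(p^{s_1}+q^{s_1})+(1-\lambda)(p^{s_2}+q^{s_2}) $$
holds for every realisation of $(p,q)$. Since $\langle\cdot\rangle$ is the expectation with respect to the law of $(p,q)$, it is monotone and linear, hence taking expectations of both sides preserves the inequality. Subtracting $1=\lambda\cdot 1+(1-\lambda)\cdot 1$ from both sides then yields
$$ S(s_\lambda)\leq \lambda S(s_1)+(1-\lambda)S(s_2), $$
which is exactly the definition of convexity of $S$ on $\R_+$.

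There is no real obstacle here: the only thing to be mindful of is that one should avoid differentiating under the expectation twice (which would give the alternative proof $S''(s)=\langle(\ln p)^2 p^s+(\ln q)^2 q^s\rangle\geq 0$), because this requires additional integrability hypotheses on $(\ln p)^2 p^s$ and $(\ln q)^2 q^s$ that have not been assumed. The Jensen-type argument above sidesteps this issue, as monotonicity of the expectation is the only property needed.
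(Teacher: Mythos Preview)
Your proof is correct and follows essentially the same approach as the paper: both establish the pointwise convexity of $s\mapsto p^{s}+q^{s}$ via the exponential representation $a^{s}=e^{s\ln a}$, then take the expectation and subtract $1=\lambda+(1-\lambda)$ to obtain the convexity inequality for $S$. Your additional remark on avoiding differentiation under the expectation is a nice touch but not present in the paper's version.
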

\begin{proof}
Let $s_1,\,s_2\in\R$ and $\lambda\in [0,\,1]$. Noticing for instance that
$$ p^{\lambda s_1+(1-\lambda)s_2}=e^{(\lambda s_1+(1-\lambda)s_2)\log{p}}\leq \lambda e^{s_1\log{p}}+(1-\lambda)e^{s_2\log{p}}
	=\lambda p^{s_1}+(1-\lambda)p^{s_2}, $$
where we have used the convexity of the exponential function, we deduce
\begin{align*}
	S(\lambda s_1+(1-\lambda)s_2) &\leq\lambda\ave{p^{s_1}+q^{s_1}}+(1-\lambda)\ave{p^{s_2}+q^{s_2}}-[\lambda+(1-\lambda)] \\
	&= \lambda\left(\ave{p^{s_1}+q^{s_1}}-1\right)+(1-\lambda)\left(\ave{p^{s_2}+q^{s_2}}-1\right) \\
	&= \lambda S(s_1)+(1-\lambda)S(s_2),
\end{align*}
whence the thesis follows.
\end{proof}

Because of the convexity and of the obvious continuity of $S$, together with the further properties recalled above, either of the following options is possible:
\begin{itemize}
\item $S(s)<0$ for all $s>1$;
\item there exists $\bar{s}>1$ such that $S(\bar{s})=0$, with $S(s)<0$ for $1<s<\bar{s}$ and $S(s)>0$ for $s>\bar{s}$.
\end{itemize}
In the first case, we can provide a quite general and precise characterisation of the trend of the moments of $f(t)$ for large times, as expressed by the following two results.

\begin{theorem} \label{theo:bound.moments}
Assume~\eqref{eq:assumptions_p.q} along with $S(s)<0$ for all $s>2$. Moreover, let $f_0\in\cP_2(\R)$ have all moments finite. Then the statistical moments of any order of the solution $f\in C^0([0,\,+\infty);\,\cP_2(\R))$ to~\eqref{eq:Boltztype.strong} issuing from $f_0$ are uniformly bounded in time.
\end{theorem}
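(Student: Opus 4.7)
The plan is to work with the \emph{absolute} moments
$$ \widetilde{M}_n(t):=\int_\R\abs{v}^nf(v,t)\,dv, \qquad n\in\N, $$
which dominate the signed moments $\abs{M_n(t)}\leq\widetilde{M}_n(t)$, and to derive for each $n$ a scalar differential inequality of the form
$$ \frac{d\widetilde{M}_n}{dt}\leq S(n)\widetilde{M}_n+B_n(t), $$
where $B_n(t)$ is a polynomial expression in the absolute moments of strictly lower order. Once such an inequality is in place and the lower moments are known to be uniformly bounded, the hypothesis $S(n)<0$ for $n>2$ immediately yields a uniform-in-time bound on $\widetilde{M}_n$ by a standard comparison argument for the scalar ODE $y'=\alpha y+\beta$ with $\alpha<0$: any solution stays bounded by $\max\{y(0),\,-\beta/\alpha\}$.

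To obtain the inequality, I would take $\varphi(v)=\abs{v}^n$ in the weak form~\eqref{eq:Boltztype.weak} and, exploiting the non-negativity of $p,\,q$, estimate
$$ \abs{pv+qv_\ast}^n\leq (p\abs{v}+q\abs{v_\ast})^n=\sum_{k=0}^{n}\binom{n}{k}p^kq^{n-k}\abs{v}^k\abs{v_\ast}^{n-k}. $$
Isolating the terms $k=n$ and $k=0$ reproduces the factor $\ave{p^n+q^n}-1=S(n)$ multiplying $\widetilde{M}_n$, while the remaining $k\in\{1,\dots,n-1\}$ contribute
$$ B_n(t):=\sum_{k=1}^{n-1}\binom{n}{k}\ave{p^kq^{n-k}}\widetilde{M}_k(t)\widetilde{M}_{n-k}(t), $$
which involves only moments of order at most $n-1$.

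The proof then proceeds by induction on $n$. The base cases are already available: $\widetilde{M}_0=1$ trivially, $\widetilde{M}_2=M_2$ is uniformly bounded by $\cM_2$ thanks to Remark~\ref{rem:P_1_1}, and $\widetilde{M}_1\leq\widetilde{M}_2^{1/2}\leq\cM_2^{1/2}$ by Cauchy--Schwarz. For the inductive step with $n\geq 3$, the hypothesis yields $S(n)<0$, the inductive assumption gives a uniform bound on $B_n(t)$, and the comparison principle applied to the differential inequality produces
$$ \widetilde{M}_n(t)\leq\max\left\{\widetilde{M}_n(0),\,\frac{\sup_{t\geq 0}B_n(t)}{\abs{S(n)}}\right\}, $$
where $\widetilde{M}_n(0)$ is finite by the assumption on $f_0$.

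The main obstacle is not conceptual but a matter of rigour: one must justify that $\widetilde{M}_n(t)$ is actually differentiable (or at least absolutely continuous) and remains finite for all $t>0$, so that the formal ODE argument is legitimate. This can be handled by first truncating $\varphi$ to a compactly supported approximation $\varphi_R(v)=\abs{v}^n\chi_{[-R,R]}(v)$ (smoothed if needed), deriving the estimate for the truncated moments, and then passing to the limit $R\to+\infty$ by monotone convergence; the uniform-in-$R$ bound obtained from the induction ensures finiteness of $\widetilde{M}_n(t)$ in the limit.
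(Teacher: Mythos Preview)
Your proposal is correct and follows essentially the same inductive scheme as the paper's proof. The only minor variation is that the paper works with the signed moments $M_n$, which satisfy an \emph{exact} ODE, integrates it, and then takes absolute values to close the induction, whereas you work with the absolute moments $\widetilde{M}_n$ via a differential inequality obtained from $\abs{pv+qv_\ast}^n\leq (p\abs{v}+q\abs{v_\ast})^n$; the two routes are interchangeable here.
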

\begin{proof}
We observe that~\eqref{eq:assumptions_p.q}, together with $S(s)<0$ for all $s>2$, implies actually that $S(s)<0$ for all $s>1$. Therefore, throughout the proof we shall write $S(n)=-\abs{S(n)}$, $n>1$, for the sake of clarity.

The zeroth and first moment of $f(t)$ are constant in time. Moreover, from Section~\ref{sect:moments_evol} we know that the subsequent moments ($n\geq 2$) satisfy
$$ \frac{dM_n}{dt}=-\abs{S(n)}M_n+\sum_{k=1}^{n-1}\binom{n}{k}\ave{p^kq^{n-k}}M_kM_{n-k}, $$
whence, integrating in time in the interval $[0,\,t]$, $t>0$,
$$ M_n(t)=M_{n,0}e^{-\abs{S(n)}t}+\sum_{k=1}^{n-1}\binom{n}{k}\ave{p^kq^{n-k}}\int_0^t e^{-\abs{S(n)}(t-\tau)}M_k(\tau)M_{n-k}(\tau)\,d\tau. $$

Assume that moments up to the ($n-1$)-th one, $n\geq 2$, are uniformly bounded in time, i.e. that there exist constants $\cpM_k>0$, $k=0,\,\dots,\,n-1$, such that $\abs{M_k(t)}\leq\cpM_k$ for all $t\geq 0$ (notice, in particular, that $\cpM_0=1$ and $\cpM_1=M_{1,0}$). It follows:
\begin{align*}
	\abs{M_n(t)} &\leq \abs{M_{n,0}}e^{-\abs{S(n)}t}+\frac{1-e^{-\abs{S(n)}t}}{\abs{S(n)}}\sum_{k=1}^{n-1}\binom{n}{k}\ave{p^kq^{n-k}}\cpM_k\cpM_{n-k} \\
	&\leq \abs{M_{n,0}}+\frac{1}{\abs{S(n)}}\sum_{k=1}^{n-1}\binom{n}{k}\ave{p^kq^{n-k}}\cpM_k\cpM_{n-k}=:\cpM_n.
\end{align*}
Therefore, also the $n$-th moment is uniformly bounded in time and, by induction, we obtain the thesis.
\end{proof}

The uniform boundedness of all moments of $f(t)$ asserted by Theorem~\ref{theo:bound.moments} implies that
$$ \abs*{\int_\R v^nf(v,t)\,dv}<+\infty $$
for all $n\in\N$ and all $t\geq 0$. Although this does \textit{not} mean that $f(t)\in\cP_n(\R)$ for all $n\in\N$, it nonetheless indicates that $f(t)$ has a high degree of integrability for $v\to\pm\infty$ or, as it is customary to say, that it has \textit{slim tails}.

Instead, if there exists $\bar{s}>1$ such that $S(\bar{s})=0$ we cannot characterise as much precisely, in general, the time trend of the moments of $f(t)$, essentially because we ignore \textit{a priori} the sign of the moments themselves. To be more specific, assuming $\bar{s}\not\in\N$, let us fix $\bar{n}:=[\bar{s}]+1$, so that $S(n)<0$ for all $n\leq\bar{n}-1$ while $S(\bar{n})>0$, and let us investigate the evolution of $M_{\bar{n}}$. Owing to Theorem~\ref{theo:bound.moments}, the moments $M_0$, $M_1$, $M_2(t)$, \dots, $M_{\bar{n}-1}(t)$ are uniformly bounded. If, by chance, they are all non-negative for $t\geq 0$ then $M_{\bar{n}}(t)\geq M_{\bar{n},0}e^{S(\bar{n})t}$ and if $M_{\bar{n},0}>0$ as well then $M_{\bar{n}}(t)\to +\infty$ for $t\to +\infty$. Conversely, if we ignore the sign of the moments then, proceeding like in the proof of Theorem~\ref{theo:bound.moments}, we estimate
\begin{align*}
	\abs{M_{\bar{n}}(t)} &\geq \abs{M_{\bar{n},0}}e^{S(\bar{n})t}-\frac{e^{S(\bar{n})t}-1}{S(\bar{n})}\sum_{k=1}^{\bar{n}-1}\binom{\bar{n}}{k}\ave{p^kq^{\bar{n}-k}}\cpM_k\cpM_{\bar{n}-k} \\
	&= \left(\abs{M_{\bar{n},0}}-\frac{1}{S(\bar{n})}\sum_{k=1}^{\bar{n}-1}\binom{\bar{n}}{k}\ave{p^kq^{\bar{n}-k}}\cpM_k\cpM_{\bar{n}-k}\right)e^{S(\bar{n})t} \\
	&\phantom{\leq} +\frac{1}{S(\bar{n})}\sum_{k=1}^{\bar{n}-1}\binom{\bar{n}}{k}\ave{p^kq^{\bar{n}-k}}\cpM_k\cpM_{\bar{n}-k},
\end{align*}
whence we deduce that, in principle, $\abs{M_{\bar{n}}(t)}$ can blow up for $t\to +\infty$ if $\abs{M_{\bar{n},0}}$ is large enough. When $M_{\bar{n}}(t)$ blows, if $f(t)$ is sufficiently smooth, cf. Section~\ref{sect:a_priori_reg}, we can infer that there exists $\gamma\in (\bar{n}-1,\,\bar{n}]$ such that
$$ f(v,t)\sim \frac{1}{v^{1+\gamma}} $$
for either $v\ll -1$ (viz. $v$ negatively large) or $v\gg 1$ (viz. $v$ positively large) when $t\to +\infty$. Then we say that $f$ develops \textit{fat tails}. The value $\gamma$ is called the \textit{Pareto exponent} (or \textit{index}) of $f$, from the name of the Italian economist Vilfredo Pareto, who, at the beginning of the 20th century, observed empirically a polynomial decay of the tail of wealth distribution curves in western societies. We refer the interested reader to~\cite{duering2009RMUP,matthes2008JSP} for a detailed study of fat tail formation in Boltzmann-type kinetic models of income distribution. Notice that, in the mentioned cases, a thorough analysis is possible thanks to the fact that the statistical distribution of wealth is supported in $\R_+$ at all times, thus all of its moments are \textit{a priori} non-negative.

Back to the convergence of $f(t)$ to $f^\infty$, we observe that $M_2(t)$ tends exponentially quickly to $M_2^\infty$. Indeed, from~\eqref{eq:M2.constant_M1} we have
$$ \abs{M_2(t)-M_2^\infty}=\abs{M_{2,0}-M_2^\infty}e^{S(2)t} $$
with $S(2)<0$. Now we prove that if $S(s)<0$ for all $s>1$ then all moments behave qualitatively in this way.

\begin{theorem} \label{theo:conv.moments}
Assume~\eqref{eq:assumptions_p.q} along with $S(s)<0$ for all $s>2$. Moreover, let $f_0\in\cP_2(\R)$ have all moments finite. Then the Maxwellian $f^\infty$ has finite moments of any order and the moments of the solution $f\in C^0([0,\,+\infty);\,\cP_2(\R))$ to~\eqref{eq:Boltztype.strong} issuing from $f_0$ converge exponentially quickly to the corresponding moments of $f^\infty$ when $t\to +\infty$.
\end{theorem}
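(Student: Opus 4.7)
The plan is to derive a linear ODE with decaying forcing for the discrepancy $D_n(t):=M_n(t)-M_n^\infty$ and then argue by induction on $n$. I will use throughout that the hypothesis $S(s)<0$ for all $s>2$, together with $S(1)=0$ and $S(2)<0$ and the convexity of $S$, forces $S(n)<0$ for every integer $n\geq 2$.

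\textbf{Step 1: finiteness of all moments of $f^\infty$.} Testing the stationary equation $Q(f^\infty,f^\infty)=0$ against $\varphi(v)=v^n$ (and expanding $(pv+qv_\ast)^n$ by the binomial theorem as in Section~\ref{sect:moments_evol}) yields
$$ 0=S(n)M_n^\infty+\sum_{k=1}^{n-1}\binom{n}{k}\ave{p^kq^{n-k}}M_k^\infty M_{n-k}^\infty, $$
whence $M_n^\infty=\frac{1}{\abs{S(n)}}\sum_{k=1}^{n-1}\binom{n}{k}\ave{p^kq^{n-k}}M_k^\infty M_{n-k}^\infty$. Starting from $M_0^\infty=1$ and $M_1^\infty=M_{1,0}$, an induction delivers $\abs{M_n^\infty}<+\infty$ for all $n\in\N$.

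\textbf{Step 2: ODE for the discrepancy.} From Section~\ref{sect:moments_evol} one has $\dot M_n=S(n)M_n+\sum_{k=1}^{n-1}\binom{n}{k}\ave{p^kq^{n-k}}M_kM_{n-k}$. Subtracting the stationary relation and using the identity $M_kM_{n-k}-M_k^\infty M_{n-k}^\infty=D_kM_{n-k}+M_k^\infty D_{n-k}$ gives
$$ \frac{dD_n}{dt}=S(n)D_n+R_n(t),\qquad R_n(t):=\sum_{k=1}^{n-1}\binom{n}{k}\ave{p^kq^{n-k}}\bigl(D_k(t)M_{n-k}(t)+M_k^\infty D_{n-k}(t)\bigr). $$
By the duhamel formula,
$$ D_n(t)=D_n(0)e^{S(n)t}+\int_0^t e^{S(n)(t-\tau)}R_n(\tau)\,d\tau. $$

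\textbf{Step 3: induction on $n$.} The base cases are trivial: $D_0\equiv 0$, $D_1\equiv 0$ by conservation of the mean under assumption~\eqref{eq:assumptions_p.q}, and $\abs{D_2(t)}=\abs{M_{2,0}-M_2^\infty}e^{S(2)t}$ by the explicit formula~\eqref{eq:M2.constant_M1}. For the inductive step, suppose that for every $k<n$ there exist constants $C_k,\lambda_k>0$ with $\abs{D_k(t)}\leq C_k e^{-\lambda_k t}$. Theorem~\ref{theo:bound.moments} guarantees that the moments $M_{n-k}(t)$ in $R_n$ are uniformly bounded, so $\abs{R_n(t)}\leq A_n e^{-\mu_n t}$ with $\mu_n:=\min_{1\leq k\leq n-1}\lambda_k>0$ and some $A_n>0$. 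Plugging into the Duhamel formula and computing $\int_0^t e^{S(n)(t-\tau)}e^{-\mu_n\tau}\,d\tau$ explicitly (distinguishing the cases $-\mu_n\neq S(n)$ and $-\mu_n=S(n)$, the latter producing an extra factor $t$ that is absorbed by any slightly weakened exponent) one obtains $\abs{D_n(t)}\leq C_n e^{-\lambda_n t}$ with $\lambda_n\in(0,\,\min\{\abs{S(n)},\mu_n\})$. This closes the induction and yields the exponential convergence of every $M_n(t)$ to $M_n^\infty$.

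\textbf{Main obstacle.} The only genuine care lies in Step~3: the induction simultaneously propagates the qualitative property (moments stay finite, guaranteed by Theorem~\ref{theo:bound.moments}) and the quantitative decay rates, and one must track the exponents $\lambda_k$ so as to keep them strictly negative; the resonant case $-\mu_n=S(n)$ in the Duhamel integral produces a polynomial prefactor in $t$ which is harmless but must be handled by choosing $\lambda_n$ a touch smaller than the naive $\min\{\abs{S(n)},\mu_n\}$. Nothing else is delicate, because the strict negativity of $S(n)$ for every $n\geq 2$—the essential consequence of the hypothesis $S(s)<0$ for $s>2$—is exactly what powers the Gr\"onwall/Duhamel estimate at every stage of the induction.
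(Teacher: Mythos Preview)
Your proposal is correct and follows essentially the same route as the paper's own proof: the recursive formula for $M_n^\infty$ via the stationary equation, the Duhamel representation for $D_n=M_n-M_n^\infty$ with the bilinear decomposition $M_kM_{n-k}-M_k^\infty M_{n-k}^\infty=D_kM_{n-k}+M_k^\infty D_{n-k}$, the appeal to Theorem~\ref{theo:bound.moments} to bound the factors $M_{n-k}(t)$, and the induction on $n$ with separate handling of the resonant case. The paper disposes of the resonance $-\mu_n=S(n)$ by the explicit bound $te^{-\abs{S(n)}t}\leq\frac{2}{\abs{S(n)}}e^{-\abs{S(n)}t/2}$, which is exactly your ``weaken the exponent slightly'' manoeuvre made concrete.
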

\begin{proof}
Again, throughout the proof we shall write $S(n)=-\abs{S(n)}$ for clarity.
\begin{enumerate}
\item First, we show that $f^\infty$ has finite moments of any order. Notice that $f^\infty$ exists and is unique because, in the current setting, the assumptions of Theorem~\ref{theo:const_in_time_sols} are fulfilled.

Because of~\eqref{eq:finfty.fixedpoint}, the $n$-th order moment, $n\geq 2$, of $f^\infty$ satisfies
\begin{align*}
	M_n^\infty &= \int_\R\int_\R\ave{(pv+qv_\ast)^n}f^\infty(v)f^\infty(v_\ast)\,dv\,dv_\ast \\
	&= \ave{p^n+q^n}M_n^\infty+\sum_{k=1}^{n-1}\binom{n}{k}\ave{p^kq^{n-k}}M_k^\infty M_{n-k}^\infty,
\end{align*}
hence $M_n^\infty$ can be expressed in terms of the lower order moments as
$$ M_n^\infty=\frac{1}{\abs{S(n)}}\sum_{k=1}^{n-1}\binom{n}{k}\ave{p^kq^{n-k}}M_k^\infty M_{n-k}^\infty. $$
Since $S(n)\neq 0$ for all $n\geq 2$ and $M_0^\infty=1$, $M_1^\infty=M_{1,0}$ are finite, this relationship shows inductively that $M_n^\infty$ is finite for every $n\in\N$.

\item Second, we show that $M_n(t)\to M_n^\infty$ exponentially fast when $t\to +\infty$. Clearly, it suffices to consider the case $n\geq 2$. Since
$$ -\abs{S(n)}M_n^\infty+\sum_{k=1}^{n-1}\binom{n}{k}\ave{p^kq^{n-k}}M_k^\infty M_{n-k}^\infty=0 $$
and moreover $\dot{M}_n^\infty=0$, we manipulate the equation of $M_n$ to get
\begin{align*}
	\frac{d}{dt}\left(M_n-M_n^\infty\right) &= -\abs{S(n)}\left(M_n-M_n^\infty\right)+\sum_{k=1}^{n-1}\binom{n}{k}\ave{p^kq^{n-k}}\left(M_kM_{n-k}-M_k^\infty M_{n-k}^\infty\right) \\
	&= -\abs{S(n)}\left(M_n-M_n^\infty\right) \\
	&\phantom{=} +\sum_{k=1}^{n-1}\binom{n}{k}\ave{p^kq^{n-k}}\left[M_k\left(M_{n-k}-M_{n-k}^\infty\right)+M_{n-k}^\infty\left(M_k-M_k^\infty\right)\right].
\end{align*}
Integrating in time on the interval $[0,\,t]$, $t>0$, and taking the absolute value yields
\begin{align*}
	\abs{M_n-M_n^\infty} &\leq \abs{M_{n,0}-M_n^\infty}e^{-\abs{S(n)}t} \\
	&\phantom{\leq} +\sum_{k=1}^{n-1}\binom{n}{k}\ave{p^kq^{n-k}}\int_0^te^{-\abs{S(n)}(t-\tau)}\left(\abs{M_k(\tau)}\cdot\abs{M_{n-k}(\tau)-M_{n-k}^\infty}\right. \\
	&\phantom{=+\sum_{k=1}^{n-1}\binom{n}{k}\ave{p^kq^{n-k}}\int_0^te^{S(n)(t-\tau)}\left(\right.} \left.+\abs{M_{n-k}^\infty}\cdot\abs{M_k(\tau)-M_k^\infty}\right)\,d\tau \\
	&= \abs{M_{n,0}-M_n^\infty}e^{-\abs{S(n)}t} \\
	&\phantom{=} +\sum_{k=2}^{n-1}\binom{n}{k}\int_0^te^{-\abs{S(n)}(t-\tau)}\left(\ave{p^{n-k}q^k}\abs{M_{n-k}(\tau)}+\ave{p^kq^{n-k}}\abs{M_{n-k}^\infty}\right) \\
	&\phantom{= +\sum_{k=2}^{n-1}\int_0^te^{S(n)(t-\tau)}\left(\ave{p^{n-k}q^k}\abs{M_{n-k}(\tau)}\right.} \times\abs{M_k(\tau)-M_k^\infty}\,d\tau \\
	&\leq \abs{M_{n,0}-M_n^\infty}e^{-\abs{S(n)}t} \\
	&\phantom{\leq} +\sum_{k=2}^{n-1}\binom{n}{k}\left(\ave{p^{n-k}q^k}\cpM_{n-k}+\ave{p^kq^{n-k}}\abs{M_{n-k}^\infty}\right) \\
	&\phantom{\leq +\sum_{k=2}^{n-1}\binom{n}{k}\left(\ave{p^{n-k}q^k}\cpM_{n-k}\right.} \times\int_0^te^{-\abs{S(n)}(t-\tau)}\abs{M_k(\tau)-M_k^\infty}\,d\tau,
\end{align*}
where we have used the fact that $M_1(\tau)=M_1^\infty=M_{1,0}$ for all $\tau\geq 0$ (thus sums start from $k=2$ from the second passage onwards) and that, owing to Theorem~\ref{theo:bound.moments}, moments of any order of $f(t)$ are uniformly bounded in time.

Assume now, by induction, that moments of $f(t)$ up to the order $n-1$ converge exponentially fast to the corresponding moments of $f^\infty$. Therefore, there exist constants $\alpha_k,\,C_k>0$ such that
$$ \abs{M_k(t)-M_k^\infty}\leq C_ke^{-\alpha_kt}, \qquad k=2,\,\dots,\,n-1 $$
(notice that the computation performed before the statement of the theorem shows that this is indeed true for $k=2$ with $C_2=\abs{M_{2,0}-M_2^\infty}$ and $\alpha_2=\abs{S(2)}$). Then:
\begin{align*}
	\abs{M_n-M_n^\infty} &\leq \abs{M_{n,0}-M_n^\infty}e^{-\abs{S(n)}t} \\
	&\phantom{\leq} +\sum_{k=2}^{n-1}\binom{n}{k}\left(\ave{p^{n-k}q^k}\cpM_{n-k}+\ave{p^kq^{n-k}}\abs{M_{n-k}^\infty}\right) \\
	&\phantom{\leq +\sum_{k=2}^{n-1}\binom{n}{k}\left(\ave{p^{n-k}q^k}\cpM_{n-k}\right.} \times C_ke^{-\abs{S(n)}t}\int_0^te^{(\abs{S(n)}-\alpha_k)\tau}\,d\tau.
\end{align*}
In particular, it results\footnote{For $\alpha_k=\abs{S(n)}$ we use the general fact that $t<\frac{1}{a}e^{at}$ for every $a>0$ and take specifically $a=\frac{\abs{S(n)}}{2}$.}
\begin{align*}
	e^{-\abs{S(n)}t}\int_0^te^{(\abs{S(n)}-\alpha_k)\tau}\,d\tau &=
		\begin{cases}
			\dfrac{e^{-\abs{S(n)}t}-e^{-\alpha_kt}}{\alpha_k-\abs{S(n)}} & \text{if } \alpha_k\neq\abs{S(n)} \\[5mm]
			te^{-\abs{S(n)}t} & \text{if } \alpha_k=\abs{S(n)}
		\end{cases} \\[3mm]
	&\leq
		\begin{cases}
			\dfrac{e^{-\min\{\abs{S(n)},\,\alpha_k\}t}}{\abs{\alpha_k-\abs{S(n)}}} & \text{if } \alpha_k\neq\abs{S(n)} \\[5mm]
			\dfrac{2}{\abs{S(n)}}e^{-\frac{\abs{S(n)}}{2}t} & \text{if } \alpha_k=\abs{S(n)},
		\end{cases}
\end{align*}
therefore we conclude that there exist constants $\tilde{\alpha}_{k,n},\,\tilde{C}_{k,n}>0$, precisely
$$ \tilde{\alpha}_{k,n}:=
	\begin{cases}
		\min\{\abs{S(n)},\,\alpha_k\} & \text{if } \alpha_k\neq\abs{S(n)} \\[3mm]
		\dfrac{\abs{S(n)}}{2} & \text{if } \alpha_k=\abs{S(n)},
	\end{cases}
	\qquad
	\tilde{C}_{k,n}:=
		\begin{cases}
			\dfrac{1}{\abs{\alpha_k-\abs{S(n)}}} & \text{if } \alpha_k\neq\abs{S(n)} \\[5mm]
			\dfrac{2}{\abs{S(n)}} & \text{if } \alpha_k=\abs{S(n)},
		\end{cases} $$
such that
$$ e^{-\abs{S(n)}t}\int_0^te^{(\abs{S(n)}-\alpha_k)\tau}\,d\tau\leq\tilde{C}_{k,n}e^{-\tilde{\alpha}_{k,n}t}, \qquad \forall\,t\geq 0. $$
Consequently,
\begin{align*}
	\abs{M_n-M_n^\infty} &\leq \abs{M_{n,0}-M_n^\infty}e^{-\abs{S(n)}t} \\
	&\phantom{\leq} +\left(\sum_{k=2}^{n-1}\binom{n}{k}\left(\ave{p^{n-k}q^k}\cpM_{n-k}+\ave{p^kq^{n-k}}\abs{M_{n-k}^\infty}\right)C_k\tilde{C}_{k,n}\right)e^{-\tilde{\alpha}_nt},
\intertext{where $\tilde{\alpha}_n:=\min_{k=2,\,\dots,\,n-1}{\tilde{\alpha}_{k,n}}$, and further}
	&\leq \left(\abs{M_{n,0}-M_n^\infty}+\sum_{k=2}^{n-1}\binom{n}{k}\left(\ave{p^{n-k}q^k}\cpM_{n-k}+\ave{p^kq^{n-k}}\abs{M_{n-k}^\infty}\right)C_k\tilde{C}_{k,n}\right) \\
	&\phantom{\leq} \times e^{-\min\{\abs{S(n)},\,\tilde{\alpha}_n\}t}.
\end{align*}
Letting
\begin{align*}
	\alpha_n &:= \min\{\abs{S(n)},\,\tilde{\alpha}_n\}, \\
	C_n &:= \abs{M_{n,0}-M_n^\infty}+\sum_{k=2}^{n-1}\binom{n}{k}\left(\ave{p^{n-k}q^k}\cpM_{n-k}+\ave{p^kq^{n-k}}\abs{M_{n-k}^\infty}\right)C_k\tilde{C}_{k,n},
\end{align*}
we obtain
$$ \abs{M_n(t)-M_n^\infty}\leq C_ne^{-\alpha_nt}, $$
whence the thesis follows inductively. \qedhere
\end{enumerate}
\end{proof}

\section{Quasi-invariant regime and Fokker--Planck equations}
\label{sect:Fokker--Planck}
Section~\ref{sect:trend_equil} has shown that under certain assumptions on the coefficients of the interaction rule~\eqref{eq:v'} there exists a unique stationary distribution, viz. Maxwellian, $f^\infty\in\cP_3(\R)$ towards which every solution to~\eqref{eq:Boltztype.strong} with prescribed mean value converges in time. Nevertheless, apart from a characterisation in terms of boundedness and convergence of moments, we could not provide hints on how to estimate $f^\infty$ in detail. The reason is that, in general, it is difficult to solve the integral equation $Q(f^\infty,f^\infty)=0$, cf.~\eqref{eq:Q.strong}, explicitly.

This consideration is at the basis of the idea to look for simpler kinetic equations, which can \textit{approximate}~\eqref{eq:Boltztype.strong} at least in \textit{certain regimes} of the coefficients of~\eqref{eq:v'}, thereby providing models which, on one hand, have a reduced scope but, on the other hand, are more amenable to explicit analyses. The aforementioned regimes of the parameters are typically \textit{asymptotic regimes}, i.e. they are built by scaling conveniently $p,\,q$ in~\eqref{eq:v'} by means of a scale parameter, of which one considers subsequently appropriate limits.

The earliest example of a similar procedure is the so-called \textit{grazing collision regime}, introduced in the classical kinetic theory by Villani~\cite{villani1998PhD,villani1998ARMA} to study the particular case in which collisions among gas molecules produce a small exchange of momentum between the colliding particles, so that the post-interaction velocities differ slightly from the pre-interaction ones. This happens when molecules hit against one another mostly tangentially, whence the name of \textit{grazing} collisions. Subsequently, Toscani and his coworkers generalised this concept to arbitrary interactions, speaking of \textit{quasi-invariant regime} to refer to the case in which the interactions produce a small variation of the states of the interacting agents, cf. e.g.,~\cite{cordier2005JSP,toscani2006CMS}. In such a regime, it turns out that the integral operator $Q$ featured by the Boltzmann-type equation~\eqref{eq:Boltztype.strong} can be approximated by a differential operator, whose main properties depend on the adopted scaling of $p,\,q$ in~\eqref{eq:v'}.

\subsection{Formal quasi-invariant limit in the advection-diffusion regime}
\label{sect:formal_q-i_limit}
Let us introduce a small parameter $\epsilon>0$ and let us assume that the interaction rule~\eqref{eq:v'} is scaled by means of $\epsilon$ as
\begin{equation}
	v_\epsilon'=p_\epsilon v+q_\epsilon v_\ast,
	\label{eq:v'.scaled}
\end{equation}
where $p_\epsilon,\,q_\epsilon$ are scaled versions of $p,\,q$ such that $p_\epsilon\to 1$ and $q_\epsilon\to 0$ when $\epsilon\to 0^+$. This way, interactions are quasi-invariant in the limit $\epsilon\to 0^+$, because $v_\epsilon'\to v$. Let us consider, in particular, the representative case in which $p_\epsilon$ is a random variable with
$$ \ave{p_\epsilon}=1-\epsilon\lambda, \qquad \Var{p_\epsilon}=\epsilon\sigma^2, $$
where $\lambda,\,\sigma>0$ are proportionality parameters, while $q_\epsilon$ is a deterministic coefficient
\begin{equation}
	q_\epsilon=\epsilon\lambda.
	\label{eq:qeps}
\end{equation}
Notice, in particular, that in the quasi-invariant limit $\epsilon\to 0^+$ the law of $p_\epsilon$ converges to $\delta_1$ and $p_\epsilon$ itself converges to $1$ both in mean and in quadratic mean, indeed:
$$ \ave{\abs{p_\epsilon-1}}\leq{\ave{(p_\epsilon-1)^2}}^{1/2}=\left(\Var{p_\epsilon-1}+\ave{p_\epsilon-1}^2\right)^{1/2}
	=\sqrt{\epsilon\sigma^2+\epsilon^2\lambda^2}\xrightarrow{\epsilon\to 0^+}0. $$
Moreover, we remark that
\begin{equation}
	\ave{p_\epsilon+q_\epsilon}=1, \qquad \forall\,\epsilon>0
	\label{eq:peps+qeps=1}
\end{equation}
and that
$$ \ave{p_\epsilon^2+q_\epsilon^2}=\Var{p_\epsilon}+{\ave{p_\epsilon}}^2+\epsilon^2\lambda^2=1-\epsilon(2\lambda-\sigma^2)+2\epsilon^2\lambda^2, $$
therefore, under the assumption
\begin{equation}
	\sigma^2<2\lambda,
	\label{eq:sigma.lambda}
\end{equation}
it results $\ave{p_\epsilon^2+q_\epsilon^2}<1$ if $\epsilon$ is small enough, precisely
\begin{equation}
	\epsilon<\frac{1}{\lambda}\left(1-\frac{\sigma^2}{2\lambda}\right).
	\label{eq:eps.quasi-inv}
\end{equation}
Consequently, if we work with the scaled interaction rule~\eqref{eq:v'.scaled} in the regime~\eqref{eq:sigma.lambda}-\eqref{eq:eps.quasi-inv} the initial value problem~\eqref{eq:IVP} is well-posed as stated by Theorems~\ref{theo:exists_unique},~\ref{theo:cont_dep}.

It is customary to understand the random variable $p_\epsilon$ in the form
\begin{equation}
	p_\epsilon=1-\epsilon\lambda+\sqrt{\epsilon}\sigma\eta,
	\label{eq:peps.eta}
\end{equation}
where $\eta$ is an $\epsilon$-independent real-valued random variable such that
\begin{equation}
	\ave{\eta}=0, \qquad \ave{\eta^2}=1, \qquad \ave{\abs{\eta}^3}<+\infty.
	\label{eq:eta}
\end{equation}
Notice that the second property implies $\ave{\abs{\eta}}\leq\ave{\eta^2}^{1/2}=1$ owing to Jensen's inequality. In order to guarantee $p_\epsilon\geq 0$, the random variable $\eta$ has to satisfy the further requirement $\eta\geq\frac{\epsilon\lambda-1}{\sigma\sqrt{\epsilon}}$. Since the right-hand side is an increasing function of $\epsilon$, such a requirement is fulfilled for every $\epsilon$ complying with~\eqref{eq:eps.quasi-inv} if
\begin{equation}
	\eta\geq\lim_{\epsilon\to\frac{1}{\lambda}\left(1-\frac{\sigma^2}{2\lambda}\right)}\frac{\epsilon\lambda-1}{\sigma\sqrt{\epsilon}}=
		-\sqrt{\frac{\frac{\sigma^2}{2\lambda}}{2(1-\frac{\sigma^2}{2\lambda})}},
	\label{eq:eta.lowerbound}
\end{equation}
the right-hand side being an $\epsilon$-free negative value.

Since for $\epsilon$ small we have $v_\epsilon'\approx v$, writing $v_\epsilon'=v+(p_\epsilon-1)v+q_\epsilon v_\ast$ and considering a sufficiently smooth observable $\varphi$, say $\varphi\in C^3(\R)$, we can Taylor-expand the difference $\ave{\varphi(v_\epsilon')-\varphi(v)}$ at the right-hand side of~\eqref{eq:Boltztype.weak} around $v$ with Lagrange remainder as
\begin{align*}
	\ave{\varphi(v_\epsilon')-\varphi(v)} &= \left\langle\varphi'(v)\bigl((p_\epsilon-1)v+q_\epsilon v_\ast\bigr)+\frac{1}{2}\varphi''(v)\bigl((p_\epsilon-1)v
		+q_\epsilon v_\ast\bigr)^2\right. \\
	&\phantom{=} \left.+\frac{1}{6}\varphi'''(\bar{v}_\epsilon)\bigl((p_\epsilon-1)v+q_\epsilon v_\ast\bigr)^3\right\rangle \\
	&= \varphi'(v)(\ave{p_\epsilon-1}v+q_\epsilon v_\ast)+\frac{1}{2}\varphi''(v)\left(\ave{(p_\epsilon-1)^2}v^2
		+2\ave{p_\epsilon-1}q_\epsilon vv_\ast+q_\epsilon^2v_\ast^2\right) \\
	&\phantom{=} +\frac{1}{6}\ave*{\varphi'''(\bar{v}_\epsilon)\bigl((p_\epsilon-1)v+q_\epsilon v_\ast\bigr)^3} \\
	&= \epsilon\left(\lambda\varphi'(v)(v_\ast-v)+\frac{\sigma^2}{2}\varphi''(v)v^2\right)+\frac{\epsilon^2\lambda^2}{2}\varphi''(v)(v_\ast-v)^2 \\
	&\phantom{=} +\frac{1}{6}\ave*{\varphi'''(\bar{v}_\epsilon)\bigl((p_\epsilon-1)v+q_\epsilon v_\ast\bigr)^3},
\end{align*}
where $\bar{v}_\epsilon$ is a point between $\min\{v,\,v_\epsilon'\}$ and $\max\{v,\,v_\epsilon'\}$. From this computation we see that, when $\epsilon\to 0^+$, the difference $\ave{\varphi(v_\epsilon')-\varphi(v)}$ is infinitesimal of order $1$ with respect to $\epsilon$, with principal part given by
$$ \lambda\varphi'(v)(v_\ast-v)+\frac{\sigma^2}{2}\varphi''(v)v^2. $$ Therefore, we expect this expression to lead the trend of~\eqref{eq:Boltztype.weak} in the quasi-invariant regime.

This is true provided also time is properly scaled, as the expression above drives actually the \textit{large time trend} of~\eqref{eq:Boltztype.weak}. To see this, let us introduce the scaled kinetic distribution function
$$ f_\epsilon(v,t):=f(v,t/\epsilon), $$
which is such that $\partial_tf_\epsilon=\frac{1}{\epsilon}\partial_tf$.
\begin{remark}
To better understand the relationship between the ``old'' time scale $t$ and the ``new'' time scale $t/\epsilon$ it is useful to introduce a new time variable $\tau\geq 0$ such that $t=\tau/\epsilon$. Clearly, when $t=O(1)$ it results $\tau=O(\epsilon)$ whereas when $t=O(1/\epsilon)$ it results $\tau=O(1)$. Hence the new time scale $\tau$ is shrunken compared to $t$, in such a way that on $\tau$ one observes quickly the trends emerging for large $t$. We can regard $\tau$ as a time scale larger and less detailed than $t$, at which one does not catch single interactions but directly their aggregate outcomes. It is the time scale needed to compensate for the little effect of each interaction and still perceive collective changes.
\end{remark}

Evaluating~\eqref{eq:Boltztype.weak} at time $t/\epsilon$ we find that $f_\epsilon$ satisfies
\begin{align}
	\begin{aligned}[b]
		\frac{d}{dt}\int_\R\varphi(v)f_\epsilon(v,t)\,dv &= \frac{1}{\epsilon}\int_\R\int_\R\ave{\varphi(v_\epsilon')-\varphi(v)}
			f_\epsilon(v,t)f_\epsilon(v_\ast,t)\,dv\,dv_\ast \\
		&= \lambda\int_\R\varphi'(v)(M_{1,0}-v)f_\epsilon(v,t)\,dv+\frac{\sigma^2}{2}\int_\R\varphi''(v)v^2f_\epsilon(v,t)\,dv \\
		&\phantom{=} +\frac{\epsilon\lambda^2}{2}\int_\R\int_\R\varphi''(v)(v_\ast-v)^2f_\epsilon(v,t)f_\epsilon(v_\ast,t)\,dv\,dv_\ast \\
		&\phantom{=} +\frac{1}{6\epsilon}\int_\R\int_\R\ave*{\varphi'''(\bar{v}_\epsilon)
			\bigl((p_\epsilon-1)v+q_\epsilon v_\ast\bigr)^3}f_\epsilon(v,t)f_\epsilon(v_\ast,t)\,dv\,dv_\ast,
	\end{aligned}
	\label{eq:Boltztype.scaled}
\end{align}
where
$$ M_{1,0}=\int_\R vf_\epsilon(v,t)\,dv=\int_\R vf(v,t/\epsilon)\,dv $$
is, owing to~\eqref{eq:peps+qeps=1}, the constant-in-time mean value of $f$ and, consequently, also of $f_\epsilon$ for every $\epsilon>0$. Now,~\eqref{eq:Boltztype.scaled} suggests that, as $\epsilon\to 0^+$, the solution $f_\epsilon$ approaches the solution $g$ of
\begin{equation}
	\frac{d}{dt}\int_\R\varphi(v)g(v,t)\,dv=\lambda\int_\R\varphi'(v)(M_{1,0}-v)g(v,t)\,dv+\frac{\sigma^2}{2}\int_\R\varphi''(v)v^2g(v,t)\,dv,
	\label{eq:FP.weak}
\end{equation}
$\varphi\in C^3(\R)$ being arbitrary. Indeed, looking at the remainder $R_\epsilon(t)$ defined as
\begin{align*}
	R_\epsilon(t) &:= \frac{\epsilon\lambda^2}{2}\int_\R\int_\R\varphi''(v)(v_\ast-v)^2f_\epsilon(v,t)f_\epsilon(v_\ast,t)\,dv\,dv_\ast \\
	&\phantom{:=} +\frac{1}{6\epsilon}\int_\R\int_\R\ave*{\varphi'''(\bar{v}_\epsilon)\bigl((p_\epsilon-1)v+q_\epsilon v_\ast\bigr)^3}
		f_\epsilon(v,t)f_\epsilon(v_\ast,t)\,dv\,dv_\ast,
\end{align*}
we discover
\begin{align*}
	\abs{R_\epsilon(t)} &\leq \frac{\epsilon\lambda^2}{2}\norminf{\varphi''}\int_\R\int_\R(v_\ast-v)^2f_\epsilon(v,t)	f_\epsilon(v_\ast,t)\,dv\,dv_\ast \\
	&\phantom{=} +\frac{\epsilon^{1/2}}{6}\norminf{\varphi'''}\left[\sigma^3\ave{\abs{\eta}^3}\int_\R\abs{v}^3f_\epsilon(v,t)\,dv\right. \\
	&\phantom{=+\frac{\epsilon^{1/2}}{6}\norminf{\varphi'''}\left[\right.}
		+3\epsilon^{1/2}\lambda\sigma^2\int_\R\int_\R v^2\abs{v_\ast-v}f_\epsilon(v,t)f_\epsilon(v_\ast,t)\,dv\,dv_\ast \\
	&\phantom{=+\frac{\epsilon^{1/2}}{6}\norminf{\varphi'''}\left[\right.}
		+3\epsilon\lambda^2\sigma\int_\R\int_\R\abs{v}(v_\ast-v)^2f_\epsilon(v,t)f_\epsilon(v_\ast,t)\,dv\,dv_\ast \\
	&\phantom{=+\frac{\epsilon^{1/2}}{6}\norminf{\varphi'''}\left[\right.}
		\left.+\epsilon^{3/2}\lambda^3\int_\R\int_\R\abs{v_\ast-v}^3f_\epsilon(v,t)f_\epsilon(v_\ast,t)\,dv\,dv_\ast\right]
\end{align*}
for all smooth observables with bounded derivatives, such as e.g., compactly supported $C^3$-observables. Notice that the integrals on the right-hand side are bounded by either $\int_\R v^2f_\epsilon(v,t)\,dv$ or $\int_\R\abs{v}^3f_\epsilon(v,t)\,dv$. In particular, we remark that
$$ \int_\R\int_\R vv_\ast f_\epsilon(v,t)f_\epsilon(v_\ast,t)\,dv\,dv_\ast=M_{1,0}^2\leq\int_\R v^2f_\epsilon(v,t)\,dv $$
owing to $\Var{f_\epsilon}\geq 0$, while
\begin{align*}
	\int_\R\int_\R v^2\abs{v_\ast}f_\epsilon(v,t)f_\epsilon(v_\ast,t)\,dv\,dv_\ast &= \int_\R v^2f_\epsilon(v,t)\,dv\cdot\int_\R\abs{v_\ast}f_\epsilon(v_\ast,t)\,dv_\ast \\
	&\leq \left(\int_\R\abs{v}^3f_\epsilon(v,t)\,dv\right)^{2/3}\left(\int_\R\abs{v_\ast}^3f_\epsilon(v_\ast,t)\,dv_\ast\right)^{1/3} \\
	&= \int_\R\abs{v}^3f_\epsilon(v,t)\,dv
\end{align*}
owing to H\"{o}lder's inequality with exponents $\mathfrak{p}=\frac{3}{2}$, $\mathfrak{q}=3$ applied separately to each factor. Consequently, if we prove that, for fixed $t>0$, the terms $\int_\R v^2f_\epsilon(v,t)\,dv$, $\int_\R\abs{v}^3f_\epsilon(v,t)\,dv$ remain bounded when $\epsilon\to 0^+$ we can conclude that the remainder $R_\epsilon(t)$ of~\eqref{eq:Boltztype.scaled} vanishes in the quasi-invariant limit and ultimately that, for fixed $t>0$, the equation solved by $f_\epsilon$ gets closer and closer to~\eqref{eq:FP.weak} for smaller and smaller $\epsilon$.

To this purpose, invoking~\eqref{eq:Boltztype.scaled} with $\varphi(v)=v^2$ we observe that
\begin{align*}
	\frac{d}{dt}\int_\R v^2f_\epsilon(v,t)\,dv &= \frac{1}{\epsilon}\int_\R\int_\R\ave{(v_\epsilon')^2-v^2}f_\epsilon(v,t)f_\epsilon(v_\ast,t)\,dv\,dv_\ast \\
	&= \sigma^2\int_\R v^2f_\epsilon(v,t)-2\lambda(1-\epsilon\lambda)\left(\int_\R v^2f_\epsilon(v,t)\,dv-M_{1,0}^2\right);
\intertext{since in the regime~\eqref{eq:eps.quasi-inv} it results $2\lambda(1-\epsilon\lambda)>0$, we get}
	&\leq \sigma^2\int_\R v^2f_\epsilon(v,t),
\end{align*}
which says that the second moment of $f_\epsilon$ is bounded for every $t>0$ when $\epsilon\to 0^+$ provided it is so at the initial time. Similarly, invoking~\eqref{eq:Boltztype.scaled} with $\varphi(v)=\abs{v}^3$ we find
\begin{align*}
	\frac{d}{dt}\int_\R\abs{v}^3f_\epsilon(v,t)\,dv &= \frac{1}{\epsilon}\int_\R\int_\R\ave{\abs{v_\epsilon'}^3-\abs{v}^3}
		f_\epsilon(v,t)f_\epsilon(v_\ast,t)\,dv\,dv_\ast \\
	&\leq \frac{1}{\epsilon}\int_\R\int_\R\ave{(p_\epsilon\abs{v}+q_\epsilon\abs{v_\ast})^3-\abs{v}^3}f_\epsilon(v,t)f_\epsilon(v_\ast,t)\,dv\,dv_\ast,
\intertext{where we have used that $p_\epsilon,\,q_\epsilon\geq 0$ in view of~\eqref{eq:eta.lowerbound}. Developing the cube, invoking the properties~\eqref{eq:eta} of $\eta$ and recalling~\eqref{eq:eps.quasi-inv}, we find that there exists an $\epsilon$-independent constant $C_{\lambda,\sigma}>0$ such that}
	&\leq C_{\lambda,\sigma}\int_\R\abs{v}^3f_\epsilon(v,t)\,dv,
\end{align*}
whence we obtain that also $\int_\R\abs{v}^3f_\epsilon(v,t)\,dv$ is bounded for every $t>0$ when $\epsilon\to 0^+$ if it is bounded at the initial time.

In conclusion, if $f_0\in\cP_3(\R)$ then, for fixed $t>0$, the scaled Boltzmann-type equation~\eqref{eq:Boltztype.scaled} solved by $f_\epsilon$ approaches, in the limit $\epsilon\to 0^+$, equation~\eqref{eq:FP.weak} solved by $g$ for e.g., all observables $\varphi\in C^3_c(\R)$, the subscript `$c$' standing for ``compactly supported''. 

Formally, this suggests that, for fixed $t>0$, $f_\epsilon(t)$ should approach $g(t)$ as $\epsilon\to 0^+$ or, in other words, that $g(t)$ should provide a good approximation of $f_\epsilon(t)$ for $\epsilon$ small enough. A rigorous proof of the convergence of $f_\epsilon(t)$ to $g(t)$ in the Fourier metric $d_2$ is sketched in~\cite{torregrossa2018KRM} but requires non-trivial developments of the theory elaborated in~\cite{gabetta1995JSP,toscani1999JSP}. Here, we confine ourselves to the formal argument just exposed, deferring some rigorous convergence results to alternative quasi-invariant scalings of the Boltzmann-type equation, which we shall tackle with the technical tools introduced in the previous sections (cf. Section~\ref{sect:other_FP}).

\subsection{Fokker--Planck equation and steady distribution}
\label{sect:FP_steady}
With $\varphi\in C^3_c(\R)$ it is easy to recast~\eqref{eq:FP.weak} in strong form by integration-by-parts. Notice that no boundary terms appear for $v\to\pm\infty$, thanks to the compactness of the support of $\varphi$. Explicitly, we have
\begin{equation}
	\frac{\partial g}{\partial t}+\lambda\frac{\partial}{\partial v}\bigl((M_{1,0}-v)g\bigr)=\frac{\sigma^2}{2}\frac{\partial^2}{\partial v^2}(v^2g),
	\label{eq:FP.strong}
\end{equation}
which is a linear \textit{Fokker--Planck equation} with non-constant coefficients. One of the most interesting properties of this equation, linked to the modelling of multi-agent systems, is that it allows for the explicit determination of its steady distribution. Owing to the reasoning that led to establish~\eqref{eq:FP.strong} out of~\eqref{eq:Boltztype.scaled}, such a steady distribution constitutes an approximation of the Maxwellian of~\eqref{eq:Boltztype.scaled} for $\epsilon$ small enough.

Since~\eqref{eq:FP.strong} can be put in divergence form:
$$ \frac{\partial g}{\partial t}+\frac{\partial}{\partial v}\left(\lambda(M_{1,0}-v)g-\frac{\sigma^2}{2}\frac{\partial}{\partial v}(v^2g)\right)=0, $$
we look for steady distributions $g^\infty=g^\infty(v)$ by imposing that the flux vanishes. Thus $g^\infty$ solves the ordinary differential equation
$$ (v^2g^\infty)'=\frac{2\lambda}{\sigma^2}(M_{1,0}-v)g^\infty $$
for $v\in\R$. Letting $h(v):=v^2g^\infty(v)$, we rewrite this equation as
$$ h'=\frac{2\lambda}{\sigma^2}\cdot\frac{M_{1,0}-v}{v^2}h, $$
which, due to the division by $v^2$, now holds separately for $v<0$ and $v>0$. Solving by separation of variables and going back to $g^\infty$ yields
$$ g^\infty(v)=\frac{C_-\chi(v<0)+C_+\chi(v>0)}{\abs{v}^{2\left(1+\frac{\lambda}{\sigma^2}\right)}}e^{-\frac{2\lambda}{\sigma^2}\cdot\frac{M_{1,0}}{v}}, $$
where $C_-,\,C_+>0$ are integration constants for the left and right branches of the solution, respectively. We can fix $C_-,\,C_+$ by imposing the normalisation condition 
$$ \int_\R g^\infty(v)\,dv=1. $$
We notice, however, that both branches of the singularity of $g^\infty$ at the origin are not integrable simultaneously. Specifically, if $M_{1,0}>0$ then
$$ \lim_{v\to 0^+}e^{-\frac{2\lambda}{\sigma^2}\cdot\frac{M_{1,0}}{v}}=0, \qquad \lim_{v\to 0^-}e^{-\frac{2\lambda}{\sigma^2}\cdot\frac{M_{1,0}}{v}}=+\infty, $$
therefore the singularity in $v=0$ is rightwards integrable but not leftwards integrable. In this case, to guarantee $g^\infty\in L^1(\R)$ we need $C_-=0$. Conversely, if $M_{1,0}<0$ the situation is opposite and we need $C_+=0$.

To fix the ideas, let us assume $M_{1,0}>0$. Then
$$ g^\infty(v)=\frac{C_+}{v^{2\left(1+\frac{\lambda}{\sigma^2}\right)}}e^{-\frac{2\lambda}{\sigma^2}\cdot\frac{M_{1,0}}{v}}\chi(v>0) $$
and the integration constant $C_+$ can be easily determined by observing that $g^\infty$ coincides, for $v>0$, with an \textit{inverse gamma distribution} with shape parameter $1+\frac{2\lambda}{\sigma^2}$ and scale parameter $\frac{2\lambda}{\sigma^2}M_{1,0}$. Thus $C_+=\left(\frac{2\lambda}{\sigma^2}M_{1,0}\right)^{1+\frac{2\lambda}{\sigma^2}}/\Gamma(1+\frac{2\lambda}{\sigma^2})$, where $\Gamma$ denotes the gamma function, and finally
\begin{equation}
	g^\infty(v)=\frac{\left(\frac{2\lambda}{\sigma^2}M_{1,0}\right)^{1+\frac{2\lambda}{\sigma^2}}}{\Gamma(1+\frac{2\lambda}{\sigma^2})}\cdot
		\frac{e^{-\frac{2\lambda}{\sigma^2}\cdot\frac{M_{1,0}}{v}}}{v^{2\left(1+\frac{\lambda}{\sigma^2}\right)}}\chi(v>0),
	\label{eq:inv_gamma}
\end{equation}
which, as expected, has mean value $M_{1,0}$ in $\R$.
	
Some remarks are in order:
\begin{enumerate}[label=\roman*)]
\item By explicit computation, the Fokker--Planck equation~\eqref{eq:FP.strong} admits solely~\eqref{eq:inv_gamma} as steady \textit{probability} distribution, independently of the initial condition $f_0$. Using appropriate \textit{entropy functionals} as Lyapunov functionals, one can prove that~\eqref{eq:inv_gamma} is indeed an asymptotically stable equilibrium of~\eqref{eq:FP.strong}, hence in particular it is attractive. Here we do not develop this aspect but we refer the interested reader to~\cite{furioli2017M3AS} for details.
\item Under the assumption $M_{1,0}>0$, the steady distribution $g^\infty$~\eqref{eq:inv_gamma} turns out to be supported in $\R_+$ independently of the support of the initial datum $f_0$. Likewise, if $M_{1,0}<0$ then $g^\infty$ is supported in $\R_-$. Hence, in the quasi-invariant regime the sign of the conserved mean value of $f_0$ determines the half-line where the whole probability mass distributes in the long run.
\item The solution $g$ to the Fokker--Planck equation~\eqref{eq:FP.strong}, and consequently also the steady distribution $g^\infty$~\eqref{eq:inv_gamma}, does not depend on the precise law of the stochastic fluctuation $\eta$ featured by the interaction coefficient $p_\epsilon$, cf.~\eqref{eq:peps.eta}. Only properties~\eqref{eq:eta} matter. In other words, only the low order moments of the stochastic coefficient $p_\epsilon$ impact effectively on the solution to the Boltzmann-type equation in the quasi-invariant regime.
\item The steady distribution $g^\infty$~\eqref{eq:inv_gamma} exhibits a fat tail at $+\infty$ with Pareto exponent $\gamma=1+\frac{2\lambda}{\sigma^2}$, indeed
$$ g^\infty(v)\sim\frac{C_+}{v^{2\left(1+\frac{\lambda}{\sigma^2}\right)}} \quad \text{for } v\to+\infty. $$
In particular, moments $M^\infty_n$ with
$$ n\geq\bar{n}:=\left\lfloor{1+\frac{2\lambda}{\sigma^2}}\right\rfloor $$
are infinite. Owing to~\eqref{eq:sigma.lambda} we observe that $\bar{n}\geq 2$, thus the first two moments of $g^\infty$ are finite for all admissible values of $\lambda$, $\sigma$, consistently with the assumptions $\ave{p_\epsilon+q_\epsilon}=1$ and $\ave{p_\epsilon^2+q_\epsilon^2}<1$ for $\epsilon$ small enough.

We can investigate further the formation of such a fat tail by profiting from the previous remark, which allows us to fix an arbitrary law of $\eta$ fulfilling~\eqref{eq:eta}. We consider, in particular, a discrete $\eta\in\{-1,\,1\}$ with $\Prob{\eta=\pm 1}=\frac{1}{2}$, which is such that $\ave{\eta}=0$, $\ave{\eta^2}=1$, and $\ave{\abs{\eta}^3}=1<+\infty$ as required by~\eqref{eq:eta}. Furthermore, assuming for simplicity $\frac{\sigma^2}{2\lambda}\geq\frac{2}{3}$ we obtain that $\eta$ complies also with bound~\eqref{eq:eta.lowerbound}. In this setting it results $\bar{n}=2$, therefore the first unbounded moment of $g^\infty$ is $M^\infty_3$. Computing $S(3)$, cf.~\eqref{eq:S}, for the coefficients~\eqref{eq:qeps},~\eqref{eq:peps.eta} of the scaled interaction law~\eqref{eq:v'.scaled} we find
$$ S(3)=3(\sigma^2-\lambda)\epsilon(1-\epsilon\lambda), $$
where $\sigma^2-\lambda\geq\frac{\lambda}{3}>0$ in view of the assumption $\frac{\sigma^2}{2\lambda}\geq\frac{2}{3}$, which entails $\sigma^2\geq\frac{4}{3}\lambda$. Therefore, if $\epsilon$ is small enough, cf.~\eqref{eq:eps.quasi-inv}, we have $S(3)>0$ so that Theorems~\ref{theo:bound.moments},~\ref{theo:conv.moments} fail, which paves the way to the formation of a fat tail in the Maxwellian.
\item If $M_{1,0}=0$ then from~\eqref{eq:FP.weak} with $\varphi(v)=v^2$ it results
$$ \frac{d}{dt}\int_\R v^2g(v,t)\,dv=(\sigma^2-2\lambda)\int_\R v^2g(v,t)\,dv, $$
therefore, owing to~\eqref{eq:sigma.lambda}, $\int_\R v^2g(v,t)\,dv\to 0$ for $t\to +\infty$. Hence, in this case $g^\infty(v)=\delta_0(v)$, which can be checked to be indeed a steady distributional solution to the Fokker--Planck equation~\eqref{eq:FP.strong}. For the scaled Boltzmann-type equation~\eqref{eq:Boltztype.scaled} with $\varphi(v)=v^2$ it results
$$ \frac{d}{dt}\int_\R v^2f_\epsilon(v,t)\,dv=(\sigma^2-2\lambda+2\epsilon\lambda^2)\int_\R v^2f_\epsilon(v,t)\,dv, $$
which under~\eqref{eq:sigma.lambda},~\eqref{eq:eps.quasi-inv} yields in turn $\int_\R v^2f_\epsilon(v,t)\,dv\to 0$ when $t\to +\infty$. Therefore, for $\epsilon$ small enough the Maxwellian of the scaled Boltzmann-type equation is invariably $f_\epsilon^\infty(v)=\delta_0(v)$, which remains so in the quasi-invariant limit $\epsilon\to 0^+$.
\end{enumerate}


\subsection{Other quasi-invariant scalings}
\label{sect:other_FP}
Different scalings of the coefficients $p_\epsilon$, $q_\epsilon$ in~\eqref{eq:v'.scaled} can be envisaged, corresponding to different quasi-invariant regimes of the Boltzmann-type equation~\eqref{eq:Boltztype.strong} that can be fruitfully investigated by means of appropriate Fokker--Planck equations in the quasi-invariant limit. Here, we shall consider two alternatives to the regimes discussed in Sections~\ref{sect:formal_q-i_limit},~\ref{sect:FP_steady}, which can be frequently encountered in applications.

Before entering the details, we state a result that we shall often reference in the sequel.
\begin{lemma}[Gr\"{o}nwall's inequality revisited] \label{lemma:Gronwall}
Let $u=u(t),\,\alpha=\alpha(t),\,\beta=\beta(t)$ be real continuous functions defined in an interval $[0,\,T]\subset\R_+$ for some $T>0$. Assume that $\alpha$ is differentiable and $\beta$ is non-negative in $[0,\,T]$ and that
$$ u(t)\leq\alpha(t)+\int_0^t\beta(\tau)u(\tau)\,d\tau, \qquad \forall\,t\in [0,\,T]. $$
Then
$$ u(t)\leq\alpha(0)e^{\int_0^t\beta(\tau)\,d\tau}+\int_0^t\alpha'(\tau)e^{\int_\tau^t\beta(r)\,dr}\,d\tau, \qquad \forall\,t\in [0,\,T]. $$
\end{lemma}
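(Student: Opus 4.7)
The plan is to reduce the stated generalised Grönwall inequality to a classical linear differential inequality and then convert the resulting integrand from $\beta\alpha$ into $\alpha'$ via an integration by parts. First, I would set $v(t):=\int_0^t\beta(\tau)u(\tau)\,d\tau$, so that the hypothesis reads $u(t)\leq\alpha(t)+v(t)$ and, since $\beta\geq 0$,
$$ v'(t)=\beta(t)u(t)\leq\beta(t)\alpha(t)+\beta(t)v(t), \qquad t\in [0,\,T]. $$
Introducing the integrating factor $E(t):=e^{-\int_0^t\beta(s)\,ds}$, which is positive and satisfies $E'(t)=-\beta(t)E(t)$, the inequality above rewrites as
$$ \frac{d}{dt}\bigl(E(t)v(t)\bigr)\leq\beta(t)\alpha(t)E(t), $$
and integrating from $0$ to $t$ (using $v(0)=0$) yields
$$ v(t)\leq\int_0^t\beta(\tau)\alpha(\tau)e^{\int_\tau^t\beta(s)\,ds}\,d\tau. $$

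Plugging this estimate back into $u(t)\leq\alpha(t)+v(t)$ gives the intermediate bound
$$ u(t)\leq\alpha(t)+\int_0^t\beta(\tau)\alpha(\tau)e^{\int_\tau^t\beta(s)\,ds}\,d\tau. $$
The key observation for the final step is that $\frac{d}{d\tau}e^{\int_\tau^t\beta(s)\,ds}=-\beta(\tau)e^{\int_\tau^t\beta(s)\,ds}$, so the remaining integrand is, up to the factor $\alpha$, an exact derivative in $\tau$. Integrating by parts and using the differentiability of $\alpha$ then produces
$$ \int_0^t\beta(\tau)\alpha(\tau)e^{\int_\tau^t\beta(s)\,ds}\,d\tau=-\alpha(t)+\alpha(0)e^{\int_0^t\beta(s)\,ds}+\int_0^t\alpha'(\tau)e^{\int_\tau^t\beta(s)\,ds}\,d\tau, $$
where the boundary terms come from evaluating $-\alpha(\tau)e^{\int_\tau^t\beta(s)\,ds}$ at $\tau=t$ and $\tau=0$. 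Summing with the leading $\alpha(t)$, the cancellation $\alpha(t)-\alpha(t)=0$ produces exactly the claimed estimate.

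No step is particularly delicate: the argument is elementary once one spots the integrating factor and realises that integration by parts trades $\beta\alpha$ for $\alpha'$. The only mild care required is to exploit $\beta\geq 0$ and $E>0$ to preserve the inequality when multiplying by the integrating factor, and to handle the boundary terms of the integration by parts correctly at both endpoints. The classical Grönwall inequality (which corresponds to the case $\alpha\equiv\mathrm{const}$, so that $\alpha'\equiv 0$ and only the first exponential term survives) emerges transparently as a corollary, which is a useful sanity check on the formula.
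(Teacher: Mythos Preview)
Your proof is correct and follows essentially the same route as the paper: obtain the standard Gr\"onwall bound $u(t)\leq\alpha(t)+\int_0^t\alpha(\tau)\beta(\tau)e^{\int_\tau^t\beta(r)\,dr}\,d\tau$ and then integrate by parts using $\beta(\tau)e^{\int_\tau^t\beta}=-\frac{d}{d\tau}e^{\int_\tau^t\beta}$ to trade $\beta\alpha$ for $\alpha'$. The only (minor) difference is that you rederive the standard inequality via an integrating factor, whereas the paper simply quotes it from a reference before performing the identical integration by parts.
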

\begin{proof}
Standard Gr\"{o}nwall's inequality implies (cf. e.g.,~\cite[Chapter 12]{mitrinovic1991BOOK})
\begin{align*}
    u(t)\leq\alpha(t)+\int_0^t\alpha(\tau)\beta(\tau)e^{\int_\tau^t\beta(r)\,dr}\,d\tau &= \alpha(t)-\int_0^t\alpha(\tau)\frac{d}{d\tau}e^{\int_\tau^t\beta(r)\,dr}\,d\tau,
\intertext{whence, integrating by parts,}
    &= \alpha(t)-\left(\alpha(\tau)e^{\int_\tau^t\beta(r)\,dr}\right\vert_{\tau=0}^{\tau=t}+\int_0^t\alpha'(\tau)e^{\int_\tau^t\beta(r)\,dr}\,d\tau \\
    &= \alpha(0)e^{\int_0^t\beta(r)\,dr}+\int_0^t\alpha'(\tau)e^{\int_\tau^t\beta(r)\,dr}\,d\tau,
\end{align*}
which gives the thesis.
\end{proof}

\subsubsection{Advection-dominated regime}
\label{sect:q.i._advection}
Assume $q_\epsilon$ is scaled like in~\eqref{eq:qeps} whereas $p_\epsilon$ is scaled in such a way that
\begin{equation}
	\ave{p_\epsilon}=1-\epsilon\lambda, \qquad \Var{p_\epsilon}=\epsilon^{1+\delta}\sigma^2
	\label{eq:peps-advection}
\end{equation}
with $\delta>0$. Again, the law of $p_\epsilon$ converges to $\delta_1$ and $p_\epsilon$ itself converges to $1$ both in mean and quadratic mean as $\epsilon\to 0^+$. Moreover, $\ave{p_\epsilon+q_\epsilon}=1$ for all $\epsilon>0$ while
$$ \ave{p_\epsilon^2+q_\epsilon^2}=1-2\epsilon\lambda(1-\epsilon\lambda)+\epsilon^{1+\delta}\sigma^2. $$
With a little algebra, considering that for $\epsilon<1$ it results $\epsilon\leq\epsilon^\delta$ if $\delta<1$ while $\epsilon^\delta\leq\epsilon$ if $\delta\geq 1$, we obtain that under the assumptions
\begin{equation}
	\sigma^2>2\lambda(1-\lambda), \qquad \epsilon<\left(\frac{2\lambda}{2\lambda^2+\sigma^2}\right)^{1/\min\{\delta,\,1\}}
	\label{eq:sigma.lambda.eps_quasi-inv-advection}
\end{equation}
we are in the general setting $\ave{p_\epsilon^2+q_\epsilon^2}<1$.

Writing, for a sufficiently smooth observable $\varphi$, say $\varphi\in C^2(\R)$,
\begin{align*}
	\ave{\varphi(v_\epsilon')-\varphi(v)} &= \varphi'(v)\ave{(p_\epsilon-1)v+q_\epsilon v_\ast}+\frac{1}{2}\ave{\varphi''(\bar{v}_\epsilon)((p_\epsilon-1)v+q_\epsilon v_\ast)^2} \\
	&= \epsilon\lambda\varphi'(v)(v_\ast-v)+\frac{1}{2}\ave{\varphi''(\bar{v}_\epsilon)((p_\epsilon-1)v+q_\epsilon v_\ast)^2},
\end{align*}
where $\bar{v}_\epsilon$ is a point between $\min\{v,\,v_\epsilon'\}$ and $\max\{v,\,v_\epsilon'\}$, we discover that the scaled Boltzmann-type equation takes the form
\begin{align}
	\begin{aligned}[b]
		\frac{d}{dt}\int_\R\varphi(v)f_\epsilon(v,t)\,dv &= \frac{1}{\epsilon}\int_\R\int_\R\ave{\varphi(v_\epsilon')-\varphi(v)}f_\epsilon(v,t)f_\epsilon(v_\ast,t)\,dv\,dv_\ast \\
		&=\lambda\int_\R\varphi'(v)(M_{1,0}-v)f_\epsilon(v,t)\,dv \\
		&\phantom{=} +\frac{1}{2\epsilon}\int_\R\int_\R\ave{\varphi''(\bar{v}_\epsilon)((p_\epsilon-1)v+q_\epsilon v_\ast)^2}f_\epsilon(v,t)f_\epsilon(v_\ast,t)\,dv\,dv_\ast,
	\end{aligned}
	\label{eq:Boltztype.scaled-advection}
\end{align}
where $M_{1,0}$ is the conserved mean value of $f_\epsilon(t)$ for every $\epsilon>0$. This equation suggests that, in the quasi-invariant limit $\epsilon\to 0^+$, the solution $f_\epsilon$ somehow approaches the solution $g$ of
\begin{equation}
	\frac{d}{dt}\int_\R\varphi(v)g(v,t)\,dv=\lambda\int_\R\varphi'(v)(M_{1,0}-v)g(v,t)\,dv.
	\label{eq:FP.weak-advection}
\end{equation}
for arbitrary observables $\varphi\in C^2(\R)$. This is indeed true in the sense specified by Theorem~\ref{theo:q.i.-advection} below, for which we need preliminarily the following
\begin{lemma} \label{lemma:g.P2-advection}
Any solution $g$ to~\eqref{eq:FP.weak-advection} issuing from an initial condition $f_0\in\cP_2(\R)$ is such that $g(t)\in\cP_2(\R)$ for all $t>0$.
\end{lemma}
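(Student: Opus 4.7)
The plan is to test the weak Fokker--Planck equation~\eqref{eq:FP.weak-advection} against the three observables $\varphi(v)=1$, $\varphi(v)=v$, $\varphi(v)=v^2$ in turn, and thereby track the zeroth, first and second moments of $g(t)$. If these three quantities are well defined and finite for every $t>0$, then $g(t)\in\cP_2(\R)$ as desired (non-negativity and total mass $1$ being inherited from the fact that $g$ is constructed as a family of probability measures solving~\eqref{eq:FP.weak-advection}).

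First I would take $\varphi\equiv 1$: since $\varphi'\equiv 0$, the right-hand side of~\eqref{eq:FP.weak-advection} vanishes, and $\int_\R g(v,t)\,dv=\int_\R f_0(v)\,dv=1$ for all $t\geq 0$. Next, $\varphi(v)=v$ yields $\varphi'\equiv 1$ and, writing $M_1(t):=\int_\R v\,g(v,t)\,dv$, one obtains the ODE
\begin{equation*}
    \frac{dM_1}{dt}=\lambda(M_{1,0}-M_1(t)),
\end{equation*}
with initial datum $M_1(0)=M_{1,0}$ (finite because $f_0\in\cP_2(\R)\subseteq\cP_1(\R)$), whose unique solution is the constant $M_1(t)\equiv M_{1,0}$. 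Finally, $\varphi(v)=v^2$ gives $\varphi'(v)=2v$ and, using the result just obtained,
\begin{equation*}
    \frac{dM_2}{dt}=2\lambda\bigl(M_{1,0}M_1(t)-M_2(t)\bigr)=2\lambda\bigl(M_{1,0}^2-M_2(t)\bigr),
\end{equation*}
whose solution
\begin{equation*}
    M_2(t)=M_{1,0}^2+\bigl(M_{2,0}-M_{1,0}^2\bigr)e^{-2\lambda t}
\end{equation*}
stays bounded (in fact uniformly in $t\geq 0$) because $M_{2,0}=\int_\R v^2 f_0(v)\,dv<+\infty$ by hypothesis. This gives $g(t)\in\cP_2(\R)$ for all $t>0$.

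The main technical obstacle I expect is that the test functions $v$ and $v^2$ are not compactly supported, whereas~\eqref{eq:FP.weak-advection} was obtained via integration by parts and a priori holds for $\varphi\in C^2_c(\R)$ (or some analogous class with controlled growth). To bypass this, I would introduce a smooth truncation $\chi_R\in C^\infty_c(\R)$ with $\chi_R\equiv 1$ on $[-R,R]$, $\chi_R\equiv 0$ outside $[-R-1,R+1]$ and $\norminf{\chi_R'},\norminf{\chi_R''}$ bounded uniformly in $R$, then test against $\varphi_R(v):=v\chi_R(v)$ and $\varphi_R(v):=v^2\chi_R(v)$. This produces the same ODEs up to remainder terms supported in $\{|v|\geq R\}$ and weighted by factors like $|v|$ or $|v|^2$; a Gr\"onwall-type bootstrap (using Lemma~\ref{lemma:Gronwall} with the truncated second moment playing the role of $u$) controls these remainders and lets one pass to the limit $R\to+\infty$, recovering the formal computations above. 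Apart from this approximation argument, the proof is routine.
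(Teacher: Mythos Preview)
Your proposal is correct and follows essentially the same route as the paper: test~\eqref{eq:FP.weak-advection} with $\varphi(v)=v^2$, use the conservation of the first moment, solve the resulting linear ODE for $M_2(t)$, and observe that the explicit solution $M_2(t)=M_{2,0}e^{-2\lambda t}+M_{1,0}^2(1-e^{-2\lambda t})$ stays bounded by $M_{2,0}$. The paper does not carry out the truncation argument you sketch, since~\eqref{eq:FP.weak-advection} is taken to hold for arbitrary $\varphi\in C^2(\R)$; your extra care there is harmless but not needed at the level of rigor adopted in the paper.
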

\begin{proof}
Letting $\varphi(v)=v^2$ in~\eqref{eq:FP.weak-advection} and recalling that $M_{1,0}=\int_\R vf_0(v)\,dv=\int_\R vg(v,t)\,dv$ for all $t>0$ yields
$$ \frac{d}{dt}\int_\R v^2g(v,t)\,dv=2\lambda\left(M_{1,0}^2-\int_\R v^2g(v,t)\,dv\right), $$
whence
$$ \int_\R v^2g(v,t)\,dv=M_{2,0}e^{-2\lambda t}+M_{1,0}^2\left(1-e^{-2\lambda t}\right)\leq M_{2,0}, $$
being $M_{2,0}:=\int_\R v^2f_0(v)\,dv\geq M_{1,0}^2$.
\end{proof}
\begin{theorem} \label{theo:q.i.-advection}
Let $f_\epsilon\in C^0([0,\,+\infty);\,\cP_2(\R))$ be the solution to~\eqref{eq:Boltztype.scaled-advection} issuing from an initial datum $f_0\in\cP_2(\R)$. Assume moreover that~\eqref{eq:FP.weak-advection} admits a solution $g\in C^0([0,\,+\infty);\,\cP_2(\R))$ issuing from $f_0$ as well. Then
$$ \lim_{\epsilon\to 0^+}\sup_{t\in [0,\,T]}d_2(f_\epsilon(t),g(t))=0 $$
for all $T>0$.
\end{theorem}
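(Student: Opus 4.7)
My plan is to recast both equations as first-order linear transport equations in Fourier variables, identify the discrepancy as a remainder of order $\epsilon^{\min\{\delta,1\}}$, and integrate the resulting equation for $h:=\hat f_\epsilon-\hat g$ along characteristics. Testing~\eqref{eq:Boltztype.scaled-advection} against $\varphi(v)=e^{-i\xi v}$ and linearising via the standard pointwise bound $|e^{-i\xi v_\epsilon'}-e^{-i\xi v}-(-i\xi)(v_\epsilon'-v)e^{-i\xi v}|\leq\xi^2(v_\epsilon'-v)^2/2$ together with $\ave{v_\epsilon'-v}=\epsilon\lambda(v_\ast-v)$ gives
$$
\partial_t\hat f_\epsilon(\xi,t)=-i\xi\lambda M_{1,0}\hat f_\epsilon(\xi,t)-\lambda\xi\,\partial_\xi\hat f_\epsilon(\xi,t)+R_\epsilon(\xi,t),
$$
where I have used $\int_\R v e^{-i\xi v}f_\epsilon(v,t)\,dv=i\partial_\xi\hat f_\epsilon$ and the conservation of the mean~\eqref{eq:peps+qeps=1}. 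The same manipulation applied to~\eqref{eq:FP.weak-advection} produces the analogous identity with $R_\epsilon\equiv 0$, so subtracting yields a linear transport equation with zero initial datum for $h$:
$$
\partial_t h+\lambda\xi\,\partial_\xi h+i\lambda M_{1,0}\xi\,h=R_\epsilon,\qquad h(\xi,0)=0.
$$

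Second, I would estimate the remainder uniformly. A direct computation of the second moment of $v_\epsilon'-v$ gives $\ave{(v_\epsilon'-v)^2}=\epsilon^{1+\delta}\sigma^2 v^2+\epsilon^2\lambda^2(v-v_\ast)^2$, so that
$$
|R_\epsilon(\xi,t)|\leq\tfrac{\xi^2}{2}\bigl[\epsilon^\delta\sigma^2 M_{2,\epsilon}(t)+2\epsilon\lambda^2\bigl(M_{2,\epsilon}(t)-M_{1,0}^2\bigr)\bigr],
$$
with $M_{2,\epsilon}(t):=\int_\R v^2 f_\epsilon(v,t)\,dv$. Testing~\eqref{eq:Boltztype.scaled-advection} with $\varphi(v)=v^2$ yields the linear ODE $\dot M_{2,\epsilon}=[\epsilon^\delta\sigma^2-2\lambda(1-\epsilon\lambda)]M_{2,\epsilon}+2\lambda(1-\epsilon\lambda)M_{1,0}^2$, whose linear coefficient is strictly negative and bounded away from zero for $\epsilon$ small enough thanks to~\eqref{eq:sigma.lambda.eps_quasi-inv-advection}; integration then produces a uniform bound $M_{2,\epsilon}(t)\leq K_0$ for $\epsilon$ small and $t\in[0,T]$. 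Therefore $|R_\epsilon(\xi,t)|\leq K\epsilon^{\min\{\delta,1\}}\xi^2$ with $K$ independent of $\xi,\,t,\,\epsilon$.

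Finally, I would integrate the transport equation for $h$ along the characteristic $s\mapsto\xi e^{-\lambda(t-s)}$ reaching $\xi$ at time $s=t$; the unit-modulus integrating factor $\mu(s):=\exp\bigl(i\lambda M_{1,0}\int_0^s\xi e^{-\lambda(t-r)}\,dr\bigr)$ removes the zeroth-order term, and using $h(\cdot,0)\equiv 0$ one gets
$$
|h(\xi,t)|\leq\int_0^t|R_\epsilon(\xi e^{-\lambda(t-s)},s)|\,ds\leq K\epsilon^{\min\{\delta,1\}}\xi^2\int_0^te^{-2\lambda(t-s)}\,ds\leq\frac{K}{2\lambda}\epsilon^{\min\{\delta,1\}}\xi^2.
$$
Dividing by $\xi^2$, taking the supremum over $\xi\neq 0$ and then over $t\in[0,T]$ yields $\sup_{t\in[0,T]}d_2(f_\epsilon(t),g(t))\leq\frac{K}{2\lambda}\epsilon^{\min\{\delta,1\}}\to 0$, which is the claim. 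The main obstacle is keeping the $R_\epsilon$ bound uniform: this is exactly what the strict dissipativity of the second-moment ODE, granted by~\eqref{eq:sigma.lambda.eps_quasi-inv-advection}, delivers. Regularity is not an issue since $f_\epsilon(t),g(t)\in\cP_1(\R)$ implies $\hat f_\epsilon,\hat g\in C^1$ in $\xi$; alternatively, the characteristic identity is equivalent to Duhamel's formula for the explicit solution semigroup $S(t)\phi(\xi)=\phi(\xi e^{-\lambda t})e^{-i\xi M_{1,0}(1-e^{-\lambda t})}$ of the homogeneous transport equation.
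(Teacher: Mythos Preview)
Your argument is correct but follows a genuinely different route from the paper's. The paper does not linearise the Fourier equation for $\hat f_\epsilon$ to a transport PDE. Instead it adds and subtracts $\widehat Q_\epsilon(\hat g,\hat g)$ in $\partial_t(\hat f_\epsilon-\hat g)$, controls the ``collisional'' piece $\widehat Q_\epsilon(\hat f_\epsilon,\hat f_\epsilon)-\widehat Q_\epsilon(\hat g,\hat g)$ by $\tfrac{1}{\epsilon}\ave{p_\epsilon^2+q_\epsilon^2}\,d_2(f_\epsilon,g)$ via the usual Fourier-metric contractivity of the gain operator, bounds the linearisation defect $\widehat Q_\epsilon(\hat g,\hat g)-\widehat J(\hat g)$ by $O(\epsilon^{\min\{\delta,1\}})\xi^2$ using the second moment of $g$ (Lemma~\ref{lemma:g.P2-advection}), and closes with the revisited Gr\"onwall of Lemma~\ref{lemma:Gronwall}. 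The resulting bound is $\sup_{[0,T]}d_2\leq \bar K T\,\epsilon^{\min\{\delta,1\}}$.

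The trade-offs are as follows. Your approach exploits that the limit operator is a pure first-order transport, which lets you solve the $h$-equation explicitly along characteristics and obtain a bound $\tfrac{K}{2\lambda}\epsilon^{\min\{\delta,1\}}$ \emph{uniform in $T$}, sharper than the paper's linear-in-$T$ estimate. The price is that the remainder now sits on $f_\epsilon$ rather than $g$, so you need the uniform-in-$\epsilon$ bound on $M_{2,\epsilon}$ (which you correctly obtain from the dissipative moment ODE). The paper's approach, by contrast, never uses the specific form of the limit semigroup: it only needs contractivity of $Q_\epsilon$ in $d_s$ and a bound on $Q_\epsilon(g,g)-J(g)$. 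This is why the same template carries over verbatim to the conserved-energy regime (Theorem~\ref{theo:q.i.-cons_en}), where the limit equation has a diffusion term and no characteristics are available, and to the uniqueness/continuous-dependence result for the limit Fokker--Planck equations.
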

\begin{remark}
We can use the Fourier metric $d_2$ to evaluate the distance between $f_\epsilon(t)$ and $g(t)$ because, besides $\int_\R f_\epsilon(v,t)\,dv=\int_\R g(v,t)\,dv=1$, we have $\int_\R vf_\epsilon(v,t)\,dv=\int_\R vg(v,t)\,dv=M_{1,0}$ for all admissible $\epsilon>0$, cf.~\eqref{eq:sigma.lambda.eps_quasi-inv-advection}, and all $t>0$.
\end{remark}
\begin{proof}[Proof of Theorem~\ref{theo:q.i.-advection}]
Let us denote by $Q_\epsilon(f,f)$ the collisional operator featured by the $\epsilon$-scaled Boltzmann-type equation~\eqref{eq:Boltztype.scaled-advection}, which is such that
$$ \int_\R\varphi(v)Q_\epsilon(f,f)(v,t)\,dv=\frac{1}{\epsilon}\int_\R\int_\R\ave{\varphi(v_\epsilon')-\varphi(v)}f(v,t)f(v_\ast,t)\,dv\,dv_\ast $$
for every observable quantity $\varphi$. Likewise, let us denote by $J(g)$ the operator defined, in weak form, by the right-hand side of~\eqref{eq:FP.weak-advection}, i.e. such that
$$ \int_\R\varphi(v)J(g)(v,t)\,dv=\lambda\int_\R\varphi'(v)(M_{1,0}-v)g(v,t)\,dv $$
for every sufficiently smooth observable quantity $\varphi$. Thus, the equations satisfied by $f_\epsilon$, $g$ can be rewritten as
$$ \partial_tf_\epsilon=Q_\epsilon(f_\epsilon,f_\epsilon), \qquad \partial_tg=J(g), $$
respectively. With $\varphi(v)=e^{-i\xi v}\in C^\infty(\R)$ in~\eqref{eq:Boltztype.scaled-advection},~\eqref{eq:FP.weak-advection} they become
$$ \partial_t\hat{f}_\epsilon=\widehat{Q}_\epsilon(\hat{f}_\epsilon,\hat{f}_\epsilon), \qquad \partial_t\hat{g}=\widehat{J}(\hat{g}), $$
where the Fourier-transformed version of the operator $Q_\epsilon$ reads\footnote{For completeness, the Fourier-transformed version of the operator $J$ is
$$ \widehat{J}(\hat{g})(\xi,t)=-i\lambda M_{1,0}\xi\hat{g}(\xi,t)+\lambda\xi\frac{\partial\hat{g}}{\partial\xi}(\xi,t) $$
but it is not needed explicitly in the proof.}
$$ \widehat{Q}_\epsilon(\hat{f}_\epsilon,\hat{f}_\epsilon)(\xi,t)=
	\frac{1}{\epsilon}\left(\ave{\hat{f}_\epsilon(p_\epsilon\xi,t)}\hat{f}_\epsilon(q_\epsilon\xi,t)-\hat{f}_\epsilon(\xi,t)\right). $$
In particular, we have taken into account that only the coefficient $p_\epsilon$ is stochastic. Next, we observe that
\begin{align}
	\begin{aligned}[b]
		\partial_t\bigl(\hat{f}_\epsilon-\hat{g}\bigr) &= \widehat{Q}_\epsilon(\hat{f}_\epsilon,\hat{f}_\epsilon)-\widehat{J}(\hat{g}) \\
		&= \widehat{Q}_\epsilon(\hat{f}_\epsilon,\hat{f}_\epsilon)-\widehat{Q}_\epsilon(\hat{g},\hat{g})+\widehat{Q}_\epsilon(\hat{g},\hat{g})-\widehat{J}(\hat{g}),
	\end{aligned}
	\label{eq:hatfeps-hatg}
\end{align}
where $\widehat{Q}_\epsilon(\hat{g},\hat{g})-\widehat{J}(\hat{g})$ is the Fourier-transformed version of the operator $Q_\epsilon(g,g)-J(g)$ which, by comparing the right-hand sides of~\eqref{eq:Boltztype.scaled-advection} and~\eqref{eq:FP.weak-advection}, can be written in weak form as
\begin{multline}
	\int_\R\varphi(v)\bigl(Q_\epsilon(g,g)(v,t)-J(g)(v,t)\bigr)\,dv \\
	=\frac{1}{2\epsilon}\int_\R\int_\R\ave*{\varphi''(\bar{v}_\epsilon)((p_\epsilon-1)v+q_\epsilon v_\ast)^2})g(v,t)g(v_\ast,t)\,dv\,dv_\ast
	\label{eq:Qeps-J}
\end{multline}
for every observable quantity $\varphi\in C^2(\R)$.

Dividing both sides of~\eqref{eq:hatfeps-hatg} by $\abs{\xi}^2$ and rearranging the terms yields
\begin{align*}
	\partial_t\frac{\hat{f}_\epsilon(\xi,t)-\hat{g}(\xi,t)}{\abs{\xi}^2}+\frac{1}{\epsilon}\frac{\hat{f}_\epsilon(\xi,t)-\hat{g}(\xi,t)}{\abs{\xi}^2}
		&= \frac{1}{\epsilon}\left[\ave*{\frac{\bigl(\hat{f}_\epsilon(p_\epsilon\xi,t)-\hat{g}(p_\epsilon\xi,t)\bigr)\hat{f}_\epsilon(q_\epsilon\xi,t)}{\abs{\xi}^2}}\right. \\
	&\phantom{=} \left.+\ave*{\frac{\hat{g}(p_\epsilon\xi,t)\bigl(\hat{f}_\epsilon(q_\epsilon\xi,t)-\hat{g}(q_\epsilon\xi,t)\bigr)}{\abs{\xi}^2}}\right] \\
	&\phantom{=} +\frac{\widehat{Q}_\epsilon(\hat{g},\hat{g})(\xi,t)-\widehat{J}(\hat{g})(\xi,t)}{\abs{\xi}^2},
\end{align*}
whence
\begin{align}
	\begin{aligned}[b]
		\partial_t\left(e^{t/\epsilon}\frac{\abs{\hat{f}_\epsilon(\xi,t)-\hat{g}(\xi,t)}}{\abs{\xi}^2}\right)
			&\leq \frac{e^{t/\epsilon}}{\epsilon}\ave{p_\epsilon^2+q_\epsilon^2}d_2(f_\epsilon(t),g(t)) \\
		&\phantom{=} +e^{t/\epsilon}\sup_{\xi\in\R\setminus\{0\}}\frac{\abs{\widehat{Q}_\epsilon(\hat{g},\hat{g})(\xi,t)-\widehat{J}(\hat{g})(\xi,t)}}{\abs{\xi}^2}.
	\end{aligned}
	\label{eq:d2.proof_FP}
\end{align}
The last term at the right-hand side can be estimated out of~\eqref{eq:Qeps-J} with $\varphi(v)=e^{-i\xi v}$, considering that $\abs{\varphi''(v)}\leq\abs{\xi}^2$:
\begin{align*}
	\frac{\abs{\widehat{Q}_\epsilon(\hat{g},\hat{g})(\xi,t)-\widehat{J}(\hat{g})(\xi,t)}}{\abs{\xi}^2} &\leq
		\frac{1}{2\epsilon}\int_\R\int_\R\ave{((p_\epsilon-1)v+q_\epsilon v_\ast)^2}g(v,t)g(v_\ast,t)\,dv\,dv_\ast \\
	&= \frac{1}{2}\left((\epsilon^\delta\sigma^2+2\epsilon\lambda^2)\int_\R v^2g(v,t)\,dv-2\epsilon\lambda^2M_{1,0}^2\right) \\
	&\leq \frac{\epsilon^\delta\sigma^2+2\epsilon\lambda^2}{2}\int_\R v^2g(v,t)\,dv \\
	&\leq \epsilon^{\min\{\delta,\,1\}}\frac{\sigma^2+2\lambda^2}{2}\int_\R v^2g(v,t)\,dv \\
	&\leq \epsilon^{\min\{\delta,\,1\}}\frac{\sigma^2+2\lambda^2}{2}M_{2,0},
\end{align*}
where we have assumed $\epsilon<1$ and, in the last passage, we have used Lemma~\ref{lemma:g.P2-advection}. In the rest of the proof, we shall denote
$$ \bar{K}:=\frac{\sigma^2+2\lambda^2}{2}M_{2,0} $$
for brevity.

Back to~\eqref{eq:d2.proof_FP}, integrating in time on the interval $[0,\,t]$, $t>0$, we get
$$ e^{t/\epsilon}d_2(f_\epsilon(t),g(t))\leq\frac{\ave{p_\epsilon^2+q_\epsilon^2}}{\epsilon}\int_0^te^{\tau/\epsilon}d_2(f_\epsilon(\tau),g(\tau))\,d\tau+
	\bar{K}\epsilon^{1+\min\{\delta,\,1\}}(e^{t/\epsilon}-1), $$
where we have taken into account that $d_2(f_\epsilon(0),g(0))=0$ as $f_\epsilon(0)=g(0)=f_0$ by assumption.

Now, applying the revisited version of Gr\"{o}nwall's inequality established in Lemma~\ref{lemma:Gronwall} with
$$ u(t)=e^{t/\epsilon}d_2(f_\epsilon(t),g(t)), \quad \alpha(t)=\bar{K}\epsilon^{1+\min\{\delta,\,1\}}(e^{t/\epsilon}-1), \quad
	\beta=\frac{\ave{p_\epsilon^2+q_\epsilon^2}}{\epsilon} $$
yields
$$ d_2(f_\epsilon(t),g(t))\leq\bar{K}\epsilon^{\min\{\delta,\,1\}}\int_0^te^{\frac{\ave{p_\epsilon^2+q_\epsilon^2}-1}{\epsilon}(t-\tau)}\,d\tau
	\leq\bar{K}\epsilon^{\min\{\delta,\,1\}}t, $$
where in the second inequality we have used that $\ave{p_\epsilon^2+q_\epsilon^2}-1<0$, hence $e^{\frac{\ave{p_\epsilon^2+q_\epsilon^2}-1}{\epsilon}(t-\tau)}\leq 1$ for $\tau\leq t$, if $\epsilon>0$ is small enough. Then, for $T>0$ arbitrary and finite,
$$ \sup_{t\in [0,\,T]}d_2(f_\epsilon(t),g(t))\leq\bar{K}T\epsilon^{\min\{\delta,\,1\}}\xrightarrow{\epsilon\to 0^+}0 $$
and we are done.
\end{proof}

Considering smooth observables with compact support, so as to get rid of boundary terms in the integration-by-parts, from~\eqref{eq:FP.weak-advection} we obtain the following strong form:
\begin{equation}
	\frac{\partial g}{\partial t}+\lambda\frac{\partial}{\partial v}\left((M_{1,0}-v)g\right)=0,
	\label{eq:FP.strong-advection}
\end{equation}
which, compared with~\eqref{eq:FP.strong}, is a linear variable-coefficient Fokker--Planck equation \textit{without diffusion}. This equation can be solved explicitly for all $t>0$ by the method of characteristics, starting from an initial condition $g(v,0)=f_0(v)$, $v\in\R$, satisfying
$$ \int_\R f_0(v)\,dv=1, \qquad \int_\R vf_0(v)\,dv=M_{1,0}. $$
Specifically, introducing the auxiliary function $\tilde{g}(v,t):=e^{-\lambda t}g(v,t)$ reduces~\eqref{eq:FP.strong-advection} to the linear advection equation $\partial_t\tilde{g}+\lambda(M_{1,0}-v)\partial_v\tilde{g}=0$, which indicates that $\tilde{g}$ is constant along the curves $v=v(t)$ such that $\dot{v}=\lambda(M_{1,0}-v)$, i.e. $v=e^{-\lambda t}v_0+M_{1,0}(1-e^{-\lambda t})$ with $v_0\in\R$ arbitrary. Consequently, it results $\tilde{g}(v,t)=\tilde{g}(v_0,0)=g(v_0,0)=f_0(v_0)$ for all $v\in\R$, $t>0$, whence, going back to $g$,
\begin{equation}
	g(v,t)=e^{\lambda t}f_0\bigl(M_{1,0}+e^{\lambda t}(v-M_{1,0})\bigr).
	\label{eq:g.FP-advection}
\end{equation}

It is not difficult to check that such a $g$ satisfies $\int_\R g(v,t)\,dv=1$, $\int_\R vg(v,t)\,dv=M_{1,0}$ for every $t>0$. Furthermore, $g(t)$ converges in distributional sense to
\begin{equation}
	g^\infty(v)=\delta_{M_{1,0}}(v)
	\label{eq:ginf.FP-advection}
\end{equation}
when $t\to +\infty$, which is indeed the only stationary probability distribution of~\eqref{eq:FP.strong-advection}.

This can be also seen from Lemma~\ref{lemma:g.P2-advection}, which yields $\int_\R v^2g(v,t)\,dv\to M_{1,0}^2$ and consequently $\Var{g(t)}\to 0$ as $t\to +\infty$. On the other hand, the scaled Boltzmann-type equation~\eqref{eq:Boltztype.scaled-advection} with $\varphi(v)=v^2$ gives
$$ \int_\R v^2f_\epsilon(v,t)\,dv=-\left(2\lambda(1-\epsilon\lambda)-\epsilon^\delta\sigma^2\right)\int_\R v^2f_\epsilon(v,t)\,dv+2\lambda(1-\epsilon\lambda)M_{1,0}^2, $$
which, for $\epsilon,\,\lambda,\,\sigma$ complying with~\eqref{eq:sigma.lambda.eps_quasi-inv-advection}, yields
$$ \int_\R v^2f_\epsilon(v,t)\,dv\to\frac{2\lambda(1-\epsilon\lambda)}{2\lambda(1-\epsilon\lambda)-\epsilon^\delta\sigma^2}M_{1,0}^2
	\quad \text{as } t\to +\infty. $$
Notice that for $\epsilon>0$ this asymptotic value is strictly greater than $M_{1,0}^2$, meaning that when $\epsilon$ is not infinitesimal the Maxwellian of the scaled Boltzmann-type equation~\eqref{eq:Boltztype.scaled-advection} is not~\eqref{eq:ginf.FP-advection}. However, it results clearly $\frac{2\lambda(1-\epsilon\lambda)}{2\lambda(1-\epsilon\lambda)-\epsilon^\delta\sigma^2}M_{1,0}^2\to M_{1,0}^2$ as $\epsilon\to 0^+$.

\begin{remark}
The distribution~\eqref{eq:g.FP-advection} is also called a \textit{self-similar} solution to~\eqref{eq:FP.strong-advection}, because it is obtained at each time $t>0$ as a deformation of a prescribed profile (that of $f_0$).
\end{remark}


\subsubsection{Conserved energy regime}
\label{sect:q.i.-cons_en}
Now we consider a regime of the coefficients $p_\epsilon$, $q_\epsilon$, and a related quasi-invariant limit, departing from the general setting $\ave{p_\epsilon+q_\epsilon}=1$, $\ave{p_\epsilon^2+q_\epsilon^2}<1$ addressed so far and nonetheless representative of an important class of applications mimicking the classical one of gas molecules.

In more detail, the regime we refer to is the one in which both the mean value and the energy of the kinetic distribution function are conserved in time, as it happens in the case of elastic collisions among gas particles. To obtain such a regime, we observe, from the moment equations reported in Section~\ref{sect:moments_evol}, that $M_1=0$ is an equilibrium of the mean value independently of $p$, $q$. Hence, if $f_0$ has null mean value then such a null mean value is conserved at all successive times for whatever $p$, $q$. At that point, if $p$, $q$ satisfy $\ave{p^2+q^2}=1$ then also the energy of $f_0$ is conserved in time.

Therefore, in this section we fix
$$ \int_\R vf_0(v)\,dv=0 $$
and we devise appropriate scalings of $p_\epsilon$, $q_\epsilon$ matching the condition $\ave{p_\epsilon^2+q_\epsilon^2}=1$. Specifically, we set
\begin{equation}
    p_\epsilon=1-\epsilon\lambda+\sqrt{\epsilon}\sigma\eta, \qquad
	q_\epsilon=\sqrt{2\lambda\epsilon}\sqrt{1-\frac{\sigma^2}{2\lambda}-\frac{\epsilon\lambda}{2}},
    \label{eq:peps.qeps-cons_en}
\end{equation}
where $\lambda,\,\sigma>0$ fulfil~\eqref{eq:sigma.lambda} while $\eta$ is a random variable satisfying~\eqref{eq:eta}. It is not difficult to see that if $\epsilon$ is sufficiently small like in~\eqref{eq:eps.quasi-inv} and $\eta$ is bounded from below like in~\eqref{eq:eta.lowerbound} then $p_\epsilon,\,q_\epsilon\geq 0$ are well-defined and moreover $p_\epsilon\to 1$, $q_\epsilon\to 0$ as $\epsilon\to 0^+$. Additionally, since
\begin{equation}
	\ave{p_\epsilon^3+q_\epsilon^3}=
		1+\left[\sigma^3\ave{\eta^3}+\sqrt{2\lambda\left(1-\frac{\sigma^2}{2\lambda}\right)}\right]\epsilon^{3/2}+o(\epsilon^{3/2}),
	\label{eq:p3+q3.cons_en}
\end{equation}
if $\ave{\eta^3}$ is sufficiently negative then $\ave{p_\epsilon^3+q_\epsilon^3}<1$ for $\epsilon>0$ small enough.

Using the same ideas as in Section~\ref{sect:exist.uniq} but working now in $\cP_{2,1,\cM_3}(\R)\subset\cP_3(\R)$, for a suitable constant $\cM_3>0$ which exists in view of $\ave{p_\epsilon^3+q_\epsilon^3}<1$, with the Fourier metric $d_3$ it can be shown that for $\epsilon>0$ sufficiently small the scaled Boltzmann-type equation
\begin{equation}
	\frac{d}{dt}\int_\R\varphi(v)f_\epsilon(v,t)\,dv=\frac{1}{\epsilon}\int_\R\int_\R\ave{\varphi(v_\epsilon')-\varphi(v)}
		f_\epsilon(v,t)f_\epsilon(v_\ast,t)\,dv\,dv_\ast
	\label{eq:Boltztype.scaled-cons_en}
\end{equation}
admits a unique solution $f_\epsilon\in C^0([0,\,T];\,\cP_{2,1,\cM_3}(\R))$ issuing from an initial condition $f_0\in\cP_3(\R)$ with $M_{1,0}=0$.

Taylor-expanding the difference $\ave{\varphi(v_\epsilon')-\varphi(v)}$ around $v$ for a smooth observable, say $\varphi\in C^3(\R)$, we obtain
\begin{align*}
	\ave{\varphi(v_\epsilon')-\varphi(v)} &= \varphi'(v)\left(-\epsilon\lambda v+\sqrt{\epsilon}\sqrt{2\lambda-\sigma^2-\epsilon\lambda^2}v_\ast\right) \\
	&\phantom{=} +\frac{1}{2}\varphi''(v)\left(\epsilon(\sigma^2+\epsilon\lambda^2)(v^2-v_\ast^2)+2\epsilon\lambda v_\ast^2
		-2\epsilon\lambda\sqrt{\epsilon}\sqrt{2\lambda-\sigma^2-\epsilon\lambda^2}vv_\ast\right) \\
	&\phantom{=} +\frac{1}{6}\ave{\varphi'''(\bar{v}_\epsilon)((p_\epsilon-1)v+q_\epsilon v_\ast)^3},
\end{align*}
where, as usual, $\bar{v}_\epsilon$ is a point comprised between $\min\{v,\,v_\epsilon'\}$ and $\max\{v,\,v_\epsilon'\}$. Next, plugging this expansion into~\eqref{eq:Boltztype.scaled-cons_en} and recalling that $\int_\R vf_\epsilon(v,t)\,dv=0$, $\int_\R v^2f_\epsilon(v,t)\,dv=M_{2,0}$ for every $t>0$, $M_{2,0}>0$ being the energy of the initial condition $f_0$, we get
\begin{align}
	\begin{aligned}[b]
		\frac{d}{dt}\int_\R\varphi(v)f_\epsilon(v,t)\,dv &= -\lambda\int_\R\varphi'(v)vf_\epsilon(v,t)\,dv \\
		&\phantom{\leq} +\frac{\sigma^2}{2}\int_\R\varphi''(v)(v^2-M_{2,0})f_\epsilon(v,t)\,dv+\lambda M_{2,0}\int_\R\varphi''(v)f_\epsilon(v,t)\,dv \\
		&\phantom{\leq} +\frac{\epsilon\lambda^2}{2}\int_\R\varphi''(v)(v^2-M_{2,0})f_\epsilon(v,t)\,dv \\
		&\phantom{\leq} +\frac{1}{6\epsilon}\int_\R\int_\R\ave{\varphi'''(\bar{v}_\epsilon)((p_\epsilon-1)v+q_\epsilon v_\ast)^3}
			f_\epsilon(v,t)f_\epsilon(v_\ast,t)\,dv\,dv_\ast.
	\end{aligned}
	\label{eq:Boltztype.scaled-cons_en.2}
\end{align}
By inspecting this equation we see that it is reasonable to expect that, in the quasi-invariant limit $\epsilon\to 0^+$, the solution $f_\epsilon$ somehow approaches the solution $g$ of
\begin{align}
	\begin{aligned}[b]
		\frac{d}{dt}\int_\R\varphi(v)g(v,t)\,dv &= -\lambda\int_\R\varphi'(v)vg(v,t)\,dv \\
		&\phantom{\leq} +\frac{\sigma^2}{2}\int_\R\varphi''(v)(v^2-M_{2,0})g(v,t)\,dv+\lambda M_{2,0}\int_\R\varphi''(v)g(v,t)\,dv
	\end{aligned}
	\label{eq:FP.weak-cons_en}
\end{align}
for an arbitrary observable $\varphi\in C^3(\R)$. To prove that this is indeed true we need preliminarily the following
\begin{lemma} \label{lemma:g.P3-cons_en}
Any solution $g$ to~\eqref{eq:FP.weak-cons_en} issuing from an initial condition $f_0\in \cP_3(\R)$ is such that $g(t)\in\cP_3(\R)$ for all $t>0$.
\end{lemma}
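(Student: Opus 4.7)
The plan is to mimic the \emph{a priori} moment estimate used elsewhere in the paper: derive a linear differential inequality for $\int_\R|v|^3 g(v,t)\,dv$ and apply Gr\"{o}nwall, after first checking that the mean value and the energy of $g(t)$ are under control.

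First, I would verify that $M_1(t)$ and $M_2(t)$ are conserved. Substituting $\varphi(v)=v$ into~\eqref{eq:FP.weak-cons_en} gives $\dot{M}_1=-\lambda M_1$, whence $M_1(t)=M_{1,0}e^{-\lambda t}=0$ since $M_{1,0}=0$ by assumption. Substituting $\varphi(v)=v^2$ gives $\dot{M}_2=(\sigma^2-2\lambda)(M_2-M_{2,0})$, whence $M_2(t)\equiv M_{2,0}$ for every $t\geq 0$ since the initial equality holds. By Cauchy--Schwarz, we then also obtain $\int_\R|v|g(v,t)\,dv\leq\sqrt{M_{2,0}}$ uniformly in $t$.

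Next, I would take $\varphi(v)=|v|^3$, which lies in $C^2(\R)$ with $\varphi'(v)=3v|v|$ and $\varphi''(v)=6|v|$. Plugging these expressions into~\eqref{eq:FP.weak-cons_en} and simplifying produces the closed equation
\begin{equation*}
    \frac{d}{dt}\int_\R|v|^3 g(v,t)\,dv = 3(\sigma^2-\lambda)\int_\R|v|^3 g(v,t)\,dv + 3M_{2,0}(2\lambda-\sigma^2)\int_\R|v|g(v,t)\,dv.
\end{equation*}
Using $2\lambda-\sigma^2>0$ from~\eqref{eq:sigma.lambda} and the preliminary bound $\int_\R|v|g\leq\sqrt{M_{2,0}}$, this reduces to the differential inequality $\dot{m}(t)\leq 3(\sigma^2-\lambda)m(t)+C$, where $m(t):=\int_\R|v|^3 g(v,t)\,dv$ and $C:=3M_{2,0}^{3/2}(2\lambda-\sigma^2)$. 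Gr\"{o}nwall then delivers an explicit upper bound on $m(t)$ for every $t\geq 0$, finite whenever $m(0)=\int_\R|v|^3 f_0(v)\,dv<+\infty$: precisely the claim.

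The main technical obstacle is justifying the substitution $\varphi(v)=|v|^3$ rigorously, because this test function is not compactly supported and its use in~\eqref{eq:FP.weak-cons_en} formally presupposes the very finiteness of $m(t)$ we want to establish. The standard remedy is truncation: replace $|v|^3$ by $\varphi_R(v):=\psi(v/R)|v|^3$, where $\psi\in C^\infty_c(\R)$ satisfies $\psi\equiv 1$ on $[-1,1]$, $\supp{\psi}\subset[-2,2]$, and $0\leq\psi\leq 1$. Each $\varphi_R\in C^2_c(\R)$ is legitimate in~\eqref{eq:FP.weak-cons_en}, and repeating the computation above with $\varphi_R$ reproduces the principal terms on $\{|v|\leq R\}$ plus cutoff-error terms supported on $\{R\leq|v|\leq 2R\}$ whose magnitudes are controlled uniformly in $R$ by the already-bounded quantity $M_{2,0}$. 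Applying Gr\"{o}nwall to $m_R(t):=\int_\R\varphi_R(v)g(v,t)\,dv$ yields a bound independent of $R$, and monotone convergence as $R\to+\infty$ delivers the desired bound on $m(t)$, completing the proof.
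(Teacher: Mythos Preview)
Your core argument coincides with the paper's: both plug $\varphi(v)=\lvert v\rvert^3$ into~\eqref{eq:FP.weak-cons_en} and derive a closed differential inequality for $m(t)=\int_\R\lvert v\rvert^3 g(v,t)\,dv$. The one genuine difference is how $\int_\R\lvert v\rvert g$ is handled: you first observe that $M_2(t)\equiv M_{2,0}$ and bound $\int_\R\lvert v\rvert g\leq\sqrt{M_{2,0}}$ by Cauchy--Schwarz, obtaining a \emph{linear} inequality $\dot m\leq 3(\sigma^2-\lambda)m+C$ closed by Gr\"onwall; the paper instead uses H\"older to bound $\int_\R\lvert v\rvert g\leq m^{1/3}$, arriving at a Bernoulli-type inequality $\dot m\leq 3(\sigma^2-\lambda)m+C'm^{1/3}$ solved explicitly. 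Your route is slightly cleaner and yields the same qualitative conclusion.

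One caveat on the truncation paragraph. The paper works formally throughout and does not truncate, so your attempt goes beyond what the paper does --- but the specific claim that the cutoff errors on $\{R\leq\lvert v\rvert\leq 2R\}$ are ``controlled uniformly in $R$ by the already-bounded quantity $M_{2,0}$'' is not right as stated. For instance, the extra term $-\tfrac{\lambda}{R}\int v\,\psi'(v/R)\lvert v\rvert^3 g$ is naturally bounded only by a constant times $\int_{R\leq\lvert v\rvert\leq 2R}\lvert v\rvert^3 g$ (since $\lvert v\rvert^4/R\leq 2\lvert v\rvert^3$ on the annulus), which is neither dominated by $m_R$ nor by $M_{2,0}$ alone. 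A rigorous truncation needs a bit more care (e.g.\ a cutoff for which $\varphi_R$, $v\varphi_R'$, and $v^2\varphi_R''$ are all controlled by $C\varphi_R$ plus lower-order terms bounded via $M_{2,0}$), but this is a technicality orthogonal to the formal level at which the paper operates.
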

\begin{proof}
Letting $\varphi(v)=\abs{v}^3$ in~\eqref{eq:FP.weak-cons_en}, along with $\varphi'(v)=3v\abs{v}$ and $\varphi''(v)=6\abs{v}$, produces
\begin{align*}
	\frac{d}{dt}\int_\R\abs{v}^3g(v,t)\,dv &= 3(\sigma^2-\lambda)\int_\R\abs{v}^3g(v,t)\,dv+6M_{2,0}(2\lambda-\sigma^2)\int_\R\abs{v}g(v,t)\,dv; \\
\intertext{since $2\lambda-\sigma^2>0$ in view of~\eqref{eq:sigma.lambda}, we can apply H\"{o}lder's inequality with exponents $\mathfrak{p}=3$, $\mathfrak{q}=\frac{3}{2}$ to the second term at the right-hand side to get}
	&\leq 3(\sigma^2-\lambda)\int_\R\abs{v}^3g(v,t)\,dv+6M_{2,0}(2\lambda-\sigma^2)\left(\int_\R\abs{v}^3g(v,t)\,dv\right)^{1/3}.
\end{align*}
This is a Bernoulli-like differential inequality, which can be solved by standard methods to find
$$ \int_\R\abs{v}^3g(v,t)\,dv\leq\left[e^{-2(\lambda-\sigma^2)t}M_{3,0}^{2/3}+\frac{2(2\lambda-\sigma^2)}{\lambda-\sigma^2}M_{2,0}
	\left(1-e^{-2(\lambda-\sigma^2)t}\right)\right]^{3/2} $$
with $M_{3,0}:=\int_\R\abs{v}^3f_0(v)\,dv$, whence the thesis follows.
\end{proof}

We are now in a position to show that
\begin{theorem} \label{theo:q.i.-cons_en}
Let $f_\epsilon\in C^0([0,\,+\infty);\,\cP_3(\R))$ be the solution to~\eqref{eq:Boltztype.scaled-cons_en} issuing from an initial datum $f_0\in\cP_3(\R)$. Assume moreover that~\eqref{eq:FP.weak-cons_en} admits a solution $g\in C^0([0,\,+\infty);\,\cP_3(\R))$ issuing from $f_0$ as well. Then
$$ \lim_{\epsilon\to 0^+}\sup_{t\in [0,\,T]}d_3(f_\epsilon(t),g(t))=0 $$
for all $T>0$.
\end{theorem}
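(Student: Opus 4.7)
The plan is to adapt the proof of Theorem~\ref{theo:q.i.-advection} to the $d_3$ Fourier metric. This metric is well-defined throughout because the design of~\eqref{eq:peps.qeps-cons_en} imposes $\ave{p_\epsilon^2+q_\epsilon^2}=1$, which, together with $\int_\R vf_0\,dv=0$, forces $f_\epsilon(t)$ and $g(t)$ to share with $f_0$ both the zero mean value and the energy $M_{2,0}$ for every $t\geq 0$; Proposition~\ref{prop:ds}(i) then yields $d_3(f_\epsilon(t),g(t))<+\infty$. Denoting by $Q_\epsilon$, $J$ the operators appearing on the right-hand sides of~\eqref{eq:Boltztype.scaled-cons_en}, \eqref{eq:FP.weak-cons_en} and by $\widehat{Q}_\epsilon$, $\widehat{J}$ their Fourier counterparts, I would split
\begin{equation*}
    \widehat{Q}_\epsilon(\hat{f}_\epsilon,\hat{f}_\epsilon)-\widehat{J}(\hat{g}) = \bigl(\widehat{Q}_\epsilon(\hat{f}_\epsilon,\hat{f}_\epsilon)-\widehat{Q}_\epsilon(\hat{g},\hat{g})\bigr) + \bigl(\widehat{Q}_\epsilon(\hat{g},\hat{g})-\widehat{J}(\hat{g})\bigr),
\end{equation*}
regarding the two brackets respectively as a \emph{stability} contribution and a \emph{consistency error} in the $e^{t/\epsilon}$-weighted differential inequality for $\abs{\hat{f}_\epsilon-\hat{g}}/\abs{\xi}^3$. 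The stability bracket is handled exactly as in Theorem~\ref{theo:q.i.-advection}, only invoking the $d_3$-analogue $\sup_\xi\abs{\hat{\nu}(a\xi)-\hat{\mu}(a\xi)}/\abs{\xi}^3=\abs{a}^3d_3(\mu,\nu)$ of the scaling identity recorded at the end of Section~\ref{sect:Fourier}: absorbing $\frac{1}{\epsilon}(\hat{f}_\epsilon-\hat{g})$ into the integrating factor, it contributes the Gronwall coefficient $\frac{1}{\epsilon}\ave{p_\epsilon^3+q_\epsilon^3}d_3(f_\epsilon(t),g(t))$.

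The consistency bracket is the main technical obstacle. Comparing~\eqref{eq:Boltztype.scaled-cons_en.2} with~\eqref{eq:FP.weak-cons_en}, one reads in weak form
\begin{equation*}
    \int_\R\varphi(v)\bigl(Q_\epsilon(g,g)-J(g)\bigr)dv = \frac{\epsilon\lambda^2}{2}\int_\R\varphi''(v)(v^2-M_{2,0})g(v,t)\,dv + \frac{1}{6\epsilon}\int_\R\int_\R\ave{\varphi'''(\bar{v}_\epsilon)\bigl((p_\epsilon-1)v+q_\epsilon v_\ast\bigr)^3}g(v,t)g(v_\ast,t)\,dv\,dv_\ast.
\end{equation*}
With $\varphi(v)=e^{-i\xi v}$, the naive bound $\abs{\varphi''}\leq\abs{\xi}^2$ on the first piece yields only $O(\epsilon)/\abs{\xi}$ after division by $\abs{\xi}^3$, which blows up at $\xi=0$. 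The remedy is to exploit that the signed measure $(v^2-M_{2,0})g(v,t)\,dv$ has vanishing total mass, because $\int_\R v^2 g\,dv=M_{2,0}$; hence
\begin{equation*}
    \abs*{\int_\R e^{-i\xi v}(v^2-M_{2,0})g(v,t)\,dv} = \abs*{\int_\R\bigl(e^{-i\xi v}-1\bigr)(v^2-M_{2,0})g\,dv} \leq \abs{\xi}\int_\R\abs{v}\abs{v^2-M_{2,0}}g\,dv,
\end{equation*}
restoring the missing power of $\abs{\xi}$ and delivering an $O(\epsilon)$ contribution, uniform in $\xi$, with prefactor controlled by the third absolute moment of $g(t)$. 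The Taylor-remainder piece is handled by $\abs{\varphi'''}\leq\abs{\xi}^3$ together with $\abs{(p_\epsilon-1)v+q_\epsilon v_\ast}^3\leq 4(\abs{p_\epsilon-1}^3\abs{v}^3+q_\epsilon^3\abs{v_\ast}^3)$ and the scalings $\ave{\abs{p_\epsilon-1}^3}, q_\epsilon^3=O(\epsilon^{3/2})$, which together cancel the $1/\epsilon$ prefactor up to a residual $O(\sqrt{\epsilon})$. Lemma~\ref{lemma:g.P3-cons_en} ensures uniform boundedness of the third moment of $g(t)$ on $[0,T]$, so altogether
\begin{equation*}
    K_\epsilon := \sup_{t\in[0,T]}\sup_{\xi\neq 0}\frac{\abs{\widehat{Q}_\epsilon(\hat{g},\hat{g})(\xi,t)-\widehat{J}(\hat{g})(\xi,t)}}{\abs{\xi}^3} \leq C_T\sqrt{\epsilon}.
\end{equation*}

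Assembling both contributions and applying the revisited Gronwall inequality of Lemma~\ref{lemma:Gronwall} to $u(t)=e^{t/\epsilon}d_3(f_\epsilon(t),g(t))$, with $\alpha(t)=\epsilon K_\epsilon(e^{t/\epsilon}-1)$ and $\beta=\ave{p_\epsilon^3+q_\epsilon^3}/\epsilon$, yields
\begin{equation*}
    d_3(f_\epsilon(t),g(t)) \leq K_\epsilon\int_0^t e^{\frac{\ave{p_\epsilon^3+q_\epsilon^3}-1}{\epsilon}(t-\tau)}\,d\tau.
\end{equation*}
The asymptotics~\eqref{eq:p3+q3.cons_en} then makes the exponent $(\ave{p_\epsilon^3+q_\epsilon^3}-1)(t-\tau)/\epsilon$ uniformly bounded for $t,\tau\in[0,T]$ and $\epsilon$ small, so the integral is $O(T)$ and $\sup_{t\in[0,T]}d_3(f_\epsilon(t),g(t))\leq C_T\sqrt{\epsilon}\to 0$ as $\epsilon\to 0^+$.
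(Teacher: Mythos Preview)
Your proof is correct and follows essentially the same route as the paper's: the same splitting into stability and consistency brackets, the same use of the zero-mass trick $\int_\R(v^2-M_{2,0})g\,dv=0$ to recover the missing power of $\abs{\xi}$, the same appeal to Lemma~\ref{lemma:g.P3-cons_en} for third-moment control, and the same Gr\"{o}nwall argument via Lemma~\ref{lemma:Gronwall}.

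One small difference is worth flagging. In the final step, the paper bounds the exponential $e^{\frac{\ave{p_\epsilon^3+q_\epsilon^3}-1}{\epsilon}(t-\tau)}\leq 1$ by invoking directly $\ave{p_\epsilon^3+q_\epsilon^3}<1$, which in the paper's setup is arranged by assuming $\ave{\eta^3}$ sufficiently negative. You instead use the asymptotics~\eqref{eq:p3+q3.cons_en}, i.e. $\ave{p_\epsilon^3+q_\epsilon^3}-1=O(\epsilon^{3/2})$, so the exponent is $O(\sqrt{\epsilon}\,T)$ uniformly --- bounded regardless of sign. Your argument is slightly more robust: it does not require $\ave{p_\epsilon^3+q_\epsilon^3}<1$ and in fact unifies the paper's main proof with the separate $\sigma=0$ case that the paper treats afterwards (where $p_\epsilon^3+q_\epsilon^3>1$ and exactly your asymptotic fix is used).
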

\begin{remark}
We can use the Fourier metric $d_3$ to estimate the distance between $f_\epsilon(t)$ and $g(t)$ because, besides $\int_\R f_\epsilon(v,t)\,dv=\int_\R g(v,t)\,dv=1$, we also have $\int_\R vf_\epsilon(v,t)\,dv=\int_\R vg(v,t)\,dv=0$ and $\int_\R v^2f_\epsilon(v,t)\,dv=\int_\R v^2g(v,t)\,dv=M_{2,0}$ for all admissible $\epsilon>0$, cf.~\eqref{eq:eps.quasi-inv}, and all $t>0$.
\end{remark}

\begin{proof}[Proof of Theorem~\ref{theo:q.i.-cons_en}]
Proceeding like in the proof of Theorem~\ref{theo:q.i.-advection} but with the operator $J(g)$ defined as
\begin{align*}
	\int_\R\varphi(v)J(g)(v,t)\,dv &= -\lambda\int_\R\varphi'(v)vg(v,t)\,dv \\
	&\phantom{=} +\frac{\sigma^2}{2}\int_\R\varphi''(v)(v^2-M_{2,0})g(v,t)\,dv+\lambda M_{2,0}\int_\R\varphi''(v)g(v,t)\,dv
\end{align*}
for every sufficiently smooth observable $\varphi$, we arrive again at~\eqref{eq:hatfeps-hatg} with $\widehat{Q}_\epsilon(\hat{g},\hat{g})-\widehat{J}(\hat{g})$ the Fourier-transformed version of the operator $Q_\epsilon(g,g)-J(g)$, which now, by inspecting the right-hand sides of~\eqref{eq:Boltztype.scaled-cons_en.2},~\eqref{eq:FP.weak-cons_en}, can be written in weak form as
\begin{align}
	\begin{aligned}[b]
		\int_\R\varphi(v)(Q_\epsilon(g,g)(v,t)-J(g)(v,t))\,dv &= \frac{\epsilon\lambda^2}{2}\int_\R\varphi''(v)(v^2-M_{2,0})g(v,t)\,dv \\
		&\phantom{=} +\frac{1}{6\epsilon}\int_\R\int_\R\ave{\varphi'''(\bar{v}_\epsilon)((p_\epsilon-1)v+q_\epsilon v_\ast)^3}
			g(v,t)g(v_\ast,t)\,dv\,dv_\ast
	\end{aligned}
	\label{eq:Qeps-J.2}
\end{align}
for every observable quantity $\varphi\in C^3(\R)$. Next, dividing both sides of~\eqref{eq:hatfeps-hatg} by $\abs{\xi}^3$ and rearranging the terms we obtain the equivalent of~\eqref{eq:d2.proof_FP}, which in this case reads
\begin{align}
	\begin{aligned}[b]
		\partial_t\left(e^{t/\epsilon}\frac{\abs{\hat{f}_\epsilon(\xi,t)-\hat{g}(\xi,t)}}{\abs{\xi}^3}\right)
			&\leq \frac{e^{t/\epsilon}}{\epsilon}\ave{p_\epsilon^3+q_\epsilon^3}d_3(f_\epsilon(t),g(t)) \\
		&\phantom{=} +e^{t/\epsilon}\sup_{\xi\in\R\setminus\{0\}}\frac{\abs{\widehat{Q}_\epsilon(\hat{g},\hat{g})(\xi,t)-\widehat{J}(\hat{g})(\xi,t)}}{\abs{\xi}^3}.
	\end{aligned}
	\label{eq:d3.proof_FP}
\end{align}
To estimate the second term at the right-hand side we choose $\varphi(v)=e^{-i\xi v}$ in~\eqref{eq:Qeps-J.2}, noticing that $\abs{\varphi''(v)}\leq\abs{\xi}^2$ and $\abs{\varphi''(v)}\leq\abs{\xi}^3$:
\begin{align*}
	\frac{\abs{\widehat{Q}_\epsilon(\hat{g},\hat{g})(\xi,t)-\widehat{J}(\hat{g})(\xi,t)}}{\abs{\xi}^3} &\leq
		\frac{\epsilon\lambda^2}{2\abs{\xi}}\abs*{\int_\R e^{-i\xi v}(v^2-M_{2,0})g(v,t)\,dv} \\
	&\phantom{\leq} +\frac{1}{6\epsilon}\int_\R\int_\R\ave{\abs{(p_\epsilon-1)v+q_\epsilon v_\ast}^3}g(v,t)g(v_\ast,t)\,dv\,dv_\ast.
\end{align*}
To handle the first term at the right-hand side we observe that $\int_\R(v^2-M_{2,0})g(v,t)\,dv=0$ because $g$ has conserved energy $M_{2,0}$, hence
\begin{align*}
	\frac{\epsilon\lambda^2}{2\abs{\xi}}\abs*{\int_\R e^{-i\xi v}(v^2-M_{2,0})g(v,t)\,dv} &=
		\frac{\epsilon\lambda^2}{2\abs{\xi}}\abs*{\int_\R(e^{-i\xi v}-1)(v^2-M_{2,0})g(v,t)\,dv} \\
	&\leq \frac{\epsilon\lambda^2}{2\abs{\xi}}\int_\R\abs{e^{-i\xi v}-1}\cdot\abs{v^2-M_{2,0}}g(v,t)\,dv
\intertext{and further, since $\abs{e^{-i\xi v}-1}\leq\abs{\xi v}$,}
	&\leq \frac{\epsilon\lambda^2}{2}\int_\R\abs{v}\cdot\abs{v^2-M_{2,0}}g(v,t)\,dv \\
	&\leq \frac{\epsilon\lambda^2}{2}\left(\int_\R\abs{v}^3g(v,t)\,dv+M_{2,0}\int_\R\abs{v}g(v,t)\,dv\right).
\end{align*}
Owing to Lemma~\ref{lemma:g.P3-cons_en}, the quantity
$$ K_1(t):=\frac{\lambda^2}{2}\left(\int_\R\abs{v}^3g(v,t)\,dv+M_{2,0}\int_\R\abs{v}g(v,t)\,dv\right) $$
is finite for every $t>0$, therefore we conclude
$$ \frac{\epsilon\lambda^2}{2\abs{\xi}}\abs*{\int_\R e^{-i\xi v}(v^2-M_{2,0})g(v,t)\,dv}\leq\epsilon K_1(t)<+\infty, \qquad \forall\,t>0. $$

On the other hand, an explicit computation shows that for $\epsilon$ small, cf.~\eqref{eq:eps.quasi-inv}, there exists a constant $C_{\lambda,\sigma,\eta}>0$ depending only on $\lambda,\,\sigma$ and on the random variable $\eta$ through the quantities $\ave{\abs{\eta}}\leq\ave{\eta^2}^{1/2}=1$, $\ave{\abs{\eta}^3}<+\infty$, cf.~\eqref{eq:eta}, such that
$$ \frac{1}{6\epsilon}\int_\R\int_\R\ave{\abs{(p_\epsilon-1)v+q_\epsilon v_\ast}^3}g(v,t)g(v_\ast,t)\,dv\,dv_\ast
	\leq \sqrt{\epsilon}C_{\lambda,\sigma,\eta}\int_\R\abs{v}^3g(v,t)\,dv. $$
Also in this case, owing to Lemma~\ref{lemma:g.P3-cons_en} the quantity
$$ K_2(t):=C_{\lambda,\sigma,\eta}\int_\R\abs{v}^3g(v,t)\,dv $$
is finite for every $t>0$, therefore
$$ \frac{1}{6\epsilon}\int_\R\int_\R\ave{\abs{(p_\epsilon-1)v+q_\epsilon v_\ast}^3}g(v,t)g(v_\ast,t)\,dv\,dv_\ast
	\leq\sqrt{\epsilon}K_2(t)<+\infty, \qquad \forall\,t>0. $$

Back to~\eqref{eq:d3.proof_FP}, upon integrating on $[0,\,t]$, $t>0$, and taking into account that $f_\epsilon(0)=g(0)=f_0$ we obtain
\begin{align*}
	e^{t/\epsilon}d_3(f_\epsilon(t),g(t)) &\leq \frac{\ave{p_\epsilon^3+q_\epsilon^3}}{\epsilon}\int_0^te^{\tau/\epsilon}d_3(f_\epsilon(\tau),g(\tau))\,d\tau \\
	&\phantom{\leq} +\int_0^te^{\tau/\epsilon}\left(\epsilon K_1(\tau)+\sqrt{\epsilon}K_2(\tau)\right)\,d\tau
\intertext{and, as soon as $\epsilon<1$,}
	&\leq \frac{\ave{p_\epsilon^3+q_\epsilon^3}}{\epsilon}\int_0^te^{\tau/\epsilon}d_3(f_\epsilon(\tau),g(\tau))\,d\tau
		+\sqrt{\epsilon}\int_0^te^{\tau/\epsilon}K(\tau)\,d\tau
\end{align*}
with $K(t):=K_1(t)+K_2(t)$. Invoking now Lemma~\ref{lemma:Gronwall} with
$$ u(t)=e^{t/\epsilon}d_3(f_\epsilon(t),g(t)), \quad \alpha(t)=\sqrt{\epsilon}\int_0^te^{\tau/\epsilon}K(\tau)\,d\tau,
	\quad \beta=\frac{\ave{p_\epsilon^3+q_\epsilon^3}}{\epsilon} $$
we discover
$$ d_3(f_\epsilon(t),g(t))\leq\sqrt{\epsilon}\int_0^tK(\tau)e^{\frac{\ave{p_\epsilon^3+q_\epsilon^3}-1}{\epsilon}(t-\tau)}\,d\tau
	\leq\sqrt{\epsilon}\int_0^tK(\tau)\,d\tau, $$
where we have used that $\ave{p_\epsilon^3+q_\epsilon^3}<1$, thus $e^{\frac{\ave{p_\epsilon^3+q_\epsilon^3}-1}{\epsilon}(t-\tau)}\leq 1$ for $\tau\leq t$, when $\epsilon>0$ is sufficiently small.

Lemma~\ref{lemma:g.P3-cons_en} implies that $\int_\R\abs{v}^3g(v,t)\,dv$, thus also $\int_\R\abs{v}g(v,t)\,dv\leq\left(\int_\R\abs{v}^3g(v,t)\,dv\right)^{1/3}$, is uniformly bounded on every interval $[0,\,T]$, $T>0$. Therefore, so is the function $K$: there exists a constant $\bar{K}_T>0$ such that $K(t)\leq\bar{K}_T$ for all $t\in [0,\,T]$. Consequently,
$$ \sup_{t\in [0,\,T]}d_3(f_\epsilon(t),g(t))\leq\bar{K}_TT\sqrt{\epsilon}\xrightarrow{\epsilon\to 0^+}0 $$
for all $T>0$, whence the thesis follows.
\end{proof}

If in~\eqref{eq:FP.weak-cons_en} we restrict the observable $\varphi$ to compactly supported $C^3$-functions we obtain the following strong form of the equation:
\begin{equation}
	\frac{\partial{g}}{\partial{t}}-\lambda\frac{\partial}{\partial{v}}(vg)=
		\frac{\partial^2}{\partial{v^2}}\left[\left(\frac{\sigma^2}{2}v^2+\left(\lambda-\frac{\sigma^2}{2}\right)M_{2,0}\right)g\right],
	\label{eq:FP.strong-cons_en}
\end{equation}
i.e. a linear Fokker--Planck equation with non-constant coefficients. The steady distribution $g^\infty=g^\infty(v)$ can be sought proceeding similarly to Section~\ref{sect:FP_steady}, which produces the following results:
\begin{enumerate}[label=\roman*)]
\item If $\sigma>0$ we obtain
$$ g^\infty(v)=\frac{C_{\lambda,\sigma,M_{2,0}}}{\left[\frac{\sigma^2}{2}v^2+\left(\lambda-\frac{\sigma^2}{2}\right)M_{2,0}\right]^{1+\frac{\lambda}{\sigma^2}}}, $$
the constant $C_{\lambda,\sigma,M_{2,0}}>0$ being such that the normalisation condition $\int_\R g^\infty(v)\,dv=1$ holds true. For the records, the explicit expression of $C_{\lambda,\sigma,M_{2,0}}$ is
$$ C_{\lambda,\sigma,M_{2,0}}=\frac{2^{1+\frac{\lambda}{\sigma^2}}\sqrt{\pi}\Gamma\left(\frac{1}{2}+\frac{\lambda}{\sigma^2}\right)M_{2,0}}
	{(2\lambda-\sigma^2)^{\frac{1}{2}+\frac{\lambda}{\sigma^2}}\sigma^2\Gamma\left(1+\frac{\lambda}{\sigma^2}\right)}, $$
$\Gamma$ denoting, as usual, the gamma function. We observe that
$$ g^\infty(v)\sim\frac{C_{\lambda,\sigma,M_{2,0}}}{\left(\frac{\sigma^2}{2}\right)^{1+\frac{\lambda}{\sigma^2}}v^{2\left(1+\frac{\lambda}{\sigma^2}\right)}}, $$
therefore $g^\infty$ develops fat tails for $\abs{v}\to\infty$ with Pareto exponent $\gamma=1+\frac{2\lambda}{\sigma^2}>2$, cf.~\eqref{eq:sigma.lambda}. This is the same Pareto exponent as that of the case discussed in Section~\ref{sect:FP_steady}. Furthermore, we notice that if the tighter condition $\sigma^2<\lambda$ holds then $\gamma>3$, which ensures the integrability also of the third moment of $g^\infty$. By inspecting the proof of Lemma~\ref{lemma:g.P3-cons_en}, we see that this is consistent with the fact that for $\sigma^2<\lambda$ one gets the uniform boundedness in time of $\int_\R\abs{v}^3g(v,t)\,dv$.
\item If $\sigma=0$, i.e. if the coefficient $p_\epsilon$ of the interaction rule is fully deterministic, the Fokker--Planck equation~\eqref{eq:FP.strong-cons_en} reduces to
$$ \frac{\partial{g}}{\partial{t}}-\lambda\frac{\partial}{\partial{v}}(vg)=\lambda M_{2,0}\frac{\partial^2g}{\partial v^2}, $$
which admits as steady distribution the Gaussian density (Maxwell-type distribution)
\begin{equation}
    g^\infty(v)=\frac{1}{\sqrt{2\pi M_{2,0}}}e^{-\frac{v^2}{2M_{2,0}}}
    \label{eq:Gaussian}
\end{equation}  
with zero mean, (internal) energy (i.e. variance) $M_{2,0}$, and slim tails.

We can check that, in this case, this is exactly a Maxwellian of the scaled Boltzmann-type equation~\eqref{eq:Boltztype.scaled-cons_en} for \textit{every} admissible $\epsilon>0$, cf.~\eqref{eq:eps.quasi-inv}, in particular every $\epsilon<\frac{1}{\lambda}$. Indeed, passing to the Fourier-transformed version of~\eqref{eq:Boltztype.scaled-cons_en} we obtain that the Maxwellian $f_\epsilon^\infty$ satisfies
\begin{equation}
	\hat{f}_\epsilon^\infty(\xi)=\hat{f}_\epsilon^\infty(p_\epsilon\xi)\hat{f}_\epsilon^\infty(q_\epsilon\xi),
	\label{eq:hatf_eps.Gaussian}
\end{equation}
where we have taken into account that under the assumption $\sigma=0$ neither $p_\epsilon$ nor $q_\epsilon$ is stochastic, thus the mean $\ave{\cdot}$ on the right-hand side disappears. Since
$$ \hat{g}^\infty(\xi)=e^{-\frac{M_{2,0}}{2}\xi^2}, $$
we have $\hat{g}^\infty(p_\epsilon\xi)\hat{g}^\infty(q_\epsilon\xi)=e^{-\frac{M_{2,0}}{2}(p_\epsilon^2+q_\epsilon^2)\xi^2}$, which coincides with $\hat{g}^\infty(\xi)$ due to $p_\epsilon^2+q_\epsilon^2=1$. Consequently, $\hat{g}^\infty$ solves~\eqref{eq:hatf_eps.Gaussian}, i.e. $\widehat{Q}_\epsilon(\hat{g}^\infty,\hat{g}^\infty)=0$, thus $g^\infty$ solves $Q_\epsilon(g^\infty,g^\infty)=0$.

Notice that if $\sigma=0$ then from~\eqref{eq:p3+q3.cons_en} it results $p_\epsilon^3+q_\epsilon^3=1+\sqrt{2\lambda}\epsilon^{3/2}+o(\epsilon^{3/2})$, whence  we deduce that $p_\epsilon^3+q_\epsilon^3>1$ when $\epsilon>0$ is small. Consequently, in this case the proof of Theorem~\ref{theo:q.i.-cons_en} fails in the last passages. Nevertheless, it still holds that
$$ d_3(f_\epsilon(t),g(t))\leq\sqrt{\epsilon}\int_0^tK(\tau)e^{\frac{p_\epsilon^3+q_\epsilon^3-1}{\epsilon}(t-\tau)}\,d\tau
	\leq\bar{K}_T\sqrt{\epsilon}te^{\frac{p_\epsilon^3+q_\epsilon^3-1}{\epsilon}t}, $$
whence
$$ \sup_{t\in [0,\,T]}d_3(f_\epsilon(t),g(t))\leq\bar{K}_TTe^{\frac{p_\epsilon^3+q_\epsilon^3-1}{\epsilon}T}\sqrt{\epsilon}, $$
so that, since $p_\epsilon^3+q_\epsilon^3-1\sim\sqrt{2\lambda}\epsilon^{3/2}$ when $\epsilon\to 0^+$, we recover again
$$ \lim_{\epsilon\to 0^+}\sup_{t\in [0,\,T]}d_3(f_\epsilon(t),g(t))\leq
	\bar{K}_TT\lim_{\epsilon\to 0^+}e^{\sqrt{2\lambda}T\sqrt{\epsilon}}\sqrt{\epsilon}=0 $$
for every $T>0$.
\end{enumerate}

\begin{remark}
The case $\sigma=0$ above is one in which the Maxwellian can be computed exactly from the Boltzmann-type equation in every regime of the parameters, hence with no need to resort to the Fokker--Planck asymptotics. It is worth mentioning that an analogously precise characterisation of the asymptotic trend of the Boltzmann-type equation is possible also in other relatively simple cases, in which the parameters $p$, $q$ are deterministic, the mean value of the kinetic distribution function is constantly zero from the initial time onwards, but the energy is \textit{not} conserved. In particular, when $p^2+q^2\neq 1$ one can introduce the following scaled kinetic distribution function:
$$ \tilde{f}(v,t):=\sqrt{M_2(t)}f(\sqrt{M_2(t)}v,t), $$
$M_2$ being the second moment of $f$, which is such that $\int_\R\tilde{f}(v,t)\,dv=1$, $\int_\R v\tilde{f}(v,t)\,dv=0$, and
$$ \int_\R v^2\tilde{f}(v,t)\,dv=\frac{1}{M_2(t)}\int_\R w^2f(w,t)\,dw=1, $$
as it can be checked via the change of variable $w:=\sqrt{M_2(t)}v$. Thus, $\tilde{f}$ restores a constant-in-time energy. Using~\eqref{eq:Boltztype.weak}, one discovers that $\tilde{f}$ satisfies a Boltzmann-type equation similar to that satisfied by $f$ but with an additional drift term proportional to $p^2+q^2-1$, which, in practice, represents the contribution needed to conserve in time the energy of $\tilde{f}$. Such an equation can be solved explicitly via the Fourier transform, whereby one sees that $\tilde{f}$ approaches, asymptotically in time, a universal steady profile. The technical details can be found in~\cite{pareschi2006JSP,pareschi2013BOOK}. Interestingly, this approach leads to justify the formation of fat tails in the Maxwellian as a consequence of the non-conservation of the energy, as opposed to the slim tail of the Gaussian distribution~\eqref{eq:Gaussian} found instead in the regime of conserved energy.
\end{remark}

\subsection{Uniqueness and continuous dependence of the solution in the quasi-invariant limit}
The technique employed in the proofs of Theorems~\ref{theo:q.i.-advection},~\ref{theo:q.i.-cons_en}, which is based on approximating the Fokker--Planck operator $J$ with the $\epsilon$-scaled collisional operator $Q_\epsilon$, can be profitably used also to show that the solutions to the Fokker--Planck equations~\eqref{eq:FP.strong-advection},~\eqref{eq:FP.strong-cons_en} obtained in the quasi-invariant limit $\epsilon\to 0^+$ are unique and depend continuously on their respective initial data. This idea is due originally to Torregrossa and Toscani, cf.~\cite{torregrossa2018KRM}.

\begin{theorem}
Each equation~\eqref{eq:FP.strong-advection},~\eqref{eq:FP.strong-cons_en} admits at most one solution $g\in C^0([0,\,T];\,\cP_s(\R))$, where $T>0$ is arbitrary and $s=2$ in the case of~\eqref{eq:FP.strong-advection} and $s=3$ in the case of ~\eqref{eq:FP.strong-cons_en}, which depends continuously on the initial datum. That is:
\begin{enumerate}[label=(\roman*)]
\item for~\eqref{eq:FP.strong-advection}, if $g_{1,0},\,g_{2,0}\in\cP_2(\R)$ are initial data with
$$ \int_\R vg_{1,0}(v)\,dv=\int_\R vg_{2,0}(v)\,dv=M_{1,0} $$
and $g_1,\,g_2\in C^0([0,\,T];\,\cP_2(\R))$ are respective solutions to~\eqref{eq:FP.strong-advection} then
$$ \sup_{t\in [0,\,T]}d_2(g_1(t),g_2(t))\leq d_2(g_{1,0},g_{2,0}); $$
\item for~\eqref{eq:FP.strong-cons_en}, if $g_{1,0},\,g_{2,0}\in\cP_3(\R)$ are initial data with
$$ \int_\R vg_{1,0}(v)\,dv=\int_\R vg_{2,0}(v)\,dv=0, \qquad
    \int_\R v^2g_{1,0}(v)\,dv=\int_\R v^2g_{2,0}(v)\,dv=M_{2,0} $$
and $g_1,\,g_2\in C^0([0,\,T];\,\cP_3(\R))$ are respective solutions to~\eqref{eq:FP.strong-cons_en} then
$$ \sup_{t\in [0,\,T]}d_3(g_1(t),g_2(t))\leq d_3(g_{1,0},g_{2,0}). $$
\end{enumerate}
\end{theorem}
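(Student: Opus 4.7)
The plan is to leverage the quasi-invariant approximations established in Theorems~\ref{theo:q.i.-advection} and~\ref{theo:q.i.-cons_en} together with a continuous-dependence estimate derived at the Boltzmann level in the appropriate Fourier metric. Concretely, for $s\in\{2,3\}$ matching the two cases, I would introduce the solutions $f_{1,\epsilon},\,f_{2,\epsilon}\in C^0([0,T];\,\cP_s(\R))$ to the $\epsilon$-scaled Boltzmann-type equation~\eqref{eq:Boltztype.scaled-advection}, respectively~\eqref{eq:Boltztype.scaled-cons_en}, issuing from the initial data $g_{1,0},\,g_{2,0}$. The moment constraints ensuring the finiteness of $d_s$ (cf. Proposition~\ref{prop:ds}\ref{prop:ds.finiteness}) are inherited by both the Boltzmann and Fokker--Planck dynamics, since under~\eqref{eq:peps-advection} and~\eqref{eq:peps.qeps-cons_en} the scaled coefficients conserve exactly the mean (and, in the conserved-energy case, also the energy). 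The triangle inequality then yields
\begin{equation*}
    d_s(g_1(t),g_2(t))\leq d_s(g_1(t),f_{1,\epsilon}(t))+d_s(f_{1,\epsilon}(t),f_{2,\epsilon}(t))+d_s(f_{2,\epsilon}(t),g_2(t)),
\end{equation*}
whose first and third terms vanish uniformly for $t\in [0,T]$ as $\epsilon\to 0^+$ by the very statements of Theorems~\ref{theo:q.i.-advection},~\ref{theo:q.i.-cons_en}.

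The core step is to show the Boltzmann-level bound $\sup_{t\in [0,T]}d_s(f_{1,\epsilon}(t),f_{2,\epsilon}(t))\leq d_s(g_{1,0},g_{2,0})$ uniformly in $\epsilon$ small. To this end, I would mimic the arguments in the proofs of Theorem~\ref{theo:cont_dep} and Proposition~\ref{prop:trend.d2} by working with the Fourier-transformed scaled Boltzmann-type equation, multiplying by $e^{t/\epsilon}$ to absorb the loss term, subtracting the two equations and dividing by $\abs{\xi}^s$. Bounding the bilinear difference via $\abs{\hat{a}\hat{b}-\hat{c}\hat{d}}\leq\abs{\hat{b}}\abs{\hat{a}-\hat{c}}+\abs{\hat{c}}\abs{\hat{b}-\hat{d}}$ together with the universal bound $\abs{\hat{f}}\leq 1$ from Lemma~\ref{lemma:Fourier_basic} and the scaling identity $\sup_{\xi\neq 0}\abs{\hat{\mu}(a\xi)-\hat{\nu}(a\xi)}/\abs{\xi}^s=\abs{a}^sd_s(\mu,\nu)$ recorded at the end of Section~\ref{sect:Fourier}, I expect to arrive at
\begin{equation*}
    e^{t/\epsilon}d_s(f_{1,\epsilon}(t),f_{2,\epsilon}(t))\leq d_s(g_{1,0},g_{2,0})+\frac{\ave{p_\epsilon^s+q_\epsilon^s}}{\epsilon}\int_0^te^{\tau/\epsilon}d_s(f_{1,\epsilon}(\tau),f_{2,\epsilon}(\tau))\,d\tau.
\end{equation*}
Gr\"{o}nwall's inequality then gives $d_s(f_{1,\epsilon}(t),f_{2,\epsilon}(t))\leq d_s(g_{1,0},g_{2,0})\,e^{(\ave{p_\epsilon^s+q_\epsilon^s}-1)t/\epsilon}$, whose exponent is non-positive for $\epsilon$ small by~\eqref{eq:sigma.lambda.eps_quasi-inv-advection} in the advection-dominated case ($s=2$) and by~\eqref{eq:p3+q3.cons_en} in the conserved-energy case ($s=3$), thereby delivering the claim.

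Combining the three ingredients, taking the supremum over $t\in [0,T]$ and letting $\epsilon\to 0^+$ yields the continuous-dependence estimate; uniqueness is the particular case $g_{1,0}=g_{2,0}$. The main subtlety I anticipate is one of compatibility of the admissibility ranges of $\epsilon$: the Boltzmann estimate requires $\ave{p_\epsilon^s+q_\epsilon^s}\leq 1$, while the quasi-invariant convergence of Theorems~\ref{theo:q.i.-advection},~\ref{theo:q.i.-cons_en} has its own smallness conditions on $\epsilon$. Ensuring that both sets of conditions can be met simultaneously over a common $\epsilon$-range, and in particular checking that in the conserved-energy case (where the exponent decays only as $\sqrt{\epsilon}$ rather than being bounded away from zero) the argument still closes, is the delicate bookkeeping required to conclude.
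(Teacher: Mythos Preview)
Your approach is correct and follows a genuinely different route from the paper's. The paper (following Torregrossa--Toscani) never introduces auxiliary Boltzmann solutions; instead it works directly with the Fokker--Planck solutions $g_1,\,g_2$, writes
\[
\partial_t(\hat{g}_2-\hat{g}_1)=\widehat{J}(\hat{g}_2)-\widehat{Q}_\epsilon(\hat{g}_2,\hat{g}_2)
+\widehat{Q}_\epsilon(\hat{g}_2,\hat{g}_2)-\widehat{Q}_\epsilon(\hat{g}_1,\hat{g}_1)
+\widehat{Q}_\epsilon(\hat{g}_1,\hat{g}_1)-\widehat{J}(\hat{g}_1),
\]
and bounds the two outer blocks by re-entering the proofs of Theorems~\ref{theo:q.i.-advection},~\ref{theo:q.i.-cons_en} to extract the operator-level estimate $\sup_{\xi\neq 0}\abs{\widehat{Q}_\epsilon(\hat{g},\hat{g})-\widehat{J}(\hat{g})}/\abs{\xi}^s\leq\psi(\epsilon)K(t)$. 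A Gr\"onwall step then gives $d_s(g_1(t),g_2(t))\leq d_s(g_{1,0},g_{2,0})+2\bar{K}_TT\psi(\epsilon)$, and $\epsilon\to 0^+$ concludes.

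Your route is more modular: you use Theorems~\ref{theo:q.i.-advection},~\ref{theo:q.i.-cons_en} as black boxes and add a separate Boltzmann-level continuous-dependence estimate (which is indeed a straightforward adaptation of Proposition~\ref{prop:trend.d2} to the $\epsilon$-scaled equation in the $d_s$ metric). The trade-off is that your argument needs the \emph{existence} of the scaled Boltzmann solutions $f_{1,\epsilon},\,f_{2,\epsilon}$, while the paper's only needs the operator $Q_\epsilon$ applied to the already-given $g_k$'s. In the present setting existence is available (Theorem~\ref{theo:exists_unique} and its $\cP_{2,1,\cM_3}$ variant), so this is not an obstruction. Your concern about the conserved-energy case is well placed but harmless: you only need $\ave{p_\epsilon^s+q_\epsilon^s}\leq 1$ for the exponent to be non-positive, not bounded away from zero, and~\eqref{eq:p3+q3.cons_en} delivers this for small $\epsilon$ under the standing assumption on $\ave{\eta^3}$.
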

\begin{proof}
We treat both cases simultaneously.

Each $g_k\in C^0([0,\,T];\,\cP_s(\R))$, where $k=1,\,2$ and $s=2,\,3$ as needed, solves a Fokker--Planck equation of the form
$\partial_tg_k=J(g_k)$, the operator $J$ being defined by the right-hand side of either~\eqref{eq:FP.strong-advection} or~\eqref{eq:FP.strong-cons_en}. Hence:
\begin{align*}
    \partial_t\bigl(\hat{g}_2-\hat{g}_1\bigr) &= \widehat{J}(\hat{g}_2)-\widehat{J}(\hat{g}_1) \\
    &= \widehat{J}(\hat{g}_2)-\widehat{Q}_\epsilon(\hat{g}_2,\hat{g}_2) \\
    &\phantom{=} +\widehat{Q}_\epsilon(\hat{g}_2,\hat{g}_2)-\widehat{Q}_\epsilon(\hat{g}_1,\hat{g}_1) \\
    &\phantom{=} +\widehat{Q}_\epsilon(\hat{g}_1,\hat{g}_1)-\widehat{J}(\hat{g}_1),
\end{align*}
where $\widehat{Q}_\epsilon(\hat{g}_k,\hat{g}_k)(\xi,t)=\frac{1}{\epsilon}\left(\ave{\hat{g}_k(p_\epsilon\xi,t)\hat{g}_k(q_\epsilon\xi,t)}-\hat{g}_k(\xi,t)\right)$. Therefore, dividing both sides by $\abs{\xi}^s$ and rearranging the terms we find
\begin{align*}
    \partial_t\frac{\hat{g}_2(\xi,t)-\hat{g}_1(\xi,t)}{\abs{\xi}^s}+\frac{1}{\epsilon}\cdot\frac{\hat{g}_2(\xi,t)-\hat{g}_1(\xi,t)}{\abs{\xi}^s}
        &= \frac{1}{\epsilon}\ave*{\frac{\hat{g}_2(p_\epsilon\xi,t)\hat{g}_2(q_\epsilon\xi,t)-\hat{g}_1(p_\epsilon\xi,t)\hat{g}_1(q_\epsilon\xi,t)}{\abs{\xi}^s}} \\
        &\phantom{=} +\frac{\widehat{J}(\hat{g}_2)(\xi,t)-\widehat{Q}_\epsilon(\hat{g}_2,\hat{g}_2)(\xi,t)}{\abs{\xi}^s} \\
        &\phantom{=} +\frac{\widehat{Q}_\epsilon(\hat{g}_1,\hat{g}_1)(\xi,t)-\widehat{J}(\hat{g}_1)(\xi,t)}{\abs{\xi}^s},
\end{align*}
i.e.
\begin{align*}
    \partial_t\left(e^{t/\epsilon}\frac{\abs{\hat{g}_2(\xi,t)-\hat{g}_1(\xi,t)}}{\abs{\xi}^s}\right) &\leq
        \frac{e^{t/\epsilon}}{\epsilon}\ave{p_\epsilon^s+q_\epsilon^s}d_s(g_1(t),g_2(t)) \\
    &\phantom{\leq} +e^{t/\epsilon}\sup_{\xi\in\R\setminus\{0\}}
        \frac{\abs{\widehat{Q}_\epsilon(\hat{g}_1,\hat{g}_1)(\xi,t)-\widehat{J}(\hat{g}_1)(\xi,t)}}{\abs{\xi}^s} \\
    &\phantom{\leq} +e^{t/\epsilon}\sup_{\xi\in\R\setminus\{0\}}
        \frac{\abs{\widehat{Q}_\epsilon(\hat{g}_2,\hat{g}_2)(\xi,t)-\widehat{J}(\hat{g}_2)(\xi,t)}}{\abs{\xi}^s}.
\end{align*}

By inspecting the proofs of Theorems~\ref{theo:q.i.-advection},~\ref{theo:q.i.-cons_en} we see that, as soon as $\epsilon>0$ is small enough, in both cases there exist:
\begin{enumerate*}[label=(\roman*)]
\item a non-negative function $\psi=\psi(\epsilon)$ with $\psi(\epsilon)\to 0$ for $\epsilon\to 0^+$;
\item a non-negative function $K=K(t)$ bounded above by a constant $\bar{K}_T>0$ for $t\in [0,\,T]$,
\end{enumerate*}
such that 
$$ \sup_{\xi\in\R\setminus\{0\}}\frac{\abs{\widehat{Q}_\epsilon(\hat{g},\hat{g})(\xi,t)-\widehat{J}(\hat{g})(\xi,t)}}{\abs{\xi}^s}
    \leq \psi(\epsilon)K(t), $$
where $g$ is any solution to either Fokker--Planck equation~\eqref{eq:FP.strong-advection},~\eqref{eq:FP.strong-cons_en}. Consequently, we continue the previous estimate as
$$ \partial_t\left(e^{t/\epsilon}\frac{\abs{\hat{g}_2(\xi,t)-\hat{g}_1(\xi,t)}}{\abs{\xi}^s}\right)\leq
        \frac{e^{t/\epsilon}}{\epsilon}\ave{p_\epsilon^s+q_\epsilon^s}d_s(g_1(t),g_2(t))+2e^{t/\epsilon}\psi(\epsilon)K(t), $$
and further, integrating both sides in the interval $[0,\,t]$, $0<t\leq T$,
\begin{align*}
    e^{t/\epsilon}d_s(g_1(t),g_2(t)) &\leq d_s(g_{1,0},g_{2,0})
        +\frac{\ave{p_\epsilon^s+q_\epsilon^s}}{\epsilon}\int_0^te^{\tau/\epsilon}d_s(g_1(\tau),g_2(\tau))\,d\tau \\
    &\phantom{\leq} +2\psi(\epsilon)\int_0^te^{\tau/\epsilon}K(\tau)\,d\tau. 
\end{align*}
We notice that, unlike the proofs of Theorems~\ref{theo:q.i.-advection},~\ref{theo:q.i.-cons_en}, here we need to take into account explicitly also the distance between the initial data, which in general is non-zero.

At this point, Lemma~\ref{lemma:Gronwall} produces
\begin{align*}
    d_s(g_1(t),g_2(t)) &\leq e^{\frac{\ave{p_\epsilon^s+q_\epsilon^s}-1}{\epsilon}t}d_s(g_{1,0},g_{2,0})
        +2\psi(\epsilon)\int_0^te^{\frac{\ave{p_\epsilon^s+q_\epsilon^s}-1}{\epsilon}(t-\tau)}K(\tau)\,d\tau,
\intertext{namely, considering that $\ave{p_\epsilon^s+q_\epsilon^s}<1$ when $\epsilon$ is sufficiently small,}
    &\leq d_s(g_{1,0},g_{2,0})+2\bar{K}_TT\psi(\epsilon).
\end{align*}
Since this holds for every small enough $\epsilon>0$, letting $\epsilon\to 0^+$ and taking the supremum of both sides for $0\leq t\leq T$ we get the thesis.
\end{proof}

\section{Basics of the Monte Carlo numerical approach}
\label{sect:MonteCarlo}
The Boltzmann-type equation~\eqref{eq:Boltztype.strong} can be solved numerically by means of stochastic particle-based algorithms, which closely follow the agent-based model introduced in Section~\ref{sect:agent-based_model} and its statistical properties. This way, modelling, analysis, and numerics of the Boltzmann-type equation turn out to be intimately correlated, a fact which is not that common for other equations and theories of Mathematical Physics.

\subsection{Nanbu--Babovsky-type algorithm}

\commentout{
The stochastic algorithms typically employed for the numerical solution of~\eqref{eq:Boltztype.strong} belong to the broad class of \textit{Monte Carlo methods}, which, roughly speaking, are computational methods based on repeated random samplings. One of the most popular such methods is \textit{Nanbu--Babovsky algorithm}~\cite{babovsky1986M2AS,nanbu1980JPSJ}, \textcolor{magenta}{which is based on forming, at each time iteration, random pairs of agents updating their microscopic states according to~\eqref{eq:Vt+Dt}. For each pair of agents, a corresponding value of the random variable $\Theta$~\eqref{eq:Theta} is also sampled to decide whether the agents in that pair actually interact or not.}
\textcolor{magenta}{
In sampling the pairs of agents and the random variable $\Theta$ it is fundamental to preserve the statistical properties on which the derivation of the Boltzmann-type equation~\eqref{eq:Boltztype.weak} relies. Otherwise, the simulated particle system would not be consistent with the statistical description brought by~\eqref{eq:Boltztype.weak} and, as such, it would return results which do not reproduce the solutions of that equation. In more detail, the main attention points are the following:
\begin{itemize}
\item the pairs of (potentially) interacting particles need to be sampled independently of one another and of the respective microscopic states. This is essential in order to guarantee that, at the time $t$ of the (potential) interaction, the microscopic states $V_t$, $V^\ast_t$ are statistically independent as assumed by the factorisation $f(v,t)f(v_\ast,t)$, cf.~\eqref{eq:Q.strong},~\eqref{eq:Q.weak};
\item for each pair $(V_t,\,V^\ast_t)$ of microscopic states of (potentially) interacting agents the random variable $\Theta$ needs to be sampled independently of $V_t$, $V^\ast_t$, so that the structure~\eqref{eq:expectation_w.r.t._Theta}, which is at the basis of~\eqref{eq:Boltztype.weak} in the continuous-time limit, is statistically reproduced.
\end{itemize}
Questa parte non mi convince. Prima si parla di Nanbu--Babovsky classicamente, poi si fa subito riferimento alla (1), che però non è quello che Babovski aveva scritto. Poi si passa subito a dire che il metodo allora deve simulare (7), ma non si dice niente sul fatto che N debba essere grande. Esagerando: non ci vedo bene chiaramente ne l'uno ne l'altra spiegazione. Se ne approfittassimo per mettere qualcosa su Nanbu-Bbovski classico per (7) e la sua interpretazione probabilistica per scrivere poi che questa corrisponde a simulare (1)?poi, dato che (7) deriva da (1) e per N grande e delta t piccolo...--Rispiego martedì a voce}}

The stochastic algorithms typically employed for the numerical solution of~\eqref{eq:Boltztype.strong.gen} belong to the broad class of \textit{Monte Carlo methods}, i.e. probabilistic computational methods based on random samplings. The basic idea underlying them is to discretise only the time variable, letting then evolve in discrete time a finite set of particles that are identified to their microscopic state $v$. The latter changes in consequence of random interactions among the particles driven by some prescribed microscopic rules. One of the most popular of such Monte Carlo methods is implemented by the \textit{Nanbu--Babovsky algorithm}~\cite{babovsky1986M2AS,nanbu1980JPSJ}, which we describe in the following.

Given a discretisation of the time interval $[0,\,T]$, where $T>0$ is the final time, in $\lfloor T/\Delta{t}\rfloor$ intervals of size $\Delta{t}>0$, the approximation of the kinetic distribution function $f(v,n\Delta{t})$, with $n=0,\,1,\,\dots,\,\lfloor T/\Delta{t}\rfloor$, is denoted by $f^n(v)$. The latter is built numerically as a histogram based on a finite set of, say, $N_p\in\N$ particles. To introduce the appropriate discretisation, it is customary to rewrite~\eqref{eq:Boltztype.strong.gen} as
\begin{equation}
    \partial_t f(v,t)=\mu Q^+(f,f)(v,t)- \mu f(v,t)\label{eq:Boltztype.strongQ+}
\end{equation}
where $Q^+$ is the so-called \textit{gain term} of the collision operator $Q$:
$$ Q^+(f,f):=\int_\R\ave*{\frac{1}{\abs{J}}f(\pr{v},t)f(\pr{v}_\ast,t)}\,dv_\ast, $$
that is, in particular,
$$ Q^+(f,f):=\int_\R\ave*{\frac{1}{\abs{p^2-q^2}}f(\pr{v},t)f(\pr{v}_\ast,t)}\,dv_\ast $$
for the binary interaction rules~\eqref{eq:v'}. Since $\int_\R Q(f,f)(v,t)\,dv=0$ and $\int_\R f(v,t)\,dv=1$ for all $t>0$, it results
$$ \int_\R Q^+(f,f)(v,t)\,dv=1 \qquad \forall\,t>0, $$
therefore $Q^+(f,f)(\cdot,t)$ is a probability distribution for all $t>0$.

The forward Euler scheme applied to~\eqref{eq:Boltztype.strongQ+} gives
$$ f^{n+1}(v)=f^{n}(v)+\mu\Delta{t}\left[Q^+(f^n,f^n)(v)-f^n(v)\right], $$
which, rearranging the terms, becomes
\begin{equation}
    f^{n+1}(v)=(1-\mu\Delta{t})f^{n}(v)+\mu\Delta{t}Q^+(f^n,f^n)(v).
    \label{eq:discr}
\end{equation}
The probabilistic interpretation of~\eqref{eq:discr} is the following: in the time step $n\to n+1$ a particle with state $v$ either does not interact with probability $1-\mu\Delta{t}$, thereby maintaining its state $v$ also at time $n+1$, or interacts with probability $\mu\Delta{t}$, thereby changing its state at time $n+1$ according to the microscopic rule embodied in $Q^+$. Consequently, at each iteration of the algorithm the expected number of interacting particles is $N_p \mu \Delta t$. To implement such stochastic particle dynamics, in the Nanbu--Babovski algorithm $\lfloor N_p\mu\Delta{t}/2\rfloor$ independent pairs of particles are randomly sampled at each time iteration. Those particles update then their states at time $n+1$ based on the prescribed interaction rule. It is worth stressing that the pairs of interacting particles need to be sampled independently of one another and of the respective microscopic states. This is essential in order to guarantee that, at the time of the interaction, the pre-interaction states $v$, $v_\ast$ are indeed statistically independent as assumed by the factorisation $f(v,t)f(v_\ast,t)$, cf.~\eqref{eq:Q.strong},~\eqref{eq:Q.weak}. In the case of the original Boltzmann equation~\eqref{eq:Boltz}-\eqref{eq:Boltz.coll_rules},\commentout{for Maxwell molecules} this particular aspect guarantees the energy conservation in each collision, which was not granted in previous versions of the algorithm~\cite{pareschi1999AN}.

As a matter of fact, the mentioned probabilistic interpretation of~\eqref{eq:discr}, that allows one to define the microscopic dynamics underlying the Nanbu--Babovski particle algorithm, can be formulated without the need for the time discretisation introduced above. Indeed, such an interpretation corresponds to the intrinsically random microscopic dynamics underlying~\eqref{eq:Boltztype.strongQ+}, that is~\eqref{eq:Vt+Dt}. Consequently, we present a form of the Nanbu--Babovski algorithm based on integrating in time directly the stochastic process~\eqref{eq:Vt+Dt}. Notice that the time step $\Delta{t}$ defines the parameter of the random variable $\Theta$ in~\eqref{eq:Theta}, which discriminates between an interaction taking place ($\Theta=1$) or not ($\Theta=0$). At each time iteration, random pairs of agents are formed, which update their microscopic states according to~\eqref{eq:Vt+Dt}. For each pair of agents, a corresponding value of the random variable $\Theta$ is sampled to decide whether the agents in that pair actually interact or not. In sampling the pairs of agents and the random variable $\Theta$ it is fundamental to preserve the statistical properties on which the derivation of the Boltzmann-type equation~\eqref{eq:Boltztype.weak} relies. Otherwise, the simulated particle system would not be consistent with the statistical description brought by~\eqref{eq:Boltztype.weak} and, as such, it would return results which do not reproduce the solutions of that equation. In more detail, the pairs of (potentially) interacting particles need to be selected independently (like in the classical case and for the same reason): for each of the randomly chosen $N_p/2$ pairs $(V_t=v,\,V^\ast_t=v_\ast)$ of microscopic states of the finite set of $N_p$ particles, $\Theta$ has to be sampled independently of $V_t$, $V^\ast_t$. This way, the structure~\eqref{eq:expectation_w.r.t._Theta}, which is at the basis of~\eqref{eq:Boltztype.weak} in the continuous-time limit, is statistically reproduced.

\begin{algorithm}[!t]
\caption{Nanbu--Babovsky-type algorithm}
\label{alg:nanbu}
\KwData{$N_p\in\mathbb{N}$ (number of agents); $\Delta{t}\leq 1$; $T>0$ (final time); $f=f_0(v)$}
Sample a set of $N_p$ states $\cV^0:=\{v_1^0,\,v_2^0,\,\dots,\,v_{N_p}^0\}\subseteq\R$ from $f_0$\; \label{alg:sample_f0}
\For{$n=0,\,1,\,2,\,\dots,\,\lfloor\frac{T}{\Delta{t}}\rfloor$}{
	\Repeat{no pairs of states are left in $\cV^n$}{
		Pick randomly and independently $v_i^n,\,v_j^n\in\cV^n$ with $i\neq j$\;
		Sample $\Theta\sim\operatorname{Bernoulli}{(\Delta{t})}$\; \label{alg:sample_Theta}
		\eIf{$\Theta=1$}{
			Sample $p$, $q$ from their respective distributions\; \label{alg:sample_pq}
			Compute $v_i^{n+1}=pv_i^n+qv_j^n$\; \label{alg:interaction.1}
			Compute $v_j^{n+1}=pv_j^n+qv_i^n$\; \label{alg:interaction.2}
		}{
			Set $v_i^{n+1}=v_i^n$\;
			Set $v_j^{n+1}=v_j^n$\;
		}
		Discard $v_i^n,\,v_j^n$ from $\cV^n$\; \label{alg:discard}
	}
	Form the new set of states $\cV^{n+1}:=\{v_1^{n+1},\,v_2^{n+1},\,\dots,\,v_{N_p}^{n+1}\}$\;
	Build a histogram out of the dataset $\cV^{n+1}$ to approximate $f(v,(n+1)\Delta{t})$\;
}
\end{algorithm}

Algorithm~\ref{alg:nanbu} reports the conceptual implementation of a Nanbu--Babovsky-type scheme to approximate numerically the solutions of~\eqref{eq:Boltztype.strong}, namely~\eqref{eq:Boltztype.strong.gen} with interaction rules~\eqref{eq:v'} and $\mu=1$. The sampling of random variables from prescribed distributions invoked in lines~\ref{alg:sample_f0},~\ref{alg:sample_Theta},~\ref{alg:sample_pq} is a major topic in the realm of Monte Carlo methods, which requires in general \textit{ad-hoc} numerical procedures~\cite{pareschi2001ESAIM}. However, most programming languages possess built-in routines that perform samplings from some popular probability distributions, such as e.g., the binomial, normal, beta, uniform, and gamma distributions. Samplings from further distributions can be realised by exploiting simple relationships among random variables. For example:
\begin{enumerate}[label=\roman*)]
\item since a binomial random variable with parameters $n\in\N$, $r\in [0,\,1]$ is the sum of $n$ independent and identically distributed Bernoulli random variables with parameter $r$, one obtains a Bernoulli sample by invoking the routine for the binomial sampling with the desired parameter $r\in [0,\,1]$ and $n=1$;
\item since, by definition, a random variable $X$ has lognormal distribution if $Y=\log{X}$ has normal distribution, one obtains a lognormal sample by first constructing a sample $y_1,\,y_2,\,y_3,\,\dots$ of $Y$ via the routine for the normal sampling and then computing the corresponding values $e^{y_1},\,e^{y_2},\,e^{y_3},\,\dots$ of $X$;
\item if a random variable $X$ has gamma distribution then the random variable $Y=1/X$ has inverse gamma distribution with the same parameters of the distribution of $X$. Therefore, it is possible to obtain a sample distributed according to an inverse gamma law by first sampling some gamma-distributed values $x_1,\,x_2,\,x_3,\,\dots$ of $X$ via the routine for the gamma sampling and then computing the corresponding values $\frac{1}{x_1},\,\frac{1}{x_2},\,\frac{1}{x_3},\,\dots$ of $Y$.
\end{enumerate}

At each time iteration, the formation of random and independent pairs of (potentially) interacting agents can be realised, in practice, with the following simple method: first, one performs a random permutation of the elements of the set $\cV^n=\{v_1^n,\,v_2^n,\,\dots,\,v_{N_p}^n\}$, so that the $v_i^n$'s are mixed randomly and independently. Next, assuming that the permuted elements are relabelled ordinately from $1$ to $N_p$ and taking advantage of the evenness of $N_p$, cf. the section ``Data'' of Algorithm~\ref{alg:nanbu}, one forms the pairs $(v_1^n,\,v_{N_p/2+1}^n)$, $(v_2^n,\,v_{N_p/2+2}^n)$, \dots, $(v_{N_p/2}^n,\,v_{N_p}^n)$.

In order to use Algorithm~\ref{alg:nanbu} for the approximate solution of the $\epsilon$-scaled Boltzmann-type equation in the quasi-invariant regime it is necessary to replace $p,\,q$ with $p_\epsilon,\,q_\epsilon$ in lines~\ref{alg:interaction.1},~\ref{alg:interaction.2}. Moreover, the parameter of the Bernoulli random variable $\Theta$ in line~\ref{alg:sample_Theta} has to be changed in $\Delta{t}/\epsilon$ to take into account the time scaling $t\rightsquigarrow t/\epsilon$ typical of the quasi-invariant regime. Consequently, the constraint on $\Delta{t}$ (cf. the section ``Data'' in Algorithm~\ref{alg:nanbu}) becomes $\Delta{t}\leq\epsilon$. For completeness, we report in Algorithm~\ref{alg:nanbu.eps} the detail also of this case.

\begin{algorithm}[!t]
\caption{Nanbu--Babovsky-type algorithm in the quasi-invariant regime}
\label{alg:nanbu.eps}
\KwData{$N_p\in\mathbb{N}$ (number of agents); $\epsilon>0$; $\Delta{t}\leq\epsilon$; $T>0$ (final time); $f=f_0(v)$}
Sample a set of $N_p$ states $\cV^0:=\{v_1^0,\,v_2^0,\,\dots,\,v_{N_p}^0\}\subseteq\R$ from $f_0$\;
\For{$n=0,\,1,\,2,\,\dots,\,\lfloor\frac{T}{\Delta{t}}\rfloor$}{
	\Repeat{no pairs of states are left in $\cV^n$}{
		Pick randomly and independently $v_i^n,\,v_j^n\in\cV^n$ with $i\neq j$\;
		Sample $\Theta\sim\operatorname{Bernoulli}{(\Delta{t}/\epsilon)}$\;
		\eIf{$\Theta=1$}{
			Sample $p_\epsilon$ from its distribution\;
			Compute $v_i^{n+1}=p_\epsilon v_i^n+q_\epsilon v_j^n$\;
			Compute $v_j^{n+1}=p_\epsilon v_j^n+q_\epsilon v_i^n$\;
		}{
			Set $v_i^{n+1}=v_i^n$\;
			Set $v_j^{n+1}=v_j^n$\;
		}
		Discard $v_i^n,\,v_j^n$ from $\cV^n$\;
	}
	Form the new set of states $\cV^{n+1}:=\{v_1^{n+1},\,v_2^{n+1},\,\dots,\,v_{N_p}^{n+1}\}$\;
	Build a histogram out of the dataset $\cV^{n+1}$ to approximate $f_\epsilon(v,(n+1)\Delta{t})$\; \label{alg:histogram}
}
\end{algorithm}

As usual in time-discrete numerical schemes, the trade-off between speed and accuracy in time of Algorithm~\ref{alg:nanbu} is dictated by the time step $\Delta{t}$: the larger $\Delta{t}$ the quicker but less accurate the algorithm. In particular, we notice that by fixing $\Delta{t}$ to its maximum possible value one obtains $\Theta\sim\operatorname{Bernoulli}{(1)}$, i.e. $\Theta=1$ deterministically, meaning that all the $N_p/2$ pairs of agents interact. Heuristically, such a number of effective interactions entails a large variation of the statistical distribution of the microscopic states in a single iteration of the algorithm, which is ultimately the source of the possibly poor numerical accuracy. In general, instead, only $N_p\Delta{t}/2$ (and $N_p\Delta{t}/(2\epsilon)$ in the quasi-invariant regime) interactions take place, on average, in a single iteration of the algorithm. Therefore, the lower $\Delta{t}$ the lower the number of effective interactions, hence the smaller the variation of the statistical distribution of the microscopic states in one iteration, which allows for a better numerical accuracy.

Another parameter which impacts considerably on the accuracy of Algorithm~\ref{alg:nanbu} is the number $N_p$ of sampled microscopic states. On one hand, this number has to be large enough in order for the statistics reconstructed from $\cV^n=\{v_1^n,\,v_2^n,\,\dots,\,v_{N_p}^n\}$ at each iteration, including the approximation of the kinetic distribution function $f(v,n\Delta{t})$, to be meaningful. On the other hand, the larger $N_p$ the slower the algorithm, because at each iteration $N_p/2$ interactions need to be tested (some of which are probably even ``useless'' if $\Delta{t}<1$ or $\Delta{t}<\epsilon$ in the quasi-invariant regime). Typically, in Monte Carlo schemes one expects the error of the method to scale with the number of samples as $1/\sqrt{N_p}$. In the case of Algorithm~\ref{alg:nanbu}, considering that each sample $v_i^n$ is one-dimensional, a good trade-off between speed and accuracy is usually obtained with $N_p=O(10^6)$. For multi-dimensional samples, like in the case of the original Boltzmann equation, one might be forced to reduce $N_p$ to reach an acceptable speed of the algorithm.

\begin{remark}
Besides the Nanbu--Babovsky scheme, another popular particle-based algorithm to approximate numerically the solutions to~\eqref{eq:Boltztype.strong} is \textit{Bird's scheme}~\cite{bird1970PF}, which differs from Nanbu--Babovsky essentially in that it does not discard already used agents from $\cV^n$, cf. line~\ref{alg:discard} of Algorithm~\ref{alg:nanbu}. Consequently, in Bird's scheme an agent can interact multiple times in a single iteration, whereas in the Nanbu--Babovsky scheme every agent interacts at most once in each iteration. We refer to~\cite{pareschi2001ESAIM} and references therein for more formal details about Bird's scheme and for a closer comparison with the Nanbu--Babovsky scheme.
\end{remark}

\subsection{Numerical tests}
\subsubsection{Advection-diffusion quasi-invariant regime}
\label{sect:num-advdiff}
\begin{figure}[!t]
\centering
\begin{tabular}{c|c|c}
$\boldsymbol{\epsilon=4\cdot 10^{-2}}$ & $\boldsymbol{\epsilon=10^{-2}}$ & $\boldsymbol{\epsilon=10^{-3}}$ \\
\hline
\phantom{I} & \phantom{I} \\
\includegraphics[scale=.32]{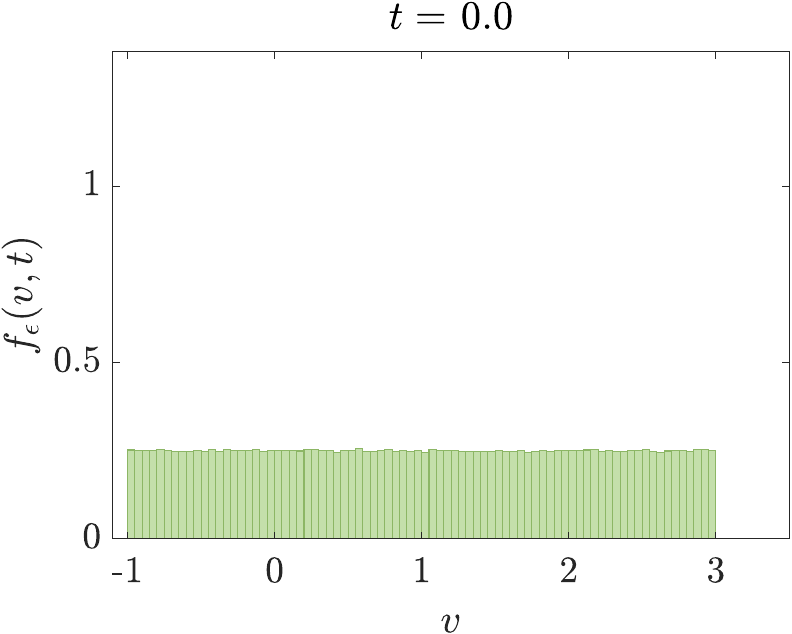} & \includegraphics[scale=.32]{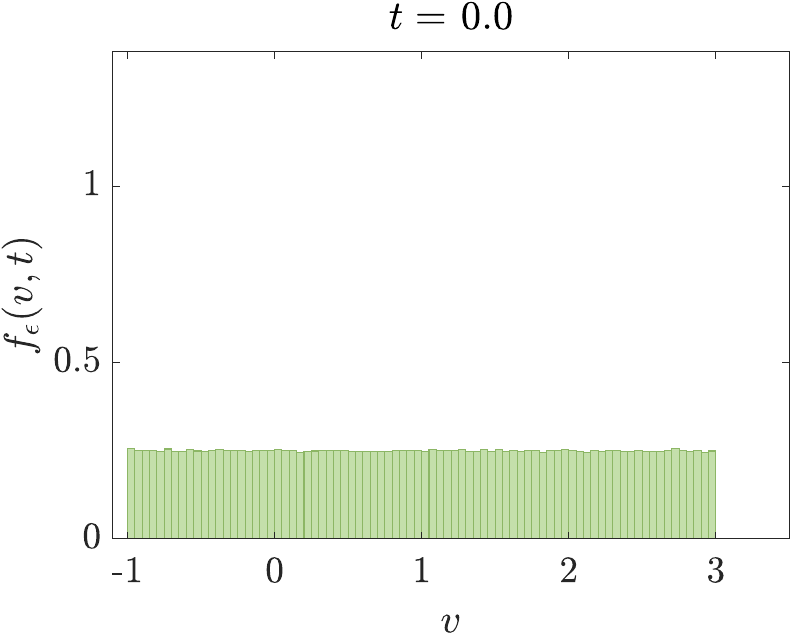} & \includegraphics[scale=.32]{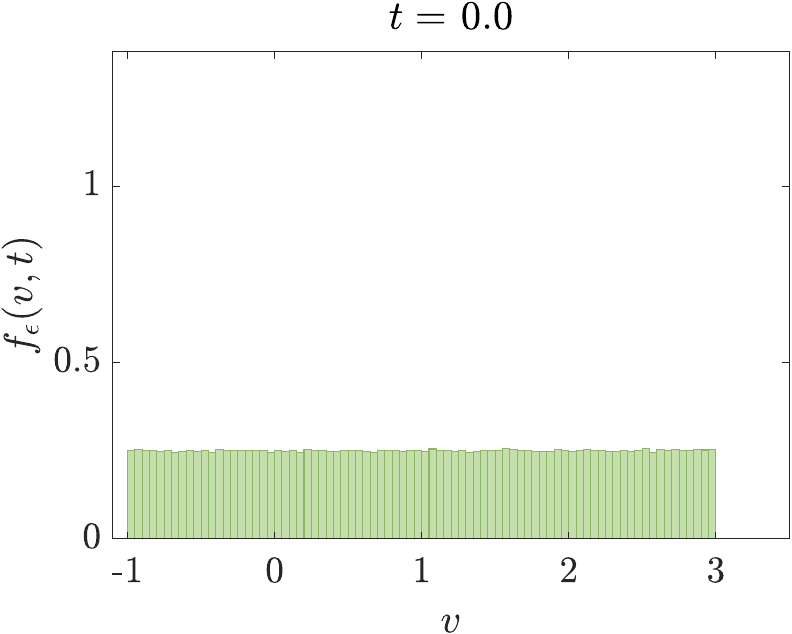} \\
\includegraphics[scale=.32]{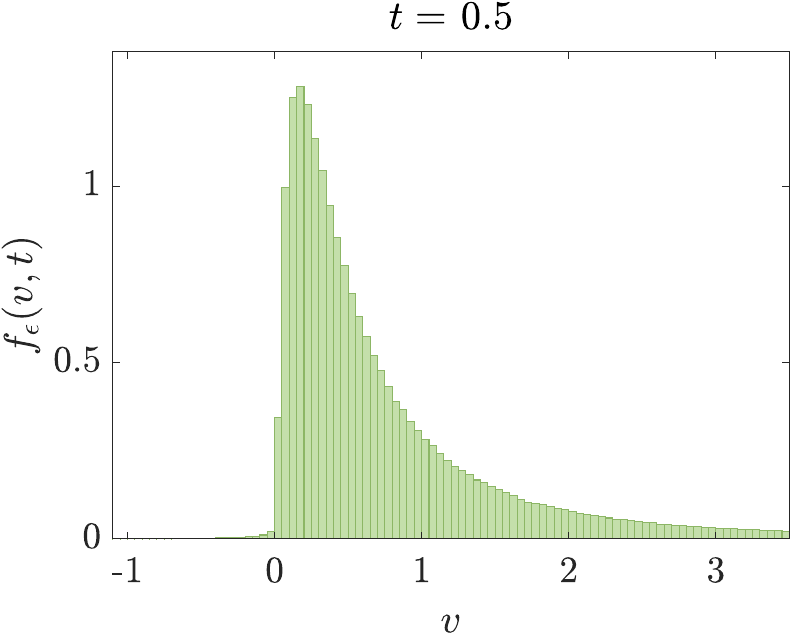} & \includegraphics[scale=.32]{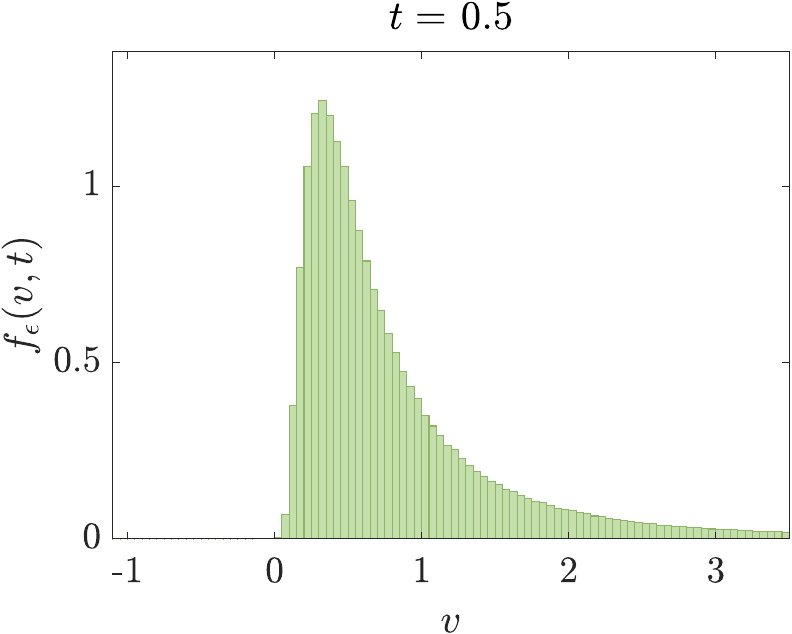} & \includegraphics[scale=.32]{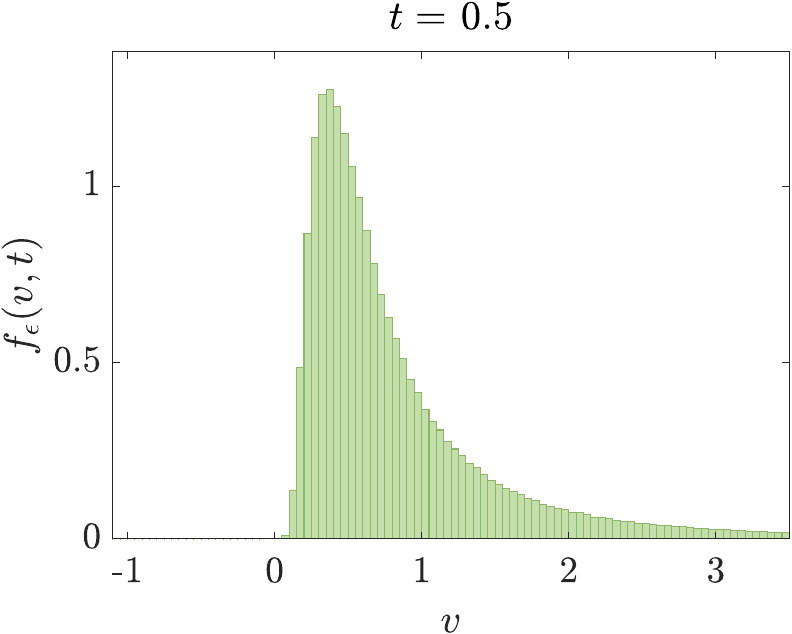} \\
\includegraphics[scale=.32]{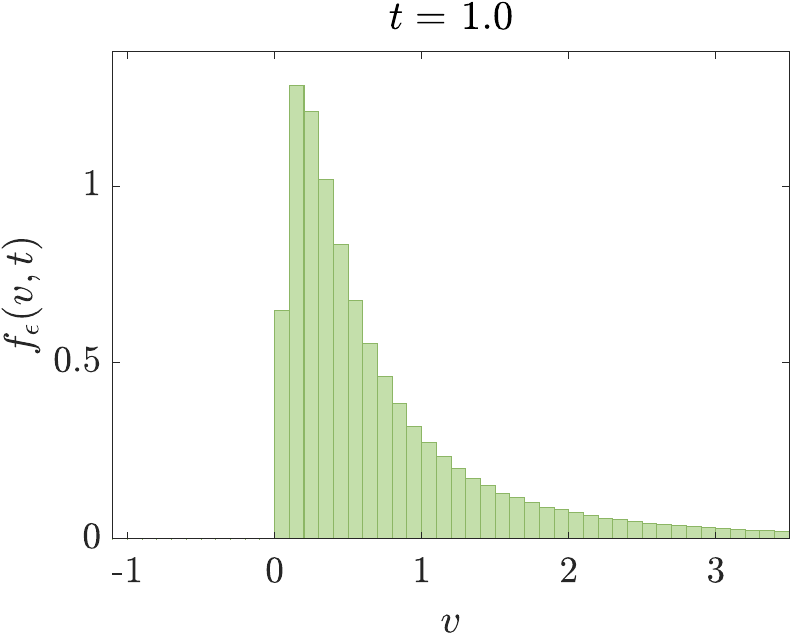} & \includegraphics[scale=.32]{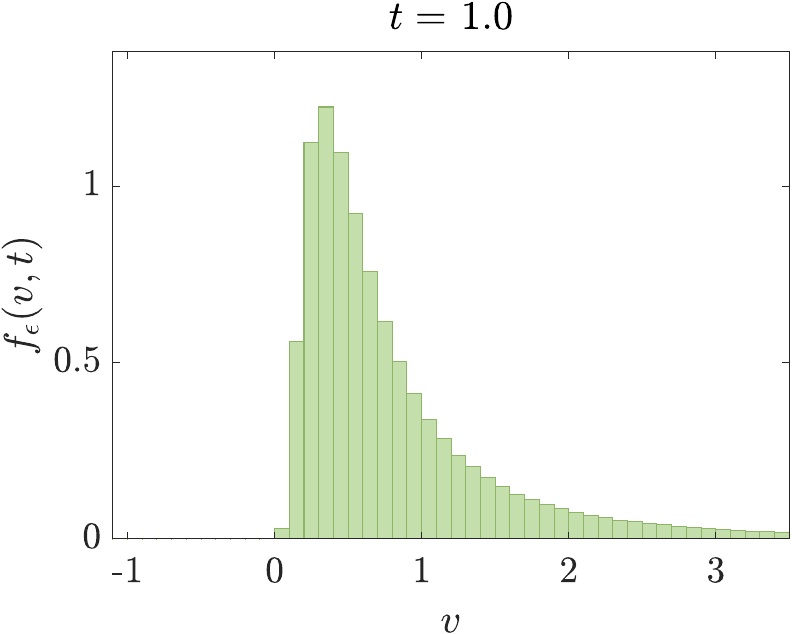} & \includegraphics[scale=.32]{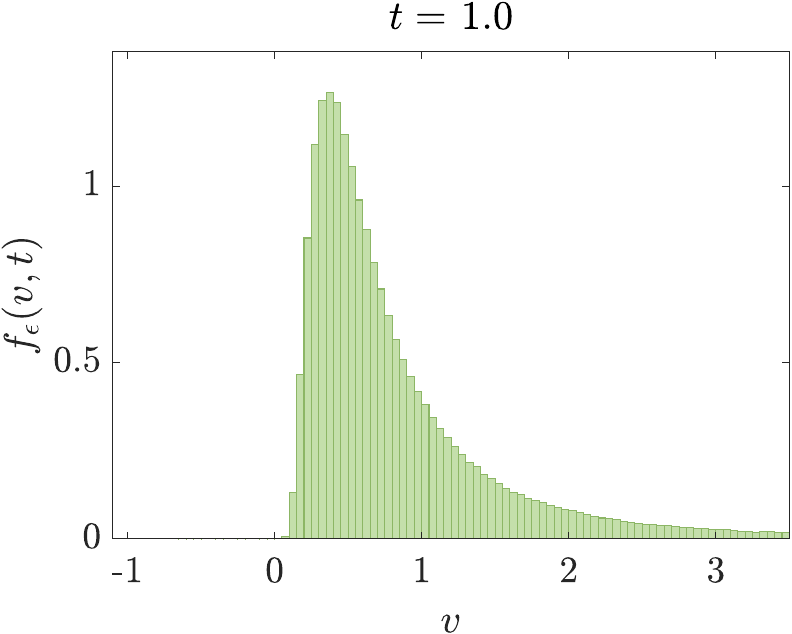} \\
\includegraphics[scale=.32]{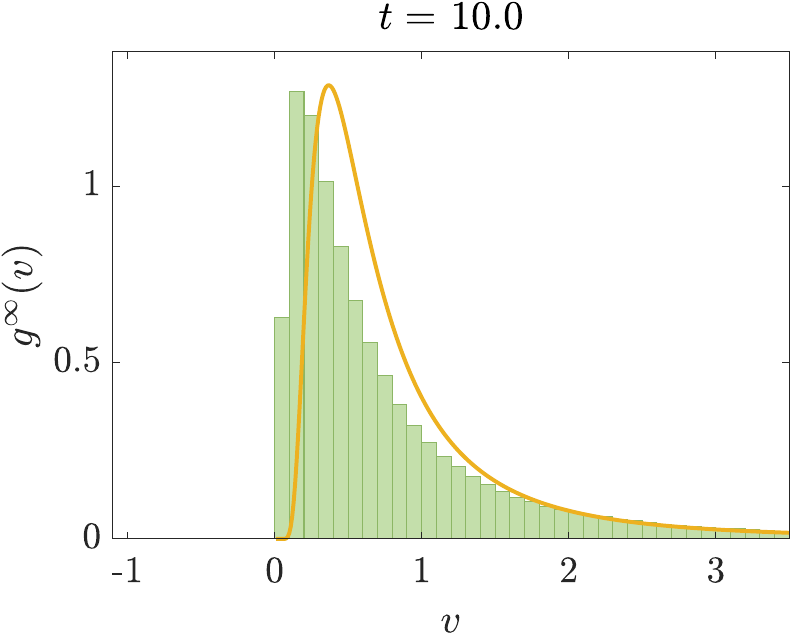} & \includegraphics[scale=.32]{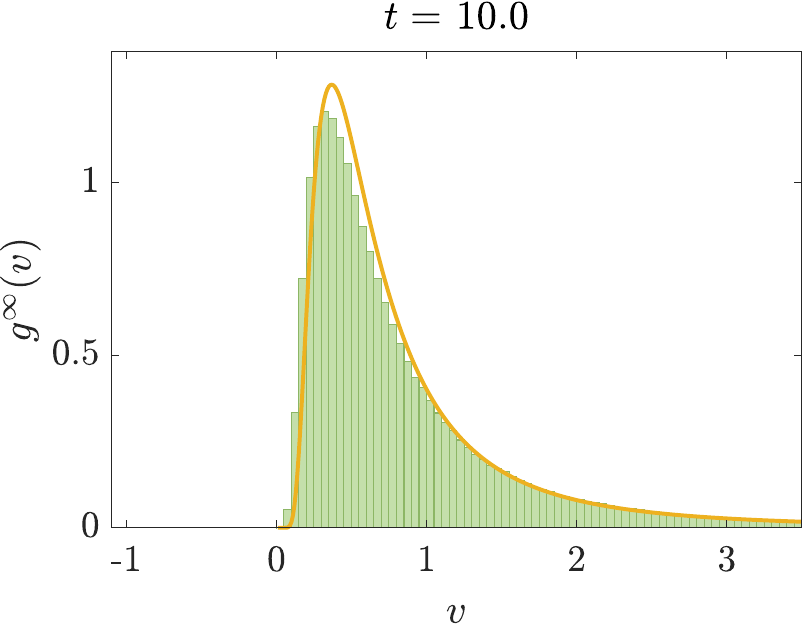} & \includegraphics[scale=.32]{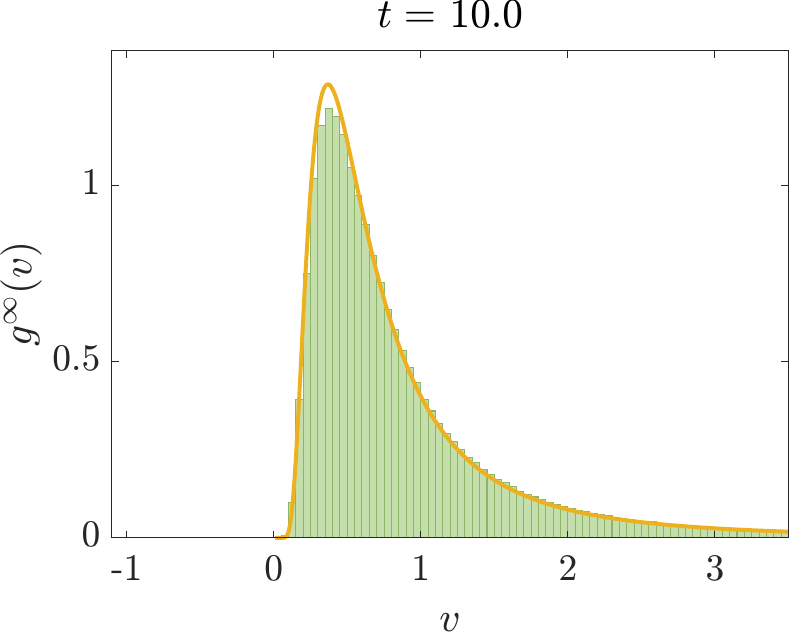} \\
\end{tabular}
\caption{Numerical solution at successive computational times of the Boltzmann-type equation~\eqref{eq:Boltztype.scaled} in the quasi-invariant regime with interaction coefficients~\eqref{eq:qeps},~\eqref{eq:peps.eta} and: left column $\epsilon=4\cdot 10^{-2}$, center column $\epsilon=10^{-2}$, right column $\epsilon=10^{-3}$.}
\label{fig:invgamma}
\end{figure}

In Figure~\ref{fig:invgamma} we show the numerical solution, obtained by means of Algorithm~\ref{alg:nanbu.eps}, of the $\epsilon$-scaled Boltzmann-type equation~\eqref{eq:Boltztype.scaled} with interaction coefficients $p_\epsilon$, $q_\epsilon$ of the form~\eqref{eq:qeps},~\eqref{eq:peps.eta}. At each computational time, the kinetic distribution function is built numerically as a normalised histogram of the computed microscopic states of the particles, cf. line~\ref{alg:histogram} of Algorithm~\ref{alg:nanbu.eps}.

In particular, we choose the parameters $\lambda=\frac{7}{2}$ and $\sigma=\sqrt{6}$, which comply with~\eqref{eq:sigma.lambda}. Moreover, for the random variable $\eta$ featured by $p_\epsilon$ we fix a uniform distribution in the interval $[-\sqrt{3},\,\sqrt{3}]$, so that conditions~\eqref{eq:eta},~\eqref{eq:eta.lowerbound} are satisfied.

Concerning the scaling parameter $\epsilon$, we consider the three cases $\epsilon=4\cdot 10^{-2},\,10^{-2},\,10^{-3}$, all complying with~\eqref{eq:eps.quasi-inv}. The goal is to check that for decreasing $\epsilon$ the equilibrium distribution reached for large times is more and more well approximated by the stationary solution~\eqref{eq:inv_gamma} to the Fokker--Planck equation~\eqref{eq:FP.strong} obtained in the quasi-invariant limit $\epsilon\to 0^+$.

As initial condition we take the uniform probability density $f_0(v)=\frac{1}{4}\chi(-1\leq v\leq 3)$ on the interval $[-1,\,3]$ with $M_{1,0}=1$. Therefore, $f_0$ has positive mean value but is not fully supported in $\R_+$.

We run Algorithm~\ref{alg:nanbu.eps} with $N_p=10^6$ agents up to the final computational time $T=10$. In the cases $\epsilon=10^{-2},\,10^{-3}$ we set $\Delta{t}=\epsilon$, as such a time step is reasonably small for one to expect qualitatively accurate numerical solutions despite the fact that, in each iteration, all the $N_p/2$ pairs of agents interacts. Conversely, in the case $\epsilon=4\cdot 10^{-2}$ we set $\Delta{t}=\frac{\epsilon}{2}$, meaning that in each iteration only half of the pairs of agents interact on average.

From Figure~\ref{fig:invgamma} it is apparent that the support of the distribution $f(t)$ tends to move in time towards $\R_+$, driven by $M_{1,0}>0$. In addition to this, the last row of Figure~\ref{fig:invgamma} shows clearly that the smaller $\epsilon$ the better the qualitative matching between the (numerical approximation of the) large time solution to the $\epsilon$-scaled Boltzmann-type equation~\eqref{eq:Boltztype.scaled} and the inverse gamma distribution~\eqref{eq:inv_gamma} (solid line in the figure).

\subsubsection{Advection-dominated quasi-invariant regime}
\begin{figure}[!t]
\centering
\includegraphics[scale=.32]{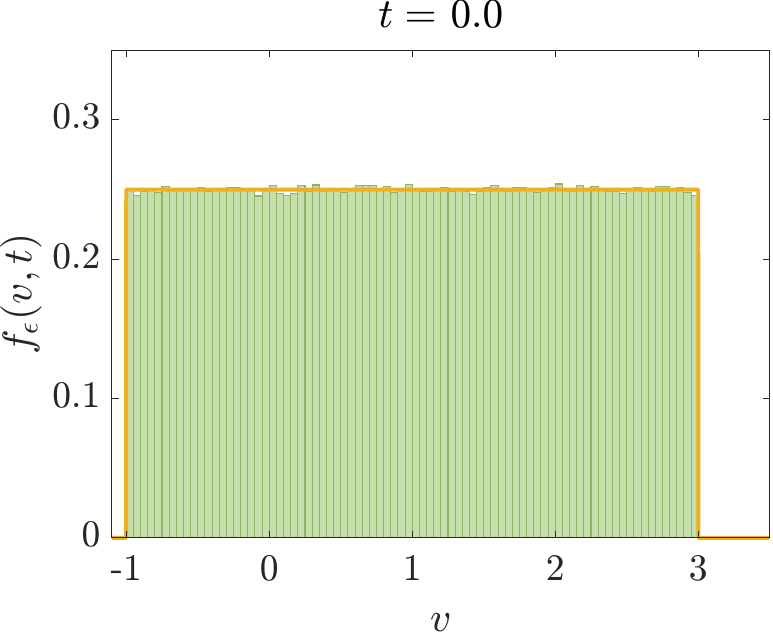} \quad
\includegraphics[scale=.32]{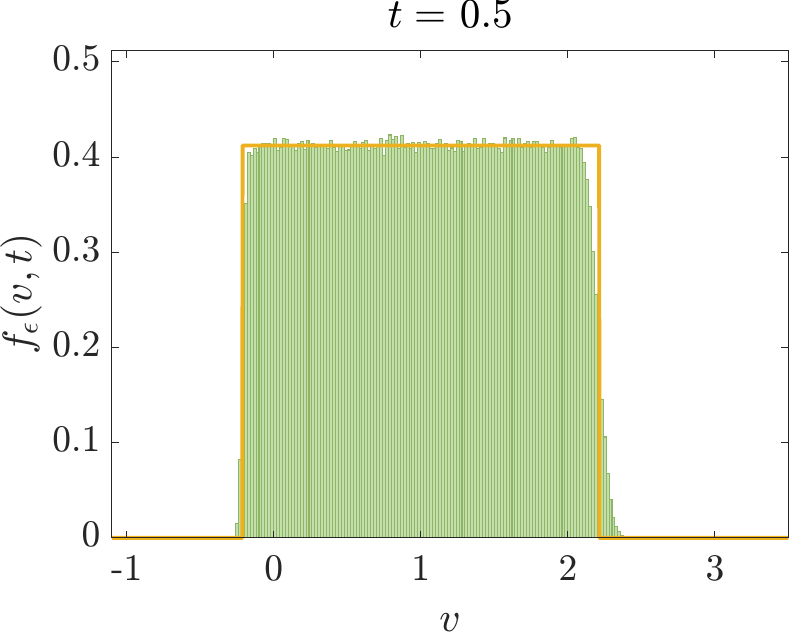} \quad
\includegraphics[scale=.32]{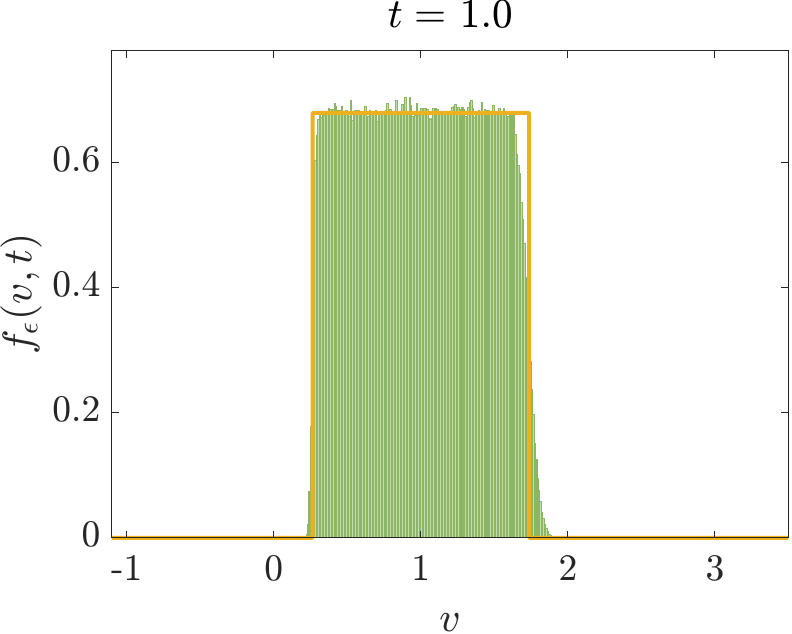} \\
\includegraphics[scale=.32]{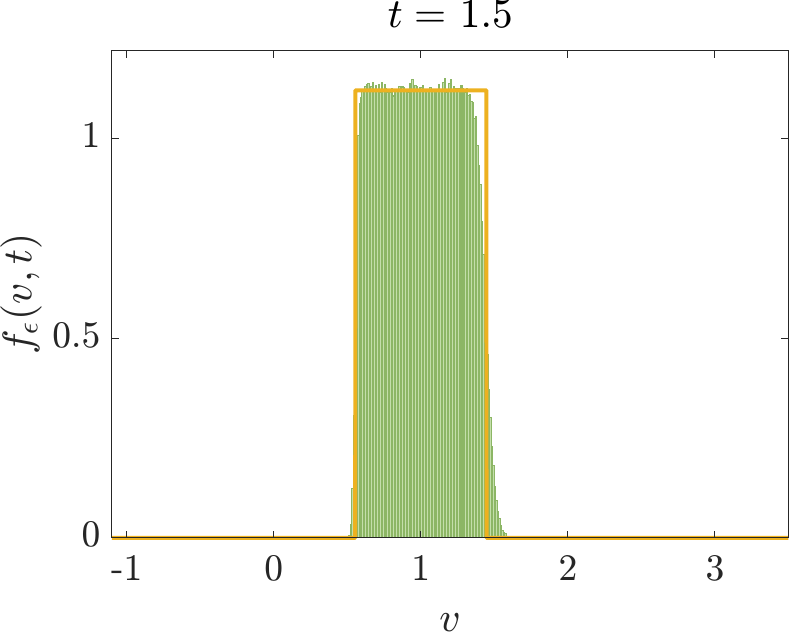} \quad
\includegraphics[scale=.32]{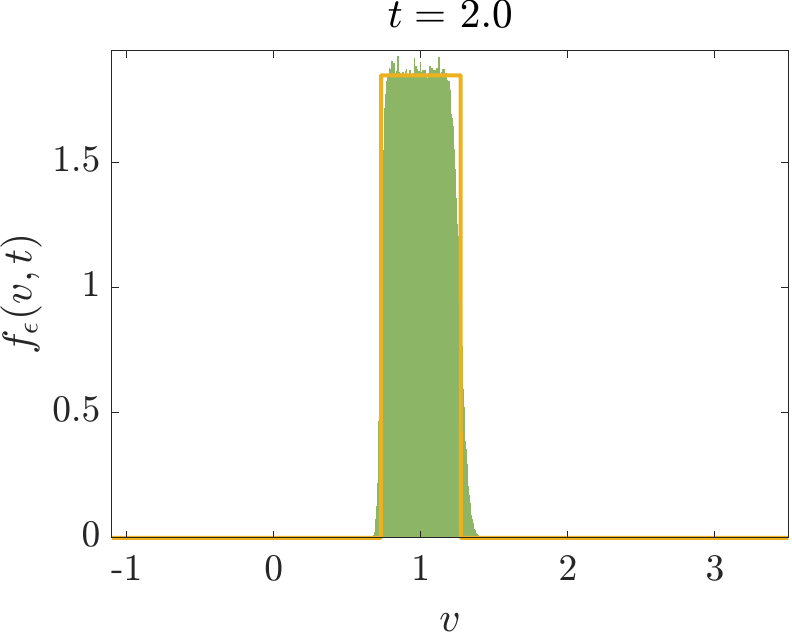} \quad
\includegraphics[scale=.32]{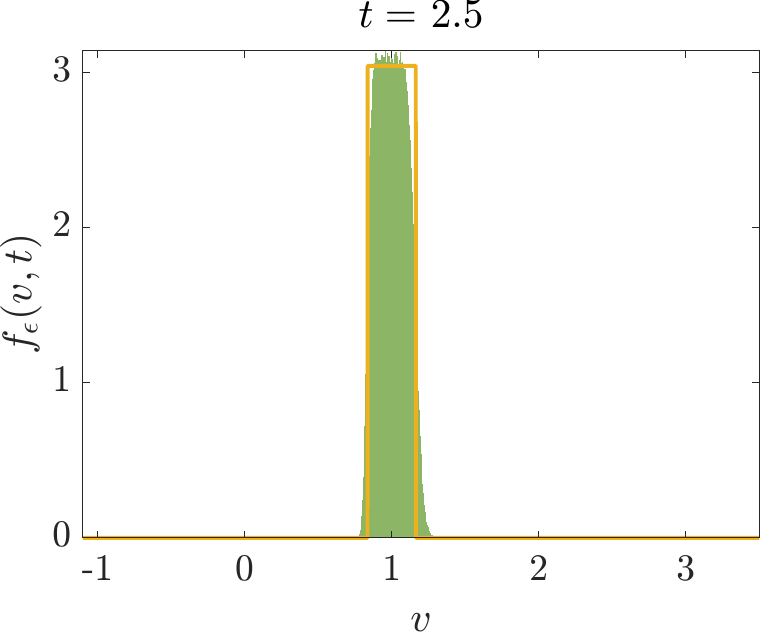} \\
\includegraphics[scale=.32]{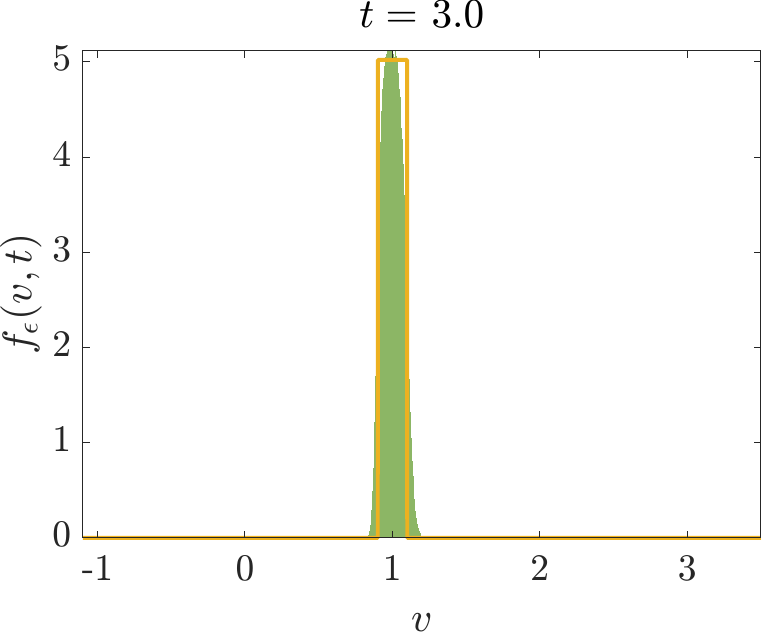} \quad
\includegraphics[scale=.32]{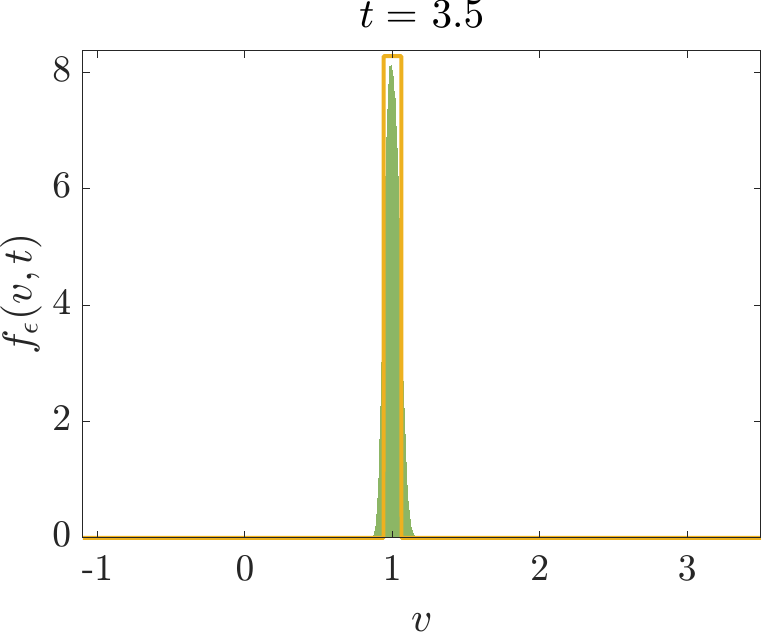} \quad
\includegraphics[scale=.32]{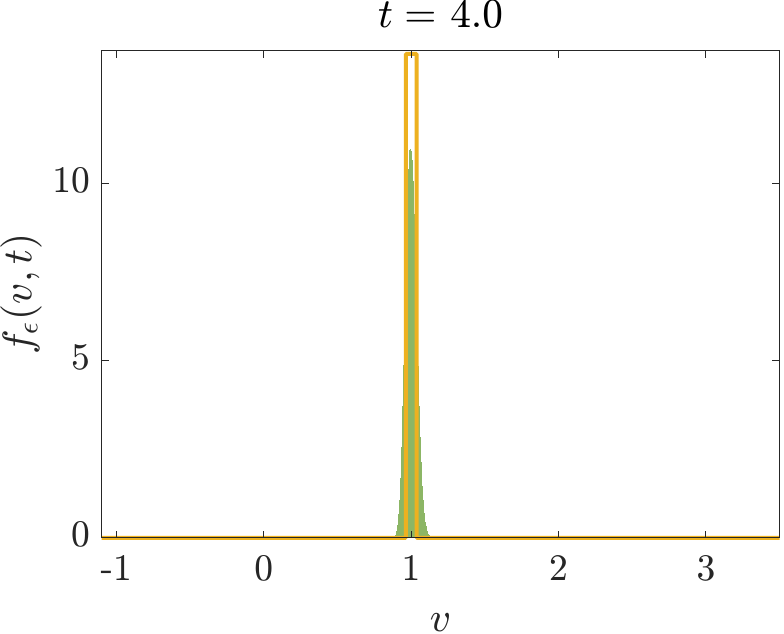}
\caption{Numerical solution at successive computational times of the Boltzmann-type equation~\eqref{eq:Boltztype.scaled-advection} in the advection-dominated quasi-invariant regime, cf. Section~\ref{sect:q.i._advection}, and comparison with the exact solution~\eqref{eq:g.FP-advection} of the limit Fokker--Planck equation~\eqref{eq:FP.strong-advection} (solid line).}
\label{fig:delta}
\end{figure}

In Figure~\ref{fig:delta} we show instead the numerical solution, still obtained by means of Algorithm~\ref{alg:nanbu.eps}, of the $\epsilon$-scaled Boltzmann-type equation~\eqref{eq:Boltztype.scaled-advection} with interaction coefficients satisfying~\eqref{eq:qeps},~\eqref{eq:peps-advection}. In particular, we fix the parameters $\lambda=1$, $\sigma=1.3$, which comply with~\eqref{eq:sigma.lambda.eps_quasi-inv-advection}, and $\delta=1$. Moreover, we take $\epsilon=10^{-3}$, which complies with~\eqref{eq:sigma.lambda.eps_quasi-inv-advection} as well. These values of the parameters ensure also that, setting $p_\epsilon=1-\epsilon\lambda+\epsilon\sigma\eta$ with $\eta\sim\pazocal{U}(-\sqrt{3},\,\sqrt{3})$, it results $p_\epsilon\geq 0$ as needed.

The initial condition $f_0$ and the numerical parameters $N_p$, $\Delta{t}$ of the algorithm are the same as in Section~\ref{sect:num-advdiff}. The final computational time we consider here is instead $T=4$.

Figure~\ref{fig:delta} displays clearly the consistency of the Monte Carlo solution with the exact solution~\eqref{eq:g.FP-advection} in the quasi-invariant limit $\epsilon\to 0^+$, as expected from Theorem~\ref{theo:q.i.-advection}. Notice, however, that the correspondence between the Monte Carlo reconstruction of $f_\epsilon(t)$ and the distribution function $g(t)$ becomes looser and looser as time increases. The ``physical'' reason for this is that whenever $\epsilon>0$ is non-infinitesimal the solution of the $\epsilon$-scaled Boltzmann-type equation is affected by diffusive, viz. \textit{anti-dissipative}, effects due to $\Var{p_\epsilon}>0$, cf.~\eqref{eq:peps-advection}, which, as observed at the end of Section~\ref{sect:q.i._advection}, cause ultimately the Dirac delta~\eqref{eq:ginf.FP-advection} \textit{not} to be a Maxwellian of~\eqref{eq:Boltztype.scaled-advection}. It is only in the limit $\epsilon\to 0^+$ that diffusive effects disappear, owing to $\Var{p_\epsilon}\propto\epsilon^{1+\delta}=o(\epsilon)$. The presence of diffusive effects in the numerical reconstruction of $f_\epsilon(t)$ is apparent in Figure~\ref{fig:delta} by observing that the support of the histogram tends to be invariably larger and less sharp than that of $g(t)$.

\subsubsection{Quasi-invariant regime with conserved energy}
The scaling discussed in Section~\ref{sect:q.i.-cons_en} is hard to explore numerically, because when $p_\epsilon,\,q_\epsilon$ are given by~\eqref{eq:peps.qeps-cons_en} and $\epsilon$ is small the point $0$ is an \textit{unstable} equilibrium of the equation for the mean value $M_{1,\epsilon}$ of $f_\epsilon$. Indeed, we recall that (cf. Section~\ref{sect:moments_evol})
$$ \frac{dM_{1,\epsilon}}{dt}=\left(\ave{p_\epsilon+q_\epsilon}-1\right)M_{1,\epsilon} $$
and, in view of~\eqref{eq:peps.qeps-cons_en}, we notice that
$$ \ave{p_\epsilon+q_\epsilon}=1+\sqrt{(2\lambda-\sigma^2)\epsilon}+o(\sqrt{\epsilon}), $$
i.e. $\ave{p_\epsilon+q_\epsilon}-1>0$ for $\epsilon$ small enough, which causes $\abs{M_{1,\epsilon}(t)}\to +\infty$ for $t\to +\infty$ if $M_{1,\epsilon}(0)\neq 0$. This implies in turn that not even the energy $M_{2,\epsilon}$ remains constant in time despite condition $\ave{p_\epsilon^2+q_\epsilon^2}=1$, indeed from Section~\ref{sect:moments_evol} we deduce
$$ \frac{dM_{2,\epsilon}}{dt}=2\ave{p_\epsilon q_\epsilon}M_{1,\epsilon}^2\neq 0. $$
Another way to see this issue is considering that the constraint $M_{2,\epsilon}\geq M_{1,\epsilon}^2$ forces $M_{2,\epsilon}(t)\to +\infty$ when $t\to +\infty$.

Consequently, any numerical error in the computation of $M_{1,\epsilon}$, such as e.g., the possible error in the initial value $M_{1,\epsilon}(0)$ produced by the sampling of $f_0$, propagates dramatically through the iterations of the Monte Carlo algorithm, driving the numerical solution of Algorithm~\ref{alg:nanbu.eps} far from the regime of constantly null mean value and constant energy at the basis of the results of Section~\ref{sect:q.i.-cons_en}.

For these reasons, we refrain from producing numerical simulations of the conserved energy regime of Section~\ref{sect:q.i.-cons_en}, which would require advanced techniques to stabilise the numerical computation of the mean value.

\section{Boltzmann-type kinetic equations on graphs}
\label{sect:graph}
In this section, we show that the tools and methods discussed so far can be fruitfully employed to address Boltzmann-type equations \textit{on graphs}, a recent extension of the standard kinetic approach conceived to model \textit{networked} multi-agent systems~\cite{loy2021KRM}.

The prototypical model inspiring this theoretical development consists in a finite collection of locations, represented by the vertices of the graph, populated by agents that, besides interacting within a given location, migrate from one location to another based on the available connections among the locations. The connections are described by the edges of the graph. Typically, the vertices represent instead spatial locations, such as e.g., cities, regions, countries depending on the spatial scale of interest. See~\cite{loy2021MBE} for a specific application to the transmission of an infectious disease. Ideally, in each vertex a Boltzmann-type kinetic equation describes the evolution of the statistical distribution of the state of the agents due to binary interactions taking place in that vertex. Nevertheless, since in each vertex agents also come and go following the migration process, the Boltzmann-type equation of a given vertex is coupled to those of the (topologically) adjacent vertices, which are either the origins or the destinations of the migrating agents. Hence, on the whole, a \textit{system} of Boltzmann-type equations is produced incorporating terms which account for \textit{mass transfers} among the equations.

\subsection{Derivation of the equations}
We consider a large system of interacting agents located in the vertices of a \textit{finite} graph, through which they can migrate. The agents are identified by the usual microscopic variable $v$, which changes in consequence of binary interactions, but also by the vertex of the graph where they are located. In particular, the agents are indistinguishable within every given vertex of the graph.

We represent the graph as a triplet $(\cI,\,\cE,\,\bA)$, where $\cI=\{1,\,\dots,\,N\}\subset\mathbb{N}$ is the set of vertices, which are $N\in\N$ in total, $\cE\subset\cI\times\cI$ is the set of edges, and $\bA=(A_{ij})_{i,\,j\in\cI}$ is the $N\times N$ real-valued matrix of non-negative weights assigned to each edge of the graph. By means of $\bA$ we define the further $N\times N$ matrix $\bP$, whose entries are
$$  P_{ij}:=\frac{A_{ij}}{\displaystyle\sum_{i=1}^NA_{ij}}\in [0,\,1], \qquad i,\,j\in\cI. $$
Clearly,
\begin{gather}
    0\leq P_{ij}\leq 1, \qquad \forall\,i,\,j \in \cI, \label{eq:graph.Pij_01} \\
    \sum_{i=1}^{N}P_{ij}=1, \qquad \forall\,j\in \cI, \label{eq:graph.Pij_sum}
\end{gather}
so that $\bP$ is a left stochastic matrix. It is called the \textit{transition matrix}, because it can be seen as a transition probability distribution defined on the vertices of the graph, which describes the probability to migrate from one vertex to another:
$$ P_{ij}=\Prob{j\to i}, $$
i.e. the entry $(i,\,j)$ of $\bP$ is the probability for an agent to migrate from vertex $j\in\cI$ to vertex $i\in\cI$. Notice that, with this description, the migration of the agents on the graph turns out to be a Markov-type jump process.

We shall consider \textit{strongly connected graphs}, meaning that there exists at least one directed path connecting any two vertices. Equivalently, we say that the matrix $\bP$ is \emph{irreducible}.

Agent dynamics on the graph can be described as follows. Each agent is identified by its microscopic state $V_t\in\R_+$ at time $t$ and also by the vertex $X_t\in\cI$ where it is located at the same time $t$. Similarly to Section~\ref{sect:stat_descr}, we describe the evolution of the state $(X_t,\,V_t)$ by means of discrete-in-time random processes with the following update rules:
\begin{subequations}
    \begin{align}
        V_{t+\Delta{t}} &= \left(1-\Theta_{X_t}\delta_{X_t,X^\ast_t}\right)V_t+\Theta_{X_t}\delta_{X_t,X^\ast_t}V'_t, \label{eq:graph.Vt} \\
        X_{t+\Delta{t}} &= (1-\Xi)X_t+\Xi J_t. \label{eq:graph.Xt}
    \end{align}
    \label{eq:graph.particle.gen}
\end{subequations}

Equation~\eqref{eq:graph.Vt} describes the update of the pre-interaction state $V_t$ produced by a symmetric binary interaction with another agent with pre-interaction state $(X_t^\ast,\,V_t^\ast)$. Similarly to~\eqref{eq:Vt+Dt}, $\Theta_{X_t}\in\{0,\,1\}$ is the Bernoulli random variable taking into account whether a binary interaction between two agents in $X_t$ occurs or not. We consider the case in which its law includes explicitly an interaction rate $\mu_{X_t}>0$, cf.~\eqref{eq:Bern.mu}, which here we assume to depend on the vertex $X_t$:
$$ \Theta_{X_t}\sim\operatorname{Bernoulli}(\mu_{X_t}\Delta{t}). $$
The Kronecker delta $\delta_{X_t,X^\ast_t}$, defined as
$$	\delta_{X_t,X^\ast_t}:=
		\begin{cases}
			1 & \text{if } X_t=X^\ast_t \\
			0 & \text{otherwise},
		\end{cases} $$
expresses the fact that only individuals within the same vertex can interact. The random variable $V'_t\in\R$ is the microscopic state after a binary interaction. In the same spirit as~\eqref{eq:linsymint}, it is defined as
$$ V'_t=p_{X_t}V_t+q_{X_t}V^\ast_t, $$
where also the mixing parameters $p_{X_t}$, $q_{X_t}$ can depend on the vertex $X_t$. 

In~\eqref{eq:graph.Xt}, $\Xi\in\{0,\,1\}$ is a second Bernoulli random variable, independent of $\Theta_{X_t}$, discriminating whether a vertex jump takes place ($\Xi=1$) or not ($\Xi=0$) in the time step $\Delta{t}$. Specifically, we let
$$ \Xi\sim\operatorname{Bernoulli}(\chi\Delta{t}), $$
$\chi>0$ being the rate of migration through the vertices. Clearly, we assume $\Delta{t}\leq\min\{\frac{1}{\chi},\,\frac{1}{\mu}\}$ for consistency. Furthermore, $J_t\in\cI$ is a random variable indicating the new vertex after a jump, with
$$ \Prob{J_t=i\vert X_t=j}=\Prob{j\to i}=P_{ij}, \qquad i,\,j\in\cI. $$ 

In order to derive the kinetic description of these microscopic dynamics, we introduce the kinetic distribution function on the graph, say $f=f(x,v,t)\geq 0$, with $x\in\cI$, $v\in\R_+$, $t\geq 0$, such that $f(x,v,t)\,dv$ gives the proportion of agents that at time $t$ are in the vertex $x$ with a microscopic state beloging to $[v,\,v+dv]$. Since the total number of agents in the graph is constant in time, $f$ can be normalised to a probability density:
\begin{equation}
    \sum_{x\in\cI}\int_\R f(x,v,t)\,dv=1, \qquad \forall\,t\geq 0.
    \label{eq:graph.f_int}
\end{equation}
Moreover, since the variable $x$ indicating the vertex of the graph is discrete, we can give $f$ the following form:
\begin{equation}
    f(x,v,t)=\sum_{i=1}^{N}f_i(v,t)\delta_{x,i},
    \label{def:graph.f}
\end{equation}
where $\delta_{x,i}$ is again a Kronecker delta while $f_i=f_i(v,t)\geq 0$ is the kinetic distribution function of the microscopic state $v$ in vertex $i$ at time $t$. Owing to~\eqref{eq:graph.f_int}, we deduce
\begin{equation}
    \sum_{i=1}^{N}\int_{\R}f_i(v,t)\,dv=1, \qquad \forall\,t\geq 0,
    \label{eq:graph.fi_sum_int}
\end{equation}
which highlights that the $f_i$'s are in general \textit{not} probability densities in $v$, as their integrals are in principle not unitary and possibly variable in time. We denote by
\begin{equation}
    \rho_i(t):=\int_{\R}f_i(v,t)\,dv
    \label{eq:graph.rhoi}
\end{equation}
the \textit{density} of the agents in vertex $i$ at time $t$. We shall sometimes call $\rho_i$ the mass carried by the distribution function $f_i$.

To obtain a Boltzmann-type equation for $f$, we apply a procedure analogous to that of Section~\ref{sect:stat_descr}. Specifically, we pick an observable quantity (test function) $\phi=\phi(x,v):\cI\times\R_+\to\R$ and carry out the same computations, considering the microscopic dynamics~\eqref{eq:graph.particle.gen} along with the definition of $f$ given in~\eqref{def:graph.f}. We obtain then:
\begin{align*}
	\frac{d}{dt}\sum_{i=1}^{N}\int_{\R}\phi(i,v)f_i(v,t)\,dv &= \chi\sum_{i=1}^{N}\sum_{j=1}^{N}\int_{\R}\phi(i,v)P_{ij}f_j(v,t)\,dv \\
	&\phantom{=} +\sum_{i=1}^{N}\mu_i\int_{\R}\int_{\R}\ave{\phi(i,v_i')}f_i(v,t)f_i(v_\ast,t)\,dv\,dv_\ast \\
	&\phantom{=} -\sum_{i=1}^{N}(\chi+\mu_i)\int_{\R}\phi(i,v)f_i(v,t)\,dv,
\end{align*}
which is required to hold for every $\phi$. Choosing $\phi(x,v)=\psi(x)\varphi(v)$ with $\psi$ such that $\psi(i)=1$ for a certain $i\in\cI$ and $\psi(x)=0$ for all $x\in\cI\setminus\{i\}$, 
we end up with the following system of Boltzmann-type equations for the $f_i$'s in weak form:
\begin{align}
    \begin{aligned}[b]
        \frac{d}{dt}\int_{\R}\varphi(v)f_i(v,t)\,dv &= \chi\int_{\R}\varphi(v)\left(\sum_{j=1}^{N}P_{ij}f_j(v,t)-f_i(v,t)\right)dv \\
        &\phantom{=} +\mu_i\int_{\R}\varphi(v)Q_i(f_i,f_i)(v,t)\,dv, \qquad i=1,\,\dots,\,N,
    \end{aligned}
    \label{eq:graph.Boltztype.weak}
\end{align}
where $Q_i$ denotes the collisional operator in the $i$th vertex of the graph, while the first term on the right-hand side is a \textit{jump operator} characterised by the Markovian transition matrix $\bP$. The analytical expression of $Q_i$ is analogous to that given in~\eqref{eq:Q.strong},~\eqref{eq:Q.weak} but for the fact that now the coefficients of the interaction rule implemented in $Q_i$ are in general vertex-dependent:
\begin{equation}
    v_i'=p_iv+q_iv_\ast,
    \label{eq:graph.intrules}
\end{equation}
with $p_i,\,q_i\geq 0$ random parameters. Due to this, the expression of the collisional operator changes formally from vertex to vertex. Notice that also the collision rates $\mu_i>0$ are assumed to be, in general, vertex-dependent.

\begin{remark}[Notation]
In the sequel, we shall write
$$ \bbf(v,t):=(f_1(v,t),\,f_2(v,t),\,\dots,\,f_N(v,t)) $$
for the vector-valued distribution function of the whole graph $\bbf:\R\times [0,\,+\infty)\to\R_+^N$. We shall say that $\bbf(\cdot,t)\in (L^r(\R))^N$ for some $r\in\N$ and $t>0$ if $f_i(\cdot,t)\in L^r(\R)$ for all $i=1,\,\dots,\,N$. In such a case, we shall denote by
$$ \norm{\bbf(t)}{(L^r)^N}:=\left(\sum_{i=1}^{N}\norm{f_i(t)}{L^r}^r\right)^{1/r} $$
the norm of $\bbf(\cdot,t)$ in $(L^r(\R))^N$. Moreover, we shall occasionally use the following quantities:
$$ \ubar{p}:=\min_{i=1,\,\dots,\,N}p_i, \qquad \ubar{q}:=\min_{i=1,\,\dots,\,N}q_i, \qquad
    \bar{\mu}:=\max_{i=1,\,\dots,\,N}\mu_i. $$
\end{remark}

\subsection{Evolution of mass and momentum in the vertices}
Letting $\varphi\equiv 1$ in~\eqref{eq:graph.Boltztype.weak}, we obtain a system of equations for the evolution in time of the densities $\{\rho_i\}_{i=1}^{N}$ of the agents in the vertices of the graph. Since intra-vertex interactions conserve the number of agents, i.e.
$$ \int_{\R}Q_i(f_i,f_i)(v,t)\,dv=0, \qquad \forall\,i=1,\,\dots,\,N, $$
we get
\begin{equation}
    \frac{d\rho_i}{dt}=\chi\left(\sum_{j=1}^{N}P_{ij}\rho_j-\rho_i\right).
    \label{eq:graph.system_rhoi}
\end{equation}
This is a linear system of ordinary differential equations that can be put in the vector-matrix form
$$ \frac{d\brho}{dt}=\chi(\bP-\bI)\brho, $$
where $\brho=\brho(t):=(\rho_1(t),\,\dots,\,\rho_N(t))$ and  $\bI$ denotes the $N\times N$ identity matrix. From the basic theory of linear ODE systems it is well known that there exists a unique solution issuing from a prescribed initial condition $\brho_0\in\R^N_+$, which can be written via the matrix exponential as $\brho(t)=e^{\chi(\bP-\bI)t}\brho_0$, $t>0$.

Owing to~\eqref{eq:graph.rhoi} and to the non-negativity of $f_i$, in principle it results $\rho_i(t)=\norm{f_i(t)}{L^1}$, therefore it is important to ensure that~\eqref{eq:graph.system_rhoi} admits only non-negative solutions at least when the initial condition $\brho_0$ is non-negative. This consistency result is provided by the following:
\begin{proposition} \label{prop:graph.rhoi_nonneg}
Let $\brho=\brho(t)$ be the unique solution to~\eqref{eq:graph.system_rhoi} issuing from an initial condition $\brho_0\in\R^N_+$. Then $\rho_i(t)\geq 0$ for all $t>0$ and all $i=1,\,\dots,\,N$. Moreover, if $\rho_{i,0}>0$ for some $i$ then $\rho_i(t)>0$ for all $t>0$.
\end{proposition}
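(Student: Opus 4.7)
The plan is to derive a Volterra-type integral representation for each $\rho_i$ and exploit its sign structure, since the matrix $\chi(\bP-\bI)$ has non-negative off-diagonal entries (it is a Metzler matrix, owing to~\eqref{eq:graph.Pij_01}), which is the classical setting where linear ODE flows preserve the non-negative orthant.

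First I would rewrite the $i$-th scalar equation of~\eqref{eq:graph.system_rhoi} as
\begin{equation*}
    \frac{d\rho_i}{dt}+\chi\rho_i=\chi\sum_{j=1}^{N}P_{ij}\rho_j,
\end{equation*}
multiply by the integrating factor $e^{\chi t}$, and integrate on $[0,\,t]$ to obtain the mild form
\begin{equation*}
    \rho_i(t)=e^{-\chi t}\rho_{i,0}+\chi\int_0^t e^{-\chi(t-s)}\sum_{j=1}^{N}P_{ij}\rho_j(s)\,ds.
\end{equation*}
This representation makes the sign-preserving structure transparent, since the only potentially negative ingredients are the $\rho_j(s)$'s themselves under the integral.

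Next I would turn this into a non-negativity statement via a Picard-iteration argument (or, equivalently, via the matrix-exponential expansion). Defining the iterates $\rho_i^{(0)}(t):=e^{-\chi t}\rho_{i,0}\geq 0$ and $\rho_i^{(n+1)}(t):=e^{-\chi t}\rho_{i,0}+\chi\int_0^t e^{-\chi(t-s)}\sum_{j=1}^{N}P_{ij}\rho_j^{(n)}(s)\,ds$, the coefficients $P_{ij}$ are non-negative by~\eqref{eq:graph.Pij_01}, hence the iteration preserves non-negativity. Since $\chi(\bP-\bI)$ is Lipschitz, Banach's fixed-point theorem gives uniform convergence of $\brho^{(n)}$ to the unique solution $\brho$ on any finite interval, and non-negativity passes to the limit. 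Alternatively, one can observe directly that $e^{\chi(\bP-\bI)t}=e^{-\chi t}\sum_{k\geq 0}\frac{(\chi t)^k}{k!}\bP^k$ has non-negative entries because $\bP^k$ does, so $\brho(t)=e^{\chi(\bP-\bI)t}\brho_0\in\R_+^N$.

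Finally, for the strict positivity claim, once the first part is established all terms under the integral sign in the mild form are non-negative, so dropping them yields the pointwise bound
\begin{equation*}
    \rho_i(t)\geq e^{-\chi t}\rho_{i,0},\qquad\forall\,t>0,
\end{equation*}
which is strictly positive whenever $\rho_{i,0}>0$. There is no real obstacle here: the only mildly delicate point is the chicken-and-egg feature that the integrated form uses $\rho_j(s)\geq 0$, which is exactly what the Picard-iteration (or matrix-exponential) step establishes first. Note that the stronger fact that $\rho_i(t)>0$ for every $i$ and every $t>0$, provided at least one $\rho_{j,0}$ is positive, would require the strong connectedness (irreducibility) of $\bP$ via a Perron--Frobenius-type argument, but the statement of the proposition only requires the weaker local positivity propagation, which follows directly from the inequality above.
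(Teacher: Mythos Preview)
Your argument is correct. Both the Picard-iteration route and the matrix-exponential expansion $e^{\chi(\bP-\bI)t}=e^{-\chi t}\sum_{k\geq 0}\frac{(\chi t)^k}{k!}\bP^k$ (which works because $\bI$ and $\bP$ commute) give the non-negativity cleanly, and your strict-positivity bound $\rho_i(t)\geq e^{-\chi t}\rho_{i,0}$ is exactly what the paper obtains as well.

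The paper takes a genuinely different route for the non-negativity part: it decomposes $\rho_i=\rho_i^+-\rho_i^-$, multiplies the $i$-th equation by $\rho_i^-$, and exploits $\rho_i^+\rho_i^-=0$ to derive a differential inequality for $u(t):=\sum_i(\rho_i^-)^2$ of the form $\dot{u}\leq 2\chi(N-1)u$, whence $u(t)\leq u(0)e^{2\chi(N-1)t}=0$. Your approach is more elementary and tailored to the linear, constant-coefficient structure (the Metzler property of $\chi(\bP-\bI)$ is really all that is at play), and it yields the exact exponential bound immediately. The paper's negative-part energy method is a bit heavier here but has the advantage of extending to settings where no explicit representation formula is available---for instance nonlinear systems or equations with time-dependent coefficients---which is presumably why the authors favour it in a text that later treats more complicated kinetic equations. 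For the second assertion both proofs coincide: once non-negativity is in hand, the differential inequality $\dot{\rho}_i\geq -\chi\rho_i$ gives $\rho_i(t)\geq\rho_{i,0}e^{-\chi t}$.
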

\begin{proof}
Let $\rho_i^+:=\max\{\rho_i,\,0\}\geq 0$ and $\rho_i^-:=\max\{0,\,-\rho_i\}\geq 0$ be the positive and negative parts, respectively, of $\rho_i$. Writing $\rho_i=\rho_i^+-\rho_i^-$ and multiplying~\eqref{eq:graph.system_rhoi} by $\rho_i^-$ gives, after rearranging the terms,
$$ \frac{d}{dt}(\rho_i^-)^2+2\chi(\rho_i^-)^2=-2\chi\sum_{j=1}^{N}P_{ij}\rho_j\rho_i^-, $$
where we have used that $\rho_i^+\rho_i^-=0$ and $\frac{d\rho_i^+}{dt}\rho_i^-=0$. Writing also $\rho_j=\rho_j^+-\rho_j^-$ we find further
$$ \frac{d}{dt}(\rho_i^-)^2+2\chi(\rho_i^-)^2=-2\chi\sum_{j=1}^{N}P_{ij}\rho_j^+\rho_i^-+2\chi\sum_{j=1}^{N}P_{ij}\rho_j^-\rho_i^-
    \leq 2\chi\sum_{j=1}^{N}\rho_j^-\rho_i^-, $$
whence, summing both sides over $i=1,\,\dots,\,N$ and setting $u:=\sum_{i=1}^{N}(\rho_i^-)^2$ for brevity,
$$ \frac{du}{dt}+2\chi u\leq 2\chi\left(\sum_{i=1}^{N}\rho_i^-\right)^2\leq 2\chi Nu, $$
the last inequality following from the Cauchy-Schwartz inequality. Thus $u(t)\leq u(0)e^{2\chi(N-1)t}$, which in terms of the $\rho_i^-$'s entails
$$ \sum_{i=1}^{N}(\rho_i^-)^2(t)\leq e^{2\chi(N-1)t}\sum_{i=1}^{N}(\rho_{i,0}^-)^2=0 $$
because by assumption $\rho_{i,0}\geq 0$, hence $\rho_{i,0}^-=0$, for all $i=1,\,\dots,\,N$. Consequently, $\rho_i^-(t)=0$, i.e. $\rho_i(t)\geq 0$, for all $i=1,\,\dots,\,N$ and all $t>0$ and the first assertion is proved.

Using this, from~\eqref{eq:graph.system_rhoi} we deduce
$$ \frac{d\rho_i}{dt}\geq -\chi\rho_i, $$
whence $\rho_i(t)\geq\rho_{i,0}e^{-\chi t}>0$ for all $t>0$ whenever $\rho_{i,0}>0$, which proves the second assertion.
\end{proof}

A more detailed study of~\eqref{eq:graph.system_rhoi} reveals also two significant \textit{a priori} estimates:
\begin{theorem}[$L^1$ regularity] \label{theo:graph.L1_regularity}
Let~\eqref{eq:graph.Boltztype.weak} be complemented with an initial condition $\bbf_0\in (L^1(\R))^N$. Then $\bbf(\cdot,t)\in (L^1(\R))^N$ for all $t>0$ and
$$ \norm{\bbf(t)}{(L^1)^N}=\norm{\bbf_0}{(L^1)^N}, \qquad \forall\,t>0. $$

In particular, $\rho_i(t)\leq\norm{\bbf_0}{(L^1)^N}$ for all $t>0$.
\end{theorem}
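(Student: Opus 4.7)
The plan is to combine the conservation of total mass coming from the linear jump system \eqref{eq:graph.system_rhoi} with the non-negativity of each component $f_i(\cdot,t)$, so that the $L^1$ norm of $\bbf(t)$ coincides with the sum of the masses $\rho_i(t)$, which is exactly the quantity controlled by \eqref{eq:graph.system_rhoi}.

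First, I would establish that $f_i(\cdot,t)\geq 0$ for all $i=1,\dots,N$ and all $t>0$ whenever this is true at $t=0$. This can be obtained by an argument analogous to that of Theorem~\ref{theo:supp}: testing~\eqref{eq:graph.Boltztype.weak} against $\varphi(v)=\chi(f_i(v,t)<0)$, or equivalently by working with the negative part $f_i^-$ and exploiting the fact that the gain part of $Q_i$ and the incoming jump contributions $\chi\sum_{j}P_{ij}f_j$ are non-negative as soon as all the $f_j$'s are non-negative. This reduces the analysis of the negative parts to a closed, dissipative inequality analogous to the one used in Proposition~\ref{prop:graph.rhoi_nonneg}, which forces each $f_i^-(\cdot,t)$ to vanish identically. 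Granted this, for every $t>0$ one has $\norm{f_i(t)}{L^1}=\int_\R f_i(v,t)\,dv=\rho_i(t)$, so that
\[
\norm{\bbf(t)}{(L^1)^N}=\sum_{i=1}^{N}\rho_i(t).
\]

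Next, I would sum equation~\eqref{eq:graph.system_rhoi} over $i=1,\dots,N$ and swap the order of summation in the coupling term to find
\[
\frac{d}{dt}\sum_{i=1}^{N}\rho_i(t)=\chi\sum_{j=1}^{N}\left(\sum_{i=1}^{N}P_{ij}-1\right)\rho_j(t)=0,
\]
where the crucial cancellation is the column-stochasticity~\eqref{eq:graph.Pij_sum} of $\bP$. Thus the total mass is conserved in time, which together with the first step yields $\norm{\bbf(t)}{(L^1)^N}=\norm{\bbf_0}{(L^1)^N}$ for all $t>0$.

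Finally, combining the non-negativity of the $\rho_i$'s provided by Proposition~\ref{prop:graph.rhoi_nonneg} with the conservation just proved gives
\[
\rho_i(t)\leq\sum_{j=1}^{N}\rho_j(t)=\sum_{j=1}^{N}\rho_{j,0}=\norm{\bbf_0}{(L^1)^N},
\]
which is the pointwise-in-$i$ bound in the statement. The only genuinely delicate step is the propagation of non-negativity of the $f_i$'s through the coupled system: unlike the scalar case of Theorem~\ref{theo:supp}, here one has to handle simultaneously the intra-vertex collisional dynamics, which preserves non-negativity vertex by vertex, and the jump coupling $\chi\sum_j P_{ij}f_j$, which could in principle transfer to vertex $i$ a negative contribution coming from some other vertex $j$. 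The argument closes because any such transfer can only originate from $f_j^-$, so one obtains a differential inequality for $\sum_i\int_\R (f_i^-)^2\,dv$ entirely analogous to the one in the proof of Proposition~\ref{prop:graph.rhoi_nonneg}, forcing all negative parts to remain zero.
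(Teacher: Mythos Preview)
Your approach is essentially the same as the paper's: identify $\norm{f_i(t)}{L^1}=\rho_i(t)$ via non-negativity, sum~\eqref{eq:graph.system_rhoi} over $i$, and use the column-stochasticity~\eqref{eq:graph.Pij_sum} to obtain conservation of the total mass. The only difference is that you devote most of your argument to propagating non-negativity of the $f_i$'s, whereas the paper simply takes $f_i\geq 0$ as part of the modelling framework (the $f_i$'s are introduced as kinetic distribution functions, hence non-negative by definition) and dispatches the proof in two lines.
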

\begin{proof}
Since $\norm{f_i(t)}{L^1}=\rho_i(t)$, summing both sides of~\eqref{eq:graph.system_rhoi} over $i=1,\,\dots,\,N$ while recalling~\eqref{eq:graph.Pij_sum} yields
$$ \frac{d}{dt}\sum_{i=1}^{N}\norm{f_i(t)}{L^1}=0, $$
whence the thesis.
\end{proof}

From Theorem~\ref{theo:graph.L1_regularity} we see that the modelling requirement~\eqref{eq:graph.fi_sum_int} is straightforwardly satisfied if the initial datum carries a unitary mass on the graph. Therefore, from now on we shall invariably consider only initial conditions such that
\begin{equation}
    \norm{\bbf_0}{(L^1)^N}=1.
    \label{eq:graph.f0_norm}
\end{equation}

\begin{proposition} \label{prop:graph.continuous_dep_L1}
Let $\bbf(\cdot,t),\,\bg(\cdot,t)\in (L^1(\R))^N$ be two solutions of~\eqref{eq:graph.Boltztype.weak} issuing from the respective initial conditions $\bbf_0,\,\bg_0\in (L^1(\R))^N$. Denote
$$ \rho_i^f(t)=\int_{\R}f_i(v,t)\,dv, \qquad \rho_i^g(t)=\int_{\R}g_i(v,t)\,dv. $$
Then
$$ \sum_{i=1}^{N}\abs{\rho_i^g(t)-\rho_i^f(t)}\leq\norm{\bg_0-\bbf_0}{(L^1)^N}, \qquad \forall\,t>0. $$
\end{proposition}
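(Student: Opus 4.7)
The key observation is that, by taking $\varphi\equiv 1$ in~\eqref{eq:graph.Boltztype.weak} as in the derivation of~\eqref{eq:graph.system_rhoi}, the densities $\rho_i^f$ and $\rho_i^g$ both satisfy the same \emph{linear} ODE system
$$ \frac{d\rho_i^{f,g}}{dt}=\chi\left(\sum_{j=1}^{N}P_{ij}\rho_j^{f,g}-\rho_i^{f,g}\right), \qquad i=1,\,\dots,\,N. $$
By linearity, the differences $w_i(t):=\rho_i^g(t)-\rho_i^f(t)$ then solve the very same system, with initial data $w_i(0)=\int_\R(g_{0,i}-f_{0,i})\,dv$. The statement thus reduces to proving the $\ell^1$-contractivity of this linear ODE system, namely
$$ \sum_{i=1}^{N}\abs{w_i(t)}\leq\sum_{i=1}^{N}\abs{w_i(0)}, \qquad \forall\,t>0, $$
combined with the elementary bound $\abs{w_i(0)}\leq\int_\R\abs{g_{0,i}-f_{0,i}}\,dv=\norm{g_{0,i}-f_{0,i}}{L^1}$, whose sum over $i$ gives exactly $\norm{\bg_0-\bbf_0}{(L^1)^N}$.

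The plan to establish the $\ell^1$-contractivity is to mimic the technique used in the proof of Proposition~\ref{prop:graph.rhoi_nonneg}, based on the decomposition $w_i=w_i^+-w_i^-$ with $w_i^\pm\geq 0$ and $w_i^+w_i^-=0$. Alternatively, and more transparently, I would compute formally
$$ \frac{d}{dt}\sum_{i=1}^{N}\abs{w_i}=\chi\sum_{i=1}^{N}\operatorname{sgn}(w_i)\sum_{j=1}^{N}P_{ij}w_j-\chi\sum_{i=1}^{N}\abs{w_i}, $$
and then estimate the first term from above by
$$ \chi\sum_{i=1}^{N}\sum_{j=1}^{N}P_{ij}\abs{w_j}=\chi\sum_{j=1}^{N}\abs{w_j}\sum_{i=1}^{N}P_{ij}=\chi\sum_{j=1}^{N}\abs{w_j}, $$
where the last equality uses the left-stochasticity property~\eqref{eq:graph.Pij_sum} of $\bP$. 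The two contributions cancel, yielding $\frac{d}{dt}\sum_i\abs{w_i}\leq 0$, whence the claimed contractivity follows by integration in time.

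To make the argument rigorous while avoiding the non-smoothness of $\operatorname{sgn}$ at zero, I would implement it by writing $w_i=w_i^+-w_i^-$, multiplying the ODE by $w_i^+-w_i^-$, and summing over $i$: the contribution of the diagonal term $-\chi w_i$ yields $-\chi(\abs{w_i^+}^2+\abs{w_i^-}^2)=-\chi w_i^2$, while the off-diagonal term $\chi\sum_jP_{ij}w_j$ can be bounded using $P_{ij}\geq 0$ exactly as in Proposition~\ref{prop:graph.rhoi_nonneg}, after which~\eqref{eq:graph.Pij_sum} closes the estimate. The main (yet modest) obstacle is handling the lack of differentiability of $v\mapsto\abs{v}$ at the origin; this is routine, however, since the system is linear with continuous right-hand side, so one may equivalently work with a smooth approximation of the absolute value and pass to the limit, or use the classical positive/negative part argument recalled above. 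No interaction with the collisional operators $Q_i$ is needed, because the evolution of the densities $\rho_i$ decouples completely from the velocity distribution.
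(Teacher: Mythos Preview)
Your proposal is correct and follows essentially the same approach as the paper: both reduce the statement to the $\ell^1$-contractivity of the linear ODE system~\eqref{eq:graph.system_rhoi} for the differences $w_i=\rho_i^g-\rho_i^f$, exploiting the left-stochasticity~\eqref{eq:graph.Pij_sum} of $\bP$, and then bound $\sum_i\abs{w_i(0)}$ by $\norm{\bg_0-\bbf_0}{(L^1)^N}$.

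The only minor technical difference is in how the contractivity is established. You work differentially, computing $\frac{d}{dt}\sum_i\abs{w_i}$ via $\operatorname{sgn}$ (or positive/negative parts), which forces you to address the non-smoothness of $\abs{\cdot}$. The paper instead passes to the integral form: multiplying by $e^{\chi t}$, integrating on $[0,t]$, and only then taking absolute values gives
\[
e^{\chi t}\abs{w_i(t)}\leq\abs{w_i(0)}+\chi\sum_{j=1}^{N}P_{ij}\int_0^t e^{\chi\tau}\abs{w_j(\tau)}\,d\tau,
\]
after which summing over $i$, using~\eqref{eq:graph.Pij_sum}, and applying Gr\"{o}nwall's inequality to $e^{\chi t}\sum_i\abs{w_i(t)}$ yields the contractivity directly, with no regularity issues to circumvent.
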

\begin{proof}
Let $u_i(t):=\rho_i^g(t)-\rho_i^f(t)$. Since $\rho_i^f$, $\rho_i^g$ satisfy~\eqref{eq:graph.system_rhoi}, by linearity $u_i$ satisfies in turn~\eqref{eq:graph.system_rhoi}. Thus, multiplying both sides of the equation by $e^{\chi t}$, integrating on $[0,\,t]$, $t>0$, and taking the absolute value we obtain
$$ e^{\chi t}\abs{u_i(t)}\leq\abs{u_{i,0}}+\chi\sum_{j=1}^{N}P_{ij}\int_0^t e^{\chi\tau}\abs{u_j(\tau)}\,d\tau. $$
Next, summing over $i=1,\,\dots,\,N$ and using~\eqref{eq:graph.Pij_sum} we deduce
$$ e^{\chi t}\sum_{i=1}^{N}\abs{u_i(t)}\leq\sum_{i=1}^{N}\abs{u_{i,0}}+\chi\int_0^t e^{\chi\tau}\sum_{j=1}^{N}\abs{u_j(\tau)}\,d\tau $$
so that Gr\"{o}nwall's inequality applied to the function $e^{\chi t}\sum_{i=1}^{N}\abs{u_i(t)}$ yields
$$ \sum_{i=1}^{N}\abs{u_i(t)}\leq\sum_{i=1}^{N}\abs{u_{i,0}}. $$
But
$$ \sum_{i=1}^{N}\abs{u_{i,0}}=\sum_{i=1}^{N}\abs*{\int_{\R}(g_{i,0}(v)-f_{i,0}(v))\,dv}
    \leq\sum_{i=1}^{N}\norm{g_{i,0}-f_{i,0}}{L^1}=\norm{\bg_0-\bbf_0}{(L^1)^N} $$
and we are done.
\end{proof}

The big picture about the evolution of the $\rho_i$'s is completed by their asymptotic trend for large times, which is provided by the following result from~\cite{loy2021MBE}:
\begin{theorem} \label{theo:graph.rhoi_equil}
There exists a unique constant-in-time density distribution $\brho^\infty=(\rho^\infty_1,\,\dots,\,\rho^\infty_N)$, with $\rho^\infty_i>0$ for all $i=1,\,\dots,\,N$ and $\sum_{i=1}^{N}\rho^\infty_i=1$, which is a stable and globally attractive equilibrium of~\eqref{eq:graph.system_rhoi}.
\end{theorem}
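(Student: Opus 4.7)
The plan is to combine the Perron--Frobenius theorem for the non-negative irreducible left stochastic matrix $\bP$ with standard linear semigroup theory, leveraging the mass-conservation identity already underlying Theorem~\ref{theo:graph.L1_regularity}.

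First I would establish existence and uniqueness of the equilibrium. Since $\bP\geq 0$ is irreducible, the Perron--Frobenius theorem guarantees that its spectral radius is a simple eigenvalue possessing a strictly positive eigenvector. From~\eqref{eq:graph.Pij_sum}, the row vector $(1,\dots,1)$ is a left eigenvector with eigenvalue $1$, so the spectral radius equals $1$ and there exists a unique (up to scaling) right eigenvector $\brho^\infty$ of $\bP$ relative to the eigenvalue $1$; it can be chosen componentwise strictly positive. Normalising it by $\sum_{i=1}^{N}\rho_i^\infty=1$ yields the desired probability distribution, which is clearly an equilibrium of~\eqref{eq:graph.system_rhoi} since $(\bP-\bI)\brho^\infty=0$; any other probability-valued equilibrium must lie in the one-dimensional eigenspace $\operatorname{span}(\brho^\infty)$ and therefore coincide with $\brho^\infty$.

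Next I would analyse the spectrum of the generator $\bG:=\chi(\bP-\bI)$, whose eigenvalues are $\chi(\lambda-1)$ with $\lambda\in\sigma(\bP)$. By Perron--Frobenius one has $\abs{\lambda}\leq 1$ for every $\lambda\in\sigma(\bP)$, and any unit-modulus eigenvalue $\lambda=e^{i\theta}$ with $\theta\neq 0$ satisfies $\Re(\lambda-1)=\cos\theta-1<0$. Hence every eigenvalue of $\bG$ other than $0$ has strictly negative real part, while $0$ is simple with eigenspace $\operatorname{span}(\brho^\infty)$. Decomposing $\brho_0=c\,\brho^\infty+\bh_0$ along $\ker\bG$ and its $\bG$-invariant complement, the solution of~\eqref{eq:graph.system_rhoi} reads $\brho(t)=e^{\bG t}\brho_0=c\,\brho^\infty+e^{\bG t}\bh_0$ with $e^{\bG t}\bh_0\to 0$ exponentially fast in any norm on $\R^N$. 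The identity $\frac{d}{dt}\sum_i\rho_i(t)=0$, which follows directly from~\eqref{eq:graph.Pij_sum} and is the same conservation underlying Theorem~\ref{theo:graph.L1_regularity}, forces $c=\sum_i\rho_{i,0}=1$ whenever $\brho_0$ is a probability distribution, so $\brho(t)\to\brho^\infty$ exponentially fast. Lyapunov stability then follows automatically from linearity and the exponential decay of all perturbations with zero total mass, and it can alternatively be read off from the $\ell^1$ contractivity already contained in the proof of Proposition~\ref{prop:graph.continuous_dep_L1}, which at the level of the densities $\rho_i$ reduces to $\sum_i\abs{\rho_i(t)-\rho_i^\infty}\leq\sum_i\abs{\rho_{i,0}-\rho_i^\infty}$.

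The main obstacle I expect is the handling of eigenvalues of $\bP$ lying on the unit circle but different from $1$, which do arise whenever the underlying Markov chain is periodic. The key observation that rescues us in the continuous-time setting, in contrast with the discrete iteration $\brho^{n+1}=\bP\brho^n$, is that any unit-modulus eigenvalue $\lambda\neq 1$ satisfies $\Re(\lambda-1)<0$; periodicity therefore poses no obstruction to the exponential convergence of~\eqref{eq:graph.system_rhoi}, and no extra aperiodicity hypothesis on $\bP$ needs to be imposed beyond the strong connectedness of the graph.
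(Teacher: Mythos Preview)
Your proposal is correct and follows essentially the same approach as the paper: Perron--Frobenius for the irreducible stochastic matrix $\bP$ to obtain the unique strictly positive probability eigenvector, followed by a spectral analysis of $\chi(\bP-\bI)$ combined with mass conservation to deduce stability and global attractiveness. Your treatment is in fact slightly more explicit than the paper's, which simply asserts that ``$1$ is the maximum real part of the eigenvalues of $\bP$'' without singling out the peripheral spectrum; your observation that unit-modulus eigenvalues $e^{i\theta}\neq 1$ still give $\Re(e^{i\theta}-1)<0$, so that periodicity of the chain is harmless in continuous time, is a worthwhile clarification that the paper leaves implicit.
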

\begin{proof}
Constant-in-time solutions to~\eqref{eq:graph.system_rhoi} satisfy $(\bP-\bI)\brho^\infty=0$. Therefore, $\brho^\infty$ should be an eigenvector of $\bP$ corresponding to the eigenvalue $1$.
\begin{enumerate}
\item First, we show that $\brho^\infty$ exists and is unique. For this, we observe that since $\bP$ is a stochastic matrix, cf.~\eqref{eq:graph.Pij_sum}, it admits the eigenvalue $1$. This gives the existence of $\brho^\infty$. Moreover, since $\bP$ is irreducible because the graph is strongly connected, Perron-Frobenius theory, cf. e.g.,~\cite{minc1988BOOK}, implies that such an eigenvalue is simple and that there exists a corresponding eigenvector with strictly positive components. Therefore, we can assume $\rho^\infty_i>0$ for all $i=1,\,\dots,\,N$ and moreover we deduce that all the eigenvectors associated with the eigenvalue $1$ are parellel. The uniqueness of $\brho^\infty$ follows then from the constraint $\sum_{i=1}^{N}\rho^\infty_i=1$.
\item Second, we show that $\brho^\infty$ is a stable equilibrium of~\eqref{eq:graph.system_rhoi}. Still from Perron-Frobenius theory we have that $1$ is the maximum real part of the eigenvalues of $\bP$. Therefore, all the eigenvalues of $\bP-\bI$ have non-positive real part; moreover, owing to what we have established at the previous point, the eigenvalue with null real part is simple. This says that $\brho^\infty$ is stable.
\item Third, we show that $\brho^\infty$ is globally attractive. For this, we notice that the eigenvalue of $\bP-\bI$ with null real part is associated with the conservation of $\sum_{i=1}^{N}\rho_i(t)$ in time, cf. Theorem~\ref{theo:graph.L1_regularity}, which causes the trajectories of~\eqref{eq:graph.system_rhoi} to lie in an $N$-dimensional simplex. In such a simplex, $\brho^\infty$ is the unique equilibrium point and the system dynamics are ruled by the other eigenvalues of $\bP-\bI$ with negative real parts, which gives the global attractiveness. \qedhere
\end{enumerate}
\end{proof}

Similarly, defining the first statistical moment, viz. the mean, of the microscopic state $v$ in the vertex $i$ at time $t$ as
\begin{equation}
    M_{1,i}(t):=\frac{1}{\rho_i(t)}\int_{\R}vf_i(v,t)\,dv,
    \label{eq:graph.M1i}
\end{equation}
we obtain a system of evolution equations for the $M_{1,i}$'s by letting $\varphi(v)=v$ in~\eqref{eq:graph.Boltztype.weak}. Rearranging the terms conveniently, those equations can be given the form
\begin{equation}
    \frac{d}{dt}(\rho_iM_{1,i})=[\chi(P_{ii}-1)+\mu_i(\ave{p_i+q_i}-1)\rho_i]\rho_iM_{1,i}
        +\chi\sum_{j\neq i}P_{ij}\rho_jM_{1,j}.
    \label{eq:graph.momentum}
\end{equation}
Notice that each of them is more properly an equation for the quantity $\rho_iM_{1,i}$, which is called the \textit{momentum} in the vertex $i$. This terminology is borrowed from classical mechanics, when $v$ is the (signed) speed of the agents. As a matter of fact,~\eqref{eq:graph.momentum} can be easily converted into an equation for $M_{1,i}$ by developing the derivative on the left-hand side and invoking~\eqref{eq:graph.system_rhoi} to express $\frac{d\rho_i}{dt}$. We get:
\begin{equation}
    \frac{dM_{1,i}}{dt}=\mu_i\rho_i(\ave{p_i+q_i}-1)M_{1,i}+\chi\sum_{\substack{j=1,\,\dots,\,N \\ j\neq i}}P_{ij}\frac{\rho_j}{\rho_i}(M_{1,j}-M_{1,i}),
    \label{eq:graph.system_M1i}
\end{equation}
which is equivalent to~\eqref{eq:graph.momentum} for all $t>0$ such that $\rho_i(t)>0$. If $\rho_{i,0}>0$ then, owing to Proposition~\ref{prop:graph.rhoi_nonneg},~\eqref{eq:graph.system_M1i} is equivalent to~\eqref{eq:graph.momentum} for all $t>0$. \begin{remark}\label{rk:graph.Mcons}
From~\eqref{eq:graph.system_M1i} we see that even if $\ave{p_i+q_i}=1$ for all $i=1,\,\dots,\,N$ the first moment $M_{1,i}$ is not conserved in general in the vertices because of the migration mechanism (cf. the second term on the right-hand side of~\eqref{eq:graph.system_M1i}), that causes a variation of the $M_{1,i}$'s regardless of the interaction dynamics. However, if $\ave{p_i+q_i}=1$ for all $i=1,\,\dots,\,N$ then constant-in-time solutions are possible if e.g., $M_{1,i}$ is also constant with respect to $i$, i.e., if the first moments are the same in all vertices. In turn, this can be achieved by prescribing initial conditions with the same mean, say $\cpM_1$, in all vertices, for then, under the said assumptions, $M_{1,i}(t)=\cpM_1$ for all $t>0$ and all $i=1,\,\dots,\,N$ is a solution to the system above.
\end{remark}

\subsection{Continuous dependence and uniqueness estimates}
In this section, we discuss \textit{a priori} estimates developed in~\cite{bisoglio2024THESIS}, which are at the basis of the theory of well-posedness and trend to equilibrium of~\eqref{eq:graph.Boltztype.weak}. In so doing, we shall have an example of how the analytical tools and techniques introduced in the previous sections can be profitably revisited and adapted to tackle the qualitative analysis of more complicated kinetic equations.

A first difficulty intrinsic to~\eqref{eq:graph.Boltztype.weak} is that the $f_i$'s are in general not probability measures, for they have variable-in-time masses $\rho_i$. This prevents from a direct application to~\eqref{eq:graph.Boltztype.weak} of the Fourier metrics, which requires instead to work with probability measures. For this reason, part of the forthcoming estimates will rely on Lebesgue metrics, particularly the $L^2$ metric which interfaces well with the use of the Fourier transform to deal with~\eqref{eq:graph.Boltztype.weak}.

We begin by a result reminiscent of Proposition~\ref{prop:L2_regularity}, which ensures that working with the $L^2$ norm makes indeed sense also in the case of~\eqref{eq:graph.Boltztype.weak}:
\begin{proposition}[$L^2$ regularity] \label{prop:graph.L2_regularity}
Let~\eqref{eq:graph.Boltztype.weak} be complemented with an initial datum $\bbf_0\in (L^1(\R))^N\cap (L^2(\R))^N$ satisfying~\eqref{eq:graph.f0_norm}. Assume moreover that either $\ubar{p}$ or $\ubar{q}$ is non-zero. Then $\bbf(\cdot,t)\in (L^2(\R))^N$ for all $t>0$ and
\begin{equation}
    \norm{\bbf(t)}{(L^2)^N}\leq\norm{\bbf_0}{(L^2)^N}e^{\left(\chi(N-1)+\bar{\mu}\min\left\{\ave*{\frac{1}{\sqrt{\ubar{p}}}},\,
        \ave*{\frac{1}{\sqrt{\ubar{q}}}}\right\}\right)t},
        \qquad t>0.
    \label{eq:graph.L2_estimate}
\end{equation}
\end{proposition}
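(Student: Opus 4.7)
The plan is to adapt the Fourier-based argument of Proposition~\ref{prop:Lr_regularity} (with $r=2$, cf. also Proposition~\ref{prop:L2_regularity}) to the coupled graph setting. A key preliminary fact that I would record upfront is that Theorem~\ref{theo:graph.L1_regularity} together with the normalisation~\eqref{eq:graph.f0_norm} gives $\rho_i(t)\leq\norm{\bbf(t)}{(L^1)^N}=1$ for all $t>0$ and $i=1,\dots,N$, so in particular $\abs{\hat{f}_i(\xi,t)}\leq\rho_i(t)\leq 1$ for all $\xi\in\R$, in analogy with Lemma~\ref{lemma:Fourier_basic}.

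I would then pass~\eqref{eq:graph.Boltztype.weak} to Fourier variables by choosing $\varphi(v)=e^{-i\xi v}$. The transformed collisional contribution reads $\ave{\hat{f}_i(p_i\xi,t)\hat{f}_i(q_i\xi,t)}-\rho_i(t)\hat{f}_i(\xi,t)$, where the mass factor $\rho_i(t)$ in front of $\hat{f}_i$ reflects the fact that $f_i$ is only a sub-probability measure (in the probability-measure case of Section~\ref{sect:Fourier}, that factor was tacitly equal to~$1$). The Fourier-transformed system thus reads
$$\partial_t\hat{f}_i(\xi,t)=-\bigl(\chi+\mu_i\rho_i(t)\bigr)\hat{f}_i(\xi,t)+\chi\sum_{j=1}^N P_{ij}\hat{f}_j(\xi,t)+\mu_i\ave{\hat{f}_i(p_i\xi,t)\hat{f}_i(q_i\xi,t)}.$$
Using $\partial_t\abs{\hat{f}_i}^2=2\operatorname{Re}(\overline{\hat{f}_i}\partial_t\hat{f}_i)$ with $\abs{\operatorname{Re}(z)}\leq\abs{z}$, dropping the non-negative term $2\mu_i\rho_i\abs{\hat{f}_i}^2$ from the left-hand side, and using $\abs{\hat{f}_i(q_i\xi,t)}\leq 1$, I would obtain the pointwise-in-$\xi$ inequality
$$\partial_t\abs{\hat{f}_i(\xi,t)}^2+2\chi\abs{\hat{f}_i(\xi,t)}^2\leq 2\chi\abs{\hat{f}_i(\xi,t)}\sum_{j=1}^N P_{ij}\abs{\hat{f}_j(\xi,t)}+2\mu_i\abs{\hat{f}_i(\xi,t)}\ave{\abs{\hat{f}_i(p_i\xi,t)}}.$$

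Next, I would integrate in $\xi\in\R$, invoke Parseval's identity to recognise $\int\abs{\hat{f}_i}^2\,d\xi$ as a multiple of $\norm{f_i(t)}{L^2}^2$, and apply Cauchy--Schwarz to both integrals on the right. The change of variable $\eta=p_i\xi$ in the collision piece produces the factor $1/\sqrt{p_i}$ exactly as in the proof of Proposition~\ref{prop:Lr_regularity}, yielding
$$\partial_t\norm{f_i(t)}{L^2}^2+2\chi\norm{f_i(t)}{L^2}^2\leq 2\chi\sum_{j=1}^N P_{ij}\norm{f_i(t)}{L^2}\norm{f_j(t)}{L^2}+2\mu_i\ave*{\frac{1}{\sqrt{p_i}}}\norm{f_i(t)}{L^2}^2.$$
Summing on $i=1,\dots,N$ and bounding $\norm{f_i}{L^2}\norm{f_j}{L^2}\leq\tfrac{1}{2}(\norm{f_i}{L^2}^2+\norm{f_j}{L^2}^2)$ together with $\sum_{i=1}^N P_{ij}=1$ from~\eqref{eq:graph.Pij_sum} on one half and the crude bound $\sum_{j=1}^N P_{ij}\leq N$ (from~\eqref{eq:graph.Pij_01}) on the other, I get $\sum_{i,j}P_{ij}\norm{f_i}{L^2}\norm{f_j}{L^2}\leq\tfrac{N+1}{2}\norm{\bbf}{(L^2)^N}^2$. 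Replacing further $\mu_i\leq\bar\mu$ and $\ave{1/\sqrt{p_i}}\leq\ave{1/\sqrt{\ubar{p}}}$, and subtracting the $2\chi\norm{\bbf}{(L^2)^N}^2$ coming from the left, this produces
$$\partial_t\norm{\bbf(t)}{(L^2)^N}^2\leq\left[\chi(N-1)+2\bar\mu\ave*{\frac{1}{\sqrt{\ubar{p}}}}\right]\norm{\bbf(t)}{(L^2)^N}^2.$$
Gr\"onwall's inequality and extraction of the square root then give an estimate with rate $\tfrac{\chi(N-1)}{2}+\bar\mu\ave{1/\sqrt{\ubar{p}}}$, which is smaller than, and hence implies,~\eqref{eq:graph.L2_estimate} with $\ubar{p}$. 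Repeating the same Cauchy--Schwarz and change-of-variable trick on the $q_i\xi$-factor instead (bounding $\abs{\hat{f}_i(p_i\xi)}\leq 1$) gives the analogous bound with $\ubar{q}$, and taking the minimum of the two delivers~\eqref{eq:graph.L2_estimate}.

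The main obstacle I anticipate is the treatment of the migration double sum $\sum_{i,j}P_{ij}\norm{f_i}{L^2}\norm{f_j}{L^2}$: because $\bP$ is only column-stochastic, the two summation directions behave very asymmetrically, and the $(N-1)$ factor in the final exponent is forced by the pessimistic bound $\sum_j P_{ij}\leq N$ on row sums against the exact identity $\sum_i P_{ij}=1$ on column sums. The hypothesis that either $\ubar{p}$ or $\ubar{q}$ is non-zero is used, exactly as in Proposition~\ref{prop:L2_regularity}, only to make the change of variable $\eta=p_i\xi$ (respectively $\eta=q_i\xi$) licit in at least one of the two pieces of the collisional contribution.
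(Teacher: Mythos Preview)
Your argument is correct and follows the same overall strategy as the paper --- Fourier transform, pointwise energy inequality on $\abs{\hat f_i}^2$, summation over $i$, Gr\"onwall --- but with two tactical differences worth noting. First, the paper absorbs the variable coefficient $\mu_i\rho_i(t)$ by means of the integrating factor $e^{2\pP_i(t)}$ with $\pP_i(t)=\int_0^t\mu_i\rho_i(\tau)\,d\tau$, integrates in time, and then discards the $\pP_i$ terms using monotonicity; you simply drop the favourable term $2\mu_i\rho_i\abs{\hat f_i}^2\geq 0$ from the left-hand side, which is more direct and loses nothing. Second, for the migration double sum the paper bounds $P_{ij}\leq 1$ pointwise in $\xi$ and uses $(\sum_i\abs{\hat f_i})^2\leq N\sum_i\abs{\hat f_i}^2$, obtaining the coefficient $\chi N$; you instead integrate in $\xi$ first and then split $\norm{f_i}{L^2}\norm{f_j}{L^2}$ via the arithmetic--geometric inequality, exploiting the column identity $\sum_i P_{ij}=1$ on one half to get $\tfrac{N+1}{2}$. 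After subtracting $2\chi$ and extracting the square root, your final exponential rate is $\tfrac{\chi(N-1)}{2}+\bar\mu\,\min\!\bigl\{\ave{1/\sqrt{\ubar p}},\ave{1/\sqrt{\ubar q}}\bigr\}$, which is genuinely smaller than the paper's $\chi(N-1)+\bar\mu\,\min\{\cdots\}$ and therefore implies~\eqref{eq:graph.L2_estimate} as you state. The paper's approach is slightly more robust in that the integrating-factor trick would survive if the sign of the $\mu_i\rho_i$ term were unfavourable, but in the present setting your shortcut is both simpler and sharper.
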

\begin{remark}
The ideas behind the proof of this Proposition are largely similar to those of Proposition~\ref{prop:Lr_regularity}. Here, however, we have to take extra care in handling the non-constant mass of the $f_i$'s and the inter-vertex variability of the rates $\mu_i$ and of the interaction coefficients $p_i$, $q_i$.
\end{remark}
\begin{proof}[Proof of Proposition~\ref{prop:graph.L2_regularity}]
Without loss of generality, throughout the proof we assume $\ubar{p}>0$.

Letting\footnote{A superposition of notation is produced between the index $i=1,\,\dots,\,N$ of the generic vertex of the graph and the imaginary unit $i\in\mathbb{C}$. Here and henceforth, we allow ourselves this little abuse as there is no possibility of confusion.} $\varphi(v)=e^{-i\xi v}$ in~\eqref{eq:graph.Boltztype.weak} and using the weak form~\eqref{eq:Q.weak} of the collisional operator we obtain the following Fourier-transformed system of kinetic equations:
\begin{equation}
    \partial_t\hat{f}_i+(\chi+\mu_i\rho_i)\hat{f}_i=\chi\sum_{j=1}^{N}P_{ij}\hat{f}_j+\mu_i\ave{\hat{f}_i(p_i\xi,t)\hat{f}_i(q_i\xi,t)},
    \label{eq:graph:Boltz.Fourier-transformed}
\end{equation}
where we stress in particular the appearance of the non-constant coefficient $\rho_i$ in the loss term of the collisional operator $Q_i$.

Let
$$ \pP_i(t):=\int_0^t(\mu_i\rho_i)(\tau)\,d\tau $$
be the primitive of $\mu_i\rho_i$ vanishing at $t=0$. Since both $\mu_i$ and $\rho_i$ are bounded and non-negative, $\pP_i$ is continuous and non-decreasing, therefore $0\leq\pP_i(t_1)\leq\pP_i(t_2)$ for every $0\leq t_1\leq t_2$.

Multiplying both sides of~\eqref{eq:graph:Boltz.Fourier-transformed} by $2e^{2(\chi t+\pP_i(t))}\hat{f}_i$ yields
\begin{equation}
    \partial_t\left(e^{2(\chi t+\pP_i(t))}\hat{f}_i^2\right)=2e^{2(\chi t+\pP_i(t))}\left(\chi\sum_{j=1}^{N}P_{ij}\hat{f}_j\hat{f}_i+
        \mu_i\ave{\hat{f}_i(p_i\xi,t)\hat{f}_i(q_i\xi,t)}\hat{f}_i\right)
    \label{eq:graph.theo.L2_regularity.proof}
\end{equation}
whence, using $\partial_t\abs{\cdot}\leq\abs{\partial_t(\cdot)}$, property~\eqref{eq:graph.Pij_01}, and moreover the fact that $\abs{\hat{f}_i(q_i\xi,t)}\leq\rho_i(t)\leq 1$ (cf. Theorem~\ref{theo:graph.L1_regularity}) and $\mu_i\leq\bar{\mu}$ for all $i=1,\,\dots,\,N$ and all $t>0$, we obtain
$$ \partial_t\left(e^{2(\chi t+\pP_i(t))}\abs{\hat{f}_i}^2\right)\leq 2e^{2(\chi t+\pP_i(t))}
    \left(\chi\sum_{j=1}^{N}\abs{\hat{f}_j}\cdot\abs{\hat{f}_i}+\bar{\mu}\ave{\abs{\hat{f}_i(p_i\xi,t)}}\abs{\hat{f}_i}\right). $$

Integrating in time on $[0,\,t]$, $t>0$, and multiplying then both sides by $e^{-2(\chi t+\pP_i(t))}$ produces
\begin{align*}
    \abs{\hat{f}_i}^2 &\leq \abs{\hat{f}_{i,0}}^2e^{-{2(\chi t+\pP_i(t))}} \\
    &\phantom{\leq} +2\int_0^t e^{2(\chi(\tau-t)+\pP_i(\tau)-\pP_i(t))}
            \left(\chi\sum_{j=1}^{N}\abs{\hat{f}_j}\cdot\abs{\hat{f}_i}+\bar{\mu}\ave{\abs{\hat{f}_i(p_i\xi,\tau)}}\abs{\hat{f}_i}\right)d\tau \\
    &\leq \abs{\hat{f}_{i,0}}^2e^{-{2\chi t}}
        +2\int_0^t e^{2\chi(\tau-t)}
            \left(\chi\sum_{j=1}^{N}\abs{\hat{f}_j}\cdot\abs{\hat{f}_i}+\bar{\mu}\ave{\abs{\hat{f}_i(p_i\xi,\tau)}}\abs{\hat{f}_i}\right)d\tau,
\end{align*}
because $-2\pP_i(t)\leq 0$ and likewise $\pP_i(\tau)-\pP_i(t)\leq 0$ for all $i=1,\,\dots,\,N$, all $t>0$, and all $\tau\leq t$. Next, summing over $i=1,\,\dots,\,N$ both sides and noticing that
$$ \sum_{i=1}^{N}\left(\sum_{j=1}^{N}\abs{\hat{f}_j}\cdot\abs{\hat{f}_i}\right)=\left(\sum_{i=1}^{N}\abs{\hat{f}_i}\right)^2
    \leq N\sum_{i=1}^{N}\abs{\hat{f}_i}^2 $$
owing to Cauchy-Schwartz inequality, we obtain
$$ \sum_{i=1}^{N}\abs{\hat{f}_i}^2\leq\sum_{i=1}^{N}\abs{\hat{f}_{i,0}}^2e^{-{2\chi t}}
    +2\int_0^t e^{2\chi(\tau-t)}
        \left(\chi N\sum_{j=1}^{N}\abs{\hat{f}_i}^2+\bar{\mu}\sum_{i=1}^{N}\ave{\abs{\hat{f}_i(p_i\xi,\tau)}}\abs{\hat{f}_i}\right)d\tau $$
whence, integrating with respect to $\xi\in\R$,
\begin{multline*}
    \norm{\bbfhat(t)}{(L^2)^N}^2\leq\norm{\bbfhat_0}{(L^2)^N}^2e^{-{2\chi t}} \\
        +2\int_0^t e^{2\chi(\tau-t)}\left(\chi N\norm{\bbfhat(\tau)}{(L^2)^N}^2
            +\bar{\mu}\sum_{i=1}^{N}\int_{\R}\ave{\abs{\hat{f}_i(p_i\xi,\tau)}}\abs{\hat{f}_i}\,d\xi\right)d\tau.
\end{multline*}
The $\xi$-integral on the right-hand side can be treated similarly to the analogous term in the proof of Proposition~\ref{prop:Lr_regularity} (with $r=2$), which produces:
$$ \sum_{i=1}^{N}\int_{\R}\ave{\abs{\hat{f}_i(p_i\xi,\tau)}}\abs{\hat{f}_i(\xi,\tau)}\,d\xi\leq
    \sum_{i=1}^{N}\ave*{\frac{1}{\sqrt{p_i}}}\norm{\hat{f}_i(\tau)}{L^2}^2\leq
        \ave*{\frac{1}{\sqrt{\ubar{p}}}}\norm{\bbfhat(\tau)}{(L^2)^N}^2 $$
and finally, pre-multiplying by $e^{2\chi t}$,
$$ e^{2\chi t}\norm{\bbfhat(t)}{(L^2)^N}^2\leq\norm{\bbfhat_0}{(L^2)^N}^2 \\
        +2\left(\chi N+\bar{\mu}\ave*{\frac{1}{\sqrt{\ubar{p}}}}\right)\int_0^t e^{2\chi\tau}\norm{\bbfhat(\tau)}{(L^2)^N}^2\,d\tau. $$
Gr\"{o}nwall's inequality applied to the function $e^{2\chi t}\norm{\bbfhat(t)}{(L^2)^N}^2$ yields then
$$ \norm{\bbfhat(t)}{(L^2)^N}^2\leq\norm{\bbfhat_0}{(L^2)^N}^2
    e^{2\left(\chi(N-1)+\bar{\mu}\ave*{\frac{1}{\sqrt{\ubar{p}}}}\right)t}, $$
whence taking the square root of both sides and invoking Parseval's identity gives
$$ \norm{\bbf(t)}{(L^2)^N}\leq\norm{\bbf_0}{(L^2)^N}e^{\left(\chi(N-1)+\bar{\mu}\ave*{\frac{1}{\sqrt{\ubar{p}}}}\right)t}, \qquad t>0. $$

If we assume instead $\ubar{q}>0$ then, estimating $\abs{\hat{f}_i(p_i\xi,t)}\leq\rho_i(t)\leq 1$ in~\eqref{eq:graph.theo.L2_regularity.proof} and proceeding with subsequent computations entirely analogous to those performed so far, we end up with the same upper bound on $\norm{\bbf(t)}{(L^2)^N}$ but with $\ave{\frac{1}{\sqrt{\ubar{p}}}}$ replaced by $\ave{\frac{1}{\sqrt{\ubar{q}}}}$:
$$ \norm{\bbf(t)}{(L^2)^N}\leq\norm{\bbf_0}{(L^2)^N}e^{\left(\chi(N-1)+\bar{\mu}\ave*{\frac{1}{\sqrt{\ubar{q}}}}\right)t}, \qquad t>0. $$
The thesis then follows from the simultaneous validity of these two estimates and the fact that the exponential term is monotonically increasing.
\end{proof}
The \textit{a priori} $L^2$ regularity of the solution to~\eqref{eq:graph.Boltztype.weak} asserted by Proposition~\ref{prop:graph.L2_regularity} is at the basis of a continuous dependence estimate in the same space, which implies the uniqueness of the solution as by-product:

\begin{theorem}[Continuous dependence and uniqueness] \label{theo:graph.cont_dep}
Let $\bbf(\cdot,t),\,\bg(\cdot,t)\in (L^2(\R))^N$ be two solutions of~\eqref{eq:graph.Boltztype.weak} issuing from the respective initial conditions $\bbf_0,\,\bg_0\in (L^1(\R))^N\cap (L^2(\R))^N$ satisfying~\eqref{eq:graph.f0_norm} and assume that $\min\{\ubar{p},\,\ubar{q}\}>0$. There exists a non-negative, non-decreasing function $\psi=\psi(t)$, with $\psi(0)=0$, such that
$$ \norm{\bg(t)-\bbf(t)}{(L^2)^N}^2\leq\Bigl(\norm{\bg_0-\bbf_0}{(L^2)^N}^2+\psi(t)\norm{\bg_0-\bbf_0}{(L^1)^N}\Bigr)
    e^{2\left[\chi(N-1)+\bar{\mu}\left(\ave*{\frac{1}{\sqrt{\ubar{p}}}+\frac{1}{\sqrt{\ubar{q}}}}+\frac{1}{2}\right)\right]t} $$
for all $t\geq 0$.

In particular,~\eqref{eq:graph.Boltztype.weak} admits at most one solution issuing from a given initial condition.
\end{theorem}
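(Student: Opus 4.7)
The plan is to adapt the Fourier-based $L^2$ energy estimate underlying Proposition~\ref{prop:graph.L2_regularity} to the difference $\bh:=\bg-\bbf$ of two solutions. Passing to the Fourier side via~\eqref{eq:graph:Boltz.Fourier-transformed} and subtracting, after the natural splittings $\rho_i^g\hat{g}_i-\rho_i^f\hat{f}_i=\rho_i^g\hat{h}_i+(\rho_i^g-\rho_i^f)\hat{f}_i$ and $\ave{\hat{g}_i(p_i\xi)\hat{g}_i(q_i\xi)}-\ave{\hat{f}_i(p_i\xi)\hat{f}_i(q_i\xi)}=\ave{\hat{g}_i(p_i\xi)\hat{h}_i(q_i\xi)+\hat{f}_i(q_i\xi)\hat{h}_i(p_i\xi)}$, one obtains
$$ \partial_t\hat{h}_i+(\chi+\mu_i\rho_i^g)\hat{h}_i=\chi\sum_{j=1}^{N}P_{ij}\hat{h}_j+\mu_i\ave{\hat{g}_i(p_i\xi,t)\hat{h}_i(q_i\xi,t)+\hat{f}_i(q_i\xi,t)\hat{h}_i(p_i\xi,t)}-\mu_i(\rho_i^g-\rho_i^f)\hat{f}_i, $$
in which the forcing term $-\mu_i(\rho_i^g-\rho_i^f)\hat{f}_i$ is the essential novelty with respect to Proposition~\ref{prop:graph.L2_regularity}.

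I would then multiply both sides by $2\overline{\hat{h}_i}$, take real parts to obtain an equation for $\partial_t|\hat{h}_i|^2$, integrate in $\xi\in\R$ and sum over $i=1,\,\dots,\,N$. The beneficial term $2\mu_i\rho_i^g|\hat{h}_i|^2$ on the left-hand side can simply be dropped. The jump operator and the two quadratic collision terms are then handled exactly as in the proof of Proposition~\ref{prop:graph.L2_regularity}: using $|\hat{g}_i|,|\hat{f}_i|\leq\rho_i\leq 1$ (a consequence of~\eqref{eq:graph.f0_norm} and Theorem~\ref{theo:graph.L1_regularity}), Cauchy--Schwarz, the change of variables $\eta=p_i\xi$ or $\eta=q_i\xi$, and $P_{ij}\leq 1$, they contribute at most $2\chi N\norm{\bhhat(t)}{(L^2)^N}^2+2\bar\mu\ave*{\tfrac{1}{\sqrt{\ubar{p}}}+\tfrac{1}{\sqrt{\ubar{q}}}}\norm{\bhhat(t)}{(L^2)^N}^2$ to the right-hand side.

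The main obstacle is the extra source term, which I would control via Young's inequality:
$$ 2\mu_i|\rho_i^g-\rho_i^f|\int_{\R}|\hat{f}_i|\cdot|\hat{h}_i|\,d\xi\leq\bar\mu\norm{\hat{h}_i(t)}{L^2}^2+\bar\mu|\rho_i^g-\rho_i^f|^2\norm{\hat{f}_i(t)}{L^2}^2. $$
Summing over $i$ and using $\sum_i|\rho_i^g-\rho_i^f|^2\leq\bigl(\sum_i|\rho_i^g-\rho_i^f|\bigr)^2\leq\norm{\bg_0-\bbf_0}{(L^1)^N}^2$ (Proposition~\ref{prop:graph.continuous_dep_L1}), together with $\max_i\norm{\hat{f}_i(t)}{L^2}^2\leq\norm{\bbfhat(t)}{(L^2)^N}^2$ bounded by Parseval and the \emph{a priori} estimate~\eqref{eq:graph.L2_estimate}, yields an additive forcing of the form $C_0\norm{\bbf_0}{(L^2)^N}^2 e^{Kt}\norm{\bg_0-\bbf_0}{(L^1)^N}^2$ for suitable explicit constants $C_0,K>0$. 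The crucial final trick is that, by the normalisation~\eqref{eq:graph.f0_norm}, $\norm{\bg_0-\bbf_0}{(L^1)^N}\leq 2$, so the squared $L^1$-distance can be downgraded to a linear one, producing precisely the linear dependence in $\norm{\bg_0-\bbf_0}{(L^1)^N}$ featured in the statement.

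Collecting all contributions and invoking Parseval's identity, one obtains an integral inequality of the form
$$ \norm{\bh(t)}{(L^2)^N}^2\leq\norm{\bg_0-\bbf_0}{(L^2)^N}^2+\psi(t)\norm{\bg_0-\bbf_0}{(L^1)^N}+2\bigl[\chi(N-1)+\bar\mu\bigl(\ave*{\tfrac{1}{\sqrt{\ubar{p}}}+\tfrac{1}{\sqrt{\ubar{q}}}}+\tfrac{1}{2}\bigr)\bigr]\int_0^t\norm{\bh(\tau)}{(L^2)^N}^2\,d\tau, $$
where $\psi$ is the time-primitive of the source produced by the previous step, and is by construction non-negative, non-decreasing, and zero at $t=0$. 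The revisited Gr\"onwall inequality of Lemma~\ref{lemma:Gronwall} yields at once the claimed estimate. Uniqueness is then immediate by specialising to $\bbf_0=\bg_0$, which makes both $\norm{\bg_0-\bbf_0}{(L^2)^N}$ and $\norm{\bg_0-\bbf_0}{(L^1)^N}$ vanish and thus forces $\bh\equiv 0$ on $[0,\,T]$ for every $T>0$.
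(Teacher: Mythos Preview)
Your proposal is correct and follows essentially the same route as the paper: pass to Fourier, subtract the two equations for $\hat f_i$ and $\hat g_i$, estimate the jump and collision contributions exactly as in Proposition~\ref{prop:graph.L2_regularity}, control the new forcing term $-\mu_i(\rho_i^g-\rho_i^f)\hat f_i$ via Young/Cauchy--Schwarz together with the \emph{a priori} $L^2$ bound~\eqref{eq:graph.L2_estimate} and Proposition~\ref{prop:graph.continuous_dep_L1}, and close with Gr\"onwall. The only cosmetic difference is in the bookkeeping of the source term: the paper keeps $|\rho_i^g-\rho_i^f|$ as a linear factor and uses Proposition~\ref{prop:graph.continuous_dep_L1} directly (together with $|\rho_i^g-\rho_i^f|\leq 1$ on the $\|\hat h_i\|_{L^2}^2$ piece), whereas you square it via Young and then downgrade $\|\bg_0-\bbf_0\|_{(L^1)^N}^2$ to $\|\bg_0-\bbf_0\|_{(L^1)^N}$ using the normalisation~\eqref{eq:graph.f0_norm}; both lead to the same exponent and to a $\psi$ with the stated properties.
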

\begin{proof}
Throughout the proof, we shall use the symbols $\rho_i^f$, $\rho_i^g$ as introduced in the proof of Proposition~\ref{prop:graph.continuous_dep_L1}.

Let $h_i(v,t):=g_i(v,t)-f_i(v,t)$. Subtracting the corresponding terms of~\eqref{eq:graph:Boltz.Fourier-transformed} computed for $\hat{f}_i$ and $\hat{g}_i$, we get 
\begin{align*}
    \partial_t\hat{h}_i+(\chi+\mu_i\rho_i^g)\hat{h}_i &= \chi\sum_{j=1}^{N}P_{ij}\hat{h}_j
        +\mu_i\ave*{\hat{g}_i(p_i\xi,t)\hat{h}_i(q_i\xi,t)+\hat{f}_i(q_i\xi,t)\hat{h}_i(p_i\xi,t)} \\
    &\phantom{=} -\mu_i(\rho_i^g-\rho_i^f)\hat{f}_i, 
\end{align*}
whence, proceeding similarly to the proof of Proposition~\ref{prop:graph.L2_regularity},
\begin{align*}
    \abs{\hat{h}_i}^2 &\leq \abs{\hat{h}_{i,0}}^2e^{-2\chi t} \\
    &\phantom{\leq} +2\int_0^t e^{2\chi(\tau-t)}\left(\chi\sum_{j=1}^{N}\abs{\hat{h}_j}\cdot\abs{\hat{h}_i}
        +\mu_i\ave*{\abs{\hat{h}_i(p_i\xi,t)}+\abs{\hat{h}_i(q_i\xi,t)}}\abs{\hat{h}_i}\right)d\tau \\
    &\phantom{\leq} +2\mu_i\int_0^t e^{2\chi(\tau-t)}\abs{\rho_i^g-\rho_i^f}\cdot\abs{\hat{f}_i}\cdot\abs{\hat{h}_i}\,d\tau
\end{align*}
and further, summing both sides over $i$, integrating with respect to $\xi\in\R$, and pre-multiplying by $e^{2\chi t}$,
\begin{align}
    \begin{aligned}[b]
        e^{2\chi t}\norm{\bhhat(t)}{(L^2)^N}^2 &\leq \norm{\bhhat_0}{(L^2)^N}^2
            +2\left(\chi N+\bar{\mu}\ave*{\frac{1}{\sqrt{\ubar{p}}}+\frac{1}{\sqrt{\ubar{q}}}}\right)\int_0^t e^{2\chi\tau}\norm{\bhhat(\tau)}{(L^2)^N}\,d\tau \\
        &\phantom{\leq} +2\bar{\mu}\int_0^t e^{2\chi\tau}\sum_{i=1}^{N}\abs{\rho_i^g-\rho_i^f}\left(\int_{\R}\abs{\hat{f}_i}\cdot\abs{\hat{h}_i}\,d\xi\right)d\tau.
    \end{aligned}
    \label{eq:graph.partial_estimate.h}
\end{align}

The Cauchy-Schwartz inequality produces
\begin{align*}
    \int_{\R}\abs{\hat{f}_i(\xi,\tau)}\cdot\abs{\hat{h_i}(\xi,\tau)}\,d\xi &\leq \norm{\hat{f}_i(\tau)}{L^2}\norm{\hat{h}_i(\tau)}{L^2} \\
    &\leq \frac{1}{2}\left(\norm{\hat{f}_i(\tau)}{L^2}^2+\norm{\hat{h}_i(\tau)}{L^2}^2\right).
\intertext{Moreover, since $\norm{\hat{f}_i(\tau)}{L^2}^2=\norm{f_i(\tau)}{L^2}^2\leq\norm{\bbf(\tau)}{(L^2)^N}^2$, owing to~\eqref{eq:graph.L2_estimate} we deduce}
    &\leq \frac{1}{2}\left[\norm{\bbf_0}{(L^2)^N}^2e^{2\left(\chi(N-1)+\bar{\mu}\ave*{\frac{1}{\sqrt{\ubar{p}}}}\right)\tau}+\norm{\hat{h}_i(\tau)}{L^2}^2\right],
\end{align*}
thus we continue~\eqref{eq:graph.partial_estimate.h} as
\begin{align*}
    e^{2\chi t}\norm{\bhhat(t)}{(L^2)^N}^2 &\leq \norm{\bhhat_0}{(L^2)^N}^2
        +2\left(\chi N+\bar{\mu}\ave*{\frac{1}{\sqrt{\ubar{p}}}+\frac{1}{\sqrt{\ubar{q}}}}\right)\int_0^t e^{2\chi\tau}\norm{\bhhat(\tau)}{(L^2)^N}\,d\tau \\
    &\phantom{\leq} +\bar{\mu}\norm{\bbf_0}{(L^2)^N}^2\int_0^t e^{2\left(\chi N
        +\bar{\mu}\ave*{\frac{1}{\sqrt{\ubar{p}}}}\right)\tau}\sum_{i=1}^{N}\abs{\rho_i^g-\rho_i^f}\,d\tau \\
    &\phantom{\leq} +\bar{\mu}\int_0^t e^{2\chi \tau}\sum_{i=1}^{N}\abs{\rho_i^g-\rho_i^f}\cdot\norm{\hat{h}_i(\tau)}{L^2}^2\,d\tau.
\end{align*}

Let
$$ \psi(t):=\bar{\mu}\norm{\bbf_0}{(L^2)^N}^2\int_0^t e^{2\left(\chi N+\bar{\mu}\ave*{\frac{1}{\sqrt{\ubar{p}}}}\right)\tau}\,d\tau, $$
which is non-negative, non-decreasing and vanishes at $t=0$. We use it in the third term on the right-hand side of the previous inequality together with Proposition~\ref{prop:graph.continuous_dep_L1}, while in the fourth term we observe that, based on Proposition~\ref{prop:graph.rhoi_nonneg} and Theorem~\ref{theo:graph.L1_regularity}, it results $\abs{\rho_i^g-\rho_i^f}\leq 1$ for all $i=1,\,\dots,\,N$. This way, we discover
\begin{align*}
    e^{2\chi t}\norm{\bhhat(t)}{(L^2)^N}^2 &\leq \norm{\bhhat_0}{(L^2)^N}^2+\psi(t)\norm{\bh_0}{(L^1)^N} \\
    &\phantom{\leq} +2\left[\chi N+\bar{\mu}\left(\ave*{\frac{1}{\sqrt{\ubar{p}}}+\frac{1}{\sqrt{\ubar{q}}}}
        +\frac{1}{2}\right)\right]\int_0^t e^{2\chi\tau}\norm{\bhhat(\tau)}{(L^2)^N}^2\,d\tau,
\end{align*}
whence we get the claimed continuous dependence estimate by applying Gr\"{o}nwall's inequality to the function $e^{2\chi t}\norm{\bhhat(t)}{(L^2)^N}^2$ and appealing to Parseval's identity.

Finally, if $\bbf_0=\bg_0$ then $\bh_0=\mathbf{0}$. The continuous dependence estimate just proved yields then $\bh\equiv\mathbf{0}$, which implies the uniqueness of the solution.
\end{proof}

\begin{remark} \label{rem:graph.time_dependent_mui}
By carefully inspecting the proofs of Proposition~\ref{prop:graph.L2_regularity} and Theorem~\ref{theo:graph.cont_dep} we see that those results hold also in the more general case of \textit{time-dependent} $\mu_i$'s, provided the non-negativity and the boundedness of the latter are guaranteed.
\end{remark}

\subsection{Trend to equilibrium}
\label{sect:graph.trend_equil}
The continuous dependence estimate provided by Theorem~\ref{theo:graph.cont_dep} does not disclose any particular trend of the solutions to~\eqref{eq:graph.Boltztype.weak} for large times, because the exponential term contained in the estimate diverges invariably for $t\to +\infty$. Conversely, an estimate like that of Proposition~\ref{prop:trend.d2}, by indicating that any two solutions approach each other in the long run, would imply that there exists at most one equilibrium distribution, towards which every solution would converge asymptotically in time.

Proving a similar result on a graph requires first of all a suitable adaptation of the analytical tools at the basis of Proposition~\ref{prop:trend.d2}. In particular, it is necessary to handle the fact that, as already mentioned, Fourier metrics are conceived for probability measures, which the $f_i$'s in~\eqref{eq:graph.Boltztype.weak} are not due to their time-varying masses. To overcome this difficulty, we introduce the \textit{normalised kinetic distribution functions} $F_i=F_i(v,t)$ such that
$$ f_i(v,t)=\rho_i(t)F_i(v,t), $$
which are clearly non-negative and moreover satisfy
$$ \int_{\R}F_i(v,t)\,dv=1, \qquad \forall\,t\geq 0,\ \forall\,i=1,\,\dots,\,N $$
by construction. Briefly,
\begin{quote}
the $F_i$'s are probability measures expressing the statistical distribution of the microscopic state $v$ within the vertex $i$ of the graph regardless of the mass of agents populating that vertex.
\end{quote}
Notice that $F_i(\cdot,t)$ is technically indefinite if $\rho_i(t)=0$, for then also $f_i(\cdot,t)\equiv 0$.

We consider now two initial distributions $\bbf_0$, $\bg_0$ featuring the \textit{same mass distribution} on the graph, i.e. such that $\rho_{i,0}^f=\rho_{i,0}^g$ for all $i=1,\,\dots,\,N$. Owing to the uniqueness of the solution to~\eqref{eq:graph.system_rhoi}, we deduce that $\rho_i^f(t)=\rho_i^g(t)$ for all $t>0$ and all $i=1,\,\dots,\,N$, hence that the solutions $\bbf$, $\bg$ to~\eqref{eq:graph.Boltztype.weak} issuing from $\bbf_0$, $\bg_0$, respectively, have the same mass distribution on the graph at all times. Therefore, we can write
$$ f_i(v,t)=\rho_i(t)F_i(v,t), \qquad g_i(v,t)=\rho_i(t)G_i(v,t), $$
the coefficient $\rho_i(t)$ being the same in both expressions. On the other hand, the kinetic distribution functions $f_i$, $g_i$ differ in general from each other on the whole, because the statistical distribution of the microscopic state $v$ in the vertex $i$ at time $t$ can change depending on whether the initial condition is $\bbf_0$ or $\bg_0$. Within this perspective, it is clear that quantifying the distance between $\bbf$ and $\bg$ amounts to evaluating the distance between $\bF:=(F_1,\,\dots,\,F_N)$ and $\bG:=(G_1,\,\dots,\,G_N)$, whose components are probability distributions in $v$ for all $t\geq 0$.

For $s>0$, let us define
\begin{equation}
    D_s(\bbf(t),\bg(t)):=\sum_{i=1}^{N}\rho_i(t)d_s(F_i(t),G_i(t)),
    \label{eq:graph.Ds}
\end{equation}
where $d_s$ denotes the $s$-Fourier metric~\eqref{eq:ds}. Notice that $d_s(F_i(t),G_i(t))$ is well defined at least for $s\leq 1$, because the zeroth-order $v$-moments of $F_i$ and $G_i$ equal both $1$ by construction (cf. Proposition~\ref{prop:ds}\ref{prop:ds.finiteness}). Specific properties of the coefficients $p_i$, $q_i$ of the intra-vertex interaction rules~\eqref{eq:graph.intrules} can possibly allow for larger values of $s$. For the moment, we defer their detailed discussion and keep $s$ generic.

Invoking the properties of $d_s$ as a metric, it is not difficult to see that also the quantity $D_s$ in~\eqref{eq:graph.Ds} is a metric. Furthermore, using the $F_i$'s the Fourier-transformed version of~\eqref{eq:graph.Boltztype.weak} reads
\begin{equation}
    \partial_t(\rho_i\hat{F}_i)=\chi\left(\sum_{j=1}^{N}P_{ij}\rho_j\hat{F}_j-\rho_i\hat{F}_i\right)
        +\mu_i\rho_i^2\left(\ave{\hat{F}_i(p_i\xi,t)\hat{F}_i(q_i\xi,t)}-\hat{F}_i\right).
    \label{eq:graph.Fi_Fourier}
\end{equation}
Considering that $\abs{\hat{F}_i}\leq 1$ for all $i=1,\,\dots,\,N$ by construction, this equation puts in evidence that the effective rate of the inter-vertex jumps scales linearly with the density of the agents in the vertices while that of the intra-vertex interactions scales quadratically. This suggests that in densely populated vertices, where the density is close to $1$, the two effective rates are of the same order of magnitude, whereas in poorly populated vertices, where the density is close to $0$, intra-vertex interactions are much less frequent than inter-vertex jumps.

On the whole, the fact that the agents interact possibly less than they move might hinder the emergence of \textit{universal} statistical distributions of their microscopic state in the vertices, i.e. of equilibrium distributions independent of the initial conditions. For this reason, we consider henceforth rates $\mu_i$ of the form
\begin{equation}
    \mu_i=\frac{\mu}{\rho_i}, \qquad i=1,\,\dots,\,N,
    \label{eq:graph.mui}
\end{equation}
where $\mu>0$ is a constant, which make the effective intra-vertex interaction rates comparable to the effective inter-vertex jump rates in every density regime.

Assumption~\eqref{eq:graph.mui} implies that the $\mu_i$'s are time-dependent but certainly non-negative. Furthermore, if the initial density distribution is such that $\rho_{i,0}>0$ for all $i=1,\,\dots,\,N$ then, owing to Proposition~\ref{prop:graph.rhoi_nonneg} and Theorem~\ref{theo:graph.rhoi_equil}, the $\mu_i$'s in~\eqref{eq:graph.mui} are also bounded, because the $\rho_i$'s are uniformly bounded away from zero for all $t>0$ (the constant bounding the $\mu_i$'s from above depending in general on the initial condition $\brho_0$, as the latter affects the minimum values taken by the $\rho_i$'s over time). In conclusion, under reasonable assumptions the $\mu_i$'s in~\eqref{eq:graph.mui} are well defined. Furthermore, as asserted by Remark~\ref{rem:graph.time_dependent_mui}, they still guarantee the validity of the theory developed so far, in particular of Proposition~\ref{prop:graph.L2_regularity} and Theorem~\ref{theo:graph.cont_dep}.

Finally, concerning the choice of the index $s$ in~\eqref{eq:graph.Ds}, we can mimic the regime considered in Sections~\ref{sect:basic_theory}--\ref{sect:Fokker--Planck} by requiring the conservation of the mean state and the dissipation of the energy in each vertex. To this purpose, we observe that the $M_{1,i}$'s introduced in~\eqref{eq:graph.M1i} are precisely the first moments of the $F_i$'s. Under~\eqref{eq:graph.mui}, they satisfy the system of equations
$$ \frac{dM_{1,i}}{dt}
    =\mu(\ave{p_i+q_i}-1)M_{1,i}+\chi\sum_{\substack{j=1,\,\dots,\,N \\ j\neq i}}P_{ij}\frac{\rho_j}{\rho_i}(M_{1,j}-M_{1,i}). $$
We observe that Remark~\ref{rk:graph.Mcons} holds in this case as well, i.e., constant-in-time solutions exist if $\ave{p_i+q_i}=1$ for all $i=1,\,\dots,\,N$ and if the initial conditions have all the same mean, say $\cpM_1$.
Proceeding similarly, we discover from~\eqref{eq:graph.Boltztype.weak} with $\varphi(v)=v^2$, and taking~\eqref{eq:graph.mui} into account, that the energies
$$ M_{2,i}(t)=\int_{\R}v^2F_i(v,t)\,dv $$
in the vertices of the graph satisfy the system of equations
$$ \frac{dM_{2,i}}{dt}=\mu(\ave{p_i^2+q_i^2}-1)M_{2,i}+\chi\sum_{\substack{j=1,\,\dots,\,N \\ j\neq i}}P_{ij}\frac{\rho_j}{\rho_i}(M_{2,j}-M_{2,i})
    +2\mu\ave{p_iq_i}M_{1,i}^2, $$
which show that $\ave{p_i^2+q_i^2}<1$ for all $i=1,\,\dots,\,N$ is a condition to obtain that the intra-vertex interactions dissipate the energy.

On the whole, under the conditions discussed so far, we can let $s=2$ in~\eqref{eq:graph.Ds}. Therefore, after fixing
\begin{equation}
    \ave{p_i+q_i}=1, \qquad \ave{p_i^2+q_i^2}<1, \qquad \forall\,i=1,\,\dots,\,N
    \label{eq:graph.assumptions_pi.qi}
\end{equation}
we are in a position to prove:
\begin{theorem} \label{theo:graph.trend.D2}
Under~\eqref{eq:graph.assumptions_pi.qi}, let $\bbf(t),\,\bg(t)\in(\cP(\R))^N$ be two solutions to~\eqref{eq:graph.Boltztype.weak} issuing from two initial conditions $\bbf_0,\,\bg_0\in(\cP(\R))^N$ such that $\rho_{i,0}^f=\rho_{i,0}^g>0$ for every $i=1,\,\dots,\,N$. Assume moreover that $\bF_0$, $\bG_0$ are such that $M_{1,i}^F(0)=M_{1,i}^G(0)=\cpM_1$ for every $i=1,\,\dots,\,N$, being $\cpM_1\in\R$ a prescribed constant independent of $i$. Finally, let the intra-vertex interaction rates $\mu_i$ be given by~\eqref{eq:graph.mui}. Then
$$ D_2(\bbf(t),\bg(t))\leq D_2(\bbf_0,\bg_0)e^{\mu\left(\max\limits_{i=1,\,\dots,\,N}\ave{p_i^2+q_i^2}-1\right)t},
    \qquad \forall\,t>0. $$

In particular,
$$ \lim_{t\to +\infty}D_2(\bbf(t),\bg(t))=0. $$
\end{theorem}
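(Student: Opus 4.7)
The plan is to work on the Fourier side via equation~\eqref{eq:graph.Fi_Fourier} for the products $\rho_i\hat{F}_i$ (and the analogous one for $\rho_i\hat{G}_i$), exploiting the cancellation induced by the rescaling $\mu_i=\mu/\rho_i$ and by the left-stochasticity of $\bP$. First I would justify that each $d_2(F_i(t),G_i(t))$ is well-defined for every $t\geq 0$: since $\bbf_0$ and $\bg_0$ carry the same mass in each vertex, uniqueness for~\eqref{eq:graph.system_rhoi} gives $\rho_i^f(t)=\rho_i^g(t)=\rho_i(t)$, and the hypothesis $M_{1,i}^F(0)=M_{1,i}^G(0)=\cpM_1$ combined with $\ave{p_i+q_i}=1$ in~\eqref{eq:graph.assumptions_pi.qi} makes $M_{1,i}(t)\equiv\cpM_1$ a solution to the linear moment system of Remark~\ref{rk:graph.Mcons}; by uniqueness, this solution coincides with both the $F$- and $G$-moment trajectories, so $F_i(t)$ and $G_i(t)$ share the zeroth and first moments, and Proposition~\ref{prop:ds}\ref{prop:ds.finiteness} applies.

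Next I would subtract the Fourier equations for $\rho_i\hat{F}_i$ and $\rho_i\hat{G}_i$ under the scaling $\mu_i=\mu/\rho_i$, obtaining
\[
\partial_t\bigl(\rho_i(\hat{F}_i-\hat{G}_i)\bigr)+(\chi+\mu)\rho_i(\hat{F}_i-\hat{G}_i)
=\chi\sum_{j=1}^{N}P_{ij}\rho_j(\hat{F}_j-\hat{G}_j)
+\mu\rho_i\ave{\hat{F}_i(p_i\xi,t)\hat{F}_i(q_i\xi,t)-\hat{G}_i(p_i\xi,t)\hat{G}_i(q_i\xi,t)}.
\]
Dividing by $\abs{\xi}^2$, applying the usual telescoping identity to the nonlinear bracket together with $\abs{\hat{F}_i},\,\abs{\hat{G}_i}\leq 1$ from Lemma~\ref{lemma:Fourier_basic}, and proceeding as in the proof of Proposition~\ref{prop:trend.d2} (i.e.\ multiplying by $e^{(\chi+\mu)t}$, integrating in time and only then taking the supremum over $\xi\neq 0$, to avoid differentiating the sup directly), I would deduce for the non-negative quantity $D_{2,i}(t):=\rho_i(t)d_2(F_i(t),G_i(t))$ the differential inequality
\[
\frac{d}{dt}D_{2,i}(t)\leq -\bigl(\chi+\mu(1-\ave{p_i^2+q_i^2})\bigr)D_{2,i}(t)+\chi\sum_{j=1}^{N}P_{ij}D_{2,j}(t).
\]

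The decisive step is then summing over $i=1,\dots,N$: thanks to $\sum_{i}P_{ij}=1$ from~\eqref{eq:graph.Pij_sum}, the jump terms cancel exactly against the $-\chi D_{2,i}$ contribution, leaving
\[
\frac{d}{dt}D_2(t)\leq -\mu\sum_{i=1}^{N}(1-\ave{p_i^2+q_i^2})D_{2,i}(t)\leq -\mu\Bigl(1-\max_{i=1,\dots,N}\ave{p_i^2+q_i^2}\Bigr)D_2(t),
\]
where the second inequality uses~\eqref{eq:graph.assumptions_pi.qi} to ensure positivity of each factor. Standard Gronwall then yields the claimed exponential decay, and the limit $D_2(t)\to 0$ follows from $\max_i\ave{p_i^2+q_i^2}<1$. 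In my view the two delicate points are (i) making rigorous the differentiation of a supremum, routinely settled by the integrate-then-sup trick already used in Theorem~\ref{theo:cont_dep} and Proposition~\ref{prop:trend.d2}, and (ii) recognising that the scaling $\mu_i=\mu/\rho_i$ is precisely what cleans the $\rho_i$-factors out of the contraction estimate on the nonlinear term, so that the stochasticity of $\bP$ can produce the vertex-sum cancellation; with a generic $\mu_i$, a $\rho_i$-factor would survive in the coefficient of $D_{2,i}$ and break the passage to a scalar Gronwall inequality.
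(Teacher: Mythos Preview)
Your proposal is correct and follows essentially the same route as the paper's proof: subtract the Fourier equations for $\rho_i\hat{F}_i$ and $\rho_i\hat{G}_i$ under the scaling $\mu_i=\mu/\rho_i$, divide by $\abs{\xi}^2$, telescope the bilinear term, bound $\abs{\hat{F}_i},\abs{\hat{G}_i}\leq 1$, then use the integrate--then--sup trick and sum over $i$ so that the left-stochasticity~\eqref{eq:graph.Pij_sum} collapses the jump terms before applying Gr\"onwall. The only cosmetic difference is that you observe the exact cancellation $\sum_i\bigl(-\chi D_{2,i}+\chi\sum_j P_{ij}D_{2,j}\bigr)=0$ \emph{before} Gr\"onwall, whereas the paper carries the $\chi$-contribution inside the integrating factor $e^{(\chi+\mu)t}$ and lets it cancel against the Gr\"onwall constant $\chi+\mu\max_i\ave{p_i^2+q_i^2}$ at the end; the two are algebraically identical.
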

\begin{proof}
The assumptions of the theorem ensure that $\bbf(t)$, $\bg(t)$ have the same mass for all $t>0$ and that the distances $d_2(F_i(t),G_i(t))$, $i=1,\,\dots,\,N$, and $D_2(\bbf(t),\bg(t))$ are well-defined.

Let $H_i(\xi,t):=\frac{\hat{G}_i(\xi,t)-\hat{F}_i(\xi,t)}{\abs{\xi}^2}$. From~\eqref{eq:graph.Fi_Fourier}, letting $\rho_i(t):=\rho_i^f(t)=\rho_i^g(t)$, it is not difficult to see that $H_i$ satisfies
$$ \partial_t(\rho_iH_i)+(\chi+\mu)\rho_iH_i=\chi\sum_{j=1}^{N}P_{ij}\rho_jH_j
    +\mu\ave{p_i^2\rho_iH_i(p_i\xi,t)\hat{G}_i(q_i\xi,t)+q_i^2\rho_iH_i(q_i\xi,t)\hat{F}_i(p_i\xi,t)}. $$
Multiplying both sides by $e^{(\chi+\mu)t}$ and taking the absolute value yields
$$ \partial_t\left(e^{(\chi+\mu)t}\rho_i\abs{H_i}\right)\leq e^{(\chi+\mu)t}\left(\chi\sum_{j=1}^{N}P_{ij}\rho_j\abs{H_j}
    +\mu\ave{p_i^2\rho_i\abs{H_i(p_i\xi,t)}+q_i^2\rho_i\abs{H_i(q_i\xi,t)}}\right), $$
where we have used that $\abs{\hat{F}_i(p_i\xi,t)},\,\abs{\hat{G}_i(q_i\xi,t)}\leq 1$. Since $\sup_{\xi\in\R\setminus\{0\}}\abs{H_i(\xi,t)}=d_2(F_i(t),G_i(t))$, on the right-hand side we can estimate $\abs{H_i}\leq d_2(F_i,G_i)$, which implies
\begin{align*}
    \partial_t\left(e^{(\chi+\mu)t}\rho_i\abs{H_i}\right) &\leq e^{(\chi+\mu)t}\left(\chi\sum_{j=1}^{N}P_{ij}\rho_jd_2(F_j,G_j)
        +\mu\ave{p_i^2+q_i^2}\rho_id_2(F_i,G_i)\right) \\
    &\leq e^{(\chi+\mu)t}\left[\chi\sum_{j=1}^{N}P_{ij}\rho_jd_2(F_j,G_j)
        +\mu\left(\max_{i=1,\,\dots,\,N}\ave{p_i^2+q_i^2}\right)\rho_id_2(F_i,G_i)\right].
\end{align*}
Summing now over $i=1,\,\dots,\,N$ while recalling~\eqref{eq:graph.Pij_sum} produces
$$ \partial_t\left(e^{(\chi+\mu)t}\sum_{i=1}^{N}\rho_i\abs{H_i}\right)\leq
    \left(\chi+\mu\max_{i=1,\,\dots,\,N}\ave{p_i^2+q_i^2}\right)e^{(\chi+\mu)t}D_2(\bbf,\bg); $$
integrating then both sides in time on $[0,\,t]$, $t>0$, gives
\begin{align*}
    e^{(\chi+\mu)t}D_2(\bbf(t),\bg(t)) &\leq D_2(\bbf_0,\bg_0) \\
    &\phantom{\leq} +\left(\chi+\mu\max_{i=1,\,\dots,\,N}\ave{p_i^2+q_i^2}\right)\int_0^t e^{(\chi+\mu)\tau}D_2(\bbf(\tau),\bg(\tau))\,d\tau,
\end{align*}
whence the thesis follows by applying Gr\"{o}nwall's inequality to $e^{(\chi+\mu)t}D_2(\bbf(t),\bg(t))$ and observing furthermore that $\max_{i=1,\,\dots,\,N}\ave{p_i^2+q_i^2}<1$ because of~\eqref{eq:graph.assumptions_pi.qi}.
\end{proof}

Theorem~\ref{theo:graph.trend.D2} implies that, given the trajectory $t\mapsto\brho(t)$ of~\eqref{eq:graph.system_rhoi} issuing from a prescribed initial condition $\brho_0$ and evolving towards the equilibrium $\brho^\infty$, there exists at most one Maxwellian $\bbf^\infty$ towards which every solution to~\eqref{eq:graph.Boltztype.weak} evolves in time.

We notice that despite the global asymptotic stability of $\brho^\infty$ asserted by Theorem~\ref{theo:graph.rhoi_equil}, which implies that every trajectory of~\eqref{eq:graph.system_rhoi} evolves towards $\brho^\infty$ regardless of the initial condition $\brho_0$, the result of Theorem~\ref{theo:graph.trend.D2} is linked to a specific mass trajectory $t\mapsto\brho(t)$. From the technical point of view, this is due to the definition itself of the metric $D_2$, cf.~\eqref{eq:graph.Ds}. The interpretation from the modelling point of view is that one should expect the emergence of a \textit{unique} equilibrium statistical distribution of the microscopic state $v$ in the vertices of the graph only for given dynamics of mass transfer across the vertices over time. In other words, the equilibrium density $\brho^\infty$ is not sufficient by itself to determine a unique $\bF^\infty$, because in principle the way in which the system gets to $\brho^\infty$ from a certain $\brho_0$ through successive exchanges of agents from vertex to vertex might also matter.

\medskip

The theoretical framework of Theorem~\ref{theo:graph.trend.D2}, particularly the use of the metric $D_2$, turns out to be essential to bring to light the asymptotic trends discussed so far. Alternatively, we could rely also in this case on the $L^2$ functional framework. In a setting analogous to that of Theorem~\ref{theo:graph.trend.D2} but without the requirement of identical first moments of $\bF_0$, $\bG_0$ and moreover with assumption~\eqref{eq:graph.assumptions_pi.qi} replaced by $\min\{\ubar{p},\,\ubar{q}\}>0$, we can improve the continuous dependence estimate provided by Theorem~\ref{theo:graph.cont_dep} taking advantage of $\rho_i^f(t)=\rho_i^g(t)$ for all $t\geq 0$ and of the normalisation~\eqref{eq:graph.mui}. The improved estimate reads:
$$ \norm{\bg(t)-\bbf(t)}{(L^2)^N}\leq\norm{\bg_0-\bbf_0}{(L^2)^N}e^{\left[\chi(N-1)+\mu\left(\ave*{\frac{1}{\sqrt{\ubar{p}}}+\frac{1}{\sqrt{\ubar{q}}}}-1\right)\right]t},
    \qquad \forall\,t>0, $$
whence we conclude on the existence of at most one attractive equilibrium distribution if the right-hand side decreases to zero in time, which happens if and only if
\begin{equation}
    \ave*{\frac{1}{\sqrt{\ubar{p}}}+\frac{1}{\sqrt{\ubar{q}}}}<1-\frac{\chi}{\mu}(N-1).
    \label{eq:graph.condition_decrease_trinorm_L2}
\end{equation}
Notice that, owing to the positiveness of $\ubar{p}$, $\ubar{q}$, this is possible if $\frac{\chi}{\mu}<\frac{1}{N-1}$, i.e. if the rate of intra-vertex interactions is sufficiently larger than that of the inter-vertex jumps. Conversely, the more general estimate of Theorem~\ref{theo:graph.cont_dep} cannot lead to an analogous conclusion for any choice of the parameters.

Similarly, we can improve the estimate of time propagation of the $L^2$-norm provided by Proposition~\ref{prop:graph.L2_regularity} as
$$ \norm{\bbf(t)}{(L^2)^N}\leq\norm{\bbf_0}{(L^2)^N}
    e^{\left[\chi(N-1)+\mu\left(\min\left\{\ave*{\frac{1}{\sqrt{\ubar{p}}}},\,\ave*{\frac{1}{\sqrt{\ubar{q}}}}\right\}-1\right)\right]t},
        \qquad \forall\,t>0, $$
which, since
$$ \min\left\{\ave*{\frac{1}{\sqrt{\ubar{p}}}},\,\ave*{\frac{1}{\sqrt{\ubar{q}}}}\right\}
    \leq\ave*{\frac{1}{\sqrt{\ubar{p}}}+\frac{1}{\sqrt{\ubar{q}}}}, $$
under condition~\eqref{eq:graph.condition_decrease_trinorm_L2} implies $\norm{\bbf(t)}{(L^2)^N}\to 0$ as $t\to +\infty$. 

In conclusion, the $L^2$-metric allows us to find mainly conditions for certain trivial universal trends to emerge, such as the decay to zero of the statistical distribution of $v$ in every vertex of the graph. This occurs when the coefficients of~\eqref{eq:graph.intrules} are so large that the agents disperse greatly their microscopic states in every intra-vertex interaction, as confirmed by condition~\eqref{eq:graph.condition_decrease_trinorm_L2}. Conditions~\eqref{eq:graph.assumptions_pi.qi} allow instead for physically more significant trends, as the analysis of the first two statistical moments of $\bF$ suggests. In view of Theorem~\ref{theo:graph.trend.D2}, the metric $D_2$ can reveal asymptotically such trends.

\subsection{Quasi-invariant limit and explicit equilibria on the graph}
In Section~\ref{sect:graph.trend_equil} we have investigated the trend to equilibrium of~\eqref{eq:graph.Boltztype.weak}, however without providing any closed form for equilibrium distributions on the graph. The reason is clearly that the system of Boltzmann-type equations~\eqref{eq:graph.Boltztype.weak} is not straightforwardly amenable to explicit analytical computations. To cope with this difficulty of the theory, here we adapt to~\eqref{eq:graph.Boltztype.weak} the quasi-invariant limit technique discussed in Section~\ref{sect:Fokker--Planck}. 

We refer again to the quasi-invariant regime of the interaction parameters defined by \eqref{eq:qeps}, \eqref{eq:peps.eta}, assuming that the coefficients $\lambda$, $\sigma$ are possibly vertex-dependent: $\lambda_i,\,\sigma_i>0$, $i\in\cI$, consistently with the inter-vertex variability of the interaction coefficients $p_i$, $q_i$. For the sake of clarity, we report here the scaled parameters:
\begin{equation}
    p_i^\epsilon=1-\epsilon\lambda_i+\sqrt{\epsilon}\sigma_i\eta_i,
        \qquad q_i^\epsilon=\epsilon\lambda_i,
    \label{eq:graph.q_i}
\end{equation}
where the $\eta_i$'s are independent random variables satisfying~\eqref{eq:eta} for all $i\in\cI$ and $\epsilon>0$ is the usual small scaling parameter. We analyse such a regime on the slow time scale $\epsilon t$. Then, in each vertex of the graph we define the time-scaled kinetic distribution function $f_i^\epsilon(v,t):=f_i(v,t/\epsilon)$, which carries the mass
$$ \rho_i^\epsilon(t):=\int_\R f_i^\epsilon(v,t)\,dv $$
and has mean
$$ M_{1,i}^\epsilon(t):=\frac{1}{\rho_i^\epsilon(t)}\int_\R vf_i^\epsilon(v,t)\,dv. $$
Parallelly, also the migration dynamics across the vertices have to be reformulated consistently with the quasi-invariant regime. To this aim, we scale the entries of the transition matrix $\bP$ as
\begin{equation}
    P_{ij}^\epsilon:=\epsilon P_{ij} \quad \text{if } i\neq j, \qquad
        P_{jj}^\epsilon:=1-\sum_{\substack{i=1,\,\dots,\,N \\ i\neq j}}P_{ij}^\epsilon,
    \label{eq:Peps}
\end{equation}
so that migrating to a different vertex has a small probability of order $\epsilon$ whereas staying in a given vertex has a probability close to $1$. Recalling~\eqref{eq:graph.Pij_sum}, we rewrite~\eqref{eq:graph.Boltztype.weak} as
\begin{align*}
    \frac{d}{dt}\int_\R\varphi(v)f_i(v,t)\,dv &= 
        \chi\sum_{\substack{j=1,\,\dots,\,N \\ j\neq i}}\int_\R\varphi(v)(P_{ij}f_j(v,t)-P_{ji}f_i(v,t))\,dv \\
    &\phantom{=} +\mu_i\int_\R\varphi(v)Q_i(f_i,f_i)(v,t)\,dv
\end{align*}
for $i\in\cI$, which in the quasi-invariant regime~\eqref{eq:graph.q_i},~\eqref{eq:Peps} on the time scale $\epsilon t$ becomes
\begin{align}
    \begin{aligned}[b]
        \frac{d}{dt}\int_\R\varphi(v)f_i^\epsilon(v,t)\,dv &= \frac{\chi}{\epsilon}\sum_{\substack{j=1,\,\dots,\,N \\ j\neq i}}
            \int_\R\varphi(v)\left(P^\epsilon_{ij}f^\epsilon_j(v,t)-P^\epsilon_{ji}f^\epsilon_i(v,t)\right)dv \\
        &\phantom{=} +\frac{\mu_i}{\epsilon}\int_\R\int_\R\ave{\varphi({v_i^\epsilon}')-\varphi(v)}
            f_i^\epsilon(v,t)f_i^\epsilon(v_\ast,t)\,dv\,dv_\ast \\
        &= \chi\sum_{\substack{j=1,\,\dots,\,N \\ j\neq i}}\int_\R\varphi(v)\left(P_{ij}f^\epsilon_j(v,t)-P_{ji}f^\epsilon_i(v,t)\right)dv \\
        &\phantom{=} +\frac{\mu_i}{\epsilon}\int_\R\int_\R\ave{\varphi({v_i^\epsilon}')-\varphi(v)}
            f_i^\epsilon(v,t)f_i^\epsilon(v_\ast,t)\,dv\,dv_\ast \\
        &= \chi\int_\R\varphi(v)\left(\sum_{j=1}^{N}P_{ij}f^\epsilon_j(v,t)-f^\epsilon_i(v,t)\right)dv \\
        &\phantom{=} +\frac{\mu_i}{\epsilon}\int_\R\int_\R\ave{\varphi({v_i^\epsilon}')-\varphi(v)}
            f_i^\epsilon(v,t)f_i^\epsilon(v_\ast,t)\,dv\,dv_\ast,
    \end{aligned}
    \label{eq:graph.Boltztype.scaled}
\end{align}
where ${v_i^\epsilon}'=p_i^\epsilon v+q_i^\epsilon v_\ast$. Letting $\varphi(v)=1,\,v$ in~\eqref{eq:graph.Boltztype.scaled}, we see that the time evolutions of $\rho_i^\epsilon$, $M_{1,i}^\epsilon$ are still ruled by~\eqref{eq:graph.system_rhoi},~\eqref{eq:graph.system_M1i}, thus they are the same as those on the original time scale $t$ for the unscaled migration and interaction dynamics. It follows that $\rho_i^\epsilon=\rho_i$ and $M_{1,i}^\epsilon=M_{1,i}$ for every $\epsilon>0$, being $\rho_i$, $M_{1,i}$ the density and mean of the unscaled distribution function $f_i$. In particular, from Remark~\ref{rk:graph.Mcons} we know that $M_{1,i}$ is not conserved in general.

Performing in~\eqref{eq:graph.Boltztype.scaled} computations analogous to those in~\eqref{eq:Boltztype.scaled} we obtain:
\begin{align}
    \begin{aligned}[b]
        \frac{d}{dt}\int_\R\varphi(v)f_i^\epsilon(v,t)\,dv &= \chi\int_\R\varphi(v)\left(\sum_{j=1}^{N}P_{ij}f^\epsilon_j(v,t)-
            f^\epsilon_i(v,t)\right)dv \\
	&\phantom{=} +\lambda_i\mu_i\rho_i(t)\int_\R\varphi'(v)(M_{1,i}(t)-v)f_i^\epsilon(v,t)\,dv \\
        &\phantom{=} +\frac{\mu_i\sigma_i^2}{2}\rho_i(t)\int_\R\varphi''(v)v^2f_i^\epsilon(v,t)\,dv \\
	&\phantom{=} +\frac{\epsilon\lambda_i^2\mu_i}{2}\int_\R\int_\R\varphi''(v)
            (v_\ast-v)^2f_i^\epsilon(v,t)f_i^\epsilon(v_\ast,t)\,dv\,dv_\ast \\
	&\phantom{=} +\frac{\mu_i}{6\epsilon}\int_\R\int_\R\ave*{\varphi'''(\bar{v}_\epsilon)
		\bigl((p_i^\epsilon-1)v+q_i^\epsilon v_\ast\bigr)^3}f_i^\epsilon(v,t)f_i^\epsilon(v_\ast,t)\,dv\,dv_\ast,
    \end{aligned}
    \label{eq:graph.Boltztype.scaled.bis}
\end{align}
where now, unlike~\eqref{eq:Boltztype.scaled}, the non-constant terms $\rho_i(t)$, $M_{1,i}(t)$ appear owing to the fact that the $f_i^\epsilon$'s carry the masses $\rho_i$ rather than constant unitary masses and that, as discussed before, their first statistical moments are not conserved in general. By estimating the last two terms on the right-hand side of~\eqref{eq:graph.Boltztype.scaled.bis} as done in Section~\ref{sect:formal_q-i_limit} with the remainder $R_\epsilon(t)$, we see that, in the limit $\epsilon\to 0^+$, $f_i^\epsilon$ approaches formally the solution $g_i$ of
\begin{align*}
    \frac{d}{dt}\int_\R\varphi(v)g_i(v,t)\,dv &= \chi\int_\R\varphi(v)\left(\sum_{j=1}^{N}P_{ij}g_j(v,t)-g_i(v,t)\right)dv \\
    &\phantom{=} +\lambda_i\mu_i\rho_i(t)\int_\R\varphi'(v)(M_{1,i}(t)-v)g_i(v,t)\,dv \\
    &\phantom{=} +\frac{\mu_i\sigma_i^2}{2}\rho_i(t)\int_\R\varphi''(v)v^2g_i(v,t)\,dv,
\end{align*}
for every $\varphi\in C^3(\R)$, where $\rho_i$, $M_{1,i}$ play also the role of mass and mean of $g_i$:
\begin{equation}
    \int_\R g_i(v,t)\,dv=\rho_i(t), \qquad \int_\R vg_i(v,t)\,dv=\rho_i(t)M_{1,i}(t).
    \label{eq:graph.gi_mass_mean}
\end{equation}
With $\varphi\in C^3_c(\R)$ we obtain, in particular, the strong form
\begin{equation}
    \frac{\partial g_i}{\partial t}+\lambda_i\mu_i\rho_i(t)\frac{\partial}{\partial v}\bigl((M_{1,i}(t)-v)g_i\bigr)=
        \frac{\mu_i\sigma_i^2}{2}\rho_i(t)\frac{\partial^2}{\partial v^2}(v^2g_i)
            +\chi\left(\sum_{j=1}^{N}P_{ij}g_j-g_i\right), \quad i\in\cI
    \label{eq:graph.FP.strong}
\end{equation}
to be coupled to~\eqref{eq:graph.system_rhoi},~\eqref{eq:graph.system_M1i} yielding the evolutions of the $\rho_i$'s and the $M_{1,i}$'s.

In practice, we have transformed the system~\eqref{eq:graph.Boltztype.weak} of $N$ Boltzmann-type equations on the graph in a system of $N$ Fokker--Planck equations on the graph coupled by the same jump operator. There are two main differences with respect to the quasi-invariant limit procedure leading to a standard Fokker--Planck equation such as~\eqref{eq:FP.strong}, both due to the presence of the migration dynamics. On one hand, the jump operator makes \eqref{eq:graph.FP.strong} a Fokker--Planck equation with \textit{reaction term}. On the other hand, the drift and diffusion terms in~\eqref{eq:graph.FP.strong} are analogous to the corresponding ones in~\eqref{eq:FP.strong} but feature \textit{non-constant} coefficients $\rho_i=\rho_i(t)$ and $M_{1,i}=M_{1,i}(t)$, i.e. the density and the mean state of the agents in the $i$-th vertex, which change in time because of the migrations across the vertices.

\paragraph{An explicitly solvable case}
Now we present a specific example, in which we take advantage of~\eqref{eq:graph.FP.strong} to determine explicitly the equilibrium distribution on the graph and the time evolution towards it in the quasi-invariant regime.

We consider a very simple graph made of $N=2$ vertices, hence $\cI=\{1,\,2\}$, assuming that only the agents in vertex $i=1$ can interact with interaction parameters~\eqref{eq:graph.assumptions_pi.qi}, whereas agents in vertex $i=2$ do not interact. In particular, we fix the interaction rates $\mu_1=1$, $\mu_2=0$. Moreover, we allow for migrations across the vertices according to the following transition matrix:
$$  \bP=
    \begin{pmatrix}
        1-\beta & 0 \\
        \beta & 1
    \end{pmatrix} $$
with $\beta\in (0,\,1)$; we also set the migration rate $\chi=1$. Notice that such a $\bP$ is \textit{not} irreducible, or equivalently the graph we are considering is \textit{not} strongly connected. Indeed, from vertex $i=2$ it is impossible to reach vertex $i=1$ because $P_{12}=0$. In other words, agents can migrate only from vertex $i=1$ to vertex $i=2$ but once they reach vertex $i=2$ they remain stuck there. Consequently, the results obtained in the previous sections under the assumption of strong connection of the graph do not hold, but the Fokker--Planck equation~\eqref{eq:graph.FP.strong} does and can be profitably used to investigate a case not covered by the general qualitative theory.

In particular, from~\eqref{eq:graph.FP.strong} we have, for $i=1$,
\begin{equation}
    \dfrac{\partial g_1}{\partial t}+\lambda_1\rho_1(t)\dfrac{\partial}{\partial v}\bigl((M_{1,1}(t)-v)g_1\bigr)
        =\dfrac{\sigma_1^2}{2}\rho_1(t)\dfrac{\partial^2}{\partial v^2}(v^2g_1)-\beta g_1
    \label{eq:graph.FP_g1}
\end{equation}
plus, from~\eqref{eq:graph.system_rhoi},~\eqref{eq:graph.system_M1i},
$$ \frac{d\rho_1}{dt}=-\beta\rho_1, \qquad \dfrac{dM_{1,1}}{dt}=0, $$
whence $\rho_1(t)=\rho_{1,0}e^{-\beta t}$ and $M_{1,1}(t)=M_{1,1}^0$ for all $t>0$, where $M_{1,1}^0:=M_{1,1}(0)$. In the sequel, we shall assume $M_{1,1}^0>0$ to fix the ideas. Since the mean value of $g_1$ is conserved in time, it is reasonable to look for a solution of the form
\begin{equation}
    g_1(v,t)=\frac{\rho_1(t)}{M_{1,1}^0}h\!\left(\frac{v}{M_{1,1}^0}\right),
    \label{eq:graph.g1_self-similar}
\end{equation}
where $h:\R\to\R_+$ satisfies the normalisation conditions
\begin{equation}
    \int_{\R}h(w)\,dw=1, \qquad \int_{\R}wh(w)\,dw=1.
    \label{eq:graph.h_normalisations}
\end{equation}
In practice, the distribution function~\eqref{eq:graph.g1_self-similar} is built from the fixed profile of $h$ modulated by the time-varying coefficient $\rho_1(t)$. For this reason,~\eqref{eq:graph.g1_self-similar} is called a \textit{self-similar solution} to~\eqref{eq:graph.FP_g1}. Conditions~\eqref{eq:graph.h_normalisations} are imposed to ensure that $g_1$ fulfils~\eqref{eq:graph.gi_mass_mean}.

Plugging~\eqref{eq:graph.g1_self-similar} into~\eqref{eq:graph.FP_g1}, we discover that $h$ satisfies the following stationary Fokker--Planck equation:
$$ \lambda_1\frac{\partial}{\partial w}((1-w)h)=\frac{\sigma_1^2}{2}\frac{\partial^2}{\partial w^2}(w^2h), $$
whose unique solution with unitary mass is (cf. Section~\ref{sect:FP_steady})
\begin{equation}
    h(w)=\frac{\left(\frac{2\lambda_1}{\sigma_1^2}\right)^{1+\frac{2\lambda_1}{\sigma_1^2}}}{\Gamma\!\left(1+\frac{2\lambda_1}{\sigma_1^2}\right)}
        \cdot\frac{e^{-\frac{2\lambda_1}{\sigma_1^2}\cdot\frac{1}{w}}}{w^{2\left(1+\frac{\lambda_1}{\sigma_1^2}\right)}}\chi(w>0).
    \label{eq:graph.h}
\end{equation}
Consequently, from~\eqref{eq:graph.g1_self-similar} we determine
$$ g_1(v,t)=\rho_{1,0}e^{-\beta t}
    \frac{\left(\frac{2\lambda_1}{\sigma_1^2}M_{1,1}^0\right)^{1+\frac{2\lambda_1}{\sigma_1^2}}}
        {\Gamma\!\left(1+\frac{2\lambda_1}{\sigma_1^2}\right)}
            \cdot\frac{e^{-\frac{2\lambda_1}{\sigma_1^2}\cdot\frac{M_{1,1}^0}{v}}}{v^{2\left(1+\frac{\lambda_1}{\sigma_1^2}\right)}}
                \chi(v>0), $$
which provides the exact evolution of the distribution function in vertex $i=1$ in the quasi-invariant regime. Notice that the ansatz~\eqref{eq:graph.g1_self-similar} forces $g_1(v,0)=\rho_{1,0}/M_{1,1}^0h(v/M_{1,1}^0)$. Therefore, the exact evolution just found corresponds to a specific choice of the family of initial distribution functions in vertex $i=1$, precisely a two-parameter ($\rho_{1,0}$, $M_{1,1}^0$) family of distributions with profile~\eqref{eq:graph.h}.

As far as the steady distribution is concerned, for $t\to +\infty$ we observe that $g_1(v,t)\to 0$ for all $v\in\R$ and also $\norm{g_1(\cdot,t)}{L^1}\to 0$, hence $g_1(\cdot, t)$ tends to $g_1^\infty\equiv 0$ both pointwise and in $L^1(\R)$. The physical meaning is clear: since only migrations from vertex $i=1$ to vertex $i=2$ are allowed with a constant-in-time probability, vertex $i=1$ shall empty in the long run.

From~\eqref{eq:graph.FP.strong} with $i=2$ get
$$ \frac{\partial g_2}{\partial t}=\beta g_1, $$
which is the Fokker--Planck equation in vertex $i=2$, where agents migrate from vertex $i=1$ and do not interact. Integrating in time we obtain
\begin{align*}
    g_2(v,t) &= g_{2,0}(v)+\frac{\rho_{1,0}(1-e^{-\beta t})}{M_{1,1}^0}h\!\left(\frac{v}{M_{1,1}^0}\right) \\
    &= g_{2,0}(v)+\rho_{1,0}(1-e^{-\beta t})\frac{\left(\frac{2\lambda_1}{\sigma_1^2}M_{1,1}^0\right)^{1+\frac{2\lambda_1}{\sigma_1^2}}}
        {\Gamma\!\left(1+\frac{2\lambda_1}{\sigma_1^2}\right)}
            \cdot\frac{e^{-\frac{2\lambda_1}{\sigma_1^2}\cdot\frac{M_{1,1}^0}{v}}}{v^{2\left(1+\frac{\lambda_1}{\sigma_1^2}\right)}}
                \chi(v>0),
\end{align*}
where $g_{2,0}=g_{2,0}(v)\geq 0$ with $\int_{\R}g_{2,0}(v)\,dv=1-\rho_{1,0}$ is the initial distribution function in vertex $i=2$. For $t\to +\infty$ we have that $g_2$ converges to $g_{2,0}(v)+\rho_{1,0}/M_{1,1}^0h(v/M_{1,1}^0)$ both pointwise and in $L^1(\R)$. Since $\rho_{1,0}/M_{1,1}^0h(v/M_{1,1}^0)$ is the initial profile of $g_1$, we see that in the long run the agent distribution of vertex $i=1$ is fully ``copied'' in vertex $i=2$. The rate of this copy is $\beta=\Prob{1\to 2}$.

\subsection{Further developments}
The vertices of the graph need not represent necessarily spatial locations but can group the agents according to certain \textit{structural features}, which partition the system in \textit{compartments}. Migrations across the vertices model then changes of compartment, which can be either spontaneous or triggered by the interactions and are typically correlated to a certain level of expression of the microscopic state of the agents. Within this interpretation, the microscopic state is frequently meant as a \textit{phenotypic trait} of the agents, which evolves according to interaction rules possibly different from vertex to vertex depending on the structural features of the agents in the various compartments. This formalism has been used to provide, for instance, consistent mathematical derivations and extensions of compartmental epidemiological models from stochastic agent-based models, stressing the role of the viral load (phenotypic trait) in the transmission of the infection and explaining the microscopic origin of aggregate parameters such as the basic reproduction number. See e.g.,~\cite{dellamarca2022NHM,dellamarca2023JMB}. A further generalisation of this approach has consisted in allowing for different microscopic states in different vertices of the graph, considering that a specific phenotypic trait can be more representative than others of the structural feature of the agents in a certain compartment. Still with reference to compartmental epidemiological models, this is the case of e.g., the viral load in the compartment of the infectious individuals and the resistance to infection in the compartment of the susceptible individuals, see~\cite{bernardi2025PREPRINT,lorenzi2024CMS}. In all these cases, the theory developed in the previous sections either applies straightforwardly or can be easily adapted.

A quite different, yet natural, way of understanding Boltzmann-type equations on graphs consists instead in using the graph as a descriptor of the links among the agents. In this case, each vertex of the graph coincides with a single agent while the edges determine which agents are directly linked and can therefore interact. One speaks then more properly of \textit{networked interactions} rather than of a networked multi-agent system. The mathematical formalisation of this idea results in a single kinetic equation incorporating the information about the links among the agents, usually by means of a proper interaction kernel. Specifically, the number of links becomes part of the microscopic state of an agent along with the variable describing the trait which changes in consequence of the interactions. This approach has been used to model social interactions~\cite{burger2021VJM}, in particular those leading to opinion formation on social networks~\cite{albi2024EJAM,burger2025SIADS,loy2022PTRSA,toscani2018PRE}, which are usually not all-to-all but adapt to the distribution of the contacts among the social network users. An interesting issue, which departs significantly from the theory developed in the previous sections, is the derivation of a statistical description of the distribution of contacts, to be embedded in the Boltzmann-type description of the interactions, out of the information encoded in the adjacency matrix of the graph in the limit of an infinite number of vertices, viz. agents. Some results have been obtained in~\cite{duering2024JNS,nurisso2024EJAM}.

\section*{Acknowledgements}
The authors are members of GNFM (Gruppo Nazionale per la Fisica Matematica) of INdAM (Istituto Nazionale di Alta Matematica), Italy.
	
\bibliographystyle{plain}
\bibliography{biblio}
\end{document}